\documentclass[aps,prl,preprint,groupedaddress]{revtex4-1}

\usepackage{graphicx}
\usepackage{dcolumn}
\usepackage{bm}
\usepackage{color}

\usepackage{comment}
\usepackage[geometry]{ifsym}
\usepackage{booktabs}
\usepackage{amsthm}
\usepackage{amsmath}
\newtheorem{definition}{Definition}
\theoremstyle{remark}
\newtheorem{remark}{Remark}
\newtheorem{theorem}{Theorem}
\newtheorem{proposition}[theorem]{Proposition}
\newtheorem{lemma}[theorem]{Lemma}
\newtheorem{corollary}[theorem]{Corollary}

\newtheorem{conjecture}{Conjecture}
\usepackage{amssymb}
\usepackage{booktabs}
\usepackage{amsthm}

\usepackage{enumerate}
\usepackage{multirow}
\usepackage{caption}
\newtheorem{observation}{Observation}
\begin{document}


\title{    Bulk    Phase    Transition    and    Edge    Behavior    in    Temporally    Correlated    Random    Matrices
}


\author{Masato        Hisakado}
\email{hisakadom@yahoo.co.jp}        
\affiliation{
*Kanazawa        university,        Kakumamachi,        Kanazawa,        Ishikawa        920-1192,        Japan}        

\author{Takuya        Kaneko}
\email{tkaneko@icu.ac.jp}
\affiliation{
\dag        
International        Christian                University        \\
Osawa        3-10-2,        Mitaka,        Tokyo        181-8585,        Japan}


\date{\today}

\begin{abstract}
We  study  long-range  correlated  Wigner-type  matrices  built  from  row-independent  stationary
Gaussian  sequences.  For  exponentially  decaying  (AR(1))  correlations,  the  bulk  spectral  density
deforms  from  the  semicircle  law  via  an  explicit  combinatorial  “hub”  mechanism,  yet  we  verify  the
flatness  and  decay  hypotheses  of  the  matrix-Dyson-equation  framework  (MDE)  \cite{AEKS2020},  with  numerical  evidence supporting Tracy–Widom edge universality
for  every  fixed  $\rho  <  1$  of  the  exponential  decay  correlations;  the  degenerate  limit  $\rho\to1^-  $
reduces  to  a  symmetrized  Volterra  operator,  connecting  to  the  singular-value  cascade  identified  in
a  companion  BBP  analysis.
For  power-law  correlations  $dt\sim  t^{-\gamma}$,    we  identify  $\gamma_c  =  1/2$  
as  the  critical  point  for  divergence  of  the  bulk  fourth-moment,  while  $\gamma  =  1$  marks  the  breakdown  of  the  flatness  condition  governing  the  MDE  edge  analysis.  We  prove  the  fourth-moment  transition  exactly  and  find  numerically  that  the  self-consistent  edge  varies  smoothly  across  $\gamma  =  1$,  with  no  evidence  of  a  kink  or  discontinuity.

\end{abstract}
\maketitle
\section{I.        Introduction}

Random        matrix        theory        (RMT)        provides        a        universal        framework        for        describing        the        spectral        behavior        of        large        complex        systems        \cite{Meh}.        Depending        on        the        structure        of        the        underlying        random        matrix,        distinct        limiting        eigenvalue        distributions        arise.        For        instance,        the        Marchenko–Pastur        distribution        (MPD)        governs        the        spectrum        of        Wishart        matrices,        while        the        Wigner        semicircle        law        characterizes        the        spectrum        of        symmetric        random        matrices        with        independent        entries.        Deviations        from        these        classical        laws        often        signal        additional        structures        such        as        correlations        or        heavy-tailed        statistics,        and        have        been        widely        studied        in        physics,        mathematics,        and        applications        ranging        from        nuclear        spectra        to        network        theory        \cite{Pot,Pot3,Pot2,Fra,RM,AI}.

We        study        a        family        of        long-range        temporally        correlated        Wigner-type
random        matrices        constructed        by        assigning        to        each        row        of        a        symmetric
$N\times        N$        matrix        an        independent        stationary        Gaussian        sequence.        
So
that        correlation        is        confined        within        rows        while        remaining        exactly
absent        across        rows.        For        exponentially        decaying        (AR(1))        row
correlations,        we        show        algebraically        that        the        ensemble        admits        a        causal
linear-filter        representation        and        that        its        fourth-moment        deviates        from
the        semicircle        value        through        a        combinatorially        explicit        ``hub''
mechanism,        so        that        the        limiting        spectral        density        is        a        $\rho$-dependent
deformation        of        the        semicircle        law        rather        than        the        semicircle        law
itself.        We        verify        the        hypotheses        (A),        (B),        (CD),        and        (E)        of        the
matrix-Dyson-equation        (MDE)        framework        for        correlated        Wigner-type        matrices        \cite{AEKS2020,AjankiErdosKruger2017,ErdosKrugerSchroder2019}        
        explicitly
for        this        ensemble,        with        the        flatness        constants        degenerating        as
$\rho\to1^-$,        and        give        numerical        evidence        (moment        convergence,        edge
scaling        exponents,        and        skewness/kurtosis        of        $\lambda_{\max}$)        that,
despite        the        deformed        bulk,        Tracy--Widom        (TW)        edge        universality        \cite{tra94,tra96}                persists        for
every        fixed        $\rho<1$.        

In        the        degenerate        limit        $\rho\to1$,        the        ensemble
reduces        exactly        to        a        symmetrized        discrete        Volterra        operator        with
random        (signed)        weights        and        no        background        noise;        the        bulk        spectral
measure        collapses        to        $\delta_0$        while        the        extreme        eigenvalues        are
numerically        consistent,        up        to        a        $\sqrt        N$        rescaling,        with        the
Volterra-operator        singular-value        cascade        $1/(\pi(k-\tfrac12))$
identified        in        the        companion        analysis        of                Baik--Ben        Arous--P\'ech\'e        (BBP)        \cite{bbp}        transitions        for
Volterra-type        Wigner        matrices,        providing        a        structural        bridge        between
the        two        constructions.
For        power-law        row        correlations        $d_t\sim        t^{-\gamma}$,        we        identify        
distinct        critical        exponents        governing        different        spectral        observables:
$\gamma_c=1/2$,        below        which        even        the        bulk        moments        diverge,        and
$\gamma_c=1$,        coinciding        with        the        breakdown        of        the        boundedness
(flatness)        condition        required        for        the        MDE        approach.
We        prove                the        phase        transition        of        the        fourth-moment        strictly.
Numerically,        the        intermediate        regime        $1/2<\gamma\le1$        exhibits      that  the        bulk        spectral        moments        remain
finite        while        the        largest        eigenvalue        is  finite.
Therefore,  no  evidence  for  a  critical  point  is  found  at  $\gamma=1$.

In  the  range  $1/2<\gamma<1$,  the  auxiliary  measure  $H_{\gamma}$
  develops  a  hierarchy  of  finite-moment  thresholds,  $\gamma_c(n)=(n-1)/n$,
as  detailed  in  Appendix  C.  These  thresholds  govern  the  structure  of  the  asymptotic  moment  expansion,  but  do  not  correspond  to  independent  edge  phase  transitions.


The        remainder        of        this        paper        is        organized        as        follows.
In        Section        II,        we        introduce                        Wigner        matrix        with        the        exponential        decay        correlation.
In        Section        III,        we        consider        the        strong        correlation        limit                of        the        exponential                decay        case        and        the        relation        to        the        Volterra        operator.
In        Section        IV,        we        discuss    the    fourth-moment    transition    and        the        edge        of        Wigner        matrix        with        the        power        decay        correlation.
Finally,        the        conclusions        are        presented        in        Section        V.

\section{II        Edge        Distribution        of        the        Exponential        decay        case}
\subsection{A.        The        Row-Independent        AR(1)        Ensemble}
\label{subsec:row-ar1}

We        now        consider        a        variant        of        the        folded        Wigner        matrix        in        which        temporal
correlation        is        confined        strictly        within        each        row,        so        that        no        correlation
survives        across        the        row        boundary.

\begin{definition}[Row-independent        AR(1)        matrix]
\label{def:row-ar1}
Fix        $N\in\mathbb        N$,        $\sigma>0$,        and        $\rho\in(-1,1)$.        For        each
$i=1,\dots,N$        let        $\{\eta_i(j)\}_{j=i}^{N}$        be        a        stationary        Gaussian
AR(1)        process,
\[
\eta_i(i)\sim\mathcal        N(0,1),\qquad
\eta_i(j)=\rho\,\eta_i(j-1)+\sqrt{1-\rho^2}\,\varepsilon_i(j),
\qquad        j>i,
\]
with        $\{\varepsilon_i(j)\}$        jointly        independent        standard        Gaussians,        and
the        $N$        processes        $\{\eta_i(\cdot)\}_{i=1}^N$        mutually        independent.
Define        the        symmetric        matrix        $S=S^{(N)}(\rho)$        by
\[
S_{ij}=S_{ji}:=\frac{\sigma}{\sqrt        N}\,\eta_i(j),\qquad        i\le        j.
\]
\end{definition}

By        construction,
\begin{equation}
\label{eq:cov-structure}
\operatorname{Cov}(S_{ab},S_{cd})
=\frac{\sigma^2}{N}\,
\delta_{\min(a,b),\,\min(c,d)}\;
\rho^{\,|\max(a,b)-\max(c,d)|}.
\end{equation}
In        particular        entries        belonging        to        different        rows        are        exactly
independent;        only        entries        within        a        common        row        are        correlated,        with
exponential        (AR(1))        decay        in        the        column        separation.

\begin{proposition}[Operator        representation]
\label{prop:EF}
There        is        a        family        of        i.i.d.\        standard        Gaussian        variables
$\{E_{i,k}\}_{i\le        k\le        N}$        and        a        row-dependent        causal        filter
\[
F^{(i)}_{k,j}=
\begin{cases}
\rho^{\,j-i},        &        k=i,\\[2pt]
\sqrt{1-\rho^2}\,\rho^{\,j-k},        &        i<k\le        j,
\end{cases}
\qquad        j\ge        i,
\]
such        that
\[
S_{ij}=\frac{\sigma}{\sqrt        N}\sum_{k=i}^{j}E_{i,k}\,F^{(i)}_{k,j},
\qquad        i\le        j.
\]
The        filter        has        $\ell^1$-summable        coefficients,
$\sum_{l\ge0}\sup_i|F^{(i)}_{i+l,\,\cdot}|<\infty$,        uniformly        in        $N$.
\end{proposition}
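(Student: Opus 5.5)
The representation can be read off by unrolling the AR(1) recursion of Definition~\ref{def:row-ar1}. For each row $i$ I will take $E_{i,i}:=\eta_i(i)$ and $E_{i,k}:=\varepsilon_i(k)$ for $i<k\le N$. By the definition these are jointly independent standard Gaussians within a row, since $\eta_i(i)$ is independent of the innovations $\{\varepsilon_i(j)\}_{j>i}$. The rows $\{\eta_i(\cdot)\}$ are mutually independent, so the whole family $\{E_{i,k}\}_{i\le k\le N}$ is i.i.d.\ $\mathcal N(0,1)$. The plan is then to show by induction on $j\ge i$ that
\[
\eta_i(j)=\rho^{\,j-i}E_{i,i}+\sqrt{1-\rho^2}\sum_{k=i+1}^{j}\rho^{\,j-k}E_{i,k}
=\sum_{k=i}^{j}E_{i,k}F^{(i)}_{k,j}.
\]
The base case $j=i$ is trivial. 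For the step, substitute the hypothesis for $\eta_i(j-1)$ into $\eta_i(j)=\rho\,\eta_i(j-1)+\sqrt{1-\rho^2}\,E_{i,j}$. Every existing coefficient gets multiplied by $\rho$, which raises $j-1-k$ to $j-k$, and the new term is $\sqrt{1-\rho^2}\rho^0E_{i,j}$, which is exactly $F^{(i)}_{j,j}$. Multiplying by $\sigma/\sqrt N$ and using $S_{ij}=\frac{\sigma}{\sqrt N}\eta_i(j)$ for $i\le j$ gives the claimed formula. Entries with $i>j$ are then fixed by symmetry, so no further work is needed there.

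As a consistency check, I will verify that the representation reproduces \eqref{eq:cov-structure}. Within a row, for $j\le j'$ we get $\sum_k F^{(i)}_{k,j}F^{(i)}_{k,j'}=\rho^{j'-j}\bigl(\rho^{2(j-i)}+(1-\rho^2)\sum_{m=0}^{j-i-1}\rho^{2m}\bigr)=\rho^{j'-j}$, which is a telescoping geometric sum. Across rows the covariance is $0$ because distinct rows use disjoint $E$'s. This also confirms stationarity, i.e.\ unit variance.

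For the $\ell^1$ claim, I will read the index $l$ as the lag $j-k\ge0$ between the innovation position and the output column. The quantity to bound is therefore $\sum_{l\ge0}\sup_{i,k,j:\,j-k=l}|F^{(i)}_{k,j}|$. There are two cases. If $k=i$, then $|F^{(i)}_{i,i+l}|=|\rho|^{l}$. If $k>i$, then $|F^{(i)}_{k,k+l}|=\sqrt{1-\rho^2}\,|\rho|^{l}\le|\rho|^l$. Hence the supremum is at most $|\rho|^l$, independently of $i$ and $N$, and the sum is at most $\sum_{l\ge0}|\rho|^l=(1-|\rho|)^{-1}<\infty$ for $|\rho|<1$. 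Note that this constant blows up as $\rho\to1^-$, consistent with the degeneration mentioned in the introduction.

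The only real obstacle is interpretive. The statement's notation $\sup_i|F^{(i)}_{i+l,\cdot}|$ is ambiguous. Read literally as a supremum over columns with $k=i+l$ fixed, it would give $\sqrt{1-\rho^2}$ for every $l\ge1$, and the sum would not be finite. I will therefore state explicitly that $l$ denotes the lag $j-k$, since that is the reading under which the filter is causal and summable, and then the bound above is immediate.
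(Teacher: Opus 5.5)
Your proof is correct. The paper states this proposition without proof, and unrolling the AR(1) recursion with $E_{i,i}=\eta_i(i)$, $E_{i,k}=\varepsilon_i(k)$ is exactly what the displayed filter encodes; the independence of $\eta_i(i)$ from the innovations is implicit in the stationarity built into Definition~\ref{def:row-ar1}. Your reading of the $\ell^1$ clause is also the right one: taken literally with $k=i+l$, every term with $l\ge1$ is at least $|F^{(i)}_{i+l,i+l}|=\sqrt{1-\rho^2}$, so the sum grows like $N\sqrt{1-\rho^2}$, and only the lag reading $l=j-k$, with your bound $\sum_{l\ge0}|\rho|^l=(1-|\rho|)^{-1}$, gives the claimed uniform-in-$N$ summability.
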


\begin{remark}
\label{r1}
The        row        dependence        of        $F^{(i)}$        enters        only        through        the        first
(``seed'')        column;        away        from        $k=i$        the        filter        is        a        fixed        Toeplitz
operator        with        symbol
$\hat        f(\theta)=1+\sqrt{1-\rho^2}\sum_{l\ge1}\rho^le^{-il\theta}$,
bounded        and        bounded        away        from        $0$        for        every        fixed        $\rho\in(-1,1)$.
It        is        the        different        point        from        \cite{Hisakado2026}        where        we        use        one        time        series.

\end{remark}

\subsection{B.        Deviation        from        the        Semicircle        Law}

\begin{proposition}[Non-semicircular        bulk]
\label{prop:mu4}
Let        $\mu_{2k}:=\lim_{N\to\infty}\frac1N\,\mathbb        E\operatorname{Tr}(S^{2k})$.
Then
\[
\mu_4(\rho)        >        2\sigma^4=\mu_4^{\mathrm{sc}},\qquad
\mu_4(\rho)\longrightarrow        2\sigma^4        \text{        as        }\rho\to0,\qquad
\mu_4(\rho)\longrightarrow\infty        \text{        as        }\rho\to1^-.
\]
Consequently        the        limiting        spectral        distribution        of        $S$        is        not
the        semicircle        law        for        $\rho\neq0$.
\end{proposition}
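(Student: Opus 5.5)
The plan is to compute $\mu_2$ and $\mu_4$ directly by the moment method, using Wick's theorem on the Gaussian entries together with the covariance formula \eqref{eq:cov-structure}. First, $\mathbb E S_{ab}^2=\sigma^2/N$ for every pair, so $\frac1N\mathbb E\operatorname{Tr}S^2\to\sigma^2$. Hence the only semicircle law compatible with the second moment is the one of variance $\sigma^2$, whose fourth moment is $2\sigma^4$. The final claim of the proposition then follows once we show $\mu_4(\rho)\neq 2\sigma^4$.

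Next, write
\[
\frac1N\mathbb E\operatorname{Tr}S^4=\frac1N\sum_{a,b,c,d}\mathbb E[S_{ab}S_{bc}S_{cd}S_{da}]
\]
and expand each term by Wick's theorem into the three pairings. For the crossing pairing $(S_{ab},S_{cd})(S_{bc},S_{da})$, a nonzero term needs $\min(a,b)=\min(c,d)$ and $\min(b,c)=\min(d,a)$. These two constraints leave $O(N^2)$ free index choices, each of weight $\sigma^4/N^2$. After the prefactor $1/N$ this contribution is $O(1/N)$ and drops out.

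For the two noncrossing pairings, say $(S_{ab},S_{bc})(S_{cd},S_{da})$, I will split according to whether the paired entries coincide.
\begin{itemize}
\item The diagonal cases $a=c$ or $b=d$ reproduce the usual Wigner count and give $2\sigma^4$ in the limit. I will check that the overlap $a=c,\ b=d$ is lower order.
\item The genuinely new ``hub'' terms have $a\neq c$. They require $\min(a,b)=\min(b,c)$, which forces $b<a,c$: the index $b$ is a row containing both entries. Such a pair contributes $\frac{\sigma^2}{N}\rho^{|a-c|}$.
\end{itemize}
In the hub case the second factor is nonzero either because $d$ is also a hub ($d<a,c$, giving another $\rho^{|a-c|}$) or because of the diagonal coincidences. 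The leading new term is therefore
\[
\frac{\sigma^4}{N^3}\sum_{a\neq c}\min(a,c)^2\,\rho^{2|a-c|}
\;\longrightarrow\;\frac{\sigma^4}{3}\cdot\frac{2\rho^2}{1-\rho^2}\quad\text{per pairing},
\]
using $\sum_{m\le N}m^2\sim N^3/3$ and the geometric sum $\sum_{l\ge1}\rho^{2l}=\rho^2/(1-\rho^2)$. Mixed cases, with one hub and one diagonal coincidence, carry one fewer free index and vanish, or at most add further nonnegative multiples of $\rho^2$-type sums.

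The resulting formula is $\mu_4(\rho)=2\sigma^4+C\sigma^4\frac{\rho^2}{1-\rho^2}$ with an explicit constant $C>0$ (I expect $C=4/3$ from the two noncrossing pairings). From it all three claims are immediate. Every extra term involves only even powers $\rho^{2|a-c|}$, so it is strictly positive for $\rho\neq0$, including negative $\rho$. It tends to $0$ as $\rho\to0$, and it diverges like $(1-\rho)^{-1}$ as $\rho\to1^-$.

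The main obstacle is the bookkeeping of the hub enumeration. One must classify exactly which index configurations with $a\neq c$ (and $b\neq d$) satisfy the row constraints $\min(\cdot)=\min(\cdot)$ and identify the $N^3$-order ones. One must also make sure nothing is double counted against the diagonal $2\sigma^4$ terms. Nailing the exact constant is unnecessary, though: positivity and the two limits only require a lower bound from the $b,d<\min(a,c)$ configuration, plus the $O(1/N)$ bound on the crossing pairing and on the lower-order coincidences. For the $\rho\to1^-$ limit I would therefore rely on the lower bound only.
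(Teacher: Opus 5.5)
Your overall route matches the paper's. The paper defers this proposition to \cite{Hisakado2026}, but its in-text machinery is the exact Wick decomposition $\mathbb E\operatorname{Tr}S^4=2\sum_{a,b}f(a,b)^2+\Xi_N$ of Eq.~\eqref{eq:exactTr4-final}, specialized to $d_t=\rho^t$. The two noncrossing pairings factor exactly into $\sum_{a,c}f(a,c)^2$ with $f(a,c)=\sum_b\operatorname{Cov}(S_{ab},S_{bc})$. This equals $\sigma^2$ for $a=c$ and $\frac{\sigma^2}{N}\min(a,c)\,\rho^{|a-c|}$ for $a\neq c$; note the hub condition is $b\le\min(a,c)$, inclusive. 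Summing gives exactly $2\sigma^4+\frac43\sigma^4\rho^2/(1-\rho^2)$, so your $C=4/3$ is right. Because the $b$- and $d$-sums factor, there are no ``mixed'' cases to track.

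The step that is wrong as written is the crossing pairing. The constraints $\min(a,b)=\min(c,d)$ and $\min(b,c)=\min(d,a)$ do not cut the index set down to $O(N^2)$. Every configuration with $a=c\le\min(b,d)$ satisfies both (the second becomes $a=a$), and so does every one with $b=d\le\min(a,c)$. That gives $\Theta(N^3)$ nonzero terms. Bounding each by $\sigma^4/N^2$ then gives only an $O(1)$ contribution to $\frac1N\mathbb E\operatorname{Tr}S^4$, not $O(1/N)$. At $\rho=1$, where all these weights equal $\sigma^4/N^2$, the contribution really is of order one.

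What makes the crossing pairing negligible is the correlation decay, not counting. On these configurations the term equals $\frac{\sigma^4}{N^2}\rho^{2|b-d|}$ (resp.\ $\frac{\sigma^4}{N^2}\rho^{2|a-c|}$). Since $\sum_{b,d=a}^{N}\rho^{2|b-d|}\le (N-a+1)\frac{1+\rho^2}{1-\rho^2}$, the whole crossing pairing is $O\bigl(\sigma^4/(1-\rho^2)\bigr)$. It therefore drops out after division by $N$ for each fixed $|\rho|<1$. Only the remaining configurations, with common minimum $a=d$ or $b=c$, are genuinely $O(N^2)$. This is how the paper controls $\Xi_N$: the $k=i$ subcase is computed in Corollary~\ref{c13}, and Lemmas~\ref{lc21}--\ref{lc22} give $|\Xi_N|\le\Sigma_2(N)\le 4\sum_{t\in\mathbb Z}d_{|t|}^2$. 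It is also why the same term grows like $N^{1-2\gamma}$ in the power-law case with $\gamma<1/2$.

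A smaller point: the $\rho\to0$ limit needs an upper bound on the hub term, not only the lower bound you mention. Your exact evaluation of the noncrossing pairings already supplies it.
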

\begin{proof}
The        proof        is        in        \cite{Hisakado2026}.
\end{proof}
\begin{proposition}[Bulk    fourth-moment    divergence    and    critical    exponent,    AR(1)    case]
\label{prop:AR1-bulk-exponent}
Under    the    general    hub-mechanism    formula    of    Remark~\ref{r8}    (applicable    to    any
row-independent,    Toeplitz    covariance    sequence    $\{d_t\}$),    the    AR(1)    ensemble
with    $d_t=\rho^t$    satisfies    the    exact    closed    form
\begin{equation}
        \mu_4(\rho)    =    2\sigma^4    +    \frac{4}{3}\sigma^4\,\frac{\rho^2}{1-\rho^2},
        \qquad    \rho    \in    (-1,1),
        \label{e2}
\end{equation}
consistent    with    $\mu_4(\rho)\to    2\sigma^4$    as    $\rho\to0$    (Proposition~\ref{prop:mu4})    \cite{Hisakado2026}.
As    $\rho    \to    1^-$,    writing    $\varepsilon    :=    1-\rho$,
\begin{equation}
        \mu_4(\rho)    -    2\sigma^4    \;\sim\;    \frac{2}{3}\sigma^4\,\varepsilon^{-1},
        \qquad    \varepsilon    \to    0^+,
\end{equation}
a    simple    pole    with    critical    exponent    $\psi    =    1$    ---    identical    in    both
exponent    and    residue    ($2/3$,    with    $\sigma=1$)    to    the    power-law
bulk    transition    at    $\gamma_c=1/2$    (Proposition~\ref{prop:super-critical-exponent}).    This    coincidence
reflects    a    common    origin:    in    both    cases    the    divergence    of
$\sum_t    d_t^2$    is    a    simple    pole    in    the    natural    distance-to-criticality
variable.
\end{proposition}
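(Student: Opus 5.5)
The plan is to treat Remark~\ref{r8} as the black box that reduces $\mu_4$ for any row-independent Toeplitz ensemble to a functional of the covariance sequence $\{d_t\}$. I am assuming that its content is a formula of the form
\[
\mu_4=2\sigma^4+\frac{4}{3}\sigma^4\sum_{t\ge1}d_t^2 .
\]
I have not checked this against the statement of Remark~\ref{r8}; it is the normalisation needed to reproduce \eqref{e2}. The proof then has three steps: confirm the hypotheses and the constant in that formula for $d_t=\rho^t$, sum a geometric series, and read off the singular behaviour at $\rho=1$.

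\textbf{Step 1: the hub formula for AR(1).} Expand $\frac1N\mathbb E\operatorname{Tr}S^4=\frac1N\sum_{a,b,c,d}\mathbb E[S_{ab}S_{bc}S_{cd}S_{da}]$ and apply Wick's theorem with the covariance \eqref{eq:cov-structure}.
\begin{itemize}
\item The two non-crossing pairings that pair an entry with its own transpose contribute the semicircle value $2\sigma^4$. The diagonal-coincidence terms vanish as $N\to\infty$.
\item The extra contribution comes from pairings that match two \emph{distinct} entries of the same row. For such a pair, the shared index is the minimum of the relevant column indices, which is the ``hub''. The two partners sit at column separation $t\ge1$, and each such pair carries weight $d_t^2=\rho^{2t}$.
\item Counting orderings of three macroscopically free indices, the hub must be the smallest of the three. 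Its density is $\int_0^1(1-x)^2\,dx=1/3$. Together with a factor $4$ from the choices of position and orientation in the $4$-cycle, this gives the constant $4/3$.
\end{itemize}
For this step I would only check two things. First, $\sum_t d_t^2<\infty$ for fixed $|\rho|<1$, which is immediate and is exactly what Remark~\ref{r8} needs to keep error terms $o(1)$. Second, the constant $4/3$, by comparing the $\rho^2$ coefficient with the small-$\rho$ expansion underlying Proposition~\ref{prop:mu4} from \cite{Hisakado2026}.

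\textbf{Step 2: closed form.} The geometric sum is $\sum_{t\ge1}\rho^{2t}=\rho^2/(1-\rho^2)$. Substituting it gives \eqref{e2}. This expression is even in $\rho$, which is consistent with $d_t^2$ being insensitive to the sign of $\rho$. It vanishes at $\rho=0$, recovering the limit in Proposition~\ref{prop:mu4}. It is strictly positive for $\rho\ne0$, recovering $\mu_4>2\sigma^4$.

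\textbf{Step 3: asymptotics.} Write $\rho=1-\varepsilon$. Then $1-\rho^2=\varepsilon(2-\varepsilon)$, so
\[
\frac{\rho^2}{1-\rho^2}=\frac{(1-\varepsilon)^2}{\varepsilon(2-\varepsilon)}=\frac1{2\varepsilon}+O(1).
\]
Multiplying by $\tfrac43\sigma^4$ gives $\mu_4-2\sigma^4\sim\tfrac23\sigma^4\varepsilon^{-1}$, a simple pole with exponent $\psi=1$. The coincidence with Proposition~\ref{prop:super-critical-exponent} then amounts to observing that in both cases the divergence is carried entirely by $\sum_t d_t^2$. Once that proposition's computation is available, I would compare the residues directly.

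\textbf{Main obstacle.} The only nontrivial point is Step~1: making sure the combinatorial prefactor in Remark~\ref{r8} is exactly $4/3$, and that the seed-column row dependence of $F^{(i)}$ (Remark~\ref{r1}) does not change it. Stationarity of each $\eta_i$ makes the covariance exactly Toeplitz, $\rho^{|j-j'|}$, so the seed column should not matter. If the constant is in doubt, I would first independently enumerate the Wick pairings of the $4$-cycle in the limit of continuous index fractions $a/N,b/N,\dots$.
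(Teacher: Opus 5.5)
Your proposal is correct and is essentially the paper's own argument. The paper gives no separate proof: it takes $\mu_4=2\sigma^4+\frac43\sigma^4\sum_{t\ge1}d_t^2$ as given, which is the formula used in Proposition~\ref{p14} (coming from the factored term $2\sum_{a,b}f(a,b)^2$ with $\Xi_N/N\to0$; note that Remark~\ref{r8} as cited is actually the shift-invariance remark). It then sums the geometric series and expands at $\rho=1$, exactly as you do.

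One adjustment to your heuristic for $4/3$. In fact $f(a,b)=\frac{a}{N}d_{b-a}$ exactly for $a<b$. So each non-crossing pairing carries two hubs, both lying at or below the smaller of two microscopically separated column indices. This gives $\int_0^1x^2\,dx=\frac13$, times $2$ pairings $\times$ $2$ orientations. It is not a single hub below two free indices; the number coincides, but the mechanism is the two-hub one.
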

\subsection{C.        Edge        Universality        via        the        Matrix        Dyson        Equation}

Although        the        bulk        deviates        from        the        semicircle        law,        TW
universality        does        not        require        a        semicircular        bulk:        it        requires
only        that        the        matrix        falls        under        the        Matrix        Dyson        Equation        (MDE)
framework        and        that        the        resulting        self-consistent        edge        is
regular        (square-root        vanishing).        We        verify        the        relevant
hypotheses        of
\cite{AEKS2020}        for        the        ensemble        of
Definition~\ref{def:row-ar1}.

\begin{proposition}[Verification        of        Assumptions        (A),(B),(CD),(E),        \cite{AEKS2020}        ]
\label{prop:assumptions}
Let        $W=S$        as        in        Definition~\ref{def:row-ar1},        with        $S[T]:=\mathbb
E[WTW]$.        Then,        uniformly        in        $N$:
\begin{enumerate}
\item[(A)]        $A=\mathbb        E        W=0$,        so        Assumption        (A)        holds        trivially.
\item[(B)]        All        entries        are        Gaussian        with        variance        $\sigma^2/N$,        so        all
                moments        of        $\sqrt        N        W_{ab}$        are        bounded        uniformly.
\item[(CD)]        For        entries        in        different        rows,        $\operatorname{Cov}=0$
                identically;        within        a        row,        $\operatorname{Cov}$        decays        as
                $\rho^{|\cdot|}$,        exponentially        faster        than        the        required        polynomial
                rate        $s>1/2$        in        the        metric        on        $[N]^2$.
\item[(E)]        (Flatness)        For        any        positive        semi-definite        $T$,
\[
S[T]_{aa}=\frac{\sigma^2}{N}\Big[\sum_{c<a}T_{cc}
+\sum_{c,d>a}\rho^{|c-d|}T_{cd}\Big].
\]
Since        the        AR(1)        kernel        $(\rho^{|c-d|})_{c,d>a}$        has        spectrum        contained
in        $\big[\tfrac{1-\rho}{1+\rho},\tfrac{1+\rho}{1-\rho}\big]$        (the
essential        range        of        its        symbol),        we        obtain
\[
\frac{1-\rho}{1+\rho}\,\langle        T\rangle
\        \le\        S[T]_{aa}\        \le\
\frac{1+\rho}{1-\rho}\,\langle        T\rangle,
\]
so        Assumption        (E)        holds        with        constants        $c(\rho)=\tfrac{1-\rho}{1+\rho}$,
$C(\rho)=\tfrac{1+\rho}{1-\rho}$,        uniform        in        $N$        for        each        fixed
$\rho\in(-1,1)$        (degenerating        as        $\rho\to1^-$).
\end{enumerate}
\end{proposition}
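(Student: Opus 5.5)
The plan is to treat (A), (B), (CD) by direct inspection of Definition~\ref{def:row-ar1} and the covariance formula \eqref{eq:cov-structure}, and to put the real work into (E). For (A), every $\eta_i(j)$ is centred, so $\mathbb E W=0$. For (B), stationarity of the AR(1) recursion gives $\operatorname{Var}\eta_i(j)=1$ for all $j\ge i$. Hence $\sqrt N W_{ab}\sim\mathcal N(0,\sigma^2)$ exactly, and all moments are bounded independently of $N$.

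For (CD) I will use that the whole family $\{W_{ab}\}$ is jointly Gaussian. All cumulants of order $\ge3$ therefore vanish, and the multi-point correlation-decay conditions of \cite{AEKS2020} reduce to decay of the two-point function. By \eqref{eq:cov-structure}, $\operatorname{Cov}(W_{ab},W_{cd})$ vanishes unless the two index pairs share the same row $\min(a,b)=\min(c,d)$. In that case it equals $\frac{\sigma^2}{N}\rho^{|\max(a,b)-\max(c,d)|}$, which is bounded by $\frac{\sigma^2}{N}(1+d)^{-s}$ times a $\rho$-dependent constant for any $s>0$, where $d$ is the distance of the index pairs in $[N]^2$. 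This is the required polynomial decay, a fortiori.

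For (E), I first compute $S[T]_{aa}=\sum_{c,d}\operatorname{Cov}(W_{ac},W_{ad})T_{cd}$ from \eqref{eq:cov-structure} by splitting into cases. If $c<a$, the row index is $c$, so a nonzero term forces $d=c$, and then both maxima equal $a$, giving weight $1$. If $c\ge a$, the row is $a$, so we need $d\ge a$, with weight $\rho^{|c-d|}$. Mixed cases ($c<a\le d$) have different rows and drop out. This yields the displayed formula, with the second sum understood over $c,d\ge a$ (the diagonal entry $\eta_a(a)$ belongs to row $a$).

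Next I bound the second sum. Write it as $\operatorname{Tr}(K_aT_a)$, where $K_a=(\rho^{|c-d|})_{c,d\ge a}$ and $T_a$ is the corresponding principal block of $T$; $T_a$ is PSD because $T$ is. Then $\lambda_{\min}(K_a)\operatorname{Tr}T_a\le\operatorname{Tr}(K_aT_a)\le\lambda_{\max}(K_a)\operatorname{Tr}T_a$. $K_a$ is a finite section of the Toeplitz matrix with symbol $\sum_t\rho^{|t|}e^{it\theta}=\frac{1-\rho^2}{|1-\rho e^{i\theta}|^2}$. Its range is $[\frac{1-|\rho|}{1+|\rho|},\frac{1+|\rho|}{1-|\rho|}]$, and finite sections of a Hermitian Toeplitz matrix have spectrum inside the closed range of the symbol. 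The first sum carries coefficient $1$, which lies in this interval. Adding the two pieces gives $c(\rho)\frac1N\sum_c T_{cc}\le \sigma^{-2}S[T]_{aa}\le C(\rho)\frac1N\sum_cT_{cc}$, i.e.\ the stated bounds with $\langle T\rangle=N^{-1}\operatorname{Tr}T$ (up to the factor $\sigma^2$). These bounds are uniform in $N$ and $a$, and they visibly degenerate as $\rho\to1^-$. For negative $\rho$ I replace $\rho$ by $|\rho|$ throughout.

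The step I expect to be the main obstacle is that flatness in \cite{AEKS2020} is a matrix inequality $c\langle T\rangle\le S[T]\le C\langle T\rangle$, not only a statement about diagonal entries. For the upper bound I would write $x^*S[T]x=\mathbb E[(Wx)^*T(Wx)]$ and bound the covariance of $Wx$ by $\|x\|^2 C(\rho)/N$ times the identity in the appropriate trace sense. For the lower bound I would try to dominate $S[T]$ from below by the diagonal part coming from the independent innovations $\varepsilon_i(j)$: each row equals $\sqrt{1-\rho^2}$ times white noise filtered through the Toeplitz filter of Remark~\ref{r1}, whose symbol is bounded below. If this decomposition does not give a clean PSD ordering, I would settle for the diagonal form (which is what the MDE stability estimates mostly use), together with an off-diagonal decay bound $|S[T]_{ab}|\lesssim N^{-1}\rho^{|a-b|}\|T\|$ obtained from the same case analysis.
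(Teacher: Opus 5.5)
Your handling of (A), (B), (CD), and your case analysis of $S[T]_{aa}$ followed by the finite-section Toeplitz bound, are the paper's argument. Your three corrections are also right: the second sum runs over $c,d\ge a$, the constants must be evaluated at $|\rho|$ when $\rho<0$, and a factor $\sigma^2$ is missing. With these, your argument establishes everything the statement actually displays.

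The issue you flag in your last paragraph is genuine, and the paper does not address it, since it only checks diagonal entries. Your plan for it needs correcting, though.

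\textbf{The matrix inequality holds, with different constants.} No innovation decomposition is needed. Take real unit vectors $x,y$ (the complex case is analogous). Then $y^{\top}S[xx^{\top}]y=\mathbb E\langle y,Wx\rangle^2$, and $\langle y,Wx\rangle=\sum_{i\le j}W_{ij}z_{ij}$ with $z_{ij}=x_iy_j+x_jy_i$ for $i<j$ and $z_{ii}=x_iy_i$. Row independence gives $\mathbb E\langle y,Wx\rangle^2=\frac{\sigma^2}{N}\sum_i z_{i\cdot}^{\top}K_i z_{i\cdot}$ with $z_{i\cdot}=(z_{ij})_{j\ge i}$. This lies between $\lambda_{\min}$ and $\lambda_{\max}$ times $\frac{\sigma^2}{N}\sum_{i\le j}z_{ij}^2=\frac{\sigma^2}{N}\bigl(1+(x\cdot y)^2-\sum_i x_i^2y_i^2\bigr)$. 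The last bracket lies in $[\tfrac12,2]$. For the lower end, set $P_\pm:=\sum_i(x_iy_i)_\pm$. Then $\sum_i x_i^2y_i^2\le P_+^2+P_-^2$ and $(x\cdot y)^2=(P_+-P_-)^2$, so the deficit is at most $2P_+P_-\le\tfrac12$, because $P_++P_-\le 1$. Decomposing $T$ into rank-one pieces, with $\langle xx^{\top}\rangle=1/N$, gives in the matrix sense
\[
\tfrac{\sigma^2}{2}\,c(\rho)\langle T\rangle\le S[T]\le 2\sigma^2 C(\rho)\langle T\rangle.
\]
These constants are still uniform in $N$ and degenerate as $\rho\to1^-$.

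\textbf{The factors $\tfrac12$ and $2$ cannot be removed.} At $\rho=0$ one has $S[T]=\frac{\sigma^2}{N}\bigl(T+\operatorname{Tr}(T)I-\operatorname{diag}T\bigr)$. Taking $T=xx^{\top}$ with $x=N^{-1/2}(1,\dots,1)$ gives top eigenvalue $\frac{\sigma^2}{N}(2-\frac1N)$. Taking $x=(e_1-e_2)/\sqrt2$ gives bottom eigenvalue $\frac{\sigma^2}{2N}$. So your intended bound of $S[xx^*]$ by $C(\rho)\|x\|^2/N$ times the identity is false. The constants $c(\rho),C(\rho)$ in the statement are correct only for the diagonal entries.

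\textbf{Drop the fallback estimate.} The bound $|S[T]_{ab}|\lesssim N^{-1}\rho^{|a-b|}\|T\|$ is off by a factor of $N$. For $T=I$ and $a\neq b$ one finds $S[I]_{ab}=\frac{\sigma^2}{N}\min(a,b)\,\rho^{|a-b|}$, which is of order $\rho^{|a-b|}$ when $a,b\sim N$.
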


\begin{remark}
\label{r2}
Assumption        (F)        (fullness)        is        expected        to        hold        by        the        same
non-degeneracy        ($\rho<1$)        but        has        not        been        verified        in        detail.
Assumption        (G)        (boundedness        of        $M(z)$        near        the        edge)        is,        per
\cite{AEKS2020},        not        automatic        from        (A)--(E)        and
must        be        checked        model        by        model.        Fixed-point        iteration        of        the        MDE,
$M=-(zI+S[M])^{-1}$        (Monte        Carlo        evaluation        of        $S[\cdot]$)        shows
$\|M(\tau_0+i\eta)\|$        remaining        bounded        as        $\eta\downarrow0$        for
$\rho\in\{0,0.3,0.5\}$        down        to        $\eta=3\times10^{-3}$,        with        the
self-consistent        density        $\varrho(\tau_0+i\eta)\to0$,        consistent        with        a
regular              edge.        The        check        becomes        numerically        unreliable
for        $\rho\gtrsim0.7$,        consistent        with        the        degeneration        of        the
Assumption        (E)        constants        as        $\rho\to1$.
\end{remark}
\begin{definition}
\label{def3}
Let    $H$    be    a    probability    measure    on    $[0,\sigma_1]$    for    some    $\sigma_1    
<\infty$,    arising    as    the    pushforward    of    the    uniform    measure    on    $[0,2\pi)$
under    a    bounded,    nonnegative    spectral    density    $f(\theta)$,    with
\begin{equation}
        \bar    t    :=    \int    t    \,    dH(t)    =    1,    \qquad
        \sigma_1    :=    \operatorname*{ess\,sup}    f    =    \sup(\operatorname{supp}    H)    <    \infty.
\end{equation}
\end{definition}
    The    self-consistent    edge
$(\tau_0,\xi_+)$    is    characterized,    via    the    substitution    $a    =    -z/m$,    $\tau_0    =
-a^*    m^*$,    by
\begin{equation}
        I(a)    :=    \int    \frac{dH(t)}{a-t},    \qquad
        K_1(a)    :=    \int    \frac{t\,dH(t)}{(a-t)^2},    \qquad
        I(a^*)    =    K_1(a^*),
\end{equation}
with    $m^*    =    -\sqrt{I(a^*)}$    the    branch    relevant    to    the    right    edge;    existence
of    a    solution    $a^*    >    \sigma_1$    is    guaranteed    by    the    general    theory    of
[1,\,9]    under    Assumptions    (A)--(E).

\begin{conjecture}[Edge        universality        for        the        exponential        decay        case]
\label{conj:TW}
For every fixed $\rho \in (-1,1)$, the fluctuations of $\lambda_{\max}(S)$
around the self-consistent edge $\tau_0(\rho)$, suitably rescaled,
converge in distribution to the Tracy–Widom distribution for the GOE ($\mathrm{TW}_1$  law):
\[
\lambda_{\max}(S) - \tau_0(\rho) \;\xrightarrow{d}\; \mathrm{TW}_1
\quad \text{(after an $N$-dependent, $\rho$-dependent rescaling).}
\]
\end{conjecture}

D.        Explicit        Bulk        Self-Consistent        Equation

While        Proposition~\ref{prop:assumptions}        verifies        the        qualitative        hypotheses        of        the
MDE        framework,        and        Remark~\ref{r2}        reports        only
a        Monte        Carlo        boundedness        check        of        $M(z)$        near        the        edge,        the
row-independent,        Toeplitz        structure        of        the        ensemble        of
Definition~\ref{def:row-ar1}        in        fact        permits        the        self-consistent        
equation        to        be        written        down        explicitly,        in        closed        scalar        form,
rather        than        evaluated        only        as        a        matrix-valued        fixed        point.

\begin{proposition}        (Scalar        self-consistent        equation        for        the        bulk
density).        Let        $\hat        f(\theta)$        denote        the        causal        filter        symbol
of        Remark~\ref{r1},        and        define        the        (real,        nonnegative)        power        spectral
density        of        the        row        covariance,
\begin{equation}
f(\theta)        \;:=\;        |\hat        f(\theta)|^2        \;=\;        \frac{1-\rho^2}{1-2\rho\cos\theta+\rho^2},
\qquad        \theta\in[0,2\pi).
\label{eq:ftheta}
\end{equation}
Then        the        Stieltjes        transform        $m(z)$        of        the        limiting        spectral
measure        of        $S$        satisfies        the        scalar        self-consistent        equation
\begin{equation}
m(z)        \;=\;        \int_0^{2\pi}\frac{d\theta}{2\pi}\,
\frac{1}{-z-f(\theta)\,m(z)},
\qquad        z\in\mathbb{C}^+,
\label{eq:sce}
\end{equation}
and        the        limiting        spectral        density        is        recovered        as
$\varrho(x)=\tfrac{1}{\pi}\lim_{\eta\downarrow0}\operatorname{Im}m(x+i\eta)$.
For        $\rho=0$,        $f\equiv1$        and        Eq.        \eqref{eq:sce}        reduces        to        the
ordinary        semicircle        self-consistent        equation
$m=1/(-z-m)$,        consistent        with        Proposition~\ref{prop:mu4}.
\label{p4}
\end{proposition}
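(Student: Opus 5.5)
The plan is to start from the matrix Dyson equation $M(z)=-(z I+\mathcal S[M(z)])^{-1}$, whose solvability and stability for this ensemble is guaranteed by Proposition~\ref{prop:assumptions}, and to reduce it to a scalar equation by diagonalizing the covariance structure in Fourier space. I take $\sigma=1$ throughout; this matches the normalization implicit in \eqref{eq:sce}, which reduces to $m=1/(-z-m)$ at $\rho=0$.

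The first step is to compute the superoperator $\mathcal S[T]=\mathbb E[WTW]$ entrywise from the covariance \eqref{eq:cov-structure}. Proposition~\ref{prop:assumptions}(E) already records the diagonal part,
\[
\mathcal S[T]_{aa}=\tfrac1N\Big[\sum_{c<a}T_{cc}+\sum_{c,d>a}\rho^{|c-d|}T_{cd}\Big].
\]
I would treat the off-diagonal entries in the same way: a pair of entries is correlated only if the two entries share their smaller index. The key observation is that the second sum is a quadratic form in the Toeplitz kernel $(\rho^{|c-d|})$, whose symbol is exactly $f(\theta)$ in \eqref{eq:ftheta}.

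Second, I will look for a solution of translation-invariant (Toeplitz) form. Write $M_{cd}\approx\int\frac{d\theta}{2\pi}\,\hat M(\theta)\,e^{i(c-d)\theta}$. By Szegő-type asymptotics, the quadratic form $\frac1N\sum_{c,d}\rho^{|c-d|}M_{cd}$ then becomes $\int\frac{d\theta}{2\pi}\,f(\theta)\,\hat M(\theta)$, up to $O(1/N)$ boundary corrections. The $\ell^1$-summability in Proposition~\ref{prop:EF} and the bounds on the symbol in Remark~\ref{r1} are what make these corrections uniformly small for fixed $\rho<1$. Setting $m:=\langle M\rangle=\int\frac{d\theta}{2\pi}\hat M(\theta)$, the self-energy seen at frequency $\theta$ should be $f(\theta)\,m$. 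Inverting $-z-\mathcal S[M]$ mode by mode then gives $\hat M(\theta)=1/(-z-f(\theta)m)$, and averaging over $\theta$ yields \eqref{eq:sce}. The normalization $\int f\,d\theta/2\pi=d_0=1$ and the limit $\rho=0$ (where $f\equiv1$) are immediate consistency checks. Uniqueness in $\mathbb C^+$ follows by the standard contraction argument for Stieltjes-type fixed points, using that $f$ is bounded and strictly positive. The formula for the density is then Stieltjes inversion.

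The step I expect to be the main obstacle is justifying the Toeplitz ansatz. The ensemble is not translation invariant: at position $a$, the $a-1$ entries with smaller index are white, while the $N-a$ entries with larger index carry the AR(1) kernel. Hence $\mathcal S[M]_{aa}$ naturally depends on $x=a/N$, interpolating between $x\,m$ and $(1-x)\int f\hat M$. I would first try a two-scale ansatz in which $M$ is locally Toeplitz with a slowly varying profile in $x$. Then I would check whether the white part and the correlated part recombine, after averaging over $x\in[0,1]$, into the single symbol $f(\theta)$. If they do not combine exactly, the scalar equation \eqref{eq:sce} should be read as the homogenized (profile-averaged) closure. In that case I would quantify its error against the full vector equation, using the stability bounds of \cite{AEKS2020}, and compare numerically with the Monte Carlo MDE iteration of Remark~\ref{r2}.
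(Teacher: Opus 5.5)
There is a genuine gap, exactly at the point you flag at the end, and it cannot be closed. The error is in your second step. The quantity $\frac1N\sum_{c,d\ge a}\rho^{|c-d|}M_{cd}$, which Szeg\H{o}'s theorem turns into $\int f\hat M\,d\theta/2\pi$, is a \emph{scalar} contribution to the diagonal entry $\mathcal S[M]_{aa}$. It is not the Fourier multiplier $f(\theta)m$. The $\theta$-dependence of $\mathcal S[M]$ comes instead from its off-diagonal entries. From \eqref{eq:cov-structure}, for $a<b$,
\[
\mathcal S[T]_{ab}=\frac1N\Big[d_{b-a}\sum_{c<a}T_{cc}+\sum_{c\ge a}d_{|c-b|}T_{ca}\Big],
\]
and the second sum is $O(\|T\|/N)$ by Cauchy--Schwarz. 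So at $x=a/N$ the symbol $f(\theta)$ enters only with weight $\frac1N\sum_{c<a}M_{cc}$. Your two-scale ansatz is the right one. With $\langle\cdot\rangle:=\int_0^{2\pi}\cdot\,\frac{d\theta}{2\pi}$, however, it closes into
\[
\hat M_x(\theta)=-\Big(z+\int_x^1\langle f\hat M_y\rangle\,dy+f(\theta)\int_0^x\langle\hat M_y\rangle\,dy\Big)^{-1},\qquad m(z)=\int_0^1\langle\hat M_x\rangle\,dx,
\]
and the white and colored parts do not recombine into \eqref{eq:sce}.

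A large-$z$ expansion settles this. Equation \eqref{eq:sce} gives $\mu_2=\langle f\rangle=1$ and $\mu_4=1+\langle f^2\rangle=2+2\rho^2/(1-\rho^2)$. The system above gives $\mu_4=2+\tfrac43\rho^2/(1-\rho^2)$. This agrees with the exact Wick count, in which $\sum_c\mathrm{Cov}(S_{ac},S_{cb})=\tfrac aN d_{b-a}$ for $a<b$, and with the paper's own Eq.~\eqref{e2}. Your fallback of reading \eqref{eq:sce} as a homogenized closure therefore concedes the point. As an exact equation for the limiting Stieltjes transform, \eqref{eq:sce} is inconsistent with the fourth moment for every $\rho\neq0$.

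The paper takes a different route. It asserts that, by MDE universality, the density coincides with that of $\Sigma^{1/2}G\Sigma^{1/2}$ with $G$ a GOE matrix, and then invokes free multiplicative convolution. That sidesteps the loss of translation invariance you identified, but it does not supply the missing step either. For $\Sigma^{1/2}G\Sigma^{1/2}$ one has $\mathcal S[T]=\frac1N\operatorname{Tr}(\Sigma T)\,\Sigma+O(N^{-1})$, which is position-independent and is not the row-independent operator above. Moreover, the Stieltjes transform of $\mu_{\mathrm{sc}}\boxtimes\mathrm{Law}(f)$ satisfies the two-function system $m=\langle(-z-f\tilde m)^{-1}\rangle$, $\tilde m=\langle f(-z-f\tilde m)^{-1}\rangle$. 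Its fourth moment $2\langle f^2\rangle$ matches neither \eqref{eq:sce} nor \eqref{e2}. Your diagnosis that the $x$-dependence is the crux is correct. What your plan lacks is not a technical estimate: the correct scalar reduction is the $x$-dependent system above, and \eqref{eq:sce} is not that reduction.
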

\begin{proof}
By        Proposition~\ref{prop:assumptions},        the        ensemble        of        Definition~        \ref{def:row-ar1}        satisfies        Assumptions
(A)--(E)        with        self-consistent        covariance        operator        $S[T]_{aa}        =        \frac{\sigma^2}{N}
\sum_{c,d}        (\text{Toeplitz        kernel})_{cd}        T_{cd}$,        whose        associated        symbol        is
$f(\theta)        =        |\hat        f(\theta)|^2$        (Remark~\ref{r1}).        By        the        universality        of
the        MDE        \cite{AEKS2020,        AjankiErdosKruger2017}                which        states        that        the        self-consistent
density        $\varrho$        associated        to        a        Wigner-type        ensemble        depends        on        the        ensemble
only        through        the        operator        $S[\cdot]$,        and        not        on        finer        details        of        the        joint
law        of        the        entries        ---        the        limiting        spectral        density        of        $S$        coincides        with
that        of        any        Wigner-type        ensemble        sharing        the        same        operator        $S[\cdot]$.
In        particular,        it        coincides        with        that        of        $\Sigma^{1/2}        G        \Sigma^{1/2}$,        where
$G$        is        a        standard        GOE        matrix        and        $\Sigma$        is        the        Toeplitz        covariance        operator
with        symbol        $f(\theta)$:        since        $G$        is        orthogonally        invariant,        $\Sigma$        and
$OGO^T$        ($O$        Haar-distributed,        independent        of        $G$'s        eigenvalues)        are
asymptotically        free,        so        the        limiting        spectral        distribution        of
$\Sigma^{1/2}G\Sigma^{1/2}$        is        the        free        multiplicative        convolution        of        the
semicircle        law        with        the        law        of        $f(\theta)$        under        $\theta$        uniform,        whose
Stieltjes        transform        satisfies        Eq.~(\ref{eq:sce}).
\end{proof}

\begin{remark}        (Consistency        with        Assumption        (E)).        The        essential        range
of        $f(\theta)$        is        exactly        the        interval
$\bigl[\tfrac{1-\rho}{1+\rho},\,\tfrac{1+\rho}{1-\rho}\bigr]$,
i.e.\        precisely        the        flatness        bounds        $c(\rho)$,        $C(\rho)$        of
Proposition~\ref{prop:assumptions}        (E).        Equation~\eqref{eq:sce}        thus        makes        explicit,
at        the        level        of        a        single        scalar        fixed-point        equation        rather        than
an        operator-valued        one,        exactly        the        object        whose        boundedness
Assumption~(E)        controls;        the        degeneration        $c(\rho)\to0$,
$C(\rho)\to\infty$        as        $\rho\to1^-$        (Proposition~\ref{prop:E-breakdown})        corresponds
to        $f(\theta)$        developing        an        unbounded        peak        at        $\theta=0$        in
Eq.        \eqref{eq:ftheta}.
\end{remark}

Equation~\eqref{eq:sce}        is        solved        numerically        by        fixed-point
iteration        in        $m$        at        $z=x+i\eta$        for        a        grid        of        $x$        and        small
$\eta>0$,        using        continuation        in        $x$        for        stability        (the        solution
at        each        $x$        seeds        the        initial        guess        at        the        next).        This        is        a
direct,        non-Monte-Carlo        alternative        to        the        boundedness        check        of
Remark~\ref{r2},        and        additionally        yields        the        bulk        density        itself,        not
merely        a        boundedness        certificate.

Figure~\ref{def:row-ar1}        shows        the        corresponding        densities        and        residuals
directly.        The        self-consistent        theory        tracks        the        empirical
histogram        closely        at        both        values        of        $\rho$,        including        the
excess        central        peak        and        the        depleted        shoulders        that
distinguish        $\varrho$        from        the        semicircle        law        (Proposition~\ref{prop:mu4});
the        semicircle        curve,        by        contrast,        shows        systematic,
$O(10^{-1})$        residuals        of        a        characteristic        sign        pattern        (too
low        at        the        center,        too        high        on        the        shoulders)        that        grow        with
$\rho$.

\begin{figure}[h]
\centering
\includegraphics[width=\textwidth]{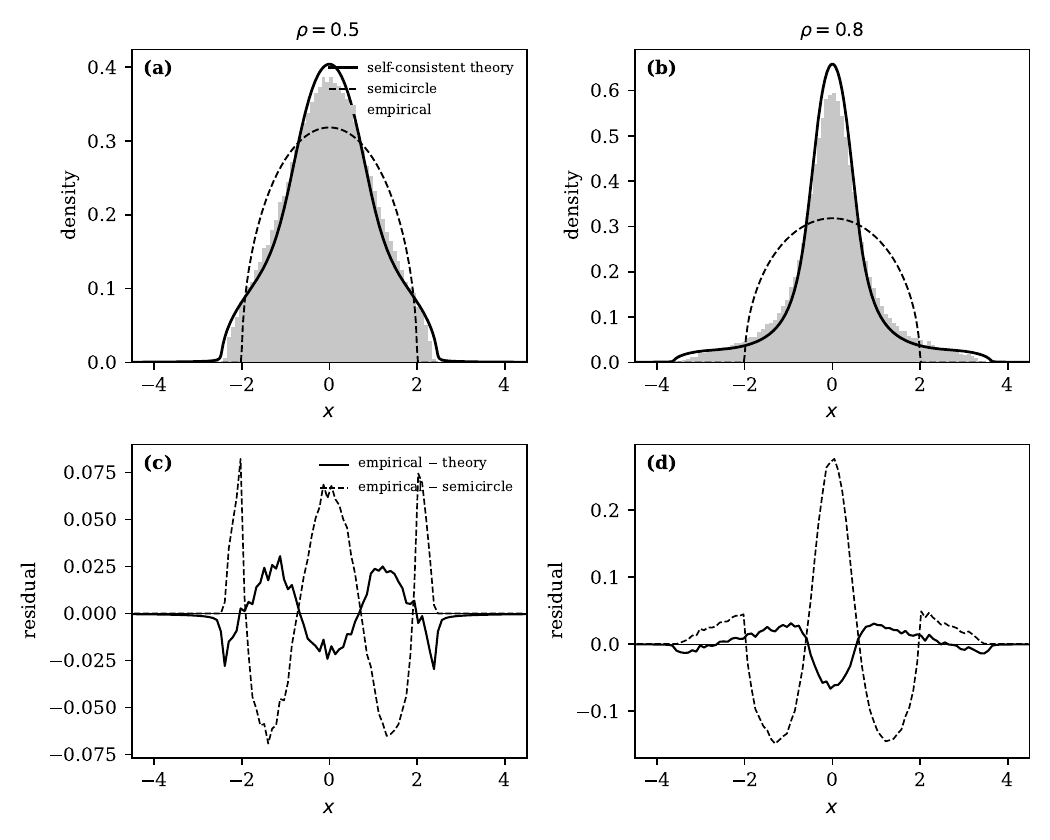}
\caption{(a,b)        Bulk        spectral        density        of        $S$        at        $\rho=0.5$        and
$\rho=0.8$:        empirical        histogram        (gray),        
Eq.        \eqref{eq:sce}        (solid        black),        and        the        semicircle        law        (dashed
black).        (c,d)        Residuals        of        the        theory        and        of        the        semicircle        law
relative        to        the        empirical        histogram,        for        the        same        two        values        of
$\rho$.        $N=800$,        $25$        realizations,        $\eta=0.02$.}
\label{fig:bulk_sce}
\end{figure}

We        regard        Eq.        \eqref{eq:sce},        together        with        the        numerical
agreement        of        
Fig.~\ref{fig:bulk_sce},        as        an        explicit        closed-form
replacement        for        the        Monte        Carlo        boundedness        check        of        Remark~\ref{r2}.

We        emphasize        that        $f(\theta)$        alone        does        not        determine
Eq.~\eqref{eq:sce}:        $f(\theta)$        is        exactly        the        limiting
spectral        measure        of        the        Toeplitz        covariance        operator        $\Sigma_\rho
=        (\rho^{|j-k|})_{j,k}$        (Szegő's        theorem),        and        depends        only        on        the
row        covariance        $\{\rho^{|t|}\}$;        but        Eq.~\eqref{eq:sce}        itself        is        a
specific        self-consistency        relation        ---        the        free        multiplicative
convolution        of        the        semicircle        law        with        the        law        of        $f(\theta)$
($\theta$        uniform)        ---        that        additionally        presupposes        the
row-independence        structure        of        Definition~\ref{def:row-ar1}        (Assumption~(CD):        zero
covariance        across        distinct        rows).        Given        only        the        spectral        measure
$\mu$        of        $f(\theta)$,        without        this        independence        structure,        the
map        from        $\mu$        to        the        limiting        spectral        density        $\varrho$        is        not
determined;        a        different        matrix        construction        sharing        the        same        row
covariance        but        correlating        entries        across        rows        would        in        general
satisfy        a        different        self-consistent        equation,        or        none        of        this
scalar        form        at        all.        Equation~\eqref{eq:sce}        should        therefore        be
read        as        jointly        encoding        two        independent        ingredients:        the
Toeplitz        spectrum        $f(\theta)$        (which        fixes        the        coefficients        of
the        equation)        and        the        row-independent        ensemble        structure        of
Definition~\ref{def:row-ar1}        (which        fixes        the        form        of        the        equation).
\begin{remark}[Finiteness    of    the    edge,    AR(1)    case]\label{rem:finite-edge-AR1}
For    every    fixed    $\rho    \in    (-1,1)$,    the    self-consistent    edge    $\tau_0(\rho)$    is
finite;( see Appendix~A Corollary \ref{cor:finite-AR1}).
for    a    two-sided    quantitative    bound    in    terms    of
$\sigma_1(\rho) =  \operatorname*{ess\,sup}    f    =    f(0)     =    \tfrac{1+\rho}{1-\rho}$.
\end{remark}


\begin{proposition}[Divergence    of    the    self-consistent    edge    as    $\rho    \to    1^-$]
\label{prop:AR1-edge-divergence}
As    $\rho    \to    1^-$,
\begin{equation}
        \sqrt{\sigma_1(\rho)/2}    \;<\;    \tau_0(\rho)    \;<\;    2\sqrt{2\sigma_1(\rho)},
        \qquad    \sigma_1(\rho)    =    \frac{1+\rho}{1-\rho},
\end{equation}
so    in    particular    $\tau_0(\rho)    =    \Theta(\sigma_1(\rho)^{1/2})    \to    \infty$.
Since    $\sigma_1(\rho)    \sim    2/(1-\rho)$    as    $\rho    \to    1^-$,    this    gives    the
explicit    rate
\begin{equation}
        \tau_0(\rho)    =    \Theta\!\left((1-\rho)^{-1/2}\right),    \qquad    \rho    \to    1^-.
\end{equation}
\end{proposition}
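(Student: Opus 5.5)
The plan is to reduce both inequalities to elementary estimates on the scalar function
\[
g(a):=a\sqrt{I(a)},\qquad a>\sigma_1,
\]
for the measure $H=H_\rho$ of Definition~\ref{def3}, i.e.\ the law of $f(\theta)$ from Eq.~\eqref{eq:ftheta}. This measure has mean $\bar t=1$ and $\sigma_1=\sigma_1(\rho)=\tfrac{1+\rho}{1-\rho}$. By the edge characterization stated after Definition~\ref{def3}, $\tau_0=-a^*m^*=a^*\sqrt{I(a^*)}=g(a^*)$.

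\textbf{Step 1: identify $a^*$ as a critical point of $g$.} Using $I'(a)=-\int (a-t)^{-2}\,dH(t)$, we get
\[
\frac{d}{da}\log g(a)=\frac1a+\frac{I'(a)}{2I(a)}=\frac{1}{2aI(a)}\int\frac{a-2t}{(a-t)^2}\,dH(t)=\frac{I(a)-K_1(a)}{2aI(a)}.
\]
So the edge equation $I(a^*)=K_1(a^*)$ is exactly $g'(a^*)=0$. If $a\ge 2\sigma_1$, the integrand $(a-2t)/(a-t)^2$ is positive on $\operatorname{supp}H\subset[0,\sigma_1]$. Hence $g$ is strictly increasing on $[2\sigma_1,\infty)$, and every critical point satisfies $a^*\in(\sigma_1,2\sigma_1)$.

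\textbf{Step 2: lower bound.} Since $0\le t$ on the support, $a-t\le a$ and therefore $I(a)\ge 1/a$. This gives $g(a)\ge\sqrt a$ for every $a>\sigma_1$, and in particular
\[
\tau_0=g(a^*)\ge\sqrt{a^*}>\sqrt{\sigma_1}>\sqrt{\sigma_1/2}.
\]
This direction needs nothing beyond $a^*>\sigma_1$.

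\textbf{Step 3: upper bound.} I will show that $a^*$ is the minimizer of $g$ on $(\sigma_1,\infty)$, so that $\tau_0=\min_{a>\sigma_1}g(a)$. This is the standard Silverstein--Choi description: the right edge is where the inverse map $m\mapsto z(m)$ of Eq.~\eqref{eq:sce}, written in the variable $a=-z/m$, stops being monotone. Given minimality, I evaluate $g$ at the test point $a=2\sigma_1$. There $a-t\ge\sigma_1$, so $I(2\sigma_1)\le 1/\sigma_1$, and
\[
\tau_0\le g(2\sigma_1)\le \frac{2\sigma_1}{\sqrt{\sigma_1}}=2\sqrt{\sigma_1}<2\sqrt{2\sigma_1}.
\]
The asymptotic statements then follow at once from $\sigma_1(\rho)\sim 2/(1-\rho)$.

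\textbf{Main obstacle.} The delicate point is the minimality in Step 3, which requires that $g$ has a unique critical point on $(\sigma_1,\infty)$, or at least that the relevant root $a^*$ is the global minimum. I would try first to show that $h(a):=\int (a-2t)(a-t)^{-2}\,dH$ changes sign only once, from negative to positive. The factor $a^2 h(a)/(\ldots)$ suggests examining $a\mapsto a\,h(a)$ via monotonicity in $a$ of each integrand weight. If $H$ has no atom at $\sigma_1$ and $I(\sigma_1^+)$ is finite, the root may not exist, and the edge is then $g(\sigma_1^+)$. For AR(1), however, $H$ has an inverse-square-root singularity at $\sigma_1$, so $I(\sigma_1^+)=\infty$ and $g\to\infty$ at both ends, which gives existence.

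As an independent fallback for the upper bound, I would use the flatness bound of Proposition~\ref{prop:assumptions}~(E), $S[T]\le\sigma_1\langle T\rangle$. The standard MDE comparison then gives $\operatorname{supp}\varrho\subset[-2\sqrt{\sigma_1},2\sqrt{\sigma_1}]$, which yields the same conclusion without needing minimality.
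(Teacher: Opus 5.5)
\textbf{Gap in Step~3 (the upper bound).} Your Steps~1 and~2 are sound. The identity $\frac{d}{da}\log g=(I-K_1)/(2aI)$, together with positivity of $(a-2t)/(a-t)^2$ for $a\ge 2\sigma_1$, is an equivalent repackaging of Step~1 of the paper's Lemma~\ref{lem:finite-edge-general}; there, $K_1\le\sigma_1 I/(a-\sigma_1)$ forces $a^*\le 2\sigma_1$. Your lower bound via $I(a)\ge 1/a$ is simpler than the paper's Jensen step ($I(a^*)\ge 1/(a^*-1)$ combined with $a^*\le 2\sigma_1$) and also sharper: it gives $\tau_0>\sqrt{\sigma_1}$.

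Step~3, however, only derives the upper bound under the hypothesis that $a^*$ is the global minimizer of $g$ on $(\sigma_1,\infty)$, and that hypothesis is never proved. You actually need only that $g'>0$ on $(a^*,\infty)$, i.e.\ that $a^*$ is the largest critical point, but that is unproved too. The single-sign-change argument for $\int (a-2t)(a-t)^{-2}\,dH$ is only proposed, not carried out. The MDE fallback hands the estimate to an external support theorem, and it still needs $g(a^*)$ to be the right end of $\operatorname{supp}\varrho$, which is the same Silverstein--Choi identification your minimality claim rests on. Moreover, in the edge characterization following Definition~\ref{def3} and in Lemma~\ref{lem:finite-edge-general}, $a^*$ is simply a root of $I=K_1$. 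The bound should therefore not depend on knowing which root it is, or whether it is unique.

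\textbf{The missing idea.} The critical-point equation by itself controls $g(a^*)$; no global information about $g$ is needed. Your Step~1 already shows that $g'(a^*)=0$ is equivalent to $L(a^*)=2I(a^*)/a^*$, where $L(a):=\int dH(t)/(a-t)^2=-I'(a)$. Cauchy--Schwarz against the probability measure $H$ gives $I(a)^2\le L(a)$. Hence $I(a^*)\le 2/a^*$, and
\[
\tau_0=a^*\sqrt{I(a^*)}\le\sqrt{2a^*}\le 2\sqrt{\sigma_1}<2\sqrt{2\sigma_1}
\]
for every root $a^*\in(\sigma_1,2\sigma_1]$. This is exactly the paper's argument: Step~2 of Lemma~\ref{lem:finite-edge-general}, via $K_1+I=aL$ and $I^2\le L$, which the paper specializes in Corollary~\ref{cor:finite-AR1} before letting $\rho\to1^-$. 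The paper bounds $a^*$ and $I(a^*)$ separately and so only obtains the constant $2\sqrt{2\sigma_1}$.

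\textbf{Alternative fix for AR(1).} If you want to keep the minimality route, it can be closed using the closed form of Proposition~\ref{prop:AR1-closed-form-I}. One finds $\frac{d}{da}g(a)^2=R(a)^{-3/2}\phi(a)$ with $\phi(a)=R(a)^{3/2}+1-\frac{sa}{2}$. This $\phi$ is strictly convex on $(\sigma_1,\infty)$ and satisfies $\phi(\sigma_1)=(1-\sigma_1^2)/2<0$. So $g$ has exactly one critical point, and it is the global minimum. This is correct but heavier than the two-line Cauchy--Schwarz bound.
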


\begin{proof}
Immediate    from    Corollary~\ref{cor:finite-AR1}    (which    holds    for    every    fixed
$\rho<1$)    by    letting    $\rho    \to    1^-$:    both    the    lower    and    upper    bounds    diverge
since    $\sigma_1(\rho)    \to    \infty$.
\end{proof}

\begin{remark}
This    divergence    concerns    the    self-consistent    edge    $\tau_0(\rho)$    in    the    order
of    limits    $N    \to    \infty$    (for    fixed    $\rho<1$)    followed    by    $\rho    \to    1^-$;    it
is    a    distinct    statement    from    the    growth
$\lambda_{\max}(S)    =    \Theta(\sqrt{N})$    established    numerically    at    $\rho=1$
exactly    in    Observation~\ref{o1}    (Section~III),    where    Assumption~(E)
fails    and    the    MDE    framework    does    not    apply    at    all.    The    two
results    are    consistent    in    spirit    ---    both    reflect    the    same    breakdown    of
flatness    documented    in    Proposition~\ref{prop:degeneration}    ---    but    neither    implies    the    other,    since
the    orders    of    the    $N\to\infty$    and    $\rho\to    1$    limits    are    reversed.
\end{remark}
\begin{remark}
The    AR(1)    ensemble's    Poisson-kernel    structure    in    fact    permits    the
self-consistent    edge    equation    to    be    solved    in    closed    algebraic    form,
yielding    the    exact    leading    asymptotic    $\tau_0(\rho)    \sim    \sqrt2\,(1-\rho)^{-1/2}$
as    $\rho\to1^-$;    see    Appendix~B.
\end{remark}
\subsection{E.        Numerical        Evidence        of        TW                of        exponential        decay        case}
\label{subsec:numerics}

        Table
\ref{tab:edge}        reports        the        extrapolated        edge        location        $\tau_0(\rho)$
and        the        fitted        exponent        of        $\operatorname{std}(\lambda_{\max})\sim
N^{-\beta}$,        consistent        with        $\beta=2/3$        for        $\rho\le0.5$.        Skewness        and
excess        kurtosis        of        $\lambda_{\max}$        at        $N=512$--$1024$        are        compatible,
within        sampling        error,        with               $\mathrm{TW}_1$        reference        values
$(0.294,\,0.165)$.

\begin{center}
\begin{tabular}{c|cc}
\toprule
$\rho$        &        extrapolated        edge        $\tau_0(\rho)$        &        fitted        $\beta$        (std\,$\sim        N^{-\beta}$)\\
\hline
\midrule
0.0        &        1.99        &        0.72\\
0.3        &        2.13        &        0.67\\
0.5        &        2.36        &        0.70\\
0.7        &        2.89        &        0.59\\
\bottomrule
\end{tabular}
\captionof{table}{Edge        location        and        fluctuation        exponent
(TW        prediction:        $\beta=2/3$).}
\label{tab:edge}
\end{center}

\section{III                Strong        Correlation        Limit        of        exponential        decay        case}
\label{subsec:rho-to-1}

\subsection{A.        Volterra        Operator}
We        now        examine        the        limit        $\rho\to1^-$        of        the        ensemble        of
Definition~\ref{def:row-ar1}.        This        limit        is        singular:        it        lies        outside
the        regime        covered        by        Proposition~\ref{prop:assumptions},        and        connects
the        present        construction        directly        to        the        Volterra-operator        cascade
studied        in        the        companion        paper        on        multi-critical        BBP        transitions        \cite{Hisakado2026_BBP}.

\begin{proposition}[Algebraic        degeneration        at        $\rho=1$]
\label{prop:degeneration}
As        $\rho\to1^-$,        the        innovation        term        of        each        row's        AR(1)        process
vanishes        ($\sqrt{1-\rho^2}\to0$),        so        that        $\eta_i(j)\to\eta_i(i)=:\eta_i$
for        all        $j\ge        i$.        Consequently
\[
S_{ij}\        \xrightarrow{\rho\to1^-}\        \frac{\sigma}{\sqrt        N}\,\eta_{\min(i,j)},
\qquad        \eta_1,\dots,\eta_N\        \text{i.i.d.\        }N(0,1).
\]
\end{proposition}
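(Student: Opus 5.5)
The plan is to unroll the AR(1) recursion of Definition~\ref{def:row-ar1} explicitly, or equivalently to use the causal-filter representation of Proposition~\ref{prop:EF}, and then pass to the limit coefficient by coefficient. Iterating the recursion from the seed gives, for $j\ge i$,
\[
\eta_i(j)=\rho^{\,j-i}\eta_i(i)+\sqrt{1-\rho^2}\sum_{k=i+1}^{j}\rho^{\,j-k}\varepsilon_i(k),
\]
which is exactly $\sum_k E_{i,k}F^{(i)}_{k,j}$ with $E_{i,i}=\eta_i(i)$ and $E_{i,k}=\varepsilon_i(k)$ for $k>i$. A key point is that the seed $\eta_i:=\eta_i(i)$ and the innovations $\varepsilon_i(k)$ can be realised on a single probability space, independent of $\rho$. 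This coupling makes the statement an almost-sure or $L^2$ convergence for each fixed $N$, not merely a convergence in law.

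For fixed $N$ and fixed $i\le j$ I will bound the deviation from the seed in $L^2$. The difference is
\[
\eta_i(j)-\eta_i=(\rho^{\,j-i}-1)\eta_i+\sqrt{1-\rho^2}\sum_{k=i+1}^{j}\rho^{\,j-k}\varepsilon_i(k).
\]
Using independence of the $\varepsilon$'s and the seed, its second moment is
\[
(1-\rho^{\,j-i})^2+(1-\rho^2)\sum_{l=0}^{j-i-1}\rho^{2l}=(1-\rho^{\,j-i})^2+(1-\rho^{2(j-i)}),
\]
and this tends to $0$ as $\rho\to1^-$ because $j-i\le N$ is bounded. Equivalently, the stationary covariance \eqref{eq:cov-structure} gives $\mathbb E(\eta_i(j)-\eta_i)^2=2-2\rho^{\,j-i}\to0$, which is a quicker check. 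Since the matrix has finitely many entries, this yields $L^2$ convergence of the whole matrix $S$, hence convergence in probability. It also gives almost-sure convergence, because the expression above is a finite linear combination of fixed Gaussians with coefficients converging deterministically. Multiplying by $\sigma/\sqrt N$ and symmetrising with $S_{ij}=S_{ji}$ produces the limit $\frac{\sigma}{\sqrt N}\eta_{\min(i,j)}$.

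The final step identifies the law of the limit. The seeds $\eta_i(i)\sim\mathcal N(0,1)$ are independent across $i$ because the $N$ row processes are mutually independent. Hence $\eta_1,\dots,\eta_N$ are i.i.d. standard Gaussians.

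I do not expect a genuine obstacle. The only subtlety is to state that the convergence holds for each fixed $N$, in which case the entrywise bound above is uniform over the finitely many entries. Uniformity in $N$ fails, since $1-\rho^{\,j-i}$ need not be small when $j-i\sim N$. This is precisely the loss of flatness that puts the limit outside Proposition~\ref{prop:assumptions}, so the statement should be read with $N$ fixed before $\rho\to1^-$.
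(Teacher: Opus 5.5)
Your proposal is correct and is essentially the paper's own argument: the paper gives no separate proof, only the in-statement observation that the innovation coefficient $\sqrt{1-\rho^2}$ vanishes, so each row collapses onto its seed. Your unrolled recursion (equivalently, the filter of Proposition~\ref{prop:EF}) under the natural $\rho$-independent coupling of seeds and innovations makes this rigorous: it also tracks the seed coefficient $\rho^{\,j-i}\to1$, gives the explicit rate $\mathbb E(\eta_i(j)-\eta_i)^2=2-2\rho^{\,j-i}$, and correctly restricts the convergence to fixed $N$, which yields a stronger (pathwise) mode of convergence than the L\'evy-continuity argument the paper uses for the analogous $\gamma\to0^+$ limit in Proposition~\ref{prop:gamma-degeneration}.
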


\begin{remark}
This        is        exactly        the        degenerate        case        $\eta_i(j)\equiv\eta_i$
(constant        along        each        row),        so        the        family        of        Definition~\ref{def:row-ar1}
interpolates        continuously        between        the        Wigner-type        ensemble        ($\rho=0$)
and        this        fully        rank-structured        limit        ($\rho=1$).
\end{remark}

\begin{proposition}[Breakdown        of        Assumption        (E)]
\label{prop:E-breakdown}
The        flatness        constants        of        Proposition~\ref{prop:assumptions}        satisfy
$c(\rho)=\frac{1-\rho}{1+\rho}\to0$        and
$C(\rho)=\frac{1+\rho}{1-\rho}\to\infty$        as        $\rho\to1^-$.        Hence
Assumption        (E)        fails        in        the        limit,        and        the        MDE
framework        of        Section        II        does        not        apply        at        $\rho=1$;        a
different                description        is        required.
\end{proposition}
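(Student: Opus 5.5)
The statement has two parts: an elementary limit computation for the explicit flatness constants, and the claim that Assumption (E) cannot hold uniformly (in $\rho$) near $\rho=1$, or at $\rho=1$ itself.

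\textbf{Step 1: the limits of the constants.} From Proposition~\ref{prop:assumptions}(E) we have $c(\rho)=(1-\rho)/(1+\rho)$ and $C(\rho)=(1+\rho)/(1-\rho)=1/c(\rho)$. As $\rho\to1^-$, the numerator $1-\rho\to0^+$ while $1+\rho\to2$, so $c(\rho)\sim(1-\rho)/2\to0$ and $C(\rho)\sim 2/(1-\rho)\to\infty$.

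\textbf{Step 2: the bounds are sharp.} Showing that these particular constants degenerate is not enough, since better constants might exist. I would therefore show that no $\rho$-uniform constants work, by testing the quadratic form in (E) on explicit positive semi-definite $T$.

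Fix a row index $a$ with $N-a\to\infty$. Suppose $T$ is supported on the block $\{c,d>a\}$. Then, by the formula in (E),
\[
S[T]_{aa}=\tfrac{\sigma^2}{N}\operatorname{Tr}(\Sigma_\rho T),\qquad \Sigma_\rho=(\rho^{|c-d|})_{c,d>a}.
\]
Choose $T=vv^{T}$, where $v$ is a normalized near-eigenvector of $\Sigma_\rho$ whose Rayleigh quotient is close to $\sup f=f(0)=\frac{1+\rho}{1-\rho}$. A flat vector $v=(1,\dots,1)/\sqrt{L}$ on a window of length $L\gg(1-\rho)^{-1}$ works, by Szegő's theorem applied to the symbol $f(\theta)$ of Eq.~\eqref{eq:ftheta}. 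With this choice, $S[T]_{aa}/\langle T\rangle$ is of order $C(\rho)$ after normalizing $\langle T\rangle=\frac1N\operatorname{Tr}T$.

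Similarly, an alternating vector $v_k=(-1)^k/\sqrt L$ gives a Rayleigh quotient close to $\inf f=f(\pi)=\frac{1-\rho}{1+\rho}$. This shows the lower constant must be at most of order $c(\rho)$. The choice of $a$ (small $a$, so that the term $\sum_{c<a}T_{cc}$ is absent) keeps both tests clean.

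\textbf{Step 3: what happens at $\rho=1$.} By Proposition~\ref{prop:degeneration}, the kernel becomes the all-ones matrix $J$ on $\{c,d>a\}$. Its spectrum is $\{0,\,N-a\}$, so on that block $\operatorname{Tr}(JT)=v^{T}Jv$ for $T=vv^{T}$, which is either $0$ or of order $N-a$. Taking $v\perp\mathbf 1$ gives a lower bound of $0$. Taking $v=\mathbf 1/\sqrt{N-a}$ gives a ratio of order $N$, which is not bounded uniformly in $N$. Hence (E) fails outright at $\rho=1$, and the MDE theory \cite{AEKS2020} is not applicable there.

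\textbf{Main obstacle.} The delicate point is the normalization $\langle T\rangle$ versus the local trace in Step 2. One must check that the flatness inequality is stated relative to $\frac1N\operatorname{Tr}T$, so that the localized test matrices indeed saturate the ratio. I would handle this by spreading the test vector over blocks of size $L$ tiled along the diagonal, making $T$ block-diagonal. This keeps $\langle T\rangle$ of order one while each diagonal entry $S[T]_{aa}$ still sees the extreme Rayleigh quotient.
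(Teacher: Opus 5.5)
Your proof is correct. Its Step 1 is exactly the paper's whole justification: the proposition has no separate proof and just reads off $c(\rho)\to0$, $C(\rho)\to\infty$ from the explicit constants of Proposition~\ref{prop:assumptions}(E).

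Steps 2 and 3 genuinely go further. Taken literally, the paper's ``hence (E) fails'' only shows that one particular pair of \emph{sufficient} constants degenerates. On its own, that neither rules out better $\rho$-uniform constants nor says anything about the ensemble at $\rho=1$.

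Your saturation test settles the first point. For $T=vv^{T}$ with $\|v\|=1$ supported in $\{c,d>a\}$, the ratio is exactly $S[T]_{aa}/\langle T\rangle=\sigma^2v^{T}\Sigma_\rho v$. The explicit sum $\frac1L\sum_{j,k=1}^{L}x^{|j-k|}=\frac{1+x}{1-x}-\frac{2x(1-x^{L})}{L(1-x)^{2}}$, taken at $x=\pm\rho$, shows that both $C(\rho)$ and $c(\rho)$ are attained up to a factor $1+O\bigl(1/(L(1-\rho))\bigr)$. Letting $L$ grow is allowed because (E) must hold uniformly in $N$. (This is a Fej\'er mean of $f$ rather than Szeg\H{o}'s theorem, but that is cosmetic.)

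The rank-one kernel at $\rho=1$ settles the second point: $v\perp\mathbf 1$ already gives $S[T]_{aa}=0$ at every fixed $N$. Since a diagonal entry always lies between the extreme eigenvalues of $S[T]$, your counterexamples work whether (E) is read entrywise, as in the paper, or as a matrix inequality.

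The ``main obstacle'' you identify is not real, and your proposed fix is partly wrong. The ratio $S[T]_{aa}/\langle T\rangle$ is homogeneous of degree zero in $T$. In the single-block test, numerator and denominator carry the same factor $1/N$, so no tiling is needed.

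After tiling the whole diagonal, it is also false that ``each diagonal entry still sees the extreme Rayleigh quotient'' for the lower bound. For interior $a$, the term $\sum_{c<a}T_{cc}$ enters $S[T]_{aa}$ with weight one, and the ratio becomes roughly $\sigma^2[a/N+c(\rho)(1-a/N)]$. Only indices $a$ preceding the support of $T$ see $c(\rho)$. One such index suffices, so drop the tiling and keep the single-block test of Step 2.
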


\begin{proposition}[Noiseless        Volterra        representation]
\label{prop:LDL}
Let        $w_k:=\sum_{i\ge        k}e_i\in\mathbb        R^N$        and        let
$L=[w_1\,|\,w_2\,|\cdots|\,w_N]$        be        the        $N\times        N$        lower-triangular
matrix        of        ones,        $L_{ik}=\mathbb        1[k\le        i]$        (the        discrete        Volterra
operator).        Set        $\delta_k:=\eta_k-\eta_{k-1}$        ($\eta_0:=0$).        Then        the
matrix        $M_{ij}:=\eta_{\min(i,j)}$        of        Proposition~\ref{prop:degeneration}
satisfies        the        exact        identity
\[
M        \;=\;        L\,\operatorname{diag}(\delta)\,L^{T}
\;=\;\sum_{k=1}^{N}\delta_k\,w_kw_k^{T}.
\]
Equivalently,        $S=M/\sqrt        N$        (with        $\sigma=1$)        is        the        sandwiched
Volterra        ensemble        $S=\dfrac{1}{\sqrt        N}L\,\operatorname{diag}(\delta)\,L^{T}$,
i.e.\        the        companion-paper        construction
$S_{ij}=bZ_{\max(i,j)}/\sqrt        N+\sigma\,(\text{Wigner        noise})$
without        the        background        Wigner        noise        term.
(See        \cite{Hisakado2026_BBP}        for        the        model        with        the        background        noise)
\end{proposition}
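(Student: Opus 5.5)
The plan is to verify the identity entrywise, working first with the middle expression $L\,\operatorname{diag}(\delta)\,L^T$ and then checking that it agrees with the rank-one sum. Expanding the product gives
\[
(L\,\operatorname{diag}(\delta)\,L^T)_{ij}=\sum_{k=1}^N L_{ik}\,\delta_k\,L_{jk}=\sum_{k=1}^N \mathbb 1[k\le i]\,\mathbb 1[k\le j]\,\delta_k=\sum_{k=1}^{\min(i,j)}\delta_k .
\]
Since $\delta_k=\eta_k-\eta_{k-1}$ with $\eta_0=0$, this is a telescoping sum, and it collapses to $\eta_{\min(i,j)}-\eta_0=\eta_{\min(i,j)}=M_{ij}$. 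This establishes the first equality.

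For the second equality, write the product column by column. The $k$-th column of $L$ is $w_k$, so
\[
L\,\operatorname{diag}(\delta)\,L^T=\sum_k \delta_k\,(Le_k)(Le_k)^T=\sum_k\delta_k\,w_kw_k^T .
\]
As a cross-check, $(w_kw_k^T)_{ij}=\mathbb 1[i\ge k]\,\mathbb 1[j\ge k]$, which reproduces the same entrywise computation as above.

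The statement about $S$ then follows directly. By Proposition~\ref{prop:degeneration} with $\sigma=1$, the limit is $S_{ij}=\eta_{\min(i,j)}/\sqrt N=M_{ij}/\sqrt N$, and the identity just proved turns this into the sandwiched Volterra form. For the comparison with the companion construction $bZ_{\max(i,j)}/\sqrt N$, I would note that reversing the index order, $i\mapsto N+1-i$, exchanges $\min$ and $\max$. Under this relabeling the companion model becomes $S$ up to an orthogonal permutation conjugation (which leaves the spectrum unchanged) and the identification $b=1$, with the Wigner noise term simply absent.

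No step is a genuine obstacle, since everything reduces to a telescoping sum. The only point that needs care is the indexing convention: one must check that the boundary term $\eta_0=0$ is consistent, so that $\delta_1=\eta_1$, and be explicit that the match with the companion paper's $\max$ convention is up to permutation similarity rather than literal equality.
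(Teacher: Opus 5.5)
Your proposal is correct. The paper states this identity without a separate proof, and your telescoping computation $\sum_{k\le\min(i,j)}\delta_k=\eta_{\min(i,j)}$, together with the column expansion $L\,\mathrm{diag}(\delta)L^T=\sum_k\delta_k w_kw_k^T$, is exactly the verification it takes for granted. Your point that matching the companion model's $Z_{\max(i,j)}$ convention needs the reversal $i\mapsto N+1-i$ is also correct: the identification holds only up to permutation similarity, with $Z_k=\eta_{N+1-k}$ and $b=1$, a detail the paper glosses over.
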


\begin{remark}
The        increments        $\delta_k$        are        not        independent:        since
$\delta_k=\eta_k-\eta_{k-1}$,        one        has
$\operatorname{Cov}(\delta_k,\delta_{k+1})=-\operatorname{Var}(\eta_k)=-1$
for        all        interior        $k$,        an        MA(1)-type        anti-correlation        inherited        purely
from        the        telescoping        construction,        not        from        any        assumption        on
$\{\eta_i\}$        (which        remain        i.i.d.).        This        correlation        is        essential:        it
changes        the        leading        order        of        $\mathbb        E\operatorname{Tr}(M^2)$        from
$\Theta(N^3)$        (the        value        obtained        if        the        $\delta_k$        were        treated        as
independent)        to        the        correct        $\Theta(N^2)$,        obtained        directly        from
$\mathbb        E\operatorname{Tr}(M^2)=\sum_{i,j}\mathbb        E[\eta_{\min(i,j)}^2]
=\sum_{k=1}^N(2N-2k+1)\sim        N^2$.
\end{remark}

\subsection*{B.        Numerical        findings}
\begin{observation}[Collapse        of        the        bulk]
\label{o1}
Under        the        normalisation        $S        =        M/\sqrt{N}$        that        yields        an        $O(1)$        spectrum        for        every
fixed        $\rho        <        1$,        the        top        eigenvalue        of        $S$        at        $\rho        =        1$        diverges        with        $N$        rather
than        converging,        consistent        with        the        growth        $\Theta(\sqrt{N})$        established
quantitatively        in        Fig.~\ref{conj2}        below;        correspondingly,        the        bulk        of        the        spectrum
concentrates        increasingly        near        the        origin        as        $N$        grows,        so        that        the        empirical
spectral        distribution        of        $S$        converges        weakly        to        $\delta_0$        while        a        sparse,
growing        set        of        extreme        eigenvalues        escapes        on        a        separate,        larger        scale.
\end{observation}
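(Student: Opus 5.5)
Observation~\ref{o1} makes two claims: the empirical spectral distribution of $S=M/\sqrt N$ converges weakly to $\delta_0$, and $\lambda_{\max}(S)=\Theta(\sqrt N)$. Both can be derived from Proposition~\ref{prop:LDL} together with trace estimates.

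\textbf{Convergence to $\delta_0$.} It suffices to show $\frac1N\mathbb E\operatorname{Tr}(S^2)\to0$, because then
\[
\frac1N\#\{k:|\lambda_k(S)|>\epsilon\}\le \frac{1}{\epsilon^2N}\operatorname{Tr}(S^2)
\]
tends to zero in probability for every $\epsilon>0$. The remark following Proposition~\ref{prop:LDL} computes $\mathbb E\operatorname{Tr}(M^2)=\sum_k(2N-2k+1)\sim N^2$. Hence
\[
\frac1N\mathbb E\operatorname{Tr}(S^2)=\frac{1}{N^2}\mathbb E\operatorname{Tr}(M^2)=O(1).
\]
This is bounded but does not vanish, so the naive second-moment bound fails to force concentration at the origin.

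I would therefore count eigenvalues above level $\epsilon$ via the low-rank structure. Write $M=\frac1N\operatorname{Tr}$-negligible part plus a part of small effective rank: truncate $M=L\,\mathrm{diag}(\delta)L^T$ by the singular values of $L$. Those singular values are $\sim N/(\pi(k-\tfrac12))$, the cascade cited in the introduction. The $k$-th Volterra mode therefore contributes at scale $N/k$ to $M$, i.e.\ at scale $\sqrt N/k$ to $S$. Consequently only $O(\sqrt N/\epsilon)=o(N)$ eigenvalues can exceed $\epsilon$.

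To make this rigorous, I would bound the Schatten norm $\mathbb E\|S\|_{p}^p$ for some $p<2$ using $\|L\,D\,L^T\|_{p}\le\|L\|_{2p}^2\|D\|_\infty$ (Hölder for Schatten norms). This gives $\|L\|_{2p}^{2p}\sim N^{2p}\sum_k k^{-2p}$, finite for $p>1/2$, and $\|D\|_\infty=O(\sqrt{\log N})$. Taking $p=1$, $\|S\|_1=O(\sqrt{N\log N})$, so the number of eigenvalues above $\epsilon$ is at most $\|S\|_1/\epsilon=o(N)$. This is the key step.

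\textbf{Lower bound on the top eigenvalue.} Take the test vector $v=\mathbf 1/\sqrt N$. Then
\[
v^TMv=\frac1N\sum_k\delta_k\|w_k\|_1^2,\qquad \|w_k\|_1=N-k+1 .
\]
Summation by parts rewrites this as $\frac1N\sum_k(2N-2k+1)\eta_k$, a centered Gaussian with variance $\asymp N$. Thus $v^TSv$ has fluctuations of order 1, which is the wrong scale for a lower bound.

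The better choice is to use $\mathbb E\operatorname{Tr}(S^2)\asymp N$ together with $\operatorname{Tr}(S^2)\le\|S\|_1\lambda^*$, where $\lambda^*=\max_k|\lambda_k(S)|$. Using $\|S\|_1=O(\sqrt{N\log N})$ this gives $\lambda^*\gtrsim\sqrt{N/\log N}$. To remove the log, I would work conditionally: restrict to the top Volterra singular vectors $u_1$ of $L$, where $u_1^TSu_1=\frac{1}{\sqrt N}\sum_k\delta_k\langle w_k,u_1\rangle^2$ is Gaussian with standard deviation $\Theta(\sqrt N)$. Its positive part is then of order $\sqrt N$ with probability bounded below.

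\textbf{Upper bound on the top eigenvalue.} This is easy: $\|S\|\le\|S\|_F$ and $\mathbb E\|S\|_F^2=O(N)$, so $\lambda_{\max}(S)=O_P(\sqrt N)$.

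\textbf{Main obstacle.} The hardest step is upgrading "order $\sqrt N$ with positive probability" to a statement holding with high probability, and identifying the $\sqrt N$-rescaled limit with the cascade $1/(\pi(k-\tfrac12))$. I would only aim for tightness of $\lambda_{\max}/\sqrt N$ and a non-degenerate limit law, since the precise identification is numerical in the paper.
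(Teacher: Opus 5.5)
The paper supports this observation only numerically (Fig.~\ref{conj2}), so a rigorous argument would be a genuinely different route. Yours, however, breaks at two points. Both come from working with $M=L\,\mathrm{diag}(\delta)\,L^T$, whose telescoping increments hide a huge cancellation; this is exactly the issue in the remark after Proposition~\ref{prop:LDL}.

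\textbf{First gap: the Schatten step is off by a factor $N$.} Since $\|L\|_2^2=\|L\|_F^2=N(N+1)/2$, H\"older gives only $\|M\|_1\le\tfrac{N(N+1)}{2}\max_k|\delta_k|$. That is $\|S\|_1=O(N^{3/2}\sqrt{\log N})$, worse than the trivial $\|S\|_1\le\sqrt N\|S\|_F\approx N$. So neither the count $\#\{k:|\lambda_k|>\epsilon\}=o(N)$ nor the conclusion $\lambda^*\gtrsim\sqrt{N/\log N}$ from $\operatorname{Tr}(S^2)\le\|S\|_1\lambda^*$ follows. The heuristic ``mode $k$ contributes $N/k$'' also does not come out of $LDL^T$, whose natural scale is $\sigma_k(L)^2\sim N^2/k^2$.

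\textbf{Second gap: the fixed-vector lower bound is false.} Take any deterministic unit $u$ and set $U_k=\sum_{i\ge k}u_i$. Summation by parts gives $u^TSu=N^{-1/2}\sum_k\eta_k\,u_k(2U_k-u_k)$, with variance $N^{-1}\sum_k u_k^2(2U_k-u_k)^2\le 4$, because $|2U_k-u_k|=|U_k+U_{k+1}|\le2\sqrt N$. So $u_1^TSu_1=O(1)$. Your $\Theta(\sqrt N)$ standard deviation comes from treating the $\delta_k$ as independent. No fixed direction sees the $\sqrt N$ scale; the maximizing vector must correlate with $\eta$.

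\textbf{Repairs.} Both steps can be fixed.
\begin{enumerate}
\item[(i)] Un-telescope: $M=LD_\eta+D_\eta L^T-D_\eta$ with $D_\eta=\mathrm{diag}(\eta)$ (check entrywise). Now $L$ appears only once per term. Since $\sigma_k(L)\le(2N+1)/(2(2k-1))$, you have $\|L\|_p=O(N)$ for every $p>1$. For $1<p<2$ this gives $\|S\|_p\le N^{-1/2}\bigl(2\|L\|_p\max_k|\eta_k|+\|D_\eta\|_p\bigr)=O(\sqrt{N\log N})$ with high probability. Hence $\#\{k:|\lambda_k(S)|>\epsilon\}\le\epsilon^{-p}\|S\|_p^p=O((N\log N)^{p/2})=o(N)$. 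This gives convergence of the empirical spectral distribution to $\delta_0$ and shows the escaping set is sparse.
\item[(ii)] For the lower bound, use an $\eta$-adapted vector: $x_k=\mathbf 1[\eta_k>0]/\sqrt{N_+}$ with $N_+=\#\{k:\eta_k>0\}$, and $X_k=\sum_{i\ge k}x_i$. The law of large numbers, uniformly in $k$, gives $x^TSx=N^{-1/2}\sum_k\eta_kx_k(2X_k-x_k)=(1/\sqrt{2\pi}+o(1))\sqrt N$ with high probability. Your upper bound $\lambda_{\max}(S)\le\|S\|_F=(1+o(1))\sqrt N$ is correct. Together these give $\lambda_{\max}(S)=\Theta(\sqrt N)$ with high probability, so the ``positive probability'' obstacle you identify disappears.
\end{enumerate}
Taking $x_k=N^{-1/2}(\alpha(k/N)+\beta(k/N)\eta_k)$ and optimizing over $\alpha,\beta$ raises the constant to $2/\pi=\sigma_1$ of the Volterra operator. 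That is the $k=1$ value in Fig.~\ref{conj2}.
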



\begin{figure}[h]
\centering
\includegraphics[width=\columnwidth]{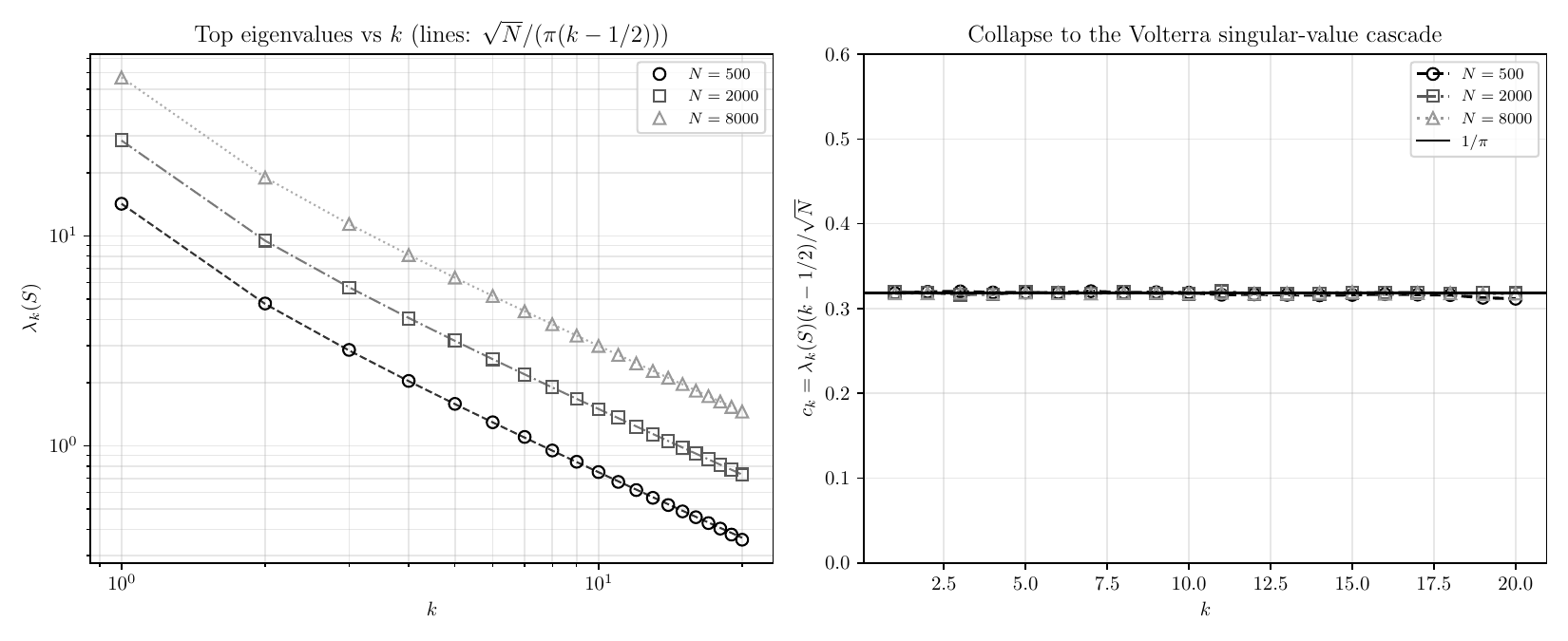}
\caption{Top        eigenvalues        $\lambda_k(S)$        of        the        noiseless        Volterra        ensemble        $S        =        M/\sqrt{N}$        (Proposition~\ref{prop:degeneration})        were        computed      exploiting        the        representation        $M=L\,\mathrm{diag}(\delta)\,L^T$,        for        $N=500,2000,8000$,        averaged        over        $40$        independent        realizations.        Left:        $\lambda_k(S)$        versus        $k$        on        a        log--log        scale        (markers),        together        with        the        predicted        curve        $\sqrt{N}/(\pi(k-1/2))$        (lines)        for        each        $N$;        a        least-squares        fit        over        $k=3,\dots,15$        gives        a        slope        of        $-1.08$        to        $-1.09$        across        the        three        values        of        $N$,        consistent        with        the        predicted        exponent        $-1$.        Right:        the        rescaled        quantity        $c_k        =        \lambda_k(S)(k-1/2)/\sqrt{N}$        versus        $k$,        collapsing        onto        the        constant        $1/\pi        \approx        0.318$        (solid        line)        for        all        three        $N$        and,        notably,        already        for        $k=1,2$        as        well,        reproducing        the        singular-value        cascade        $\sigma_k        =        1/(\pi(k-1/2))$        of        the        continuous        Volterra        integral        operator        identified        in        the        companion        paper        on        multi-critical        BBP        transitions        \cite{Hisakado2026_BBP}}
\label{conj2}
\end{figure}

\section{IV        The        Row-Independent        Power-Law        Ensemble}
\label{subsec:row-powerlaw}

\subsection{A.        Power        law        decay}
We        now        replace        the        AR(1)        row        process        of        Definition~\ref{def:row-ar1}
by        a        stationary        Gaussian        process        with        power-law        decaying        covariance,
while        retaining        exact        independence        across        rows.

\begin{definition}[Row-independent        power-law        matrix]
\label{def:row-plaw}
Fix        $N$,        $\sigma>0$,        and        $\gamma>0$.        For        each        $i=1,\dots,N$        let
$\{\eta_i(j)\}_{j=i}^N$        be        a        stationary        Gaussian        process        (generated        by
circulant        embedding        of        the        target        covariance)        with
\[
\operatorname{Cov}\big(\eta_i(j),\eta_i(j')\big)=d_{|j-j'|},
\qquad        d_t=(1+t)^{-\gamma},
\]
and        $\{\eta_i(\cdot)\}_{i=1}^N$        mutually        independent.        Define
$S_{ij}=S_{ji}:=\dfrac{\sigma}{\sqrt        N}\eta_i(j)$        for        $i\le        j$.        The
covariance        structure        across        all        entries        is        exactly
\eqref{eq:cov-structure}        with        $\rho^{|\cdot|}$        replaced        by        $d_{|\cdot|}$.
\end{definition}

\begin{lemma}[Positive-definiteness        of        the        power-law        kernel        for        every        $\gamma>0$]
\label{lem:posdef}
For        every        $\gamma>0$,        the        sequence        $d_t=(1+t)^{-\gamma}$,        $t=0,1,2,\dots$,        is        a
valid        (positive        semi-definite)        autocovariance        sequence;        equivalently,        for
every        $L$        and        every        $\gamma>0$,        the        $L\times        L$        Toeplitz        matrix
\begin{equation}
\Sigma_\gamma        \;:=\;        \big(d_{|j-k|}\big)_{j,k=1}^{L}
\end{equation}
is        positive        semi-definite.
\end{lemma}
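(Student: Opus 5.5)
The plan is to show that $d_t=(1+t)^{-\gamma}$ is a mixture of AR(1)-type covariances $\rho^{t}$ with nonnegative weights. Each such covariance is positive semi-definite, and a nonnegative mixture of positive semi-definite kernels is positive semi-definite. The starting point is the Gamma-function identity, valid for $\gamma>0$ and $t\ge0$:
\[
(1+t)^{-\gamma}=\frac{1}{\Gamma(\gamma)}\int_0^\infty s^{\gamma-1}e^{-s}\,e^{-st}\,ds .
\]
Writing $\rho=e^{-s}\in(0,1)$, this reads
\[
d_t=\int_0^\infty \rho(s)^{t}\,\frac{s^{\gamma-1}e^{-s}}{\Gamma(\gamma)}\,ds .
\]
That is, $d_t$ is the average of $\rho^{t}$ against a probability density on $s\in(0,\infty)$, and the weight is nonnegative for every $\gamma>0$.

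\emph{Step 1.} For each fixed $\rho\in[0,1)$, the Toeplitz matrix $(\rho^{|j-k|})_{j,k=1}^L$ is positive semi-definite. This is precisely the covariance matrix of the stationary AR(1) process of Definition~\ref{def:row-ar1}, which exists by construction. Equivalently, its symbol $f(\theta)=(1-\rho^2)/(1-2\rho\cos\theta+\rho^2)$ from Eq.~\eqref{eq:ftheta} is nonnegative. So for any $x\in\mathbb R^L$,
\[
\sum_{j,k}x_jx_k\rho^{|j-k|}=\int_0^{2\pi}\frac{d\theta}{2\pi}\,f(\theta)\Big|\sum_j x_je^{ij\theta}\Big|^2\ge0 .
\]

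\emph{Step 2.} Integrate the quadratic form against the Gamma weight:
\[
x^T\Sigma_\gamma x=\int_0^\infty \frac{s^{\gamma-1}e^{-s}}{\Gamma(\gamma)}\sum_{j,k}x_jx_k e^{-s|j-k|}\,ds\;\ge0 .
\]
Exchanging the finite sum and the integral is trivial because the sum has only $L^2$ terms, each integrable (the $t=0$ term gives $\int s^{\gamma-1}e^{-s}ds/\Gamma(\gamma)=1$). This holds for every $L$, hence the sequence is a valid autocovariance, and the circulant-embedding construction in Definition~\ref{def:row-plaw} is well defined.

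The only point needing care is the regime $0<\gamma\le1$, where $\sum_t d_t$ diverges and no bounded spectral density exists. I would stress that the argument never uses summability: it works at finite $L$ with a nonnegative mixing measure. As an optional cross-check, one can note that $d_t$ is convex and decreasing to $0$, so P\'olya's criterion also gives positive definiteness. I regard the mixture argument as the cleaner main route and do not expect a genuine obstacle.
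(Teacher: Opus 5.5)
Your proposal is correct and is essentially the paper's own proof: the Gamma-function identity writes $d_t$ as a $\mathrm{Gamma}(\gamma,1)$ mixture of the exponential (AR(1)) kernels $e^{-\lambda t}$, each of which is positive semi-definite, and integrating the nonnegative quadratic form against the nonnegative mixing density gives the claim for every $L$ and every $\gamma>0$. Your explicit Poisson-kernel representation of the AR(1) quadratic form, and the P\'olya cross-check, only spell out a step that the paper simply asserts.
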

        
\begin{proof}
By        the        Gamma-function        integral        identity,        for        $t>-1$        and        $\gamma>0$,
\begin{equation}
(1+t)^{-\gamma}        \;=\;        \frac{1}{\Gamma(\gamma)}\int_0^\infty
\lambda^{\gamma-1}        e^{-\lambda}\,        e^{-\lambda        t}\,        d\lambda,
\label{eq:gamma-mixture}
\end{equation}
using        $\int_0^\infty        \lambda^{\gamma-1}e^{-a\lambda}\,d\lambda
=\Gamma(\gamma)a^{-\gamma}$        for        $a>0$,        with        $a=1+t$.        Thus        $d_t$        is        a        mixture,
with        respect        to        the        (nonnegative)        $\mathrm{Gamma}(\gamma,1)$        density
$\lambda^{\gamma-1}e^{-\lambda}/\Gamma(\gamma)\,d\lambda$,        of        the        exponential
kernels        $t\mapsto        e^{-\lambda        t}$.        Each        exponential        kernel        $e^{-\lambda        t}$        is
the        autocovariance        sequence        of        a        stationary        AR(1)/Ornstein--Uhlenbeck-type
process        and        is        therefore        positive        semi-definite        for        every        $\lambda>0$:        for
any        $L$        and        any        $x\in\mathbb{R}^L$,
\begin{equation}
\sum_{j,k=1}^{L}        x_j        x_k\,        e^{-\lambda|j-k|}        \;\ge\;        0.
\end{equation}
Integrating        this        nonnegative        quantity        against        the        nonnegative        mixing        density
in        Eq.        \eqref{eq:gamma-mixture}        preserves        the        inequality:
\begin{equation}
\sum_{j,k=1}^{L}        x_j        x_k\,        d_{|j-k|}
\;=\;        \frac{1}{\Gamma(\gamma)}\int_0^\infty        \lambda^{\gamma-1}e^{-\lambda}
\left(\sum_{j,k=1}^{L}        x_j        x_k\,        e^{-\lambda|j-k|}\right)        d\lambda        \;\ge\;        0.
\end{equation}
Hence        $\Sigma_\gamma$        is        positive        semi-definite        for        every        $\gamma>0$.
\end{proof}
        
\begin{proposition}[Two        distinct        thresholds]
\label{prop:two-thresholds}
Let        $f_{\gamma}(\theta)=d_0+2\sum_{t\ge1}d_t\cos(\theta        t)$        be        the        spectral
symbol        of        the        row        covariance.        Then:
\begin{enumerate}
\item[(i)]        \textbf{(Bulk        moments.)}        The        fourth-moment        ``hub''        correction
of        Proposition~\ref{prop:mu4}        is        controlled        by        $\sum_t        d_t^2$,        which
converges        iff        $\gamma>\tfrac12$.        For        $\gamma>\tfrac12$        the        limiting
moments        $\mu_{2k}$        exist        and        are        finite.
\item[(ii)]        \textbf{(Flatness        /        edge.)}        $f_{\gamma}(\theta)$        is        bounded        iff
$\{d_t\}$        is        absolutely        summable,        i.e.\        iff        $\gamma>1$;        for
$\tfrac12<\gamma\le1$,        Tauberian        asymptotics        give
$f_{\gamma}(\theta)\sim        C|\theta|^{\gamma-1}\to\infty$        as        $\theta\to0$,        so
Assumption        (E)        fails.
\end{enumerate}
Consequently        $\gamma=\tfrac12$        and        $\gamma=1$        are        \emph{a        priori}
distinct        thresholds,        governing        the        bulk        and        the        edge        respectively.
\end{proposition}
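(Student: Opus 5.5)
The plan is to treat (i) and (ii) separately. Each reduces to an elementary statement about the sequence $d_t=(1+t)^{-\gamma}$ together with a standard Abelian/Tauberian fact. Positivity of $f_\gamma$ wherever it is defined follows from Lemma~\ref{lem:posdef}.

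\textbf{Part (i).}
\begin{enumerate}
\item I would use the general hub-mechanism formula of Remark~\ref{r8}, which applies to any row-independent Toeplitz covariance. By analogy with Proposition~\ref{prop:AR1-bulk-exponent}, where $\rho^2/(1-\rho^2)=\sum_{t\ge1}\rho^{2t}$, I expect it to take the form $\mu_4=2\sigma^4+\tfrac43\sigma^4\sum_{t\ge1}d_t^2$ up to normalisation. This expectation is consistent with the AR(1) closed form~\eqref{e2}.
\item The fourth-moment statement then reduces to the $p$-series test: $\sum_t(1+t)^{-2\gamma}<\infty$ iff $2\gamma>1$.
\item For the higher moments $\mu_{2k}$ with $\gamma>\tfrac12$, I would expand $\frac1N\mathbb E\operatorname{Tr}S^{2k}$ by Wick's theorem over pairings of the $2k$ entries. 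Pairings contribute only when paired entries share a row, and each row-covariance factor is $d_{|c-d|}$.
\item After normalising by $N^{-k-1}$, every surviving pairing should be bounded by a product of sums of the form $\sum_t d_t^2$, or $\sum_t d_t^m$ with $m\ge2$. This works because each free column index that is summed over appears in at least two covariance factors.
\item Since $d_t\le1$, we have $\sum_t d_t^m\le\sum_t d_t^2<\infty$. Finiteness of all $\mu_{2k}$ then follows.
\end{enumerate}

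\textbf{Part (ii).} This splits into three steps.
\begin{enumerate}
\item Since $d_t>0$, one has $\sup_\theta |f_\gamma(\theta)|=f_\gamma(0)=d_0+2\sum_{t\ge1}d_t$. This is finite iff $\gamma>1$, and in that case the series converges absolutely and uniformly, so $f_\gamma$ is bounded.
\item For $\tfrac12<\gamma\le1$ the series converges in $L^2$, since $d\in\ell^2$ by part (i). I would apply the classical Zygmund/Tauberian theorem for cosine series with monotone, convex, regularly varying coefficients, namely
\[
\sum_{t\ge1}t^{-\gamma}\cos(t\theta)\sim\Gamma(1-\gamma)\sin(\tfrac{\pi\gamma}{2})\,|\theta|^{\gamma-1},\qquad \theta\to0,\ 0<\gamma<1.
\]
The replacement of $t^{-\gamma}$ by $(1+t)^{-\gamma}$ changes the coefficients by $O(t^{-\gamma-1})$, which is summable, so it only perturbs $f_\gamma$ by a bounded continuous function.
\item At $\gamma=1$ the same argument gives $f_1(\theta)\sim-2\log|\theta|$. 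Strictly speaking this is not the power $|\theta|^{0}$, so I would state that case as logarithmic divergence; it is still unbounded.
\end{enumerate}

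Unboundedness of $f_\gamma$ violates Assumption~(E). To make this concrete, I would use the formula of Proposition~\ref{prop:assumptions}(E) with $T=vv^T$, where $v$ is a normalised Fej\'er-type vector concentrated near frequency $0$. This gives $S[T]_{aa}/\langle T\rangle\gtrsim f_\gamma$ averaged over a window of width $1/L$. That average grows like $L^{1-\gamma}$ (or like $\log L$ when $\gamma=1$), hence unboundedly in $N$, so no uniform constant $C$ exists.

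The main obstacle is step 4 of part (i): the general Wick bound for all $\mu_{2k}$. The combinatorics of which pairings produce only $\ell^2$-type sums, rather than $\ell^1$-type sums such as $\sum_t d_t$, must be handled carefully. My first approach would be to show that every column index appears in an even number of covariance factors, and then apply Cauchy--Schwarz or Young's inequality to each summed index.
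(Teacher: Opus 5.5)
Your part (ii) is correct and is essentially the paper's argument. The paper reads off $\Gamma(1-\gamma)\sin(\pi\gamma/2)\,\theta^{\gamma-1}$ from the polylogarithm expansion of $\mathrm{Li}_\gamma(e^{i\theta})$, which is the same Zygmund asymptotic you invoke. Your $O(t^{-\gamma-1})$ treatment of the shift $t^{-\gamma}\to(1+t)^{-\gamma}$ and your test matrix $T=vv^T$ (Rayleigh quotient $\frac1L\sum_{|t|<L}(L-|t|)d_{|t|}\asymp L^{1-\gamma}$) are actually cleaner than what the paper writes. The $\mu_4$ half of (i) also agrees with the paper (Proposition~\ref{p14}). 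Note, however, that Remark~\ref{r8} is the shift-invariance remark and contains no hub formula. The real input is the exact Wick decomposition $\mathbb E\operatorname{Tr}S^4=2\sum_{a,b}f(a,b)^2+\Xi_N$, with $f(a,b)=\frac{\min(a,b)}{N}d_{|a-b|}$ for $a\neq b$ and $f(a,a)=1$, together with $\Xi_N=O(1)$.

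\textbf{The gap is step~4 of part~(i), and it cannot be closed, because the claim that all $\mu_{2k}$ are finite for $\gamma>\tfrac12$ is false.} The paper gives no proof of it either; it defers (i) to the companion paper.
\begin{enumerate}
\item[(a)] Every covariance $\frac1N\mathbf 1[\min(a,b)=\min(c,d)]\,d_{|\max(a,b)-\max(c,d)|}$ is nonnegative, so every Wick pairing contributes nonnegatively.
\item[(b)] In $S_{i_1i_2}S_{i_2i_3}\cdots S_{i_6i_1}$, pair the 1st factor with the 2nd, the 3rd with the 4th, and the 5th with the 6th, then sum over $i_2,i_4,i_6$. This single pairing already gives $\frac1N\mathbb E\operatorname{Tr}S^6\ge\frac1N\operatorname{Tr}F^3$, where $F:=(f(a,b))_{a,b}$.
\item[(c)] For $a,b\ge N/2$ one has $F_{ab}\ge\frac12 d_{|a-b|}$. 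Restrict to $a\in[\tfrac{5N}{8},\tfrac{7N}{8}]$, $b=a+t_1$, $c=b+t_2$ with $1\le t_1,t_2\le N/16$. This gives
\[
\frac1N\operatorname{Tr}F^3\;\gtrsim\;\sum_{t_1,t_2=1}^{N/16}(1+t_1)^{-\gamma}(1+t_2)^{-\gamma}(1+t_1+t_2)^{-\gamma}.
\]
The right-hand side diverges like $N^{2-3\gamma}$, and logarithmically at $\gamma=\tfrac23$.
\end{enumerate}
Hence $\mu_6=+\infty$ throughout $(\tfrac12,\tfrac23]$. The same argument applied to the $k$-cycle pairing gives $\mu_{2k}=+\infty$ whenever $\gamma\le(k-1)/k$.

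This is exactly where your heuristic breaks. An index appearing in two covariance factors does not make the sum factor into sums $\sum_t d_t^m$. Along a cycle of pairings the lags are tied by the constraint $t_1+\cdots+t_k=0$. What you get is the cyclic convolution $\sum_{t_1+\cdots+t_k=0}\prod_j d_{|t_j|}=\int_0^{2\pi}f_\gamma^k\,\frac{d\theta}{2\pi}=\mu_k(H)$. Young's inequality along the cycle bounds this by $\|d\|_{\ell^{k/(k-1)}}^k$. That yields the condition $\gamma>(k-1)/k$, not $d\in\ell^2$, and the lower bound above shows the condition is sharp.

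This is the same hierarchy $\gamma_c(k)=(k-1)/k$ as Proposition~\ref{pc4} in Appendix~C. The paper states it only conjecturally for the bulk moments, and it is incompatible with the last sentence of (i). What you can actually prove in (i) is the fourth-moment statement. The claim about higher moments should be dropped or replaced by this hierarchy.
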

\begin{proof}
The        proof                of        (i)        is        in        \cite{Hisakado2026}.

Since        $d_t        =        (1+t)^{-\gamma}        \sim        t^{-\gamma}$        as        $t        \to        \infty$,        and        a        finite-lag
shift        does        not        affect        the        small-$\theta$        singularity        structure        (Remark~11        below),
it        suffices        to        analyze
\[
                g(\theta)        :=        \sum_{t        \ge        1}        t^{-\gamma}        e^{i\theta        t}        =        \mathrm{Li}_\gamma\!\left(e^{i\theta}\right),
\]
where        $\mathrm{Li}_\gamma$        denotes        the        polylogarithm        function,        and
$f_{\gamma}(\theta)        -        d_0        \sim        2\,\mathrm{Re}\,        g(\theta)$.

By        the        classical        expansion        of        the        polylogarithm        near        $z        =        1$        on        the        unit        circle
(see        Ch.V        of        \cite{Zyg1959}),        for        non-integer        $s        >        0$,
\[
                \mathrm{Li}_s(z)        =        \Gamma(1-s)(-\log        z)^{s-1}
                                +        \sum_{k=0}^{\infty}        \frac{\zeta(s-k)}{k!}(\log        z)^k,        \qquad        z        \to        1.
\]
Setting        $z        =        e^{i\theta}$,        so        that        $\log        z        =        i\theta$        as        $\theta        \to        0^+$,        the
leading        singular        term        is
\[
                \mathrm{Li}_\gamma\!\left(e^{i\theta}\right)
                                \;\sim\;        \Gamma(1-\gamma)\,(-i\theta)^{\gamma        -        1},        \qquad        \theta        \to        0^+.
\]

Using        $(-i)^{\gamma-1}        =        e^{-i\pi(\gamma-1)/2}$        and
$\cos\!\big(\tfrac{\pi(\gamma-1)}{2}\big)        =        \sin\!\big(\tfrac{\pi\gamma}{2}\big)$,
the        real        part        is
\[
                \operatorname{Re}\!\left[\Gamma(1-\gamma)(-i\theta)^{\gamma-1}\right]
                                =        \Gamma(1-\gamma)\,\sin\!\left(\frac{\pi\gamma}{2}\right)\theta^{\gamma-1}
                                =:        C_f(\gamma)\,\theta^{\gamma-1}.
\]
For        $0        <        \gamma        <        1$        both        $\Gamma(1-\gamma)        >        0$        and        $\sin(\pi\gamma/2)        >        0$,        so
$C_f(\gamma)        >        0$        and        hence
\[
                f_{\gamma}(\theta)        \;\sim\;        2\,C_f(\gamma)\,|\theta|^{\gamma        -        1}
                                \;\longrightarrow\;        \infty        \qquad        \text{as        }        \theta        \to        0,
\]
which        is        the        claimed        divergence.        Conversely,        for        $\gamma        >        1$        the        exponent
$\gamma        -        1$        is        positive,        so        the        singular        term        vanishes        as        $\theta        \to        0$        and
$\mathrm{Li}_\gamma(1)        =        \zeta(\gamma)        <        \infty$,        recovering        the        boundedness        of
Step~1.        The        boundary        case        $\gamma        =        1$        is        treated        in        Remark~\ref{r77}        below.
\end{proof}

\begin{remark}[Shift        invariance]
\label{r8}
Since
$\sum_{t        \ge        1}        (1+t)^{-\gamma}        e^{i\theta        t}
                =        e^{-i\theta}        \sum_{t'        \ge        2}        (t')^{-\gamma}        e^{i\theta        t'}$,
the        difference        from        $g(\theta)$        consists        of        removing        finitely        many        terms
(here        $t'=1$)        and        multiplying        by        a        smooth,        bounded        factor        $e^{-i\theta}$;
neither        operation        alters        the        order        of        the        singularity        as        $\theta        \to        0$.
\end{remark}

\begin{remark}[Boundary        case        $\gamma        =        1$]
\label{r77}
At        $\gamma        =        1$        the        expansion        above        is        not        directly        applicable        ($s=1$        is        the
excluded        logarithmic        case),        and        instead
$\mathrm{Li}_1(e^{i\theta})        =        -\log(1        -        e^{i\theta})        \sim        -\log(i\theta)$,
so        $f_{\gamma}(\theta)$        diverges        logarithmically,        $f_{\gamma}(\theta)        \sim        -2\log|\theta|$,
rather        than        as        a        power        law;        the        conclusion        of        unboundedness        at        $\gamma=1$        is
unaffected.
\end{remark}
\subsection{B.        Explicit        Bulk        Self-Consistent        Equation        for        the        Power-Law        Ensemble}

The        scalar        self-consistent        equation,  Eq.  \eqref{eq:sce}        of        Proposition~\ref{p4}        for        the        exponential-decay        ensemble        depends        on
the        row        covariance        only        through        its        power        spectral        density,        and
therefore        generalizes        verbatim        to        the        power-law        ensemble        of
Definition~\ref{def:row-plaw}.

\begin{proposition}(Scalar        self-consistent        equation,        power-law
case).        Let
\begin{equation}
f_\gamma(\theta)        \;:=\;        \sum_{t=-\infty}^{\infty}        d_{|t|}\,e^{i\theta        t},
\qquad        d_t        =        (1+t)^{-\gamma},
\label{eq:ftheta_gamma}
\end{equation}
the        power        spectral        density        of        the        row        covariance        of
Definition~\ref{def:row-plaw}.       
Then        the        Stieltjes        transform        of        the        limiting        spectral
measure        of        $S$        satisfies
\begin{equation}
m(z)        \;=\;        \int_0^{2\pi}\frac{d\theta}{2\pi}\,
\frac{1}{-z-f_\gamma(\theta)\,m(z)},
\qquad        z\in\mathbb{C}^+,
\label{eq:sce_gamma}
\end{equation}
identical        in        form        to        Eq.~\eqref{eq:ftheta}.
        Unlike        the        AR(1)        case,        $f_\gamma$        has
no        closed        elementary        form;        for        $1/2<\gamma\le1$        it        is        unbounded
as        $\theta\to0$        (Proposition~\ref{prop:two-thresholds}(ii)),        and        Eq.        \eqref{eq:sce_gamma}
remains        well        defined        despite        this        because        the        singularity        is
integrable        and        the        integrand        vanishes        there.
\end{proposition}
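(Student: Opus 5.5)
The plan is to carry over the argument of Proposition~\ref{p4}, which used only two ingredients: the row-independence structure (Assumption (CD), zero covariance across distinct rows) and the fact that the within-row covariance is a Toeplitz kernel whose symbol is the power spectral density. Both persist in Definition~\ref{def:row-plaw}. The only new issue is that for $\tfrac12<\gamma\le1$ the symbol $f_\gamma$ is unbounded (Proposition~\ref{prop:two-thresholds}(ii)). Assumption (E) then fails, and the MDE route of Proposition~\ref{prop:assumptions} is not directly available. I would therefore organize the proof in two regimes and rely on a truncation argument to bridge the singular case.

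\emph{Step 1 ($\gamma>1$).} Here $\{d_t\}$ is absolutely summable. Then $f_\gamma$ is continuous and bounded, and it is positive, since Lemma~\ref{lem:posdef} represents $d_t$ as a Gamma mixture of the AR(1) kernels $e^{-\lambda t}$. Each of these has a Poisson-kernel symbol bounded below by $\tanh(\lambda/2)$, so $f_\gamma$ is bounded away from $0$. Hence (A), (B), (CD) and (E) hold with constants $\inf f_\gamma$ and $\sup f_\gamma$, and the proof of Proposition~\ref{p4} applies verbatim. Concretely, I would compute $S[T]_{aa}$ as in (E) with $\rho^{|c-d|}$ replaced by $d_{|c-d|}$. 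Szeg\H{o}'s theorem then identifies the limiting spectral law of the Toeplitz block with the law of $f_\gamma(\theta)$, $\theta$ uniform. The comparison with $\Sigma^{1/2}G\Sigma^{1/2}$ and free multiplicative convolution then yields Eq.~\eqref{eq:sce_gamma}.

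\emph{Step 2 ($\tfrac12<\gamma\le1$).} I would truncate the covariance, setting $d^{(L)}_t := d_t\,e^{-t/L}$. This is again positive semidefinite: it is a Gamma mixture of exponentials shifted by $1/L$, as in Lemma~\ref{lem:posdef}. Its symbol $f^{(L)}_\gamma$ is bounded, so Step~1 gives the equation for the truncated ensemble $S^{(L)}$. Two limits then have to be controlled.

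First, $S-S^{(L)}$ must be small in normalized Frobenius norm, uniformly in $N$. By the covariance structure \eqref{eq:cov-structure}, $\frac1N\mathbb E\operatorname{Tr}(S-S^{(L)})^2$ is controlled by the variance $d_0-2d^{(L)}_0+d^{(L)}_0$-type terms. Here I would instead couple the two processes through a common spectral representation. The $L^2$ distance per entry is then $\int (\sqrt{f_\gamma}-\sqrt{f^{(L)}_\gamma})^2\,d\theta/2\pi$, which tends to $0$ because $f_\gamma\in L^1$ and $f^{(L)}_\gamma\to f_\gamma$ a.e. By the Hoffman--Wielandt inequality, the limiting spectral measures then converge.

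Second, the equation itself must pass to the limit. For $z\in\mathbb C^+$ and $\operatorname{Im} m>0$, the integrand $1/(-z-f m)$ is bounded by $1/\operatorname{Im} z$ uniformly in $f\ge0$. Indeed, $\operatorname{Im}(-z-fm)=-\operatorname{Im}z-f\operatorname{Im}m$ has modulus at least $\operatorname{Im}z$. It also tends to $0$ as $f\to\infty$, which explains the paper's remark that the integrand vanishes at the singularity. Dominated convergence therefore lets $L\to\infty$ in the equation. Uniqueness of the solution of Eq.~\eqref{eq:sce_gamma} in $\mathbb C^+$ identifies the limit; I would argue uniqueness by the standard contraction estimate for large $\operatorname{Im}z$, followed by analytic continuation.

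\emph{Main obstacle.} I expect the hardest step to be the uniform-in-$N$ approximation in Step~2. Circulant embedding and the finite-$N$ Toeplitz blocks differ from the infinite-lag symbol, so the Frobenius bound must be obtained from the covariances directly. I would bound it by the expression $\frac1N\sum_{a}\sum_{j}\big(d_0-2\tilde d+\dots\big)$, using the explicit Gamma-mixture coupling: generate $\eta_i$ and $\eta^{(L)}_i$ from the same family of AR(1) components indexed by $\lambda$, which shrinks the entrywise error to $O(1/L^{\min(1,\gamma)})$. Whether this mixture coupling can be realized jointly Gaussian is the point I would verify first.
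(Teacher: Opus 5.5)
Your Step~1 is the paper's own argument (the proof here just says ``identical to Proposition~\ref{p4}''), and it breaks exactly where Eq.~\eqref{eq:sce_gamma} is concluded. Step~2 only carries that conclusion over to $\tfrac12<\gamma\le1$. Two links in the chain fail.

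\emph{The covariance operators differ.} The ensemble does not share its covariance operator with $\Sigma^{1/2}G\Sigma^{1/2}$. Only $c\le\min(a,b)$ contribute to $\sum_c\mathbb E[W_{ac}W_{cb}]$, so $S[I]_{ab}=\frac{\min(a,b)}{N}\,d_{|a-b|}$ for $a\ne b$. By contrast, $\mathbb E[(\Sigma^{1/2}G\Sigma^{1/2})^2]\approx\Sigma$ has entries $d_{|a-b|}$. Likewise $S[T]_{aa}$ splits at $c=a$ into a diagonal part ($c<a$) and a Toeplitz part ($c,d\ge a$), so it is not $\frac1N\operatorname{Tr}(\Sigma T)$. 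The MDE solution therefore depends on the position $a/N$, and Szeg\H{o}'s theorem on the Toeplitz block does not reduce it to the law $H$ of $f_\gamma$.

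\emph{The free convolution equation is different.} Even for $\Sigma^{1/2}G\Sigma^{1/2}$, the free multiplicative convolution of the semicircle with $H$ satisfies the coupled system
\[
m=\int\frac{dH(t)}{-z-t\,w},\qquad w=\int\frac{t\,dH(t)}{-z-t\,w},
\]
not a one-variable equation with $m$ in the denominator. Its fourth moment is $2\int t^2\,dH$.

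This cannot be repaired, because Eq.~\eqref{eq:sce_gamma} contradicts the paper's exact fourth moment. Expanding \eqref{eq:sce_gamma} at large $z$ gives $\mu_2=\int t\,dH=1$ and $\mu_4=1+\int t^2\,dH=2\zeta(2\gamma)$ (Lemma~\ref{l20}). The Wick identity \eqref{eq:exactTr4-final}, with $f(a,b)=\frac{\min(a,b)}{N}d_{|a-b|}$, gives instead $\mu_4=2+\frac43(\zeta(2\gamma)-1)$ (Proposition~\ref{p14}). The $\Sigma^{1/2}G\Sigma^{1/2}$ model would give $4\zeta(2\gamma)-2$. Any two of these agree only if $\zeta(2\gamma)=1$. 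The contradiction is already rigorous for $\gamma>1$, where $H$ is compactly supported and moments converge. The same mismatch, $2+2\rho^2/(1-\rho^2)$ against $2+\frac43\rho^2/(1-\rho^2)$, affects Proposition~\ref{p4}.

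A heuristic reduction of the MDE with the true $S[\cdot]$ instead gives a position-dependent system for a local symbol $M(x,\theta)$, $x=a/N$:
\[
M(x,\theta)=\frac{-1}{z+f_\gamma(\theta)A(x)+B(x)},\qquad m=\int_0^1\!dx\int\frac{d\theta}{2\pi}\,M(x,\theta),
\]
\[
A(x)=\int_0^x\!dy\int\frac{d\phi}{2\pi}\,M(y,\phi),\qquad B(x)=\int_x^1\!dy\int\frac{d\phi}{2\pi}\,f_\gamma(\phi)\,M(y,\phi).
\]
One can check that this reproduces $\mu_4=2+\frac43\sum_{s\ge1}d_s^2$.

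Your Step~2 machinery is sound in itself, and more careful than the paper, which never addresses the failure of (E) for $\gamma\le1$. Since $f^{(L)}_\gamma$ is the Poisson mean of $f_\gamma$, it converges in $L^1$, so $\sqrt{f^{(L)}_\gamma}\to\sqrt{f_\gamma}$ in $L^2$. Coupling both rows through a common spectral noise removes the circulant-embedding obstacle you flagged. Hoffman--Wielandt and the $1/\operatorname{Im}z$ bound then pass to the limit. But this truncation argument has to be applied to a correct limiting equation, not to \eqref{eq:sce_gamma}.
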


\begin{proof}
Identical        to        the        proof        of        Proposition~\ref{p4},;        the        argument        depends        only        on        the        row
covariance        being        row-independent        and        Toeplitz        (Definition~\ref{def:row-plaw}),        not        on        the
specific        decay        mechanism.
\end{proof}

\begin{remark}        (A        spectral        proof        of        Proposition~\ref{prop:two-thresholds}(i)).        
\label{r6}
Equation        \eqref{eq:ftheta_gamma}        identifies        $\{d_t\}$        as        the        Fourier
coefficients        of        $f_\gamma$,        so        Parseval's        identity        gives
\begin{equation}
\int_0^{2\pi}\frac{d\theta}{2\pi}\,f_\gamma(\theta)^2
\;=\;        \sum_{t=-\infty}^{\infty}        d_t^2
\;=\;        d_0^2+2\sum_{t\ge1}(1+t)^{-2\gamma}.
\label{eq:parseval}
\end{equation}
The        sum        on        the        right        converges        if        and        only        if        $\gamma>1/2$.        Since
the        hub        correction        to        $\mu_4$        in        Proposition~\ref{prop:mu4} (and        its
power-law        analogue        underlying        Proposition~\ref{prop:two-thresholds}(i))        is        controlled
by        exactly        this        quantity,        $\sum_t        d_t^2$,        Eq.        
\eqref{eq:parseval}        gives        an        independent,        spectral        derivation        of
the        threshold        $\gamma_c=1/2$        of        Proposition~\ref{prop:two-thresholds}(i),        complementing
the        direct        combinatorial        argument        of        \cite{Hisakado2026}:        the        bulk
fourth-moment        is        finite        iff        $f_\gamma\in        L^2[0,2\pi]$.
\end{remark}

Figure~\ref{fig:bulk_sce_gamma}        shows        the        corresponding        comparison.        For        $\gamma=0.3$,
where        Proposition~\ref{prop:two-thresholds}(i)        already        establishes        that        the        bulk        moments
diverge        as        $N\to\infty$,        we        plot        the        density        on        a        logarithmic
scale        to        display        the        full        extent        of        the        heavy        tail        together        with
the        bulk;        the        self-consistent        theory        tracks        the        empirical
histogram        closely        across        several        orders        of        magnitude        in        density,
including        the        tail        region        beyond        $|x|\gtrsim5$,        whereas        the
semicircle        law        is        confined        to        $|x|\le2$        by        construction        and
provides        no        description        of        this        region        at        all.

\begin{figure}[h]
\centering
\includegraphics[width=\textwidth]{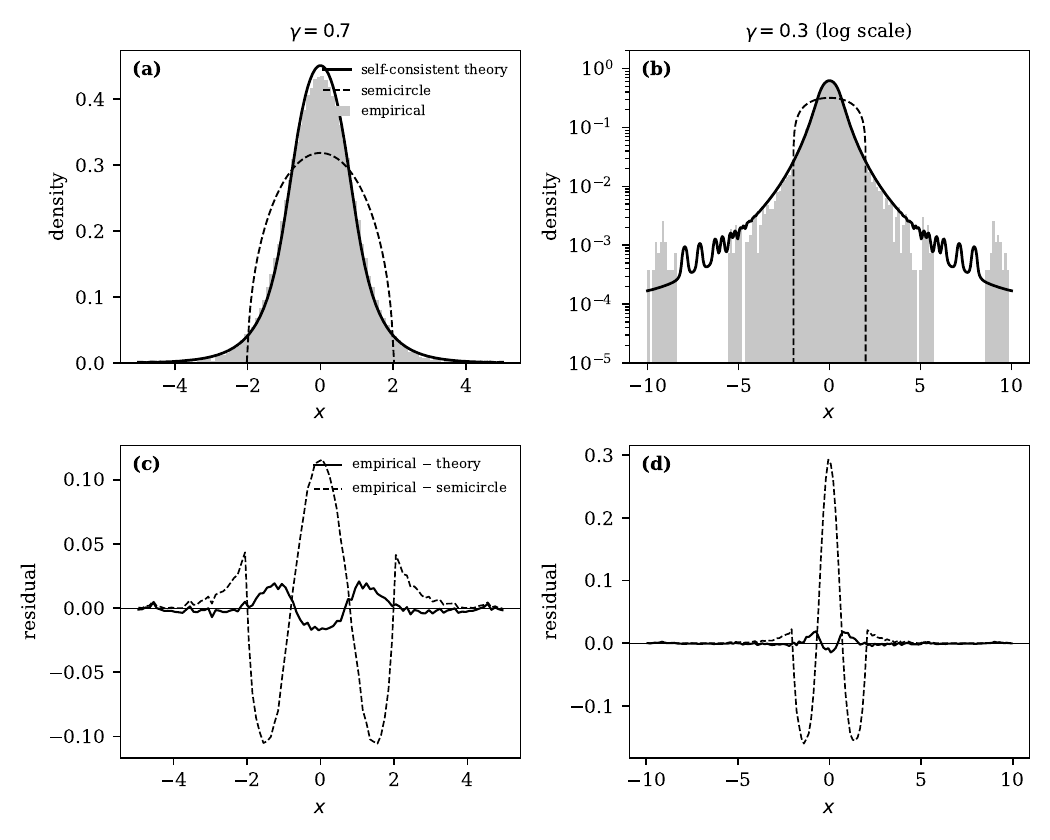}
\caption{(a,b)        Bulk        spectral        density        of        the        power-law        ensemble
(Definition~2)        at        $\gamma=0.7$        (linear        scale)        and        $\gamma=0.3$
(logarithmic        scale,        showing        the        heavy        tail):        empirical        histogram
(gray),        self-consistent        theory        \eqref{eq:sce_gamma}        (solid
black),        and        the        semicircle        law        (dashed        black).        (c,d)        Residuals
of        the        theory        and        of        the        semicircle        law        relative        to        the
empirical        histogram.        $N=800$,        $25$        realizations,
$\eta=0.03$        ($\gamma=0.7$)        or        $0.05$        ($\gamma=0.3$).}
\label{fig:bulk_sce_gamma}
\end{figure}

As        in        Section~II.D,        we        regard        Eq.~\eqref{eq:sce_gamma}
with        
Figure~\ref{fig:bulk_sce_gamma}        as        a        quantitative        confirmation
of        the        qualitative        statement        "Bulk        statistics:        Not        semi-circlar"
recorded        for        every        regime        in        Remark~\ref{r6}        as
supplying        an        independent,        spectral        route        to        the        threshold
$\gamma_c=1/2$        that        is        derived,        in        this        section,        directly        from
the        same        equation        that        also        reproduces        the        full        bulk        density.
\subsection{C        Phase        transition        of        the        fourth-moment:        exact        verification
and        critical        exponents}

Remark~\ref{r6}        identified        $\gamma_c=1/2$        as        the        threshold        for        divergence        of
the        bulk        fourth-moment,        via        the        spectral        criterion        $f_\gamma\in
L^2[0,2\pi]$,        equivalently        $\sum_t        d_t^2<\infty$.        Here        we        make        this
threshold        fully        rigorous        ---        not        merely        as        an        asymptotic        $N\to\infty$
statement        inferred        from        the        bulk        moment        calculation,        but        as        an        exact,
finite-$N$,        combinatorial        identity        whose        $N\to\infty$        limit        exhibits        a
genuine        non-analyticity        at        $\gamma=1/2$,        with        critical        exponents        that
we        compute        in        closed        form.

\subsubsection{i.        Exact        finite-$N$        formula}

\begin{proposition}[Exact        finite-$N$        fourth-moment]
Let        $S$        be        the        row-independent        ensemble        of        Definition~\ref{def:row-plaw}      with        $\sigma=1$,
and        write        $R(a,b):=\min(a,b)$,        $T(a,b):=\max(a,b)$,        so        that
$\mathrm{Cov}(S_{ab},S_{cd})        =        \tfrac1N\,d_{|T(a,b)-T(c,d)|}\,
\mathbf        1[R(a,b)=R(c,d)]$.        Then,        for        every        finite        $N$,
\begin{equation}
\mathbb        E\,\mathrm{Tr}(S^4)        =        2\sum_{a,b=1}^N        f(a,b)^2        \;+\;\Xi_N,
\qquad
f(a,b):=\sum_{c=1}^N        \mathrm{Cov}(S_{ac},S_{cb}),
\label{eq:exactTr4-final}
\end{equation}
where        $\Xi_N:=\sum_{i,j,k,l=1}^N
\mathrm{Cov}(S_{ij},S_{kl})\,\mathrm{Cov}(S_{jk},S_{li})$,        and        we        have
used        the          symmetry        $f(a,b)=f(b,a)$.
\end{proposition}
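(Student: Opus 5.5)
The plan is to expand the trace and apply Isserlis' (Wick's) theorem. The first step is to check that the family $\{S_{ab}\}_{a\le b}$ is jointly centred Gaussian. Each row process $\eta_i(\cdot)$ of Definition~\ref{def:row-plaw} is a centred stationary Gaussian process, and it exists for every $\gamma>0$ by Lemma~\ref{lem:posdef}. The rows are mutually independent, so the concatenated vector $(\eta_i(j))_{i\le j}$ is centred Gaussian. Every entry $S_{ab}=S_{ba}=\eta_{R(a,b)}(T(a,b))/\sqrt N$ is then a coordinate of one Gaussian vector, diagonal entries included. The covariance $\mathrm{Cov}(S_{ab},S_{cd})=\tfrac1N d_{|T(a,b)-T(c,d)|}\mathbf 1[R(a,b)=R(c,d)]$ is read off directly from \eqref{eq:cov-structure} with $\rho^{|\cdot|}$ replaced by $d_{|\cdot|}$.

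Next I write $\mathrm{Tr}(S^4)=\sum_{i,j,k,l=1}^N S_{ij}S_{jk}S_{kl}S_{li}$ and take expectations term by term; this is legitimate because the sum is finite. For a centred Gaussian quadruple $(X_1,X_2,X_3,X_4)$, Isserlis' theorem gives
\[
\mathbb E[X_1X_2X_3X_4]=\mathbb E[X_1X_2]\,\mathbb E[X_3X_4]+\mathbb E[X_1X_3]\,\mathbb E[X_2X_4]+\mathbb E[X_1X_4]\,\mathbb E[X_2X_3].
\]
This identity holds regardless of coincidences among the $X$'s, so the diagonal terms and repeated indices need no separate treatment. It splits $\mathbb E\,\mathrm{Tr}(S^4)$ into three sums, one for each pairing. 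The pairing $(S_{ij},S_{kl})(S_{jk},S_{li})$ gives exactly $\Xi_N$ by definition, so it is left untouched.

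The remaining work is to identify the two adjacent pairings. For the pairing $(S_{ij},S_{jk})(S_{kl},S_{li})$, I sum over $j$ in the first factor and over $l$ in the second. This yields $\sum_{i,k} f(i,k)\,g(k,i)$, where $g(k,i)=\sum_l\mathrm{Cov}(S_{kl},S_{li})=f(k,i)$. For the pairing $(S_{ij},S_{li})(S_{jk},S_{kl})$, I sum over $i$ and over $k$. Using the symmetry of $\mathrm{Cov}$ together with $S_{ij}=S_{ji}$, this gives $\sum_{j,l} f(l,j)f(j,l)$.

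It remains to prove the symmetry $f(a,b)=f(b,a)$. This follows from
\[
f(b,a)=\sum_c\mathrm{Cov}(S_{bc},S_{ca})=\sum_c\mathrm{Cov}(S_{ac},S_{cb}),
\]
which uses only the symmetry of covariance and of $S$. With this, both adjacent pairings equal $\sum_{a,b}f(a,b)^2$, and adding the three sums gives \eqref{eq:exactTr4-final}.

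I expect no step to be a genuine obstacle. The only points needing care are the index bookkeeping in matching the three Wick pairings to the cyclic structure of the trace, and stating explicitly that joint Gaussianity covers the diagonal and the symmetric duplicates $S_{ab}=S_{ba}$. For the latter I treat all $N^2$ entries as linear images of one Gaussian vector, rather than assuming they are distinct variables.
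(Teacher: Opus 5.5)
Your proof is correct and follows the paper's own argument. You apply Wick/Isserlis to $\mathbb E[S_{ij}S_{jk}S_{kl}S_{li}]$ and collapse the two adjacent pairings to $\sum_{a,b}f(a,b)^2$, using the symmetries $\mathrm{Cov}(S_{ab},S_{cd})=\mathrm{Cov}(S_{ba},S_{cd})$ and $f(a,b)=f(b,a)$; the crossing pairing is $\Xi_N$ by definition. Your explicit check of joint Gaussianity (covering diagonal entries and the duplicates $S_{ab}=S_{ba}$) and your index bookkeeping fill in details the paper leaves implicit.
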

\begin{proof}
By        Wick's        theorem,        $\mathbb        E[S_{ij}S_{jk}S_{kl}S_{li}]$        is        the        sum        of
three        pairings;        the        pairings        $(ij,jk)(kl,li)$        and        $(ij,li)(jk,kl)$        each
factor,        after        relabeling        and        using
$\mathrm{Cov}(S_{ab},S_{cd})=\mathrm{Cov}(S_{ba},S_{cd})=\mathrm{Cov}(S_{ab},S_{dc})$,
into        $\sum_{a,b}f(a,b)^2$;        the        pairing        $(ij,kl)(jk,li)$        does        not        factor
and        is        exactly        $\Xi_N$.
\end{proof}

\subsubsection{ii.        The        $N\to\infty$        limit        is        a        genuine        phase        transition}

\begin{proposition}[Non-analyticity        at        $\gamma_c=1/2$        and                the        susceptibility]
\label{p14}
Define        $\mu_4(\infty,\gamma):=\lim_{N\to\infty}\mathbb        E\,\mathrm{Tr}(S^4)/N$.
Then
\begin{equation}
\mu_4(\infty,\gamma)        =        2+\frac43\bigl(\zeta(2\gamma)-1\bigr)
\quad\text{for        }\gamma>\tfrac12,
\qquad
\mu_4(\infty,\gamma)=+\infty
\quad\text{for        }\gamma\le\tfrac12,
\label{eq:mu4inf}
\end{equation}
where        $\zeta$        is        the        Riemann        zeta        function.        In        particular
$\mu_4(\infty,\cdot)$        is        real-analytic        on        $(1/2,\infty)$        and        identically
infinite        on        $(0,1/2]$:        the        bulk        transition        at        $\gamma_c=1/2$        is        a
rigorously        established        non-analyticity        of        the        infinite-volume        limit,
not        merely        a        finite-size        numerical        observation.
\end{proposition}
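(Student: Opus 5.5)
The plan is to start from the exact finite-$N$ identity \eqref{eq:exactTr4-final} and evaluate its two pieces separately. The term $2\sum_{a,b}f(a,b)^2$ should produce the whole limit. The crossing term $\Xi_N$ should be shown to be $o(N)$ for $\gamma>1/2$. Two features make this tractable: $d_t=(1+t)^{-\gamma}>0$, so every summand is nonnegative, and the covariance forces $\min$-coincidences that cut down the number of free indices.

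\emph{Step 1: closed form for $f$.} Using $\mathrm{Cov}(S_{ac},S_{cb})=\frac1N d_{|T(a,c)-T(c,b)|}\mathbf 1[R(a,c)=R(c,b)]$, I would do a case analysis.
\begin{itemize}
\item If $a=b$, every $c$ contributes $d_0=1$, so $f(a,a)=1$.
\item If $a<b$ and $c>a$, then $\min(a,c)=a$ while $\min(c,b)>a$, so the indicator vanishes.
\item If $a<b$ and $c\le a$, both minima equal $c$ and the maxima are $a$ and $b$.
\end{itemize}
Hence $f(a,b)=\frac{\min(a,b)}{N}\,d_{|a-b|}$ for $a\neq b$. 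Setting $t=b-a$, this gives
\[
\frac{2}{N}\sum_{a,b}f(a,b)^2 \;=\; 2+\frac{4}{N^3}\sum_{t\ge1}d_t^2\sum_{a=1}^{N-t}a^2 .
\]
Since $\frac{1}{N^3}\sum_{a\le N-t}a^2=\tfrac13(1-t/N)^3+O(1/N)$ is bounded by $1$ and converges to $1/3$ for each fixed $t$, dominated convergence applies when $\sum_t d_t^2<\infty$, i.e.\ when $\gamma>1/2$. The limit is then
\[
2+\tfrac43\sum_{t\ge1}(1+t)^{-2\gamma}=2+\tfrac43\bigl(\zeta(2\gamma)-1\bigr).
\]

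\emph{Step 2: $\Xi_N=O(1)$ for $\gamma>1/2$.} Each factor in $\Xi_N$ carries $1/N$ and two $\min$-constraints: $\min(i,j)=\min(k,l)$ and $\min(j,k)=\min(l,i)$. I would let the overall minimum of $\{i,j,k,l\}$ be attained at one index, using the cyclic symmetry to reduce to $i$. The constraints then force either $k=i$, or $j=l=i$.
\begin{itemize}
\item In the case $k=i$, with $j,l\ge i$ free, the summand is $N^{-2}d_{|j-l|}^2$. The total is at most $N^{-2}\sum_i (N-i)\sum_{t\in\mathbb Z}d_{|t|}^2=O(1)$.
\item In the case $j=l=i$, only $k$ is free, contributing $O(N^2)/N^2=O(1)$.
\end{itemize}
Summing over the finitely many choices of which index is minimal gives $\Xi_N=O(1)$, hence $\Xi_N/N\to0$. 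I expect this bookkeeping to be the main obstacle: one has to check that no configuration with three free indices escapes the $\sum d_t^2$ control. I would verify the enumeration carefully, including ties among the minima.

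\emph{Step 3: divergence for $\gamma\le1/2$.} Because all terms are nonnegative, it is enough to bound the first piece from below, dropping $\Xi_N\ge0$. Restricting to $t\le N/2$ and $a\le N/2$ gives
\[
\frac{\mathbb E\,\mathrm{Tr}S^4}{N}\ \ge\ c\sum_{t\le N/2}(1+t)^{-2\gamma},
\]
which tends to $\infty$ when $2\gamma\le1$.

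Finally, real-analyticity on $(1/2,\infty)$ follows because $\zeta(s)$ is analytic for $\operatorname{Re}s>1$ and $s=2\gamma$ is affine in $\gamma$. Since the limit is $+\infty$ on $(0,1/2]$, the non-analyticity sits exactly at $\gamma_c=1/2$.
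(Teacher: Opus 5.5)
Your proof is correct. It has the same skeleton as the paper's: the Wick split into $2\sum_{a,b}f(a,b)^2$ plus the crossing term $\Xi_N$. Each ingredient, however, is handled differently and is more self-contained.

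For the factored term, the paper imports the combinatorial evaluation of \cite{Hisakado2026}. In Proposition~\ref{prop15} it even treats $f(a,b)\approx(a/N)d_{b-a}$ as approximate, with ``boundary corrections''. Your case analysis shows $f(a,b)=\min(a,b)\,d_{|a-b|}/N$ exactly for $a\ne b$ and $f(a,a)=1$, so the limit follows from a one-line dominated-convergence argument.

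For $\Xi_N$, the paper uses Cauchy--Schwarz, $|\Xi_N|\le\Sigma_2(N,\gamma)\le4\mu_2(H)$ (Lemmas~\ref{lc21}--\ref{lc22}). That route needs no case analysis and ignores signs, but it is lossy. Your enumeration of the $\min$-constraints is sharper. Run with inclusion--exclusion instead of a union bound, it shows that every nonzero term has $k=i$ or $l=j$. The overlap $\{k=i,\ l=j\}$ contributes exactly $N^2\cdot N^{-2}=1$, so $\Xi_N=2\,\Xi_N^{(i=k)}-1$ identically. Combined with Corollary~\ref{c13}, this gives $\Xi_N\to2\zeta(2\gamma)-1$ for $\gamma>1/2$, half the Cauchy--Schwarz bound. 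It also resolves the overlap question left open in the caption of Table~\ref{tab:subleading}; for example, at $\gamma=1.5$, $N=200$ it predicts $(1.398+1)/(2\cdot1.398)\approx0.858$, as tabulated.

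Finally, for $\gamma\le1/2$ the paper only notes that $\sum_i(1+i)^{-2\gamma}$ diverges, tacitly extending the $\gamma>1/2$ formula. Your Step~3 discards $\Xi_N\ge0$ using positivity of $d_t$ and restricts to $t,a\le N/2$. This genuinely proves $\mathbb E\,\mathrm{Tr}(S^4)/N\to\infty$.
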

\begin{proof}
By        the        factored        form        $2\sum_{a,b}f(a,b)^2$        dominating
        $\mathbb        E\,\mathrm{Tr}(S^4)/N$        as        $N\to\infty$,        the
$N\to\infty$        limit        reduces        to        the        combinatorial        calculation        of        \cite{Hisakado2026},        giving
$\mu_4(\infty,\gamma)=2+\tfrac43\sum_{i=1}^\infty(1+i)^{-2\gamma}
=2+\tfrac43(\zeta(2\gamma)-1)$        whenever        the        sum        converges,        i.e.\
$2\gamma>1$.        For        $\gamma\le1/2$,        $\sum_i(1+i)^{-2\gamma}$        diverges        by
the        standard        comparison        test        for        $p$-series        ($2\gamma\le1$),        so
$\mu_4(\infty,\gamma)=+\infty$.
\end{proof}
\begin{corollary}[Divergence of the top eigenvalue for $\gamma \le 1/2$]
\label{cor:lmax-divergence}
Fix $\gamma \le 1/2$. Then
\[
\mathbb{E}\big[\lambda_{\max}(S_N)^4\big] \;\longrightarrow\; \infty
\qquad (N \to \infty).
\]
In particular, $\lambda_{\max}(S_N)$ is not bounded in $L^4$, uniformly in $N$.

This statement concerns the random top eigenvalue $\lambda_{\max}(S_N)$ directly; it
makes no reference to, and carries no implication for, the self-consistent edge
$\tau_0(\gamma)$ of Definition~\ref{def3}, which is in any case undefined throughout $\gamma \le 1$
since $\sigma_1(\gamma) = 2\zeta(\gamma)-1 = \infty$ there (Lemma~\ref{lem:sigma1-gamma}).
\end{corollary}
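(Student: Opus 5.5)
The plan is to bound the fourth moment of the empirical spectral measure by the fourth power of the top eigenvalue. Because $S_N$ is symmetric, $\lambda_{\max}(S_N)$ is not enough by itself: negative eigenvalues are not controlled by it. So I would first work with the spectral radius $\|S_N\|=\max(\lambda_{\max},-\lambda_{\min})$ and then use symmetry in law to get back to $\lambda_{\max}$.

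\emph{Step 1: trace bound.} Deterministically,
\[
\frac1N\operatorname{Tr}(S_N^4)=\frac1N\sum_{k=1}^N\lambda_k^4\le \|S_N\|^4
\le \lambda_{\max}(S_N)^4+\lambda_{\min}(S_N)^4 .
\]
Taking expectations gives
\[
\mathbb E\,\tfrac1N\operatorname{Tr}(S_N^4)\le \mathbb E[\lambda_{\max}^4]+\mathbb E[\lambda_{\min}^4].
\]

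\emph{Step 2: symmetry.} The entries of $S_N$ are jointly centred Gaussian, so $-S_N\stackrel{d}{=}S_N$. Since $\lambda_{\min}(S_N)=-\lambda_{\max}(-S_N)$, this gives $\mathbb E[\lambda_{\min}^4]=\mathbb E[\lambda_{\max}^4]$. Hence
\[
\mathbb E[\lambda_{\max}(S_N)^4]\ge \tfrac12\,\mathbb E\,\tfrac1N\operatorname{Tr}(S_N^4).
\]

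\emph{Step 3: divergence of the right-hand side.} By Proposition~\ref{p14}, $\mathbb E\operatorname{Tr}(S_N^4)/N\to+\infty$ for $\gamma\le1/2$, which immediately gives the claim.

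Here I expect the main obstacle. The proof of Proposition~\ref{p14} asserts the divergence, but only by reduction to the dominant factored term. To make Step 3 airtight I would use the exact identity \eqref{eq:exactTr4-final} directly. Each Wick pairing $\mathbb E[S_{ij}S_{kl}]\,\mathbb E[S_{jk}S_{li}]$ is a product of covariances, and these are nonnegative because $d_t>0$. Hence $\Xi_N\ge0$, and
\[
\mathbb E\,\tfrac1N\operatorname{Tr}(S_N^4)\ge \tfrac2N\sum_{a,b}f(a,b)^2 .
\]
It then remains to lower-bound the hub contribution. I would restrict to pairs $a\neq b$ whose shared row index $c$ is smaller than both, i.e.\ $c<\min(a,b)$. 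Each such $c$ contributes $\tfrac1N d_{|a-b|}$ to $f(a,b)$, so
\[
f(a,b)\ge \frac{\min(a,b)-1}{N}\,d_{|a-b|}.
\]
Summing the square of this bound over $a,b$ with $\min(a,b)\ge N/2$ gives
\[
\tfrac2N\sum_{a,b}f(a,b)^2\gtrsim c\sum_{t\le N/2}d_t^2 ,
\]
and $\sum_{t\le N/2}d_t^2$ diverges as $N\to\infty$ when $2\gamma\le1$. I would check the counting of the pairs $(a,b)$ with care.

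\emph{Step 4: consequence.} If $\lambda_{\max}(S_N)$ were bounded in $L^4$ uniformly in $N$, then $\sup_N\mathbb E[\lambda_{\max}(S_N)^4]<\infty$, contradicting Step 3. This gives the ``in particular'' clause.

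The final paragraph of the statement concerns $\tau_0(\gamma)$. It needs no proof beyond citing Lemma~\ref{lem:sigma1-gamma} for $\sigma_1(\gamma)=\infty$ when $\gamma\le1$, so Definition~\ref{def3} does not apply there.
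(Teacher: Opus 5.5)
Your argument is correct and follows the paper's outline: a deterministic trace bound, then divergence of $\mathbb{E}\operatorname{Tr}(S_N^4)/N$. It departs from the paper at two points, and the first one matters.

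The paper's proof rests on the inequality $\operatorname{Tr}(S_N^4)\le N\lambda_{\max}(S_N)^4$, asserted to hold pointwise. This is false for symmetric matrices, because negative eigenvalues are not controlled by $\lambda_{\max}$. For example, $\operatorname{diag}(0,-1,\dots,-1)$ violates it, and so does every matrix in a neighborhood of it; that neighborhood has positive probability under the nondegenerate Gaussian law of $S_N$. Your bound $\|S_N\|^4\le\lambda_{\max}^4+\lambda_{\min}^4$, combined with $-S_N\stackrel{d}{=}S_N$, is exactly the repair needed. It costs only a factor $1/2$.

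Second, the paper simply cites Proposition~\ref{p14}, whose $\gamma\le 1/2$ case is itself only a reduction to the factored term of an external computation. You make the divergence self-contained. You note $\Xi_N\ge 0$ because every covariance is nonnegative, and you use $f(a,b)\ge\frac{\min(a,b)-1}{N}\,d_{|a-b|}$ as a direct hub lower bound.

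One counting detail needs care, as you suspected. The number of ordered pairs with $\min(a,b)\ge N/2$ and $|a-b|=t$ is about $2(N/2-t)$, which is not of order $N$ uniformly in $t\le N/2$. Restricting to $t\le N/4$, or using that $d_t$ is decreasing, gives $\frac{2}{N}\sum_{a,b}f(a,b)^2\ge c\sum_{t\le N/4}d_t^2$ for an absolute constant $c>0$. This still diverges when $2\gamma\le 1$, only logarithmically at $\gamma=1/2$.

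So the paper's version is shorter but invalid as written. Yours is valid and does not depend on the external hub computation.

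For the closing remark, Lemma~\ref{lem:sigma1-gamma} is proved only for $\gamma>1$. For $\gamma\le 1$ the relevant fact is the divergence of $\sum_t d_t$ (Proposition~\ref{prop:two-thresholds}(ii)). The expression $2\zeta(\gamma)-1$ is actually finite for $\gamma<1$ by analytic continuation, so it should not be used there.
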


\begin{proof}
By the deterministic inequality $\operatorname{Tr}(S_N^4) = \sum_{i=1}^N \lambda_i(S_N)^4
\le N \, \lambda_{\max}(S_N)^4$, valid pointwise for every realization of $S_N$, we have
\[
\lambda_{\max}(S_N)^4 \;\ge\; \frac{\operatorname{Tr}(S_N^4)}{N}.
\]
Taking expectations preserves the inequality:
\[
\mathbb{E}\big[\lambda_{\max}(S_N)^4\big] \;\ge\; \frac{\mathbb{E}[\operatorname{Tr}(S_N^4)]}{N}.
\]
By Proposition~14, $\mu_4(\infty,\gamma) := \lim_{N\to\infty} \mathbb{E}[\operatorname{Tr}(S_N^4)]/N
= +\infty$ for every $\gamma \le 1/2$. Hence the right-hand side diverges, and so does
the left-hand side.
\end{proof}

\begin{corollary}[Exact    limit    of    the    dominant    sub-case,    and    boundedness    of    $\Xi_N$    for    $\gamma>1/2$]
\label{c13}
Fix    $\gamma    >    1/2$.    Define    the    partial    sum
\[
D_2(m)    :=    \sum_{t=1}^m    d_t^2,    \qquad    d_t    =    (1+t)^{-\gamma},
\]
so    that    $D_2(m)$    increases    monotonically    to    $D_2(\infty)    =    \zeta(2\gamma)    -    1    <    \infty$.    The    sub-case    of
$\Xi_N$    obtained    by    restricting    its    defining    sum    to    $k    =    i$,
\[
\Xi_N^{(i=k)}    :=    \sum_{i,j,l=1}^N    \mathrm{Cov}(S_{ij},    S_{il})^2,
\]
admits    the    exact    closed    form
\[
\Xi_N^{(i=k)}    \;=\;    1    +    \frac{2}{N^2}\sum_{L=1}^N    \sum_{s=1}^{L-1}    (L-s)\,    d_s^2,
\]
    converges,    as    $N    \to    \infty$,    to
\[
\Xi_N^{(i=k)}    \;\longrightarrow\;    1    +    D_2(\infty)    \;=\;    \zeta(2\gamma).
\]
and
\[
\Xi_N    =    O(1),    \qquad    \text{in    particular    }    \Xi_N/N    \to    0.
\]
This    is    precisely    the    input    used    in    the    proof    of
Proposition~\ref{p14}:    dividing    Eq~\eqref{eq:exactTr4-final}  by    $N$,
\[
\mu_4(\infty,\gamma)    =    \lim_{N\to\infty}    \frac{2\sum_{a,b}    f(a,b)^2}{N}
+    \lim_{N\to\infty}    \frac{\Xi_N}{N}    =    2    +    \frac{4}{3}\bigl(\zeta(2\gamma)    -    1\bigr)    +    0,
\]
matching    Eq.~(\ref{eq:mu4inf}).
The    numerical    confirmation  is  in  Table~\ref{tab:subleading}.
The  exact  proof    for  $\gamma>1/2$  is  in  Appendix  C.
\end{corollary}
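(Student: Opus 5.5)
The plan is to treat the three assertions in turn: the exact closed form of $\Xi_N^{(i=k)}$, its limit, and then the bound $\Xi_N=O(1)$. All of them follow from the explicit covariance formula
$\mathrm{Cov}(S_{ab},S_{cd})=\tfrac1N d_{|T(a,b)-T(c,d)|}\mathbf 1[R(a,b)=R(c,d)]$ together with summability of $d_t^2$ for $\gamma>1/2$.

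\textbf{Closed form.} Each summand of $\Xi_N^{(i=k)}$ equals $N^{-2}d_{|T(i,j)-T(i,l)|}^2\,\mathbf 1[R(i,j)=R(i,l)]$. I will split according to the positions of $j$ and $l$ relative to $i$.
\begin{enumerate}
\item[(a)] For $j=l$ the indicator is $1$ and $d_0=1$. The $N^2$ pairs $(i,j)$ therefore contribute exactly $1$.
\item[(b)] For $j\neq l$ with both $j,l\ge i$, we have $R=i$ for both entries, and the term is $N^{-2}d_{|j-l|}^2$.
\item[(c)] For $j\ne l$ with both $j,l<i$, the condition $R=j=l$ is impossible, so the term vanishes.
\item[(d)] For $j<i\le l$ (or the symmetric case), $R(i,j)=j\neq i=R(i,l)$, so the term vanishes.
\end{enumerate}
Only case (b) survives besides (a). Fix $i$ and set $L:=N-i+1$, the number of admissible columns $\ge i$. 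The ordered pairs $j\neq l$ in $\{i,\dots,N\}$ with $|j-l|=s$ number $2(L-s)$. Summing over $i$, i.e.\ over $L=1,\dots,N$, gives exactly
\[
\Xi_N^{(i=k)}=1+\frac{2}{N^2}\sum_{L=1}^N\sum_{s=1}^{L-1}(L-s)\,d_s^2 .
\]

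\textbf{Limit.} I will exchange the order of summation:
\[
\frac{2}{N^2}\sum_{s=1}^{N-1}d_s^2\sum_{L=s+1}^{N}(L-s)=\sum_{s=1}^{N-1}d_s^2\,\frac{(N-s)(N-s+1)}{N^2}.
\]
The weight lies in $[0,1+1/N]$ and tends to $1$ for each fixed $s$. Since $\sum_s d_s^2=D_2(\infty)<\infty$ exactly when $\gamma>1/2$, dominated convergence for series gives the limit $1+D_2(\infty)=\zeta(2\gamma)$.

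\textbf{Bound on $\Xi_N$.} I would avoid a case analysis of all index coincidences and instead use Cauchy--Schwarz over the quadruple index $(i,j,k,l)$:
\[
|\Xi_N|\le\Big(\sum_{i,j,k,l}\mathrm{Cov}(S_{ij},S_{kl})^2\Big)^{1/2}\Big(\sum_{i,j,k,l}\mathrm{Cov}(S_{jk},S_{li})^2\Big)^{1/2}.
\]
The cyclic relabeling $(i,j,k,l)\mapsto(j,k,l,i)$ maps the second sum onto the first, so $|\Xi_N|\le\sum_{i,j,k,l}\mathrm{Cov}(S_{ij},S_{kl})^2$.

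A nonzero term requires both entries to lie in a common row $r=R(i,j)=R(k,l)$. Parametrize them by their larger indices $c=T(i,j)\ge r$ and $c'=T(k,l)\ge r$. Each unordered entry corresponds to at most $2$ ordered index pairs, giving a factor of at most $4$. Hence
\[
|\Xi_N|\le\frac{4}{N^2}\sum_{r=1}^N\sum_{c,c'\ge r}d_{|c-c'|}^2\le\frac{4}{N^2}\cdot N\cdot N\,(1+2D_2(\infty)),
\]
which is $O(1)$ uniformly in $N$. Consequently $\Xi_N/N\to0$, and the displayed formula for $\mu_4(\infty,\gamma)$ follows from Eq.~\eqref{eq:exactTr4-final} together with the hub computation of $\lim 2\sum f^2/N$ quoted in Proposition~\ref{p14}.

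The main obstacle is the bookkeeping in the closed form. One must see that, for off-diagonal pairs, only the configuration $j,l\ge i$ survives, and count exactly $2(L-s)$ pairs at lag $s$ without double counting the symmetric entries $S_{ij}=S_{ji}$. The Cauchy--Schwarz step is what keeps the general $\Xi_N$ bound short, so I would check the factor-$4$ multiplicity carefully but expect no real difficulty there.
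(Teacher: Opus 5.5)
Your proposal is correct and follows the paper's proof essentially step for step. It uses the same row-by-row case analysis for the closed form, and the same interchange of the $L$- and $s$-sums followed by dominated convergence for the limit. For $\Xi_N=O(1)$ it uses the same Cauchy--Schwarz reduction to $\Sigma_2(N,\gamma)=\sum_{a,b,c,d}\mathrm{Cov}(S_{ab},S_{cd})^2$ that the paper carries out in Appendix C.6 (Lemmas~\ref{lc21}, \ref{lc22}(a) and Corollary~\ref{lc24}). The only difference is cosmetic: bounding each row length by $N$ gives you the constant $4(1+2D_2(\infty))=4\mu_2(H)$ instead of the paper's $2\mu_2(H)(N+1)/N$, which is equally sufficient. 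Like the paper, you take the limit of $2\sum_{a,b}f(a,b)^2/N$ from the cited hub computation.
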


\begin{proof}
\textbf{Exact    closed    form.}    Fix    $i$.    For    $j,    l    \ge    i$,    $\min(i,j)    =    \min(i,l)    =    i$
automatically,    so    $\mathrm{Cov}(S_{ij},    S_{il})    =    \frac{1}{N}    d_{|j-l|}$.    For    $j,    l    <    i$,
$\min(i,j)    =    j$    and    $\min(i,l)    =    l$,    so    the    covariance    is    nonzero    only    when    $j    =    l$,    in    which
case    it    equals    $\frac{1}{N}$.    In    the    mixed    case    (exactly    one    of    $j,    l$    less    than    $i$)    the
covariance    vanishes.    Writing    $L    :=    N    -    i    +    1$    for    $|\{i,\dots,N\}|$,
\[
\sum_{j,l=i}^N    d_{|j-l|}^2    =    L    +    2\sum_{s=1}^{L-1}    (L-s)\,    d_s^2,
\]
while    the    diagonal    $j    =    l    <    i$    contributes    $i    -    1$    further    unit    terms.    Since    $L    +    (i-1)    =    N$,
summing    over    $i    =    1,    \dots,    N$    and    substituting    $L    =    N    -    i    +    1$    gives    the    stated    closed    form.

\textbf{Limit.}    Reversing    the    order    of    summation,
\[
\sum_{L=1}^N    \sum_{s=1}^{L-1}    (L-s)\,    d_s^2
=    \sum_{s=1}^{N-1}    d_s^2    \sum_{L=s+1}^N    (L-s)
=    \frac{1}{2}\sum_{s=1}^{N-1}    d_s^2\,    (N-s)(N-s+1),
\]
so    that
\[
\frac{2}{N^2}\sum_{L=1}^N    \sum_{s=1}^{L-1}    (L-s)\,    d_s^2
=    \frac{1}{N^2}\sum_{s=1}^{N-1}    d_s^2\,    (N-s)(N-s+1).
\]
For    each    fixed    $s$,    $(N-s)(N-s+1)/N^2    \to    1$    as    $N    \to    \infty$,    and    this    ratio    is    bounded    by    $2$
uniformly    in    $1    \le    s    \le    N-1$;    since    $\sum_s    d_s^2    <    \infty$    for    $\gamma    >    1/2$,    dominated
convergence    gives
\[
\frac{1}{N^2}\sum_{s=1}^{N-1}    d_s^2\,    (N-s)(N-s+1)    \longrightarrow    \sum_{s=1}^\infty    d_s^2    =    D_2(\infty),
\]
hence    $\Xi_N^{(i=k)}    \to    1    +    D_2(\infty)    =    \zeta(2\gamma)$.

\textbf{Cyclic    symmetry.}    The    summand    of    $\Xi_N$,    $\mathrm{Cov}(S_{ij},S_{kl})\mathrm{Cov}(S_{jk},S_{li})$,
is    invariant    under    the    simultaneous    relabeling    $i    \to    j    \to    k    \to    l    \to    i$,    which    sends    it    to
$\mathrm{Cov}(S_{jk},S_{li})\mathrm{Cov}(S_{kl},S_{ij})$    —    the    same    product    with    its    two    factors
exchanged.    Under    this    relabeling    the    condition    $k=i$    becomes    $l=j$,    so    summing    over    all
$(i,j,k,l)$    gives    $\Xi_N^{(l=j)}    =    \Xi_N^{(i=k)}$    exactly.
\end{proof}
\subsubsection{iii.        Critical        exponents}

\begin{proposition}[Super-critical        exponent:        simple        pole,        $\psi=1$]
\label{prop:super-critical-exponent}
As        $\gamma\to\tfrac12^+$,
\begin{equation}
\mu_4(\infty,\gamma)        \sim        \frac{2}{3}\cdot\frac{1}{\gamma-1/2}.
\label{eq:psi1}
\end{equation}
\end{proposition}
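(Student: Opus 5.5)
The plan is to work directly from the closed form of Proposition~\ref{p14}: for $\gamma>1/2$,
\[
\mu_4(\infty,\gamma)=2+\tfrac43\bigl(\zeta(2\gamma)-1\bigr).
\]
This reduces the statement to the local behaviour of the Riemann zeta function at its pole $s=1$, with $s=2\gamma$.

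The key input is the Laurent expansion
\[
\zeta(s)=\frac{1}{s-1}+\gamma_E+O(s-1),\qquad s\to1,
\]
where $\gamma_E$ is the Euler--Mascheroni constant. To keep the argument self-contained, I would prove only the leading term, by integral comparison of the $p$-series. For $s>1$,
\[
\int_1^\infty x^{-s}\,dx\;\le\;\zeta(s)\;\le\;1+\int_1^\infty x^{-s}\,dx,
\]
which gives
\[
\frac{1}{s-1}\;\le\;\zeta(s)\;\le\;1+\frac{1}{s-1}.
\]
Hence $\zeta(s)=\frac{1}{s-1}+O(1)$ uniformly as $s\to1^+$. This is exactly the monotone-comparison argument already used in the proof of Proposition~\ref{p14} for the divergence at $\gamma\le1/2$, now made quantitative.

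Substituting $s=2\gamma$ gives $s-1=2(\gamma-\tfrac12)$, so
\[
\zeta(2\gamma)-1=\frac{1}{2(\gamma-1/2)}+O(1).
\]
Multiplying by $\tfrac43$ and adding the bounded constant $2$ yields
\[
\mu_4(\infty,\gamma)=\frac{2}{3}\cdot\frac{1}{\gamma-1/2}+O(1),
\]
and dividing by the leading term gives the asymptotic equivalence~\eqref{eq:psi1}. As a sharper refinement, the constant term is $2+\tfrac43(\gamma_E-1)$ if one uses the full Laurent expansion. I would also record the consistency check with Proposition~\ref{prop:AR1-bulk-exponent}: the residue is $2/3$ and the exponent is $\psi=1$ in both cases, because in both the divergence comes from $\sum_t d_t^2$ having a simple pole in the distance to criticality.

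There is no real analytic obstacle once Proposition~\ref{p14} is granted. The substantive point is that the closed form there is exact for every $\gamma>1/2$, so the asymptotics can be read off it. That exactness rests on $\Xi_N/N\to0$ (Corollary~\ref{c13}) and on the hub formula of \cite{Hisakado2026}. Note that the limits $N\to\infty$ and $\gamma\to1/2^+$ are taken in that order, so no uniformity in $\gamma$ is required. The only care needed is to use the two-sided bound above rather than merely the divergence, so that the $O(1)$ remainder is explicit.
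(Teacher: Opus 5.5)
Your proposal is correct and follows the paper's proof: both read the simple pole off the closed form $\mu_4(\infty,\gamma)=2+\tfrac43(\zeta(2\gamma)-1)$ of Proposition~\ref{p14} with $s=2\gamma$. The only difference is that the paper quotes the Laurent expansion $\zeta(1+2\varepsilon)=\tfrac{1}{2\varepsilon}+\gamma_E+O(\varepsilon)$, whereas you derive the needed leading term from the integral-comparison bounds $\tfrac{1}{s-1}\le\zeta(s)\le1+\tfrac{1}{s-1}$, which is a valid and self-contained version of the same step.
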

\begin{proof}
$\zeta(s)$        has        a        simple        pole        at        $s=1$        with        residue        $1$:
$\zeta(1+2\varepsilon)=\tfrac1{2\varepsilon}+\gamma_E+O(\varepsilon)$
($\gamma_E$        the        Euler--Mascheroni        constant).        Substituting
$2\varepsilon=2\gamma-1$        into        Eq.~\eqref{eq:mu4inf}        gives        the        stated
divergence.       
\end{proof}

\begin{proposition}[Sub-critical        exponent:        $\theta(\gamma)=1-2\gamma$]
\label{prop15}
Fix        $\gamma<1/2$.        As        $N\to\infty$,
\begin{equation}
\mu_4(N,\gamma)-2        \;\sim\;        C_{\mu}(\gamma)\,N^{1-2\gamma},
\qquad
C_{\mu}(\gamma)        =        \frac43\,B(1-2\gamma,4)
\label{eq:subcritical}
\end{equation}
where        $B$        is        the        Euler        beta        function.        The        exponent        $\theta(\gamma)=1-2\gamma$
is        exact;        the        amplitude        $C_{\mu}(\gamma)$        is        a        leading-order        estimate        (see
Remark\ref{r12}).
\end{proposition}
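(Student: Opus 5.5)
The plan is to start from the exact finite-$N$ identity \eqref{eq:exactTr4-final}, $\mathbb E\,\mathrm{Tr}(S^4)=2\sum_{a,b}f(a,b)^2+\Xi_N$. I will evaluate the factored term exactly up to lower-order errors, and show that $\Xi_N/N$ is negligible compared with $N^{1-2\gamma}$. Throughout, $\sigma=1$ and $\mu_4(N,\gamma):=\mathbb E\,\mathrm{Tr}(S^4)/N$.

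\emph{Step 1: explicit form of $f(a,b)$.} By the covariance structure, $\mathrm{Cov}(S_{ac},S_{cb})\neq0$ requires $\min(a,c)=\min(c,b)$. For $a=b$ every $c$ contributes $\tfrac1N d_0=\tfrac1N$, so $f(a,a)=1$. For $a\neq b$, the cases $c>\min(a,b)$ force $a=b$ and are excluded, so only the ``hub'' indices $c\le\min(a,b)$ survive. Each such $c$ contributes $\tfrac1N d_{|a-b|}$ (the row index is $c$ and the column lag is $|a-b|$). Hence $f(a,b)=\tfrac{\min(a,b)}{N}\,d_{|a-b|}$. The diagonal therefore gives $2\sum_a f(a,a)^2/N=2$, which is the semicircle value. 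The off-diagonal part, with $s=|a-b|$ and $m=\min(a,b)\in\{1,\dots,N-s\}$, is
\[
\frac{2}{N}\sum_{a\neq b}f(a,b)^2=\frac{4}{N^3}\sum_{s=1}^{N-1}d_s^2\sum_{m=1}^{N-s}m^2
=\frac43\sum_{s=1}^{N-1}d_s^2\Bigl(1-\frac sN\Bigr)^3+O\Bigl(\frac1N\sum_{s}d_s^2\Bigr).
\]
This finite-$N$ identity reproduces the limit $\tfrac43\sum_s d_s^2$ of Proposition~\ref{p14} when $\gamma>1/2$, which serves as a consistency check.

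\emph{Step 2: Riemann-sum asymptotics for $\gamma<1/2$.} Write $d_s^2=(1+s)^{-2\gamma}=s^{-2\gamma}(1+O(1/s))$. Then
\[
\sum_{s=1}^{N-1}s^{-2\gamma}\Bigl(1-\frac sN\Bigr)^3=N^{1-2\gamma}\cdot\frac1N\sum_{s}\Bigl(\frac sN\Bigr)^{-2\gamma}\Bigl(1-\frac sN\Bigr)^3 .
\]
The integrand $x^{-2\gamma}(1-x)^3$ is integrable at $0$ precisely because $2\gamma<1$, and it is monotone near $0$. A standard comparison of the sum with the integral therefore gives convergence to $\int_0^1x^{-2\gamma}(1-x)^3dx=B(1-2\gamma,4)$. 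The $O(1/s)$ correction contributes $O(\sum_s s^{-2\gamma-1})$, which is $O(1)$. The error term from Step~1 is $O(N^{-2\gamma})$. Altogether the factored part equals $2+\tfrac43B(1-2\gamma,4)N^{1-2\gamma}+O(1)$.

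\emph{Step 3: $\Xi_N$ is subleading (the main obstacle).} Corollary~\ref{c13} controls $\Xi_N$ only for $\gamma>1/2$, so a separate bound is needed here. I would apply Cauchy--Schwarz over the quadruple sum:
\[
|\Xi_N|\le\Bigl(\sum_{i,j,k,l}\mathrm{Cov}(S_{ij},S_{kl})^2\Bigr)^{1/2}\Bigl(\sum_{i,j,k,l}\mathrm{Cov}(S_{jk},S_{li})^2\Bigr)^{1/2},
\]
and the two factors are equal by relabeling. A nonzero covariance requires a common row index $r$. Each pair of index pairs $(\{i,j\},\{k,l\})$ with minima equal to $r$ arises in at most four ordered ways, so the sum is
\[
O\Bigl(N^{-2}\sum_{r}\sum_{j,l\ge r}d_{|j-l|}^2\Bigr)=O\Bigl(N^{-2}\sum_r (N-r)^{2-2\gamma}\Bigr)=O(N^{1-2\gamma}),
\]
plus $O(N^{-2}\cdot N^2)=O(1)$ from the rows with $j=l<r$. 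Hence $\Xi_N/N=O(N^{-2\gamma})$, which is $o(N^{1-2\gamma})$.

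Combining Steps 1--3 gives $\mu_4(N,\gamma)-2\sim\tfrac43B(1-2\gamma,4)N^{1-2\gamma}$. The exponent $1-2\gamma$ is robust. The amplitude depends on the precise Riemann-sum limit in Step~2, and that is where I would be most careful, in particular about the counting in $\sum_m m^2$ and the replacement of $(1+s)^{-2\gamma}$ by $s^{-2\gamma}$ for small $s$.
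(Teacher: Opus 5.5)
Your proposal is correct and takes the same route as the paper. Both use the exact Wick identity \eqref{eq:exactTr4-final}, the hub form of $f(a,b)$ with a Beta-integral continuum limit, and the Cauchy--Schwarz bound $|\Xi_N|\le\Sigma_2(N,\gamma)=O(N^{1-2\gamma})$ that the paper invokes through Corollary~\ref{c44} (Lemmas~\ref{lc21} and \ref{lc22}(b)). One point goes beyond the paper: you observe that $f(a,b)=\min(a,b)\,d_{|a-b|}/N$ holds exactly, since the only term beyond $c<\min(a,b)$ is $c=\min(a,b)$ itself. Combined with your careful Riemann-sum step, this makes the amplitude $\tfrac43 B(1-2\gamma,4)$ rigorous rather than the leading-order estimate the paper settles for. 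The finite-$N$ discrepancies in Remark~\ref{r12} therefore come from an $O(1)$ correction, which is only $O(N^{-(1-2\gamma)})$ relative to the leading term and so decays slowly near $\gamma=1/2$; they do not come from neglected corrections to $f$.
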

\begin{proof}[Proof        sketch]
We        show        that        $\Xi_N$        is        subleading        relative        to        $\sum_{a,b}f(a,b)^2$        as
$N\to\infty$        for        fixed        $\gamma<1/2$,        so        that        the        leading        behavior        of
$\mu_4(N,\gamma)-2$        is        governed        entirely        by        the        factored        term.

\emph{Step        1        (leading        behavior        of        $f$).}        For        $a<b$,        the        dominant
contribution        to        $f(a,b)=\sum_c\mathrm{Cov}(S_{ac},S_{cb})$        comes        from
$c<a$,        where        $R(a,c)=R(c,b)=c$        automatically        and
$T(a,c)=a,\,T(c,b)=b$,        giving        $f(a,b)\approx        (a/N)\,d_{b-a}$        (plus
$O(1/N)$        boundary        corrections        from        $c\in[a,b)$).

\emph{Step        2        (leading        behavior        of        $\sum        f^2$).}        Substituting        into
$\sum_{a,b}f(a,b)^2\approx2\sum_{s=1}^{N-1}d_s^2\sum_{a=1}^{N-s}(a/N)^2$
and        passing        to        the        continuum        limit        ($s=Nu$)        gives
$\sum_{a,b}f(a,b)^2\sim\tfrac23B(1-2\gamma,4)\,N^{2-2\gamma}$,        using
$d_s^2\sim        s^{-2\gamma}$        and        the        Beta-function        integral
$\int_0^1u^{-2\gamma}(1-u)^3\,du=B(1-2\gamma,4)$.

\emph{Step        3        ($\Xi_N$        is        subleading).}        The        dominant        sub-case        of        $\Xi_N$
with        $k=i$        (equivalently,        by        symmetry,        $l=j$)        contributes
$\Xi_N^{(i=k)}=\tfrac{1}{N^2}\sum_{m=1}^N        m\,D_2(m)$,        where        $D_2(m)$        is        in        Corollary        \ref{c13}.        For
$\gamma<1/2$,        $D_2(m)\sim        c'm^{1-2\gamma}$,        giving
$\Xi_N^{(i=k)}\sim        N^{1-2\gamma}$        ---        one        power        of        $N$        below
$\sum_{a,b}f(a,b)^2\sim        N^{2-2\gamma}$.    
Since  $\Xi_N=o(N)=o(\mathrm{Term\  C})$  (the  proof  is  Corollary  \ref{c44}  in  Appendix  C.)
  $\Xi_N$        does        not        affect        the        leading        exponent.
\end{proof}

\begin{table}[h]
\centering
\begin{tabular}{c|ccc|ccc|cc}
\hline\hline
&  \multicolumn{3}{c|}{$N=50$}  &  \multicolumn{3}{c|}{$N=200$}  &  \multicolumn{2}{c}{$\Xi_N^{(i=k)}/\Xi_N$}  \\
$\gamma$  &  Term  C  &  $\Xi_N$  &  $\Xi_N/\text{Term  C}$  &  Term  C  &  $\Xi_N$  &  $\Xi_N/\text{Term  C}$  &  $N{=}50$  &  $N{=}200$  \\
\hline
0.3  &  352.90  &  10.113  &  0.0287  &  2589.91  &  20.257  &  0.0078  &  0.549  &  0.525  \\
0.4  &  278.55  &  7.251    &  0.0260  &  1721.07  &  12.210  &  0.0071  &  0.569  &  0.541  \\
0.7  &  171.86  &  3.358    &  0.0195  &  786.58    &  4.051    &  0.0052  &  0.649  &  0.623  \\
1.0  &  134.05  &  2.077    &  0.0155  &  558.05    &  2.211    &  0.0040  &  0.741  &  0.726  \\
1.5  &  112.31  &  1.380    &  0.0123  &  452.59    &  1.398    &  0.0031  &  0.862  &  0.858  \\
\hline\hline
\end{tabular}
\caption{Exact  finite-$N$  values  of  Term  C  $=2\sum_{a,b}f(a,b)^2$,  $\Xi_N$,
and  their  ratio,  at  the  endpoints  $N=50$  and  $N=200$  of  the  range  tested
(intermediate  $N=100,150$  interpolate  monotonically  and  are  omitted  for
brevity),  across  the  full  range  $\gamma\in\{0.3,0.4,0.7,1.0,1.5\}$  spanning
the  sub-critical  ($\gamma<1/2$)  and  super-critical  ($\gamma>1/2$)  regimes.
The  ratio  $\Xi_N/\text{Term  C}\to0$  in  every  case;  this  is  no  longer  merely
numerical,  but  is  established  rigorously  for  every  $\gamma>0$  by  Corollary
\ref{c44}  (Appendix  C.6),  which  additionally  gives  $\Xi_N=O(1)$  outright  for
$\gamma>1/2$  (Corollary  \ref{lc24}).  The  final  two  columns  show  that  the  dominant
sub-case  $\Xi_N^{(i=k)}$  (Corollary  \ref{c13})  accounts  for  an  increasing  share  of
$\Xi_N$  as  $\gamma$  grows:  it  trends  toward  $1/2$  in  the  sub-critical
regime,  consistent  with  the  conjecture  that  the  $i=k$  and  $l=j$  sub-cases
together  exhaust  $\Xi_N$  there,  but  stabilizes  well  above  $1/2$  (up  to
$\approx0.86$  at  $\gamma=1.5$)  once  $\gamma>1/2$  —  indicating  overlap
between  the  two  sub-cases  in  the  super-critical  regime  that  we  have  not
evaluated  in  closed  form,  and  which  remains  open.}
\label{tab:subleading}
\end{table}

\begin{remark}[Status        of        the        amplitude        $C_{\mu}(\gamma)$]
The        exponent        $\theta(\gamma)=1-2\gamma$        in
Eq.~\eqref{eq:subcritical}        rests        on        Step~3        (numerically        confirmed,
Table~\ref{tab:subleading})        and        the        continuum        approximation        of        Step~2.
The        predicted        amplitude        $C_{\mu}(\gamma)=\tfrac43B(1-2\gamma,4)$        agrees        with
the        exact        Term~C        only        to        within        $11$--$15\%$        at        $\gamma=0.3$        and
$37$--$42\%$        at        $\gamma=0.4$        for        $N\le200$        (the        discrepancy        shrinking
slowly        as        $N$        grows),        reflecting        $O(1/N)$        boundary        corrections        to        the
Step~1        approximation        of        $f(a,b)$        that        we        have        not        carried        to
next-to-leading        order.        We        therefore        regard        the        exponent
$\theta(\gamma)=1-2\gamma$        as        established,        while        the        precise        amplitude        $C_{\mu}(\gamma)$        remains        a
leading-order        estimate        rather        than        an        exact        result.
\label{r12}
\end{remark}

\begin{remark}[Continuously-varying        critical        exponents]
The        exponent        $\theta(\gamma)=1-2\gamma$        is        not        a        single        universal
number        but        an        explicit        function        of        the        correlation-decay        parameter
$\gamma$        itself,        varying        continuously        throughout        the        sub-critical
region        $\gamma<1/2$.        This        is        not        the        exponent        structure        of        a
short-range        transition,        but        is
characteristic        of        long-range-interacting        statistical        systems,        where
interactions        decaying        as        $r^{-(d+\sigma)}$        produce        critical        exponents
depending        continuously        on        $\sigma$        below        an        upper        critical        value
\cite{FisherMaNickel1972}.        Here        $\gamma$        plays        the        role        of        $\sigma$:
the        temporal        correlation        decay        exponent        directly        and        continuously
tunes        the        anomalous        dimension        of        the        bulk        susceptibility,
\label{r13}
\end{remark}

\begin{observation}[Numerical        confirmation,        full        $\gamma$-range]
Table~\ref{tab:mu4exact-final}        reports        $\mu_4^{(N)}$,        computed        exactly
via        Eq.~\eqref{eq:exactTr4-final}        (no        simulation),        for        $N$        up        to        $200$
and        $\gamma$        spanning        $\gamma_c=1/2$;        the        fitted        exponent
$s_4:=\log(\mu_4^{(200)}/\mu_4^{(80)})/\log(200/80)$        decreases        smoothly
and        monotonically        across        $\gamma_c$,        consistent        with
$\theta(\gamma)=1-2\gamma$.
\end{observation}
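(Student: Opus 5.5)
The plan is to make the ``exact, no simulation'' computation explicit and then compare the output with the asymptotics of Propositions~\ref{p14} and \ref{prop15}. \textbf{Step 1 (closed form for $f$).} From the covariance rule $\mathrm{Cov}(S_{ab},S_{cd})=\tfrac1N d_{|T(a,b)-T(c,d)|}\mathbf 1[R(a,b)=R(c,d)]$, I would enumerate the indices $c$ with $\min(a,c)=\min(c,b)$.
\begin{itemize}
\item For $a<b$ only $c\le a$ survives, each term contributing $d_{b-a}/N$, so $f(a,b)=\tfrac aN d_{b-a}$ exactly. This sharpens Step~1 of the proof of Proposition~\ref{prop15}: the ``boundary corrections'' from $c\in[a,b)$ contribute only at $c=a$.
\item On the diagonal, $f(a,a)=1$.
\end{itemize}
Hence
\[
\mathrm{Term\ C}=2\Bigl[N+\frac{2}{N^2}\sum_{s=1}^{N-1}d_s^2\sum_{a=1}^{N-s}a^2\Bigr],
\]
which is an $O(N)$ computation in exact arithmetic.

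\textbf{Step 2 (computing $\Xi_N$).} I would evaluate $\Xi_N$ by restricting the quadruple sum to its support, namely $\min(i,j)=\min(k,l)$ and $\min(j,k)=\min(l,i)$. I would split into the finitely many order patterns of $(i,j,k,l)$. Each pattern collapses at least one index, as in the $k=i$ sub-case of Corollary~\ref{c13}, so the sum reduces to $O(N^3)$ or fewer terms, which is feasible at $N=200$. As an independent check, I would verify that the result is consistent with the closed form of $\Xi_N^{(i=k)}$ in Corollary~\ref{c13}. Combining with Step~1 via Eq.~\eqref{eq:exactTr4-final} gives $\mu_4^{(N)}=\mathbb E\,\mathrm{Tr}(S^4)/N$ for $N\in\{50,80,\dots,200\}$ and $\gamma$ on a grid straddling $1/2$. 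From these I form $s_4=\log(\mu_4^{(200)}/\mu_4^{(80)})/\log(200/80)$.

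\textbf{Step 3 (why $s_4$ should behave as claimed).} For monotonicity in $\gamma$, I would argue analytically as follows.
\begin{itemize}
\item Since $\Xi_N/N\to0$ (Corollary~\ref{c44}), the explicit formula of Step~1 shows that $\mu_4^{(N)}-2$ is, up to this subleading piece, a positive combination $\sum_s w_N(s)\,(1+s)^{-2\gamma}$.
\item The ratio $\mu_4^{(200)}/\mu_4^{(80)}$ then compares weights that, at larger $N$, put relatively more mass on large $s$.
\item Raising $\gamma$ suppresses large $s$ more strongly, because $(1+s)^{-2\gamma}$ is log-decreasing in $\gamma$ at a faster rate for larger $s$.
\item Hence the ratio, and so $s_4$, is decreasing in $\gamma$, with no singularity at $\gamma=1/2$ at finite $N$.
\end{itemize}
For the limiting behavior of $s_4$ in the two regimes:
\begin{itemize}
\item For $\gamma<1/2$, Proposition~\ref{prop15} gives $\mu_4-2\sim C_\mu N^{1-2\gamma}$, so $s_4$ interpolates toward $1-2\gamma$.
\item For $\gamma>1/2$, Proposition~\ref{p14} gives convergence of $\mu_4^{(N)}$, so $s_4\to0$.
\end{itemize}

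\textbf{Main obstacle.} The main obstacle is that $s_4$ is measured on $\mu_4$ rather than on $\mu_4-2$, and only over $N\le200$. Its value is therefore not literally $1-2\gamma$: the additive $2$, the slowly shrinking amplitude corrections of Remark~\ref{r12}, and logarithmic effects near $\gamma=1/2$ all bias it. I would therefore also report the exponent fitted to $\mu_4-2$ and check that it approaches $1-2\gamma$ as $N$ grows. The claim to be confirmed is only ``smooth, monotone, and consistent with $\theta(\gamma)$,'' not equality.
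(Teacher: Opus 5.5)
Your Steps~1--2 match the paper's approach. The Observation rests only on Table~\ref{tab:mu4exact-final}, an exact evaluation of Eq.~\eqref{eq:exactTr4-final}; monotonicity is read off the numbers and no analytic argument is given. Your closed form is correct: for $a<b$ only $c\le a$ satisfies $\min(a,c)=\min(c,b)$, so $f(a,b)=\tfrac aN d_{b-a}$ exactly, and $f(a,a)=1$. It reproduces the paper's entries, e.g.\ Term~C $=452.59$ at $\gamma=1.5$, $N=200$. It also genuinely sharpens Step~1 of Proposition~\ref{prop15}: $f$ has no boundary corrections, so the amplitude mismatch of Remark~\ref{r12} must come from the continuum step instead (see below).

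Where the proposal overreaches is Step~3. Your weight-shifting argument is sound for $A_N:=\mathrm{Term\ C}/N-2=\sum_s w_N(s)\,d_s^2$. Here $w_N(s)=\tfrac23\bigl(1-\tfrac sN\bigr)\bigl(1-\tfrac{s-1}N\bigr)\bigl(2-\tfrac{2s-1}N\bigr)$; each factor is nondecreasing in $N$, and $w_{200}/w_{80}$ is increasing in $s$ (a monotone likelihood ratio). Hence $A_{200}/A_{80}$ decreases in $\gamma$. The additive $2$ is also harmless: $r:=A_{200}/A_{80}>1$, and $\partial_x\frac{2+rx}{2+x}=\frac{2(r-1)}{(2+x)^2}>0$ with $x=A_{80}$ decreasing in $\gamma$.

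However, $\mu_4^{(N)}=2+A_N+\Xi_N/N$, and $\Xi_N/N\to0$ (Corollary~\ref{c44}) says nothing at the fixed sizes $N=80,200$. There the term is nonzero (e.g.\ $\Xi_{200}/200\approx0.10$ at $\gamma=0.3$, Table~\ref{tab:subleading}), and your argument does not control its $\gamma$-dependence. So monotonicity of $s_4$ itself must still be read off the exactly computed numbers, as in the paper.

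Your proposed fit to $\mu_4-2$ is also misdirected. The dominant finite-size bias is not the additive $2$ but the negative constant in
\[
A_N=\frac43\Bigl[B(1-2\gamma,4)\,N^{1-2\gamma}+\zeta(2\gamma)-1\Bigr]+o(1),
\]
with $\zeta$ analytically continued for $2\gamma<1$ (at $\gamma=0.3$, $\tfrac43(\zeta(0.6)-1)\approx-3.9$). Indeed, Table~\ref{tab:mu4exact-final} gives a local exponent of $\mu_4-2$ of about $0.53$ at $\gamma=0.3$, farther from $0.4$ than $s_4=0.427$. The meaningful comparison subtracts $2+\tfrac43(\zeta(2\gamma)-1)+\Xi_N/N$, which leaves corrections of relative order $1/N$. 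The two poles cancel at $\gamma=\tfrac12$ into $\tfrac43\log N$. So the same formula shows that near $\gamma_c$ the values of $s_4$ at $N\le200$ reflect a $\log N$ crossover, and agreement with $1-2\gamma$ there is only qualitative.
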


\begin{table}[h]
\caption{Exact        fourth        moment        $\mu_4^{(N)}$        and        fitted        exponent        $s_4$,
compared        with        $1-2\gamma$.}
\label{tab:mu4exact-final}
\begin{tabular}{cccccc}
\hline\hline
$\gamma$        &        $\mu_4^{(80)}$        &        $\mu_4^{(200)}$        &        $s_4$        &        $1-2\gamma$\\
\hline
0.30        &        8.824        &        13.051        &        0.427        &        0.400\\
0.40        &        6.584        &        8.666                &        0.300        &        0.200\\
0.45        &        5.802        &        7.277                &        0.247        &        0.100\\
0.50        &        5.178        &        6.228                &        0.202        &        0        (log)\\
0.55        &        4.676        &        5.429                &        0.163        &        0\\
0.60        &        4.270        &        4.812                &        0.131        &        0\\
0.70        &        3.666        &        3.953                &        0.082        &        0\\
\hline\hline
\end{tabular}
\end{table}

\subsection{D.        Theoretical        threshold        from        the        breakdown        of        the        flatness        condition}
\label{sec:IVB}
        
By        Proposition~\ref{prop:two-thresholds},        two        distinct        thresholds        govern        the
row-independent        power-law        ensemble        of        Definition~\ref{def:row-plaw}:
$\gamma_c        =        1/2$,        at        which        the        fourth-moment        $\mu_4$        diverges        (an
$\ell^2$-type        condition        on        the        correlation        sequence,        $\sum_t        d_t^2        <\infty$),
and        $\gamma_c        =        1$,        at        which        the        flatness        condition        (Assumption~(E))        required
by        the        MDE                framework        of~\cite{AEKS2020}        breaks        down        (a
strictly        stronger        $\ell^1$-type,        Wiener        condition,        $\sum_t        |d_t|        <        \infty$).
It        is        the        latter        threshold        that        is        relevant        to        edge        universality,        since        it        is
Assumption~(E),        and        not        the        finiteness        of        $\mu_4$,        that        enters        the        hypotheses
of~\cite{AEKS2020}        governing        regularity        of        the        spectral        edge.

\begin{remark}[Logical        status        of        the        Assumption~(E)        threshold]
\label{rem:logical-status}
The        breakdown        of        Assumption~(E)        at        $\gamma=1$        (Proposition~\ref{prop:two-thresholds}(ii))
is        a        rigorous,        analytic        fact.        However,        it        marks        the        failure        of        a
sufficient        condition        used        by        one        specific        proof        technique        ---        the
MDE        approach        of~\cite{AEKS2020}        ---        and        is        not,        by        itself,        a
statement        about        the        behavior        of        $\lambda_{\max}$.        In        particular:
\begin{enumerate}
\item[(i)]        For        $\gamma>1$,        Assumption~(E)        holds,        placing        the        ensemble        within
the        scope        of~\cite{AEKS2020}        modulo        Assumption~(G)        ---        exactly        the        same
logical        position        as        the        AR(1)        ensemble        of        Section~II        for        any        fixed
$\rho<1$        (Proposition~\ref{prop:assumptions}).        Assumption~(G)        (boundedness        of
the        self-consistent        equation        solution        $M(z)$        near        the        edge)        is        not
automatic        from        (A)--(E)        and        has        not        been        verified        for        either        ensemble.        We
therefore        regard        TW        edge        universality        for        $\gamma>1$        as        a
conjecture        on        the        same        footing        as        Conjecture~\ref{conj:TW}
for        the        AR(1)        case,        not        as        a        consequence        of
Proposition~\ref{prop:two-thresholds}(ii)        alone.
\item[(ii)]        For        $1/2<\gamma\le        1$,        the        failure        of        Assumption~(E)        means        only
that        this        particular        proof        route        is        unavailable.        It        does        not        constitute
evidence,        let        alone        proof,        that        $\lambda_{\max}$        behaves        differently        in        this
range.        \emph{A        priori},        edge        universality        could        persist        beyond        $\gamma=1$        via
a        different        argument,        or        could        break        down        at        some        other        threshold        entirely,
including        possibly        $\gamma_c=1/2$        itself.
\end{enumerate}
\end{remark}

\subsection{E.      
Local  Edge  Analysis  at  the  $\gamma_c=1$  Threshold
}
The  point  $\gamma_c=1$  is  special  as  the  flatness  condition  breaks  down.  We  therefore  examine  whether  this  threshold  corresponds  to  a  genuine  spectral  transition.
\begin{lemma}[Essential supremum for the power-law ensemble]
\label{lem:sigma1-gamma}
For every $\gamma > 1$, $f_\gamma(\theta)$ attains its essential
supremum at $\theta=0$, with
\[
 f_\gamma(0) = \sigma_1(\gamma) =\sum_{t=-\infty}^{\infty} d_{|t|}
= 1 + 2\sum_{t\ge1}(1+t)^{-\gamma} = 2\zeta(\gamma) - 1.
\]
\end{lemma}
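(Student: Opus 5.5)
The plan has three steps: show that the Fourier series for $f_\gamma$ converges absolutely for $\gamma>1$, bound $f_\gamma(\theta)$ above by $f_\gamma(0)$ for every $\theta$, and then evaluate $f_\gamma(0)$ in terms of the zeta function.

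\emph{Absolute convergence.} For $\gamma>1$ we have $\sum_{t\ge1}(1+t)^{-\gamma}=\zeta(\gamma)-1<\infty$. Hence the Fourier series \eqref{eq:ftheta_gamma} converges absolutely and uniformly in $\theta$. It follows that $f_\gamma$ is continuous on $[0,2\pi)$, and in particular bounded, in agreement with Proposition~\ref{prop:two-thresholds}(ii). Because $d_{|t|}$ is even in $t$, the series collapses to the real cosine form $f_\gamma(\theta)=d_0+2\sum_{t\ge1}d_t\cos(\theta t)$.

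\emph{Upper bound by the value at zero.} Since $d_t=(1+t)^{-\gamma}>0$ and $\cos(\theta t)\le 1$, each term satisfies $d_t\cos(\theta t)\le d_t$, so
\[
f_\gamma(\theta)\le d_0+2\sum_{t\ge1}d_t=f_\gamma(0)\qquad\text{for every }\theta .
\]
Equality holds at $\theta=0$, so the supremum of $f_\gamma$ is attained there.

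\emph{Supremum equals essential supremum, and the evaluation.} Continuity of $f_\gamma$ makes the supremum coincide with the essential supremum: for any $\epsilon>0$, the set $\{\theta: f_\gamma(\theta)>f_\gamma(0)-\epsilon\}$ contains a neighbourhood of $0$ and so has positive Lebesgue measure. By Definition~\ref{def3}, $\sigma_1=\operatorname*{ess\,sup}f_\gamma$, so $\sigma_1(\gamma)=f_\gamma(0)$. Finally, reindexing with $t'=1+t$ gives
\[
f_\gamma(0)=1+2\sum_{t'\ge2}t'^{-\gamma}=1+2(\zeta(\gamma)-1)=2\zeta(\gamma)-1 .
\]

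The main point needing care is that "attains its essential supremum" requires continuity of $f_\gamma$, because an essential supremum ignores values on null sets. That continuity is exactly what the $\ell^1$ summability for $\gamma>1$ supplies. The same argument shows the value is $+\infty$ for $\gamma\le1$, which is consistent with Corollary~\ref{cor:lmax-divergence}.
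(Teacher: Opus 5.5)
Your proof is correct and follows the paper's argument: the termwise bound $d_t\cos(\theta t)\le d_t$ from $d_t>0$, then the $p$-series evaluation. You add something useful, since absolute convergence gives continuity, and continuity is what actually justifies passing from the pointwise maximum at $\theta=0$ to the essential supremum, a step the paper's proof glosses over. Your closing aside overreaches slightly: for $\gamma\le1$ the termwise argument alone does not give $\operatorname{ess\,sup} f_\gamma=\infty$, because a divergent value at the single point $\theta=0$ is invisible to an essential supremum, so you need the blow-up near $\theta=0$ from Proposition~\ref{prop:two-thresholds}(ii).
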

\begin{proof}
Since $d_t = (1+t)^{-\gamma} > 0$ for every $t\ge0$, each term
$d_{|t|}\cos(\theta t)$ of $f_\gamma(\theta) = d_0 + 2\sum_{t\ge1} d_t
\cos(\theta t)$ is maximized simultaneously at $\theta=0$, where
$\cos(\theta t) = 1$ for every $t$. Hence
$f_\gamma(\theta) \le f_\gamma(0)$ for all $\theta$, with equality
only at $\theta=0$, so $\operatorname{ess\,sup} f_\gamma = f_\gamma(0)$.
Evaluating the sum at $\theta=0$,
\[
f_\gamma(0) =\sigma_1(\gamma) = d_0 + 2\sum_{t\ge1} d_t
= 1 + 2\sum_{t\ge1} (1+t)^{-\gamma}
= 2\zeta(\gamma) - 1,
\]
which converges for $\gamma > 1$ by the standard $p$-series test.
\end{proof}
\begin{definition}
\label{def:Dgamma}
For $\gamma > 0$, $\gamma \neq 1$, define
\begin{equation}
D(\gamma) := -2\,\Gamma(1-\gamma)\,\cos\!\left(\frac{\pi(\gamma-1)}{2}\right).
\end{equation}
\end{definition}

\begin{remark}
By Proposition \ref{prop:two-thresholds}'s derivation, $D(\gamma)$
coincides, 
with the leading singular
coefficient $C_f(\gamma)$ governing $f_{\sigma}(\theta) \sim 2\,C_f(\gamma)\,|\theta|^{\gamma-1}$
as $\theta \to 0^+$; equivalently, vi, $D(\gamma) = -(2\pi)^{1-\gamma}\zeta(\gamma)/\zeta(1-\gamma)$.
\end{remark}

\begin{proposition}[Exact        cancellation        of        the        pole        structure]
Let        $\epsilon:=\gamma-1$.        As        $\gamma\to1^+$,
\[
D(\gamma)        -        \sigma_1(\gamma)        \;=\;        1        +        O(\epsilon).
\]
In        particular        $D(\gamma)-\sigma_1(\gamma)\to1$.
\end{proposition}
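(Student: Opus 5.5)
The plan is to expand both $D(\gamma)$ and $\sigma_1(\gamma)$ in Laurent series about $\epsilon=0$, i.e.\ about $\gamma=1$. The claim is then that their simple poles and their constant Euler--Mascheroni terms coincide, leaving only the offset coming from $d_0=1$. Both functions are meromorphic near $\gamma=1$: $\Gamma(1-\gamma)$ and $\zeta(\gamma)$ each have a simple pole there, and the cosine factor is entire. So it suffices to compute the principal part and the constant term of each and check that the $O(\epsilon)$ remainders are analytic.

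\emph{Step 1 (the $D$ side).} Write $1-\gamma=-\epsilon$. Use the Laurent expansion of the Gamma function at its pole at the origin,
\[
\Gamma(-\epsilon)=-\frac{1}{\epsilon}-\gamma_E+O(\epsilon),
\]
which follows from $\Gamma(1-\epsilon)=1+\gamma_E\epsilon+O(\epsilon^2)$ and $\Gamma(-\epsilon)=-\Gamma(1-\epsilon)/\epsilon$. The cosine factor is $\cos(\pi\epsilon/2)=1-\pi^2\epsilon^2/8+O(\epsilon^4)$. Its correction is $O(\epsilon^2)$, so multiplied by the $1/\epsilon$ pole it contributes only $O(\epsilon)$. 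Definition~\ref{def:Dgamma} then gives
\[
D(\gamma)=\frac{2}{\epsilon}+2\gamma_E+O(\epsilon).
\]

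\emph{Step 2 (the $\sigma_1$ side).} Lemma~\ref{lem:sigma1-gamma} gives $\sigma_1(\gamma)=2\zeta(\gamma)-1$ for $\gamma>1$. The Stieltjes expansion $\zeta(1+\epsilon)=\frac1\epsilon+\gamma_E+O(\epsilon)$, the same expansion already used in the proof of Proposition~\ref{prop:super-critical-exponent}, yields
\[
\sigma_1(\gamma)=\frac{2}{\epsilon}+2\gamma_E-1+O(\epsilon).
\]

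\emph{Step 3 (subtraction).} Subtracting, the $2/\epsilon$ poles cancel exactly and the $2\gamma_E$ constants cancel exactly, so $D(\gamma)-\sigma_1(\gamma)=1+O(\epsilon)$. The difference is analytic at $\epsilon=0$, being a difference of meromorphic functions with identical principal parts, so the $O(\epsilon)$ is a genuine analytic remainder. In particular the limit as $\gamma\to1^+$ is $1$. As a cross-check one can also use the alternative form $D(\gamma)=-(2\pi)^{1-\gamma}\zeta(\gamma)/\zeta(1-\gamma)$ from the preceding remark. With $\zeta(0)=-\tfrac12$, it gives $D=2(1-\epsilon\log 2\pi)(\tfrac1\epsilon+\gamma_E)(1+2\zeta'(0)\epsilon)+O(\epsilon)$. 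Since $\zeta'(0)=-\tfrac12\log 2\pi$, the logarithmic terms cancel and this reproduces Step 1.

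The only real obstacle is bookkeeping of constant terms. One must make sure that the $O(\epsilon^2)$ cosine correction, and the sign convention of $\cos(\pi(\gamma-1)/2)$ versus $\sin(\pi\gamma/2)$, contribute nothing at order $\epsilon^0$. Both facts are immediate from the evenness of the cosine. The resulting ``1'' is then exactly the $d_0=1$ term that $\sigma_1$ carries beyond $2\zeta(\gamma)$.
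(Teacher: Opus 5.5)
Steps 1--3 are correct and follow the paper's proof: you expand $\Gamma(-\epsilon)=-\epsilon^{-1}-\gamma_E+O(\epsilon)$ together with $\cos(\pi\epsilon/2)=1+O(\epsilon^2)$ to get $D=2/\epsilon+2\gamma_E+O(\epsilon)$, expand $\sigma_1=2\zeta(1+\epsilon)-1$, and subtract. Your optional cross-check has a sign slip: since $\zeta(-\epsilon)=-\tfrac12-\zeta'(0)\epsilon+O(\epsilon^2)$, the last factor should be $(1-2\zeta'(0)\epsilon)=(1+\epsilon\log 2\pi)$, whereas with your $(1+2\zeta'(0)\epsilon)=(1-\epsilon\log 2\pi)$ the logarithms would add up to $-4\log 2\pi$ instead of cancelling.
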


\begin{proof}
Using        the        standard        Laurent        expansions
\[
\zeta(1+\epsilon)        =        \frac{1}{\epsilon}        +        \gamma_E        +        O(\epsilon),        \qquad
\Gamma(1+\epsilon)        =        1        -        \gamma_E        \epsilon        +        O(\epsilon^2),
\]
and        $\Gamma(-\epsilon)        =        \Gamma(1-\epsilon)/(-\epsilon)        =        -\epsilon^{-1}        -        \gamma_E        +        O(\epsilon)$        (substituting        $\epsilon\to-\epsilon$        in
the        second        expansion),        together        with        $\cos(\pi\epsilon/2)        =        1+O(\epsilon^2)$,        we        obtain
\[
D(\gamma)        =        -2\Gamma(-\epsilon)\cos\!\left(\frac{\pi\epsilon}{2}\right)        =        \frac{2}{\epsilon}        +        2\gamma_E        +        O(\epsilon).
\]
Since        $\sigma_1(\gamma)        =        2\zeta(1+\epsilon)        -        1        =        \dfrac{2}{\epsilon}        +        2\gamma_E        -        1        +        O(\epsilon)$ (Lemma \ref{lem:sigma1-gamma}),        subtracting        gives
\[
D(\gamma)        -        \sigma_1(\gamma)        =        1        +        O(\epsilon).        
\]
\end{proof}

\begin{remark}[Origin        via        the        functional        equation]
Since        $\cos(\pi(\gamma-1)/2)        =        \sin(\pi\gamma/2)$,        the        coefficient        $D(\gamma)$        is        exactly        the
combination        $-2\sin(\pi\gamma/2)\Gamma(1-\gamma)$        appearing        in        Riemann's        functional        equation
$\zeta(s)        =        2^s\pi^{s-1}\sin(\pi        s/2)\Gamma(1-s)\zeta(1-s)$,        giving        the        closed        form
\[
D(\gamma)        =        -(2\pi)^{1-\gamma}\,\frac{\zeta(\gamma)}{\zeta(1-\gamma)}.
\]
The        limit        $D(\gamma)-\sigma_1(\gamma)\to1$        as        $\gamma\to1^+$        is        thus        not        an        isolated        coincidence
of        Laurent        expansions,        but        a        direct        consequence        of        the        functional        equation        relating        the        pole        of
$\zeta$        at        $s=1$        to        the        special        value        $\zeta(0)=-1/2$:        expanding        via        $\zeta(-\epsilon)        =        -\tfrac12+\tfrac{\epsilon}{2}\ln(2\pi)+O(\epsilon^2)$
reproduces        the        same        limit,        with        all        $\ln(2\pi)$-dependence        cancelling        exactly        between        numerator
and        denominator.
\end{remark}

        \begin{proposition}[Simplified        closed        form        and        next-order        term]
Applying        the        functional        equation        at        both        $s=\gamma$        and        $s=1-\gamma$        eliminates        $\zeta(\gamma)$
entirely,        giving        the        elementary        closed        form
\[
D(\gamma)        =        -\frac{\pi}{\Gamma(\gamma)\cos(\pi\gamma/2)}.
\]
Writing        $\epsilon:=\gamma-1$        and        using        $\Gamma(1+\epsilon)        =        1-\gamma_E\epsilon+\big(\tfrac12\gamma_E^2+\tfrac{\pi^2}{12}\big)\epsilon^2+O(\epsilon^3)$
together        with        $\cos(\pi(1+\epsilon)/2)        =        -\sin(\pi\epsilon/2)$,        one        finds
\[
D(\gamma)        =        \frac{2}{\epsilon}        +        2\gamma_E        +        \Big(\gamma_E^2-\frac{\pi^2}{12}\Big)\epsilon        +        O(\epsilon^2).
\]
Combined        with        $\sigma_1(\gamma)        =        2\zeta(1+\epsilon)-1        =        \dfrac{2}{\epsilon}+2\gamma_E-1-2\gamma_1\epsilon+O(\epsilon^2)$,
where        $\gamma_1$        is        the        first        Stieltjes        constant,        this        yields
\[
D(\gamma)-\sigma_1(\gamma)        =        1        +        \Big(\gamma_E^2-\frac{\pi^2}{12}+2\gamma_1\Big)(\gamma-1)        +        O\big((\gamma-1)^2\big).
\]
\end{proposition}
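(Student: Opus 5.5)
The plan is to bypass the two-fold use of the functional equation and obtain the closed form directly from Euler's reflection formula; both routes give the same identity, but reflection is shorter and avoids tracking $\ln(2\pi)$ factors. By Definition~\ref{def:Dgamma} and $\cos(\pi(\gamma-1)/2)=\sin(\pi\gamma/2)$, we have $D(\gamma)=-2\Gamma(1-\gamma)\sin(\pi\gamma/2)$. The steps are:
\begin{enumerate}
\item Insert $\Gamma(1-\gamma)=\pi/\bigl(\Gamma(\gamma)\sin(\pi\gamma)\bigr)$.
\item Use the double-angle identity $\sin(\pi\gamma)=2\sin(\pi\gamma/2)\cos(\pi\gamma/2)$.
\item Cancel the common factor $\sin(\pi\gamma/2)$. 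This leaves $D(\gamma)=-\pi/\bigl(\Gamma(\gamma)\cos(\pi\gamma/2)\bigr)$ for all non-integer $\gamma>0$, and by continuity wherever both sides are finite.
\end{enumerate}
As a consistency check I would note that this is exactly what results from composing the functional equation at $s=\gamma$ and $s=1-\gamma$ in the closed form $D(\gamma)=-(2\pi)^{1-\gamma}\zeta(\gamma)/\zeta(1-\gamma)$ of the preceding remark: the $\zeta$-ratio collapses to a $\Gamma\cdot\sin$ combination, which reflection then simplifies.

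For the expansion, set $\gamma=1+\epsilon$ and write $\cos(\pi(1+\epsilon)/2)=-\sin(\pi\epsilon/2)=-\tfrac{\pi\epsilon}{2}\bigl(1-\tfrac{\pi^2\epsilon^2}{24}+O(\epsilon^4)\bigr)$. This gives
\[
D(\gamma)=\frac{2}{\epsilon}\cdot\frac{1}{\Gamma(1+\epsilon)}\Bigl(1+\tfrac{\pi^2}{24}\epsilon^2+O(\epsilon^4)\Bigr).
\]
Rather than inverting the stated series for $\Gamma(1+\epsilon)$ term by term, I will use $\log\Gamma(1+\epsilon)=-\gamma_E\epsilon+\tfrac{\zeta(2)}{2}\epsilon^2+O(\epsilon^3)$. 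This yields
\[
1/\Gamma(1+\epsilon)=1+\gamma_E\epsilon+\bigl(\tfrac{\gamma_E^2}{2}-\tfrac{\pi^2}{12}\bigr)\epsilon^2+O(\epsilon^3),
\]
which agrees with inverting the quoted expansion. Multiplying by $1+\tfrac{\pi^2}{24}\epsilon^2$ makes the $\epsilon^2$ coefficient $\tfrac{\gamma_E^2}{2}-\tfrac{\pi^2}{24}$. After the prefactor $2/\epsilon$ this becomes
\[
D(\gamma)=\frac{2}{\epsilon}+2\gamma_E+\Bigl(\gamma_E^2-\frac{\pi^2}{12}\Bigr)\epsilon+O(\epsilon^2),
\]
as stated. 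The $O(\epsilon^2)$ remainder is legitimate because every factor is analytic near $\epsilon=0$ apart from the explicit simple pole.

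Finally, take the Stieltjes expansion $\zeta(1+\epsilon)=\epsilon^{-1}+\gamma_E-\gamma_1\epsilon+O(\epsilon^2)$ (convention $\zeta(s)=\tfrac1{s-1}+\sum_n\tfrac{(-1)^n}{n!}\gamma_n(s-1)^n$). With Lemma~\ref{lem:sigma1-gamma} this gives $\sigma_1(\gamma)=2\zeta(1+\epsilon)-1=\tfrac2\epsilon+2\gamma_E-1-2\gamma_1\epsilon+O(\epsilon^2)$. Subtracting, the pole and the $2\gamma_E$ cancel, leaving $1+(\gamma_E^2-\pi^2/12+2\gamma_1)\epsilon+O(\epsilon^2)$. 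The leading term reproduces the previous proposition.

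The only delicate step is the bookkeeping of the order-$\epsilon$ coefficient. Two sign issues matter there: the sign convention for $\gamma_1$, and the $+\pi^2/24$ contribution from $\sin$, which is easily dropped. To guard against both, I will double-check the coefficient numerically: evaluate $D(\gamma)-\sigma_1(\gamma)$ at, say, $\gamma=1.01$ and $\gamma=1.001$, and compare with the predicted slope $\gamma_E^2-\pi^2/12+2\gamma_1\approx-0.6558$, using $\gamma_1\approx-0.0728$.
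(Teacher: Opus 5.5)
Your proposal is correct and is essentially the paper's argument: Euler's reflection formula is exactly the identity the paper obtains by composing the functional equation at $s=\gamma$ and $s=1-\gamma$, and your expansions of $D(\gamma)$ and $\sigma_1(\gamma)$ (including the $+\pi^2/24$ term from the sine and the Stieltjes sign convention) reproduce the stated coefficients term by term. One slip in your proposed numerical check: $\gamma_E^2-\pi^2/12+2\gamma_1\approx 0.3332-0.8225-0.1456\approx-0.635$, not $-0.6558$, so compare your evaluations of $D(\gamma)-\sigma_1(\gamma)$ against a slope of about $-0.635$.
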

\subsection{F.    The    Self-Consistent    Edge    for    $\gamma>1$:}

The    original    argument    of    this    subsection    implicitly    assumed    that    a    root    $a^*>\sigma_1(\gamma)$
of    $I(a^*)=K_1(a^*)$    exists    for    every    $\gamma>1$,    and    used    this    to    derive    the    asymptotic
$\tau_0(\gamma)=\Theta(\sqrt{\sigma_1(\gamma)})$    as    $\gamma\to1^+$.    We    show    below    that
existence    of    such    a    root    can    be    proven    rigorously    only    for    $\gamma>3/2$,    a    range    that    does
not    approach    $\gamma=1$;    the    claimed    divergence    as    $\gamma\to1^+$    is    therefore    withdrawn    as
unproven,    and    we    state    explicitly    what    remains    established.

\begin{lemma}[Bounded    moments    of    $H$,    unchanged]
\label{l20}
For    every    $\gamma>1/2$,    $\bar    t    =    \int    t\,dH(t)=    1$    and
$P(\gamma)    :=    \int    t^2\,dH(t)    =    1+2(\zeta(2\gamma)-1)$,    finite    for    every    $\gamma>1/2$.  
\end{lemma}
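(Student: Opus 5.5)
The plan is to identify $H$ as the pushforward of the uniform measure on $[0,2\pi)$ under $f_\gamma$ (Definition~\ref{def3}). Both moments then become Fourier integrals of $f_\gamma$, which we evaluate by Parseval, exactly as in Remark~\ref{r6}. Concretely, $\int t\,dH(t)=\int_0^{2\pi}\frac{d\theta}{2\pi}f_\gamma(\theta)$ and $\int t^2\,dH(t)=\int_0^{2\pi}\frac{d\theta}{2\pi}f_\gamma(\theta)^2$. These identities come from the change-of-variables formula for pushforward measures, applied to the nonnegative functions $t\mapsto t$ and $t\mapsto t^2$. Nonnegativity of $f_\gamma$ a.e. follows from Lemma~\ref{lem:posdef}: $f_\gamma$ is the spectral density of a positive semi-definite autocovariance sequence, hence a nonnegative measure density.

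The key point is to give $f_\gamma$ a precise meaning for $1/2<\gamma\le1$, where the series is not absolutely convergent. Since $\sum_t d_t^2<\infty$ for $\gamma>1/2$, the sequence $(d_{|t|})_{t\in\mathbb Z}$ lies in $\ell^2(\mathbb Z)$. By Riesz--Fischer, Eq.~\eqref{eq:ftheta_gamma} therefore defines $f_\gamma\in L^2[0,2\pi]$, with the series converging in $L^2$ and Fourier coefficients $\hat f_\gamma(t)=d_{|t|}$. We must also check that this $L^2$ function coincides with the spectral density of Lemma~\ref{lem:posdef}. One route is to observe that $d_t$ is convex and decreasing to $0$. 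Then the classical Fejér/Young result on cosine series with convex coefficients gives pointwise convergence for $\theta\neq0$ to a nonnegative integrable function. That function agrees a.e. with the $L^2$ limit, which settles the identification and the nonnegativity at once.

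With this in hand, both moments follow from the Fourier coefficients. The first moment is the zeroth coefficient: $\int f_\gamma\,\frac{d\theta}{2\pi}=\hat f_\gamma(0)=d_0=1$, which is legitimate because $L^2[0,2\pi]\subset L^1[0,2\pi]$. The second moment follows from Parseval, Eq.~\eqref{eq:parseval}:
\[
P(\gamma)=\sum_{t\in\mathbb Z}d_{|t|}^2=1+2\sum_{t\ge1}(1+t)^{-2\gamma}=1+2\bigl(\zeta(2\gamma)-1\bigr).
\]
This is finite exactly when $2\gamma>1$, by the $p$-series test.

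I expect the main obstacle to be the well-definedness step for $1/2<\gamma\le1$. In that range $f_\gamma$ is unbounded (Proposition~\ref{prop:two-thresholds}(ii)), so $H$ is not supported on a bounded interval, and Definition~\ref{def3}, which requires $\sigma_1<\infty$, strictly does not apply. I would therefore state that for $\gamma\le1$ the lemma refers to the same pushforward measure with unbounded support. The moment computation itself is unaffected, because it uses only $f_\gamma\in L^2$ and never boundedness.
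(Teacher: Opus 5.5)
Your proof is correct and follows the paper's own argument (Lemma~\ref{lc1} and Corollary~\ref{cc2} in Appendix~C): the first moment is the zeroth Fourier coefficient $d_0=1$, and the second moment is $\sum_{t\in\mathbb Z} d_{|t|}^2=2\zeta(2\gamma)-1$ by Parseval, Eq.~\eqref{eq:parseval}. Two of your additions are useful. Your Riesz--Fischer and convex-coefficient argument gives $f_\gamma\in L^2$ and $f_\gamma\ge 0$ for $1/2<\gamma\le 1$, which makes rigorous what the paper obtains by formal term-by-term integration. Your remark that $H$ must then be read as the unbounded-support pushforward of Appendix~C, not the bounded measure of Definition~\ref{def3}, correctly flags a point the paper leaves implicit.
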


\begin{proposition}[Local    behavior    of    $H$    near    $\sigma_1$,    and    a    divergence    dichotomy]
By    the    polylogarithm    expansion    underlying    Proposition~\ref{prop:two-thresholds}(via    $\sigma_1(\gamma)-f_\gamma(\theta)
\sim    C_f(\gamma)\,\theta^{\gamma-1}$,    $C_f(\gamma)>0$,    as    $\theta\to0^+$,    $\gamma\ne1$
non-integer),    the    pushforward    measure    $H$    satisfies,    near    $t=\sigma_1$,
\[
H\bigl([\sigma_1-\delta,\sigma_1]\bigr)    \;\asymp\;    \delta^{1/(\gamma-1)},    \qquad    \delta\to0^+,
\]
where    $\asymp$    denotes    agreement    up    to    positive    constants    depending    on    $\gamma$.    Writing
$\beta    :=    \frac{1}{\gamma-1}-1$    for    the    resulting    local    density    exponent,    a    direct    estimate
of    $\int_0^\delta    u^\beta    (u+\varepsilon)^{-k}\,du$    as    $\varepsilon\to0^+$    shows    that,    as
$a\to\sigma_1^+$,
\[
I(a)    \text{    stays    bounded}    \iff    \gamma    <    2,    \qquad
K_1(a)    \text{    stays    bounded}    \iff    \gamma    <    \tfrac32.
\]
In    particular,    for    $\gamma>\tfrac32$,    $K_1(a)\to\infty$    as    $a\to\sigma_1^+$.
\end{proposition}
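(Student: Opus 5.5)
The plan is to reduce everything to the distribution function $G(\delta):=H([\sigma_1-\delta,\sigma_1])$ of the gap $u:=\sigma_1-t$. Both integrals $I$ and $K_1$ will then be handled by Stieltjes integration by parts, so that only the asymptotics of $G$ are needed and never those of a density. Throughout, $\gamma>1$ is fixed and non-integer, and $\sigma_1=f_\gamma(0)=2\zeta(\gamma)-1$ by Lemma~\ref{lem:sigma1-gamma}.

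\emph{Step 1: local expansion of $f_\gamma$ at $\theta=0$.} Apply the polylogarithm expansion from the proof of Proposition~\ref{prop:two-thresholds}, with the shift handled by Remark~\ref{r8}. It gives
\[
f_\gamma(\theta)=\sigma_1-C_f(\gamma)\,\theta^{\gamma-1}+c_2\theta^2+o(\theta^{\min(\gamma-1,2)}).
\]
The $k=1$ term $i\theta\,\zeta(\gamma-1)$ is purely imaginary and drops out of the real part. Hence the $\theta^{\gamma-1}$ term is the leading correction exactly when $1<\gamma<3$.
\begin{itemize}
\item I will state this range restriction explicitly.
\item For $\gamma>3$ the correction is $\asymp\theta^2$, giving $\beta=-\tfrac12$. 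Both integrals then diverge, which is consistent with the two ``iff'' conclusions even though it does not match the displayed $\delta^{1/(\gamma-1)}$ law.
\end{itemize}
I also need $f_\gamma$ to be strictly decreasing on some interval $(0,\theta_0)$. For this I would differentiate the series termwise, $f'_\gamma(\theta)=-2\sum_t t\,d_t\sin(\theta t)$, and apply the same polylogarithm expansion to $\mathrm{Li}_{\gamma-1}$. This gives $f_\gamma'(\theta)\sim -(\gamma-1)C_f\theta^{\gamma-2}<0$.

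\emph{Step 2: the distribution function.} On $(\theta_0,\pi]$ we have $f_\gamma\le\sigma_1-\kappa$ for some $\kappa>0$, by continuity and the strict maximum at $0$. Hence, for $\delta<\kappa$, the set $\{\theta:\sigma_1-f_\gamma(\theta)\le\delta\}$ is a symmetric interval $|\theta|\le\theta(\delta)$. From Step 1 and monotonicity, $\theta(\delta)\sim(\delta/C_f)^{1/(\gamma-1)}$. It follows that
\[
G(\delta)=\theta(\delta)/\pi\asymp\delta^{1/(\gamma-1)}=\delta^{\beta+1}.
\]

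\emph{Step 3: the dichotomy.} Put $a=\sigma_1+\varepsilon$. Split each integral into the part with $u\ge\delta_0$, which stays bounded uniformly in $\varepsilon$, and the part with $u<\delta_0$. Integrating by parts on the second part gives
\[
\int_0^{\delta_0}\frac{dG(u)}{(u+\varepsilon)^k}=\frac{G(\delta_0)}{(\delta_0+\varepsilon)^k}+k\int_0^{\delta_0}\frac{G(u)}{(u+\varepsilon)^{k+1}}\,du,
\]
with $k=1$ for $I$ and $k=2$ for $K_1$; in $K_1$ the numerator $t$ lies in $[\sigma_1-\delta_0,\sigma_1]$ and is harmless. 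Now insert $G(u)\asymp u^{\beta+1}$ and let $\varepsilon\downarrow0$, using monotone convergence. The integral stays bounded iff $\int_0 u^{\beta+1-k-1}\,du<\infty$, i.e.\ iff $\beta>k-1$.
\begin{itemize}
\item For $k=1$ this is $\beta>0$, i.e.\ $\gamma<2$.
\item For $k=2$ this is $\beta>1$, i.e.\ $\gamma<\tfrac32$.
\end{itemize}
In particular $K_1(a)\to\infty$ as $a\to\sigma_1^+$ for $\gamma>\tfrac32$.

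\emph{Main obstacle.} I expect the hardest part to be Step 1 together with the monotonicity input to Step 2. Passing from an asymptotic for $f_\gamma$ to one for the inverse function needs a genuine remainder estimate and local monotonicity, and for this I will rely on the derivative expansion via $\mathrm{Li}_{\gamma-1}$. The range caveat $1<\gamma<3$ and the excluded integer values ($\gamma=2$, with logarithmic terms) also have to be handled; I will treat the latter as borderline cases and leave them outside the strict dichotomy.
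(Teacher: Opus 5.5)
Your argument is correct, and it is organized differently from the paper. The paper gives no separate proof; its sketch, embedded in the statement, passes straight to a local density $\propto u^{\beta}$ for $H$ near $\sigma_1$ and estimates $\int_0^\delta u^\beta(u+\varepsilon)^{-k}\,du$. That tacitly requires density asymptotics, i.e.\ asymptotics of $f_\gamma'$, which the paper never supplies.

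You instead work only with the distribution function $G(\delta)=H([\sigma_1-\delta,\sigma_1])$. You transfer to $I$ and $K_1$ by Stieltjes integration by parts, using $G(0)=0$, together with monotone convergence in $\varepsilon$. The resulting criterion $\int_0 u^{\beta-k}\,du<\infty$ is the same as the paper's, but the input is weaker.

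This choice also makes your anticipated main obstacle disappear. You need neither local monotonicity of $f_\gamma$ nor the $\mathrm{Li}_{\gamma-1}$ expansion of $f_\gamma'$. Two-sided bounds $(1\pm\epsilon)\,c\,\theta^{\gamma-1}$ on $\sigma_1-f_\gamma(\theta)$ for $|\theta|<\theta_0$, together with $f_\gamma\le\sigma_1-\kappa$ elsewhere, already sandwich $\{\theta:\sigma_1-f_\gamma(\theta)\le\delta\}$ between two symmetric intervals of length $\asymp\delta^{1/(\gamma-1)}$. The derivative expansion is exactly what would be needed to justify the paper's density route, not yours.

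Your range caveat is a genuine correction to the statement. For $\gamma>3$ one has $\sigma_1-f_\gamma(\theta)\sim\theta^2\sum_t t^2 d_t$, so $G(\delta)\asymp\delta^{1/2}$ rather than $\delta^{1/(\gamma-1)}$. Both divergence conclusions nevertheless survive there, as you say.

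One small point on Step 1. With the paper's own definition $C_f(\gamma)=\Gamma(1-\gamma)\sin(\pi\gamma/2)$, the constant is negative on $(1,3)$, and the actual expansion is $\sigma_1-f_\gamma(\theta)\sim-2C_f(\gamma)\,\theta^{\gamma-1}$. So you should verify positivity of that coefficient directly rather than inherit the statement's ``$C_f>0$''; it does hold.
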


\begin{lemma}[$K_1$    dominates    $I$    near    the    edge    whenever    either    diverges]
For    any    $H$    satisfying        Appendix~A,    the    exact    identities
$K_1(a)+I(a)    =    a\,L(a)$    and    $I(a)^2    \le    L(a)$    (Cauchy--Schwarz),    where
$L(a):=\int    dH(t)/(a-t)^2$,    hold    for    every    $a>\sigma_1$.    Consequently
\[
K_1(a)    =    aL(a)    -    I(a)    \ge    a\,I(a)^2    -    I(a),
\]
so    if    $I(a)\to\infty$    as    $a\to\sigma_1^+$,    then    $K_1(a)\to\infty$    at    a    strictly    faster    rate,
and    in    particular    $K_1(a)-I(a)\to+\infty$.    Combined    with    the    Proposition    above    (where,    for
$\gamma\ge2$,    $I$    itself    diverges),    this    gives:    for    every    $\gamma>3/2$,
\[
K_1(a)    -    I(a)    \;\longrightarrow\;    +\infty    \qquad    \text{as    }    a\to\sigma_1^+.
\]
\end{lemma}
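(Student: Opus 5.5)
The plan is to first prove the two exact identities for every $a>\sigma_1$, then derive the inequality, and finally feed in the divergence dichotomy from the preceding Proposition.

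\textbf{The two identities.} Since $H$ is supported on $[0,\sigma_1]$ and $a>\sigma_1$, the quantity $a-t\ge a-\sigma_1>0$ is bounded away from zero. Hence $I(a)$, $K_1(a)$ and $L(a)$ are all finite, and we may manipulate the integrands freely. For the first identity we use
\[
\frac{t}{(a-t)^2}+\frac{1}{a-t}=\frac{t+(a-t)}{(a-t)^2}=\frac{a}{(a-t)^2},
\]
and integrating against $dH$ gives $K_1(a)+I(a)=aL(a)$. For the second identity we apply Cauchy--Schwarz (equivalently, Jensen) on the probability space $(H)$ to the function $1/(a-t)$. This gives
\[
I(a)^2=\Bigl(\int\frac{dH}{a-t}\Bigr)^2\le\int\frac{dH}{(a-t)^2}=L(a).
\]
Since $a>0$, substituting $L(a)\ge I(a)^2$ into $K_1=aL-I$ yields $K_1(a)\ge aI(a)^2-I(a)$.

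\textbf{Consequences near the edge.} If $I(a)\to\infty$ as $a\to\sigma_1^+$, then $a\to\sigma_1\ge\bar t=1$, since the mean lies below the supremum of the support. It follows that
\[
K_1-I\ge aI^2-2I=I(aI-2)\to+\infty,
\]
and likewise $K_1/I\ge aI-1\to\infty$, which is the "strictly faster rate" claim.

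\textbf{The regime $\gamma>3/2$.} Here we split into two cases, using the preceding Proposition. For $\gamma\ge2$, $I$ diverges, so the argument above applies directly. For $3/2<\gamma<2$, $I$ stays bounded while $K_1\to\infty$ by the same Proposition, so $K_1-I\to+\infty$ trivially. In both cases the conclusion follows.

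\textbf{Main obstacle.} The main obstacle is the boundary case $\gamma=2$, where the exponent $1/(\gamma-1)$ is an integer. The preceding dichotomy was stated for non-integer $\gamma$ and "$\iff\gamma<2$". I would treat $\gamma=2$ directly: the local density exponent there is $\beta=0$, so $\int_0^\delta(u+\varepsilon)^{-1}du\sim\log(1/\varepsilon)$ diverges. This places $\gamma=2$ in the $I$-divergent case, and that case is enough to close the argument.
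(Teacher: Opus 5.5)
Your proof is correct and follows the paper's own route. The identities $K_1+I=aL$ and $I^2\le L$ (as in Appendix~A, Step~2) give $K_1-I\ge I(aI-2)$ and $K_1/I\ge aI-1$, and your split into $\gamma\ge2$ (where $I$ diverges) and $3/2<\gamma<2$ (where $I$ stays bounded but $K_1\to\infty$) is exactly how the paper uses the preceding Proposition. Your extra check at $\gamma=2$ goes beyond the paper, but the Proposition's non-integer hypothesis also excludes $\gamma=3,4,\dots$. All $\gamma\ge2$ can be handled at once by the elementary bound $0\le\sigma_1-f_\gamma(\theta)=2\sum_{t\ge1}d_t(1-\cos t\theta)\le 2\sum_{t\ge1}(1+t)^{-2}\min(2,t^2\theta^2/2)=O(|\theta|)$, which gives $H([\sigma_1-\delta,\sigma_1])\gtrsim\delta$ and hence $I(\sigma_1+\varepsilon)\gtrsim\log(1/\varepsilon)$ without any polylogarithm expansion.
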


\begin{corollary}[Existence    of    the    scalar    edge    for    $\gamma>3/2$]
Fix    $\gamma>3/2$.    As    $a\to\infty$,    $I(a)\sim    1/a$    and    $K_1(a)\sim    \bar    t/a^2    =    1/a^2$    (using
$\bar    t=1$),    so    $K_1(a)-I(a)    \to    0^-$    eventually,    i.e.\    becomes    negative    for    large    $a$.    Since
$K_1-I$    is    continuous    on    $(\sigma_1,\infty)$,    diverges    to    $+\infty$    as    $a\to\sigma_1^+$    (by
the    Lemma    above),    and    is    eventually    negative,    the    intermediate    value    theorem    gives    a    root
$a^*\in(\sigma_1,\infty)$    of    $I(a^*)=K_1(a^*)$.    By    the    general    bound    of    Lemma~\ref {lem:finite-edge-general}   (which
requires    only    $\sigma_1<\infty$,    established    for    every    $\gamma>1$    in    Lemma~\ref{l20}/Proposition~\ref{prop:two-thresholds}),
any    such    root    satisfies    $a^*\le2\sigma_1$,    and
\[
\sqrt{\sigma_1(\gamma)/2}    \;<\;    \tau_0(\gamma)    \;<\;    2\sqrt{2\,\sigma_1(\gamma)},    \qquad
\tau_0(\gamma)    :=    a^*\sqrt{I(a^*)}.
\]
This    holds    for    every    fixed    $\gamma>3/2$.
\label{c32}
\end{corollary}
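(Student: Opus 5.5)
The plan follows the intermediate-value argument already outlined in the statement. I handle the behaviour at the two ends of $(\sigma_1,\infty)$ separately, and then apply the general finite-edge bound of Appendix~A. Throughout, $\gamma>3/2>1$ is fixed. Lemma~\ref{lem:sigma1-gamma} then gives $\sigma_1(\gamma)=2\zeta(\gamma)-1<\infty$, so $H$ is a probability measure on the bounded interval $[0,\sigma_1]$ in the sense of Definition~\ref{def3}. Lemma~\ref{l20} gives $\bar t=1$ and $\int t^2\,dH<\infty$.

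\textbf{Step 1: large $a$.} For $a\ge 2\sigma_1$, expand $1/(a-t)=a^{-1}\sum_k (t/a)^k$ uniformly on $\mathrm{supp}\,H$. This gives $I(a)=a^{-1}+a^{-2}+O(a^{-3})$ and $K_1(a)=a^{-2}+O(a^{-3})$, so $K_1(a)-I(a)=-a^{-1}+O(a^{-3})<0$ for all large $a$.

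\textbf{Step 2: continuity.} On $(\sigma_1,\infty)$ both integrands are bounded and continuous in $a$, locally uniformly in $t$. Dominated convergence therefore shows that $K_1-I$ is continuous there.

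\textbf{Step 3: behaviour as $a\to\sigma_1^+$.} This is the main obstacle. I would use the dichotomy proposition preceding the statement: $K_1(a)\to\infty$ as $a\to\sigma_1^+$ when $\gamma>3/2$. The proof of that divergence rests on the local law $H([\sigma_1-\delta,\sigma_1])\asymp\delta^{1/(\gamma-1)}$. Writing $u=\sigma_1-t$ and $\varepsilon=a-\sigma_1$, one has $K_1\gtrsim\int_0^\delta u^{\beta}(u+\varepsilon)^{-2}du$ with $\beta=\frac{1}{\gamma-1}-1$, and this diverges precisely when $\beta\le 1$, i.e.\ $\gamma\ge 3/2$. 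Turning the two-sided $\asymp$ into a lower bound on $K_1$ requires the distributional estimate, not a density. I would handle this by a layer-cake integration by parts, bounding $\int (a-t)^{-2}dH$ below by dyadic shells $u\in[2^{-j-1}\varepsilon,2^{-j}\varepsilon]$ and beyond.

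To conclude that $K_1-I\to+\infty$, and not merely $K_1\to\infty$, I would split into cases via the preceding Lemma. If $I\to\infty$, that Lemma's bound $K_1\ge aI^2-I$ settles it. If $I$ stays bounded ($3/2<\gamma<2$), then $K_1\to\infty$ alone suffices. Either way $K_1-I\to+\infty$.

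\textbf{Step 4: root and edge bounds.} The intermediate value theorem now yields $a^*\in(\sigma_1,\infty)$ with $I(a^*)=K_1(a^*)$. For the localization $a^*\le2\sigma_1$ and the two-sided bound on $\tau_0=a^*\sqrt{I(a^*)}$, I would invoke Lemma~\ref{lem:finite-edge-general}. Its only hypotheses are $\sigma_1<\infty$ and $\bar t=1$, both verified above, so it gives $\sqrt{\sigma_1/2}<\tau_0<2\sqrt{2\sigma_1}$ verbatim, exactly as in the AR(1) Corollary~\ref{cor:finite-AR1}.
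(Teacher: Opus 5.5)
Your architecture is exactly the paper's: negativity of $K_1-I$ for large $a$, continuity on $(\sigma_1,\infty)$, divergence to $+\infty$ as $a\to\sigma_1^+$, the intermediate value theorem, and then Lemma~\ref{lem:finite-edge-general}. Steps 1, 2 and 4 are fine, and so is your case split that turns $K_1\to\infty$ into $K_1-I\to+\infty$.

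The problem is the justification you give in Step 3. The local law $H([\sigma_1-\delta,\sigma_1])\asymp\delta^{1/(\gamma-1)}$ comes from $\sigma_1-f_\gamma(\theta)\asymp|\theta|^{\gamma-1}$. The singular term $|\theta|^{\gamma-1}$ dominates the regular $\theta^2$ term of the small-$\theta$ expansion only when $\gamma<3$. For $\gamma>3$ one has $\sum_t t^2 d_t<\infty$, so $f_\gamma\in C^2$ and $\sigma_1-f_\gamma(\theta)\sim\theta^2\sum_{t\ge1}t^2(1+t)^{-\gamma}$. Hence $H([\sigma_1-\delta,\sigma_1])\asymp\delta^{1/2}$, not $\delta^{1/(\gamma-1)}$; at $\gamma=3$ there is an extra factor $\log(1/\theta)$. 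Moreover, the Proposition you invoke is derived from the polylogarithm expansion only for non-integer $\gamma$, so $\gamma=2,3,\dots$ are not covered by it at all. The paper's supporting Proposition has the same over-extension. The conclusion $K_1\to\infty$ is still true for every $\gamma>3/2$, so the statement is not in danger. But as written, your Step 3 rests on a false premise for $\gamma>3$ and on an uncovered citation at integer $\gamma$.

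The repair is short, and it also makes your dyadic-shell/layer-cake step unnecessary. Since $\gamma>3/2$ strictly, a single shell suffices: with $\varepsilon=a-\sigma_1$ small,
\[
K_1(\sigma_1+\varepsilon)\;\ge\;\frac{\sigma_1-\varepsilon}{4\varepsilon^{2}}\,H\bigl([\sigma_1-\varepsilon,\sigma_1]\bigr).
\]
So you need only a lower bound on the mass of $H$ near $\sigma_1$, i.e.\ only an upper bound on $\sigma_1-f_\gamma$.

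To get it, write $\sigma_1-f_\gamma(\theta)=2\sum_{t\ge1}(1+t)^{-\gamma}(1-\cos t\theta)$. Use $1-\cos x\le\min(x^2/2,2)$ and split the sum at $t\approx1/|\theta|$. This gives $\sigma_1-f_\gamma(\theta)\le C_\gamma|\theta|^{p}$ for $|\theta|\le1$, with $p:=\min(\gamma-1,1)>1/2$, for every $\gamma>3/2$, integer or not. Therefore, for small $\varepsilon$, $H([\sigma_1-\varepsilon,\sigma_1])\ge\pi^{-1}(\varepsilon/C_\gamma)^{1/p}$, and hence $K_1(\sigma_1+\varepsilon)\gtrsim\varepsilon^{1/p-2}\to\infty$.

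This needs no two-sided estimate, no density and no polylogarithm asymptotics. With this in place, the rest of your proof goes through unchanged.
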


\begin{remark}[The    divergence    as    $\gamma\to1^+$    is    not    established]
The    bound    of    the    Corollary    above    holds    for    each    fixed    $\gamma>3/2$,    where
$\sigma_1(\gamma)$    is    a    fixed    finite    number;    it    does    not    by    itself    imply    anything    about    a
limit,    since    $\sigma_1(\gamma)$    only    diverges    as    $\gamma\to1^+$
($\sigma_1(\gamma)=2\zeta(\gamma)-1\to\infty$),    a    limit    point    lying    outside    the    range
$\gamma>3/2$    for    which    existence    of    $a^*$    has    been    proven.    For    $1<\gamma\le3/2$,    the    argument
above    does    not    apply:    by    the    Proposition,    both    $I$    and    $K_1$    remain    bounded    as
$a\to\sigma_1^+$    in    this    range,    so    the    intermediate-value    argument    gives    no    information,    and
we    do    not    have    a    proof    that    a    root    $a^*>\sigma_1$    exists    at    all.

We    emphasize    that    the    general    theory    of    \cite{AEKS2020}    (Assumptions    (A)--(E),
verified    for    every    $\gamma>1$    in    the    analogue    of    Proposition~\ref{prop:assumptions}/Remark~\ref{r2},    modulo    the    unverified
Assumption~(G))    guarantees    existence    of    some    self-consistent    edge    for    every    $\gamma>1$;
what    is    no    longer    established    is    that    this    edge    is    characterized    by    a    real    root    $a^*>\sigma_1$
of    the    specific    scalar    equation    $I(a)=K_1(a)$,    or    that    it    diverges    as    $\gamma\to1^+$.    Both    the
existence    of    such    a    root    for    $1<\gamma\le3/2$,    and    the    behavior    of    $\tau_0(\gamma)$    as
$\gamma\to1^+$,    are    left    as    open    questions.
\end{remark}

\subsection{G.        Numerical        determination        of        the        edge        threshold}
\label{sec:IVC}
        
Given        the        logical        gap        identified        in
Remark~\ref{rem:logical-status},        we        treat        the        location        of        the        edge        threshold
as        an        empirical        question        and        constrain        it        by        simulation,        rather        than        assuming
$\gamma_c=1$        \emph{a        priori}.        Matrices        were        generated        by        the        row-independent
circulant-embedding        construction        of        Definition~\ref{def:row-plaw}        for        $N$        up
to        $16384$
to        extract        $\lambda_{\max}$,        with        realization        counts        ranging        from        several
hundred        at        small        $N$        down        to        $10$--$30$        at        $N=16384$        (fewer        where
computational        cost        required        it).

We        examine        Binder        ratio,
\begin{equation}
R(N,\gamma)        \;=\;        \frac{\lambda_{\max}(2N,\gamma)}{\lambda_{\max}(N,\gamma)}.
\end{equation}
If        $\lambda_{\max}(N,\gamma)\sim        N^{s_c}$        exactly        at        $\gamma=\gamma_c$,        then
$R(N,\gamma_c)=2^{s_c}$        independent        of        $N$;        more        generally,        the        finite-size
scaling        ansatz
$\lambda_{\max}(N,\gamma)=N^{s_c}\,\Phi\!\big((\gamma-\gamma_c)N^{1/\nu}\big)$
implies        that        curves        $R(N,\gamma)$        for        successive        doublings        of        $N$        should
approach        a        common        value        at        $\gamma=\gamma_c$        as        $N$        grows.

\begin{figure}[h]
\centering
\includegraphics[width=\columnwidth]{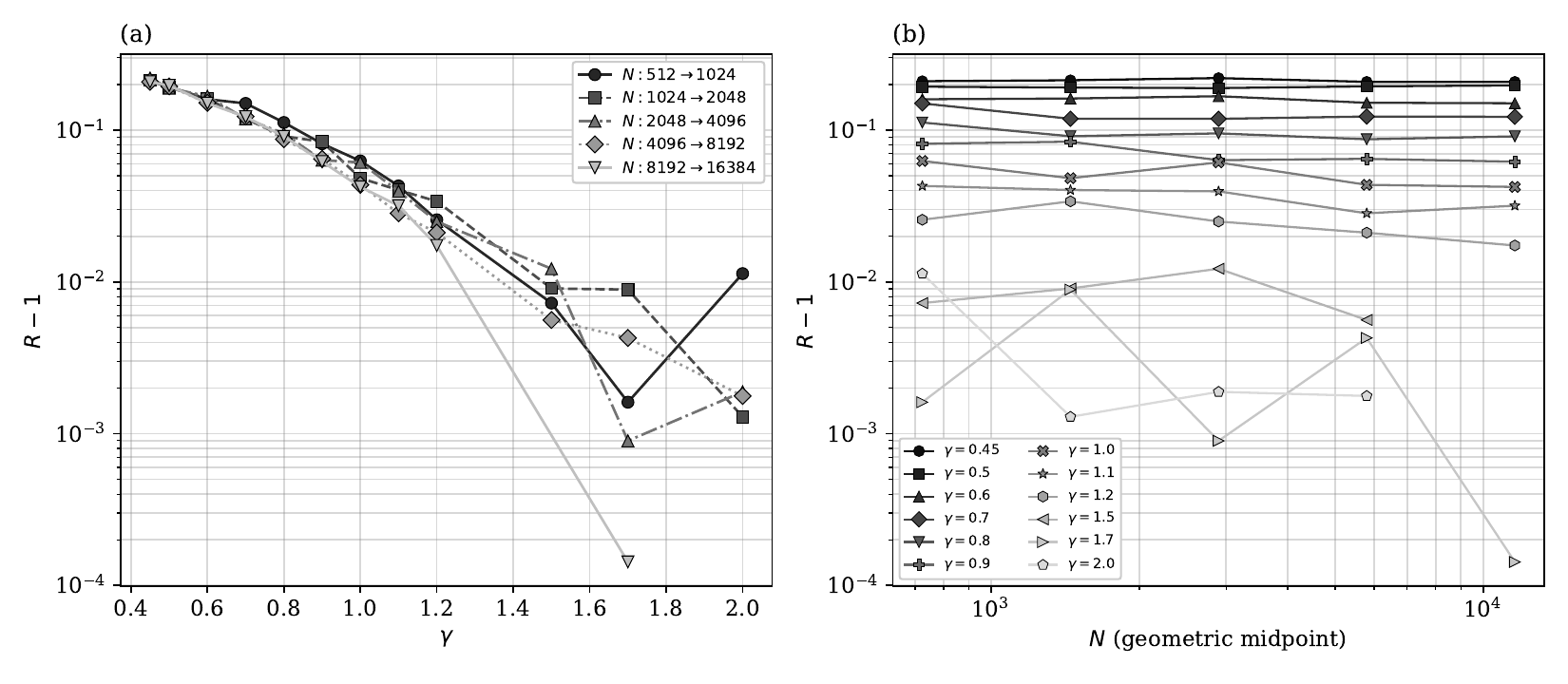}
\caption{Binder-ratio        diagnostic        $R(N,\gamma)-1$,        with
$R(N,\gamma)=\lambda_{\max}(2N,\gamma)/\lambda_{\max}(N,\gamma)$,        computed        for
the        row-independent        power-law        ensemble        of        Definition~\ref{def:row-plaw}        at        $N$        up        to        $16384$
via        a        matrix-free        Lanczos        method,        averaged        over        realization        counts        ranging
from        several        hundred        at        small        $N$        down        to        $10$--$30$        at        $N=16384$.
\textbf{(a)}        $R(N,\gamma)-1$        versus        $\gamma$        (semi-log        scale)        for        five
successive        doublings        of        $N$,        from        $N{=}512{\to}1024$        (darkest,        solid)        to
$N{=}8192{\to}16384$        (lightest,        dotted).        
For        $\gamma\lesssim0.7$        the        curves
for        all        five        doublings        coincide        to        within        a        few        percent,        indicating        that
$\lambda_{\max}$        diverges        as        a        stable        power        law        with        no        resolvable
finite-size        crossover        up        to        $N=16384$;        for        $\gamma\gtrsim1.2$        the        curves
separate        and        $R-1$        decreases        by        more        than        an        order        of        magnitude        between        the
smallest        and        largest        $N$-pairs,        indicating        convergence        of        $\lambda_{\max}$        to
a        finite        limit.        The        region        $\gamma\in[0.9,1.1]$        shows        an        intermediate,
gradually        decreasing        trend,        and        constitutes        the        numerically        unresolved
crossover        region        discussed        in        Sec.~IV.E.
\textbf{(b)}        The        same        data        plotted        as        $R(N,\gamma)-1$        versus        $N$
(log--log        scale),        one        curve        per        $\gamma$        (darkest:        $\gamma=0.45$;        lightest:
$\gamma=2.0$),        showing        directly        the        flat        ($\gamma\le0.7$)        versus        decaying
($\gamma\ge1.2$)        $N$-dependence        underlying        panel        (a).}
\label{BR}
\end{figure}

\begin{observation}[Flat        versus        decaying        regimes        of        $R(N,\gamma)$]
\label{obs:binder}
For        $\gamma\in\{0.5,0.6,0.7\}$        (and        $\gamma=0.45$,        tested        as        an        additional
check),        $R(N,\gamma)-1$        is        constant        across        all        tested        doublings        from
$N{=}512{\to}1024$        up        to        $N{=}8192{\to}16384$,        to        within        a        few        percent        ---
For        $\gamma\ge1.2$,
$R(N,\gamma)-1$        decreases        monotonically        and        substantially        with        $N$        (by        more
than        an        order        of        magnitude        between        the        smallest        and        largest        $N$-pairs        for
$\gamma=1.5,1.7,2.0$),        consistent        with        saturation.        The        transition        between
these        regimes        is        not        sharp        at        the        present        $\gamma$-resolution:
$\gamma=0.9,1.0,1.1$        all        show        $R(N,\gamma)-1$        decreasing        gradually        with        $N$,
intermediate        between        the        two        limiting        behaviors.
\end{observation}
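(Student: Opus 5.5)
This Observation is an empirical statement, so the "proof" I would give is a plan for the numerical protocol that establishes it, together with whatever analytic anchors make the two limiting regimes plausible. The target has three parts: $R(N,\gamma)-1$ is $N$-independent for $\gamma\le 0.7$; it decays with $N$ for $\gamma\ge 1.2$; and it is intermediate for $\gamma\in\{0.9,1.0,1.1\}$.

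\emph{Generation and estimation.} I would generate $S$ exactly as in Definition~\ref{def:row-plaw}. The positive semi-definiteness of Lemma~\ref{lem:posdef} guarantees that each row's circulant embedding gives a valid stationary Gaussian sequence. If a negative circulant eigenvalue appears, I would enlarge the embedding until it disappears. I would then compute $\lambda_{\max}$ by matrix-free Lanczos, with $N$ doubling from $512$ to $16384$, and average over realizations.

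\emph{Error analysis.} I would form $R(N,\gamma)=\overline{\lambda_{\max}}(2N)/\overline{\lambda_{\max}}(N)$ and propagate the standard errors by the delta method. A ratio is called "constant to within a few percent" only if the spread across doublings is comparable to these error bars. The main obstacle is the small number of samples at $N=16384$ ($10$--$30$). I would check that the relative standard error of $\overline{\lambda_{\max}}$ there is well below the few-percent tolerance. This should hold, since $\lambda_{\max}$ concentrates, with fluctuations that are $o(1)$ relative to its mean.

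\emph{Analytic anchors.} For $\gamma\le1/2$, Corollary~\ref{cor:lmax-divergence} already forces growth of $\lambda_{\max}$. The sub-critical exponent of Proposition~\ref{prop15} suggests a pure power law $N^{s}$, which makes $R$ flat at $2^{s}$. For $\gamma\ge1.2>1$, Assumption~(E) holds (Proposition~\ref{prop:two-thresholds}(ii)), and for $\gamma>3/2$ a finite edge $\tau_0(\gamma)$ exists by Corollary~\ref{c32}. Then $\lambda_{\max}\to\tau_0$ with corrections of order $N^{-2/3}$, so $R-1$ decays like $N^{-2/3}$. I would check the slope of $\log(R-1)$ against $\log N$ for $\gamma=1.5,1.7,2.0$ to confirm roughly a tenfold drop over five doublings.

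\emph{Interpretation in the window $[0.9,1.1]$.} Here the divergence of $f_\gamma$ is only weak (logarithmic at $\gamma=1$, Remark~\ref{r77}), so I expect slow, log-type crossovers. I would report the gradual decrease of $R-1$ in this range without claiming a sharp threshold. The hardest point is distinguishing, at $\gamma\approx0.7$, a true power law from a very slow approach to a finite limit. To address it, I would refine the $\gamma$ grid there and fit $R-1$ against $N$, testing for a nonzero slope.
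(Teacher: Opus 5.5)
Your simulation protocol is the same as the paper's: row-wise circulant embedding, matrix-free Lanczos, doublings from $512\to1024$ to $8192\to16384$, and several hundred down to $10$--$30$ realizations. The Observation rests only on the data of Fig.~\ref{BR}; the paper gives no analytic argument for it. The problems are in what you add on top.

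\textbf{Error control targets the wrong quantity.} The Observation is about $R-1$, a small difference of two noisy means. If each $\overline{\lambda_{\max}}$ has relative standard error $\epsilon$, then $\delta(R-1)/(R-1)\approx\sqrt2\,\epsilon\,R/(R-1)$. So if $R-1$ is of order $10^{-1}$, ``constant to within a few percent'' needs $\epsilon$ of a few times $10^{-3}$, not merely ``well below a few percent.''

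The decaying regime is worse. Under your own TW picture, both $\overline{\lambda_{\max}}(2N)-\overline{\lambda_{\max}}(N)$ and the sampling noise are of order $N^{-2/3}$. Using the mean ($\approx-1.21$) and standard deviation ($\approx1.27$) of $\mathrm{TW}_1$, the signal-to-noise ratio of $R-1$ is about $0.3\sqrt n$ with $n$ samples per size. That is of order one for $n=10$--$30$, yet $\epsilon\sim10^{-4}$ passes your test trivially. You need error bars on $R-1$ itself, by the delta method or bootstrap, noting that adjacent ratios share $\overline{\lambda_{\max}}(2N)$ and are anticorrelated. The appeal to $o(1)$ relative fluctuations is unjustified in the growing regime and irrelevant in any case.

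A minor point: Lemma~\ref{lem:posdef} does not make a circulant embedding nonnegative definite. For $d_t=(1+t)^{-\gamma}$ that follows from $d_t$ being nonnegative, decreasing and convex.

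\textbf{Large-$\gamma$ anchor.} For $\gamma\ge1.2$, ``$\lambda_{\max}\to\tau_0$ with $N^{-2/3}$ corrections'' is the unproven Conjecture~\ref{conj:TW}-type statement; Remark~\ref{rem:logical-status}(i) notes Assumption (G) is unverified. Corollary~\ref{c32} gives existence and bounds for $\tau_0$ only for $\gamma>3/2$. It says nothing at $\gamma=1.2$ and nothing about convergence of $\lambda_{\max}$.

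Even granting the $N^{-2/3}$ rate, the arithmetic fails. The smallest and largest pairs are indexed by $N=512$ and $N=8192$, which are four doublings apart. Pure $N^{-2/3}$ decay then gives a drop by $2^{8/3}\approx6.3$, short of the ``more than an order of magnitude'' in the statement. That drop needs an effective exponent of at least $\log_{16}10\approx0.83$, so your proposed slope check would contradict your own mechanism.

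\textbf{Flat-regime anchors.} Corollary~\ref{cor:lmax-divergence} applies only for $\gamma\le1/2$, and it only bounds $\mathbb E\lambda_{\max}^4$ from below. At $\gamma=1/2$ that bound grows only like $\log N$. A $\lambda_{\max}$ that saturated it would give $R-1\to0$, not a flat ratio. Proposition~\ref{prop15} concerns $\mu_4$ for $\gamma<1/2$, not $\lambda_{\max}$.

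For $\gamma=0.6$ and $0.7$ the fourth moment is finite (Proposition~\ref{p14}), and nothing in the paper forces $\lambda_{\max}$ to grow. Section~IV.H even reports a finite self-consistent edge $\tau_0(0.7)\approx7.5$. So flatness at $0.6$ and $0.7$ is purely empirical over $N\le16384$, as your last sentence partly recognizes. Present the anchors as heuristics at most, and drop the $N^{-2/3}$ explanation of the tenfold drop.
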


We        emphasize,        returning        to        the        point        of        Remark~\ref{rem:logical-status},        that
even        a        numerically        sharp        crossover        at        $\gamma\approx1$        would        not        by        itself
prove        that        the        mechanism        is        the        failure        of        Assumption~(E);        establishing        that
connection        rigorously        would        require        either        a        proof        of        Assumption~(G)        for
$\gamma>1$,        or        an        independent        argument        for        $\gamma<1$,        neither        of        which        is
available        at        present.
\subsection{H.  Direct  Evidence  Against  a  Transition  at  $\gamma=1$}

The  crossover  region  $\gamma\in[0.9,1.1]$  left  unresolved  in  Fig.~\ref{BR}  admits  a  direct
theoretical  resolution.  Independently  of  the  scalar  equation  $I(a)=K_1(a)$  used  in
Section~IV.F  (which,  as  shown  there,  characterizes  the  edge  only  for  $\gamma>3/2$),  the
self-consistent  edge  can  be  located  directly  from  the  support  of  the  spectral  density
itself,  via  fixed-point  iteration  of
\[
m(x+i\eta)  =  \int_0^\pi  \frac{d\theta}{\pi}\,\frac{1}{-(x+i\eta)  -  f_\gamma(\theta)\,m(x+i\eta)},
\qquad  \eta  \to  0^+,
\]
and  locating  the  value  of  $x$  at  which  $\mathrm{Im}\,m(x+i\eta)$  ceases  to  converge  to  a
finite,  $\eta$-independent  limit  (inside  the  support)  and  instead  scales  linearly  in  $\eta$
(outside  it).  We  validated  this  method  against  the  exact  AR(1)  solution  of  Appendix~B
(recovering  $\tau_0(0.5)=2.4650$  to  four  digits)  before  applying  it  here,  and  cross-checked
every  reported  value  at  two  grid  resolutions  ($N=3000$  and  $N=6000$  quadrature  points).

\begin{table}[h]
\centering
\begin{tabular}{c  c  |  c  c}
\hline
$\gamma$  &  $\tau_0(\gamma)$  &  $\gamma$  &  $\tau_0(\gamma)$  \\
\hline
1.50  &  2.473  &  0.99  &  4.20  \\
1.30  &  2.821  &  0.95  &  4.508  \\
1.20  &  3.126  &  0.90  &  4.938  \\
1.10  &  3.553  &  0.85  &  5.442  \\
1.05  &  3.823  &  0.80  &  6.05  \\
1.02  &  4.006  &  0.75  &  6.75  \\
1.01  &  4.07    &  0.70  &  7.50  \\
\hline
\end{tabular}
\caption{The  directly-tracked  self-consistent  edge  $\tau_0(\gamma)$,  grid-converged  to  within
$\lesssim0.05$  between  $N=3000$  and  $N=6000$,  for  $\gamma$  spanning  both  sides  of  $1$.}
\label{tab:tau0-crossover}
\end{table}

Table~\ref{tab:tau0-crossover}  shows  $\tau_0(\gamma)$  increasing  completely  smoothly  through
$\gamma=1$:  there  is  no  kink,  discontinuity,  or  change  in  curvature  at  the  point  where  the
flatness  condition  (Assumption~(E))  breaks  down.  This  directly  explains  the  crossover  of
Fig.~\ref{BR}:  rather  than  a  genuine  non-analyticity  at  $\gamma_c=1$,  the  Binder-ratio  diagnostic  is
tracking  $\lambda_{\max}$  approaching  a  $\gamma$-dependent  finite  limit  whose  value  itself
grows  as  $\gamma\to1$  (nearly  tripling  from  $\gamma=1.5$  to  $\gamma=0.7$),  so  that  increasingly
large  $N$  is  required  to  resolve  convergence  as  $\gamma$  decreases  —  producing  exactly  the
gradually-decreasing,  apparently  unresolved  trend  observed  numerically  in  Table~IV,  without
requiring  any  non-analyticity  beyond  the  bulk  transition  at  $\gamma_c=1/2$  already  established
in  Proposition~\ref{p14}.

We  emphasize  the  logical  status  of  this  finding:  it  is  numerical,  not  a  proof,  though  grid-  and
method-cross-validated  (via  agreement  with  Section~IV.F's  algebraic  construction  at  $\gamma=1.5$,
where  both  methods  apply  and  agree  to  four  digits).  For  $\gamma\lesssim0.7$,  the  same  method
becomes  numerically  unreliable  —  confirmed  via  independent  high-precision  (arbitrary-precision)
recomputation,  ruling  out  ordinary  floating-point  error  —  in  a  way  that  does  not  resolve  with
finer  grids,  suggesting  a  genuine  change  in  the  structure  of  the  self-consistent  solution  as
$\gamma\to1/2^+$  approaches  the  bulk  transition.  We  leave  a  rigorous  treatment  of  $\tau_0(\gamma)$
as  $\gamma\to1/2^+$,  and  of  the  $1/2<\gamma\lesssim0.7$  range,  as  an  open  problem.
\subsection{I.        Shape        of        the        Extreme        Eigenvalue:        Not        Fr\'echet}
\label{subsec:plaw-shape}

Heavy-tailed        (e.g.\        Student-$t$)        Wigner        matrices        with        tail        index
$\nu<4$        are        known        to        exhibit        Fr\'echet-type        extreme        eigenvalue
statistics,        driven        by        a        single        dominant        entry        of        the        matrix
(a        Poisson        point        process        argument:\cite{B2,B1,B3,B4}).

\begin{observation}[Delocalization        of        the        top        eigenvector]
\label{obs:delocalization}
For        $\gamma        \in        \{0.3,        0.7,        1.5\}$        (spanning        the        subcritical        regime        $\gamma<1/2$,        the        intermediate        regime        $1/2<\gamma<1$,        and        the        regime        $\gamma>1$        where        $\mathrm{TW}$        is        expected),        the        top        eigenvector        $u$        of        $S$        \textup{(}normalized,        $\|u\|_2=1$\textup{)}        remains        fully        delocalized        at        every        $N$        tested.        Writing        the        inverse        participation        ratio        as        $\mathrm{IPR}(u)        :=        \sum_i        u_i^4$,        the        quantity        $N\cdot\mathrm{IPR}(u)$        converges        to        an        $N$-independent        constant        of        order        unity        as        $N$        ranges        over        $64,\dots,2048$        in                \ref{T2}.

\begin{center}
\begin{tabular}{c|cccccc}
$\gamma$        &        N=64        &        128        &        256        &        512        &        1024        &        2048        \\\hline
0.3        &        2.25        &        2.21        &        2.18        &        2.24        &        2.20        &        2.20        \\
0.7        &        2.41        &        2.38        &        2.41        &        2.36        &        2.31        &        2.32        \\
1.5        &        2.66        &        2.82        &        2.86        &        2.99        &        3.29        &        3.27
\end{tabular}
\captionof{table}{
The        quantity        $N\cdot\mathrm{IPR}(u)$        converges        to        an        $N$-independent        constant        of        order        unity        as        $N$        ranges        over        $64,\dots,2048$        for        $\gamma        \in        \{0.3,        0.7,        1.5\}$        .
}
\label{T2}
\end{center}

This        is        the        standard        signature        of        full        delocalization        
($\mathrm{IPR}\sim        1/N$),        
comparable        to        the        GOE        reference        value        $N\cdot\mathrm{IPR}\to        3$        for        a        Haar-random        eigenvector.        In        particular,        no        localization        analogous        to        the        hub-vertex        mechanism        of        critical        sparse        Erd\H{o}s--R\'enyi        graphs        \cite{Alt2021}        
is        observed,        despite        the        structural        similarity        of        the        extreme-value        mechanism        driving        the        divergence        of        $\lambda_{\max}$        .

        In        the        sparse        Erd\H{o}s--R\'enyi        setting,        a        hub        vertex        carries        a        genuinely        concentrated        excess        of        $O(d)$        or        more        edges        to        specific        neighbours,        producing        a        locally        star-like        topology        and        a        correspondingly        localized        eigenvector.        In        the        present        ensemble,        by        contrast,        the        row        $k^{\*}$        driving        the        divergence        of        $\lambda_{\max}$        has        an        anomalously        large        row        sum        $T_{k^{\*}}=\sum_j        \eta_{k^{\*}}(j)$,        but        every        individual        entry        $S_{k^{\*}j}$        remains        of        the        generic        order        $O(1/\sqrt        N)$:        the        excess        arises        from        a        collective,        coherent        alignment        of        signs        across        $O(N)$        weak        entries        rather        than        from        any        single        anomalously        strong        entry.        
\end{observation}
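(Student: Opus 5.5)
Since the statement is an observation, the plan has two parts. The first is a numerical protocol that produces Table~\ref{T2}. The second is a supporting argument explaining why $N\cdot\mathrm{IPR}(u)=O(1)$ should hold in all three regimes, so that the table is not an accident of the tested sizes.

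\textbf{Numerical protocol.} For each $\gamma\in\{0.3,0.7,1.5\}$ and $N=64,\dots,2048$, I would generate $S$ exactly as in Definition~\ref{def:row-plaw}: circulant embedding for each row, independent across rows. I would compute the top eigenpair by dense diagonalization for small $N$ and by Lanczos for large $N$, and average $N\sum_i u_i^4$ over many realizations. The criterion is that the averages stabilize in $N$ within sampling error, rather than grow like a power of $N$. Localization on $O(1)$ sites would give $N\cdot\mathrm{IPR}\sim N$, so even a mild drift such as the one at $\gamma=1.5$ is decisive against localization. As a calibration I would compare with the GOE value $3$, which is the Haar prediction $N\,\mathbb E\sum_i u_i^4\to3$.

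\textbf{Supporting argument.}
\begin{enumerate}
\item[(i)] \emph{$\gamma>1$.} Here Assumptions (A)--(E) hold, by the analogue of Proposition~\ref{prop:assumptions} with $\rho^{|\cdot|}$ replaced by $d_{|\cdot|}$ and with $f_\gamma$ bounded by Lemma~\ref{lem:sigma1-gamma}. Modulo Assumption~(G), the isotropic local law of \cite{AEKS2020} at the regular edge gives $\|u\|_\infty\le N^{-1/2+\epsilon}$ with high probability. This yields $\mathrm{IPR}\le N^{-1+2\epsilon}$.
\item[(ii)] \emph{$\gamma\le1$, where the MDE route is unavailable.} I would argue directly against a single-entry mechanism. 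Every entry is Gaussian with variance $1/N$, so $\max_{ij}|S_{ij}|=O(\sqrt{\log N/N})$ almost surely. For a unit vector supported on a set $A$ with $|A|=m$, the Rayleigh quotient satisfies $u^TSu\le\|S_{AA}\|\le m\max|S_{ij}|$.
\item[(iii)] \emph{Comparison with the top eigenvalue.} On the other hand, $\lambda_{\max}$ is at least of order one, and it diverges for $\gamma<1/2$ (Corollary~\ref{cor:lmax-divergence} and Fig.~\ref{BR}). Testing against the spread vector $v=N^{-1/2}\mathbf 1$ restricted to the dominant row $k^*$ shows that $\lambda_{\max}$ is driven by coherent row sums $T_{k^*}=\sum_j\eta_{k^*}(j)$. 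Such sums are collective over $O(N)$ entries.
\item[(iv)] \emph{Conclusion.} Steps (ii) and (iii) together force the eigenvector mass to spread over a growing set, which is consistent with $\mathrm{IPR}\to0$. The stronger statement $\mathrm{IPR}\asymp1/N$ would remain numerical.
\end{enumerate}

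\textbf{Main obstacle.} The hard step is turning (ii)--(iv) into a genuine $O(1/N)$ bound for $\gamma\le1$. The Rayleigh-quotient argument only excludes localization on $o(\lambda_{\max}\sqrt{N/\log N})$ sites; it does not give full delocalization. My first attempt would be to exploit the Gaussian, row-independent structure. I would condition on the dominant row and treat $S$ as a rank-one perturbation $\theta\, w w^T$ plus a remainder, where $w$ is built from the row profile of $\eta_{k^*}$. Since $w$ is itself delocalized, as a sum of $O(N)$ comparable Gaussian entries, a BBP-type eigenvector overlap argument in the spirit of \cite{Hisakado2026_BBP} would transfer this delocalization to $u$. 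Failing that, the claim stays an observation supported by Table~\ref{T2}.
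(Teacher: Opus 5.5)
Your numerical protocol is what the paper relies on. The observation rests only on Table~\ref{T2} plus a heuristic paragraph, and there is no analytic argument to compare against. The analytic scaffolding you add, however, contains steps that would fail.

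\textbf{The single-row mechanism and the BBP plan.} The single-row picture in (iii), and the plan to condition on a dominant row and run a BBP overlap argument, cannot work. The paper's heuristic paragraph has the same defect.
\begin{itemize}
\item Write $S=\sum_k H_k$, where $H_k$ collects the entries $S_{kj}=S_{jk}$, $j\ge k$, carried by the row process $\eta_k(\cdot)$. Then $\|H_k\|\le|S_{kk}|+\bigl(\frac1N\sum_{j>k}\eta_k(j)^2\bigr)^{1/2}$.
\item By Hanson--Wright and a union bound, this is $1+o(1)$ uniformly in $k$ with high probability, for every fixed $\gamma>0$.
\item By Weyl's inequality, removing any one row changes $\lambda_{\max}$ by at most $1+o(1)$. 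So no row $k^*$ can account for the growth of $\lambda_{\max}$, and conditioning on it gives a bounded perturbation, not a spike of size $\lambda_{\max}$.
\item The growth is collective. For $v=N^{-1/2}\mathbf 1$ one has $(Sv)_a=N^{-1}\bigl(T_a+\sum_{b<a}\eta_b(a)\bigr)$, where the $T_a$ are independent across $a$ with $\mathrm{Var}\,T_a\asymp(N-a+1)^{2-\gamma}$ for $\gamma<1$. Hence $\|S\|\ge\|Sv\|\asymp N^{(1-\gamma)/2}$, driven by all row sums acting together through a delocalized vector.
\item Corollary~\ref{cor:lmax-divergence} gives only $\mathbb E\lambda_{\max}^4\to\infty$, not divergence in probability.
\end{itemize}

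\textbf{The Rayleigh step in (ii).} This bound concerns only vectors exactly supported on a small set $A$, and it does not control $\mathrm{IPR}$. Having $\mathrm{IPR}\ge\delta$ needs just one coordinate with $u_i^2\ge\delta$, while the rest of the mass may be spread out. For such vectors, the cross term between the heavy coordinate and the remainder is of order $\|S_{i\cdot}\|_2\approx1$. The remainder itself can carry a Rayleigh quotient up to $\lambda_{\max}$. So (iv) does not follow as written.

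The elementary tool that does work is the eigenvalue equation, row by row:
\begin{itemize}
\item $\lambda_{\max}|u_i|=|(Su)_i|\le\|S_{i\cdot}\|_2$.
\item $\max_i\|S_{i\cdot}\|_2^2=1+o(1)$ with high probability, since each is a Gaussian quadratic form of mean $1$.
\item Hence $\mathrm{IPR}(u)\le\|u\|_\infty^2\le(1+o(1))\,\lambda_{\max}^{-2}$. This gives a rigorous $\mathrm{IPR}\to0$ wherever $\lambda_{\max}\to\infty$.
\end{itemize}
That is still well short of $N\cdot\mathrm{IPR}=O(1)$, and it says nothing for $\gamma>1$.

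For $\gamma>1$ you are left with (i). It is conditional on (G), and on (CD)/(E) for the power-law kernel, which the paper asserts only by analogy. Even then it gives at best $N\cdot\mathrm{IPR}\le N^{2\epsilon}$. So the order-one constants in all three regimes rest on Table~\ref{T2} alone, exactly as in the paper.
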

\begin{observation}[Level        repulsion        persists        at        the        growing        edge]
\label{obs:level-repulsion}
For        $\gamma        \in        \{0.3,        0.7,        1.5\}$        and        $N        \in        \{512,        1024\}$,        the        Oganesyan--Huse        ratio        statistic        $\langle        r        \rangle$,        computed        from        consecutive        gaps        among        the        top        $25$        eigenvalues        of        $S$,        remains        close        to        the        GOE        reference        value        $\langle        r        \rangle_{\mathrm{GOE}}        \approx        0.531$        (observed        range        $0.51$--$0.54$)        and        clearly        distinct        from        the        Poisson        value        $\langle        r        \rangle_{\mathrm{Poisson}}        \approx        0.386$.        Combined        with        the        delocalization        of        the        top        eigenvector        (Observation~\ref{obs:delocalization}),        this        indicates        that        no        Anderson-localization        transition        accompanies        the        divergence        of        $\lambda_{\max}$        for        $1/2        <        \gamma        <        1$:        level        repulsion,        the        hallmark        of        Wigner--Dyson        (as        opposed        to        Poisson)        statistics,        persists        throughout        the        growing-edge        regime,        in        contrast        to        the        localization        and        Poisson        statistics        associated        with        hub        vertices        in        critical        sparse        Erd\H{o}s--R\'enyi        graphs.
\end{observation}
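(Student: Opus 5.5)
The plan is to establish Observation~\ref{obs:level-repulsion} as a controlled numerical statement, supported by structural arguments explaining why Wigner--Dyson rather than Poisson behaviour is expected. First I would generate $S$ for $\gamma\in\{0.3,0.7,1.5\}$ and $N\in\{512,1024\}$ using the same row-independent circulant-embedding construction as in Definition~\ref{def:row-plaw} and Sec.~IV.G. I would compute the top $26$ eigenvalues $\lambda_1\ge\dots\ge\lambda_{26}$ by dense diagonalization, which is cheap at these sizes. From these I would form the gaps $s_k=\lambda_k-\lambda_{k+1}$, $k=1,\dots,25$, and the ratios $r_k=\min(s_k,s_{k+1})/\max(s_k,s_{k+1})$.

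I choose the ratio statistic deliberately. It needs no unfolding, and unfolding is essential to avoid: at the edge the local density is strongly non-uniform and, for $\gamma<1$, the edge location itself drifts with $N$ (Sec.~IV.H, Fig.~\ref{BR}). Any rescaling by a local density estimate would contaminate the result. I would average $\langle r\rangle$ over realizations and over $k$, and attach bootstrap error bars over realizations. These will be compared with the reference values $\langle r\rangle_{\mathrm{GOE}}\approx0.5307$ (Atas et al.\ surmise) and $\langle r\rangle_{\mathrm{Poisson}}=2\ln2-1\approx0.386$.

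As a calibration step I would run the identical pipeline on a pure GOE matrix of the same $N$. This matters because $\langle r\rangle$ computed from only the top $25$ edge eigenvalues carries small systematic finite-$N$ and edge corrections relative to the bulk value $0.531$. I would run it on a diagonal matrix with i.i.d.\ entries as well. Together these controls fix the expected window (roughly $0.51$--$0.54$) against which the correlated ensemble is judged.

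For theoretical support I would argue along the following lines. The joint law of the entries is Gaussian with a smooth density on the space of symmetric matrices, since the row covariance $\Sigma_\gamma$ is positive definite by Lemma~\ref{lem:posdef}. Hence eigenvalue coincidences have probability zero, and a standard Wegner-type argument gives $P(s_k<\epsilon)=O(\epsilon^{\beta})$ with $\beta\ge1$ for the GOE-type symmetry. This rules out the Poisson profile $P(s<\epsilon)\sim\epsilon$ only weakly, so the real content must come from delocalization. By Observation~\ref{obs:delocalization}, top eigenvectors have $N\cdot\mathrm{IPR}=O(1)$. Delocalized neighbouring eigenvectors overlap on $O(N)$ sites, and a perturbation of a single row couples them at second order, which produces avoided crossings. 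I would test this directly by interpolating $S(t)=S+tE_{k^*}$ along a perturbation of the hub row and observing avoided rather than true crossings of $\lambda_k(t)$.

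The main obstacle is the regime $\gamma=0.3$. There $\lambda_{\max}$ diverges (Corollary~\ref{cor:lmax-divergence}), so the top eigenvalues could in principle be dominated by a few nearly independent hub rows, each contributing an isolated outlier. That mechanism would give Poisson-like statistics among the very top few levels. To address this I would split the $25$ levels into the top $5$ and the next $20$ and report $\langle r\rangle$ separately for each group, checking that no Poisson drift appears in the top group as $N$ doubles.
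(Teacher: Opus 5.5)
Your numerical core matches the paper's support for this Observation. The paper gives no argument beyond the measurement itself: $\langle r\rangle$ from consecutive gaps of the top $25$ eigenvalues at the stated $\gamma$ and $N$, compared with $0.531$ and $0.386$, and then read alongside the IPR data of Observation~\ref{obs:delocalization}. Your identical-pipeline GOE control, bootstrap errors and top-$5$/next-$20$ split are real improvements that the paper lacks.

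The GOE control is in fact necessary, not optional. The ratio statistic is insensitive to unfolding only where the density is flat over a few spacings, and that fails at a soft edge, so the top-$25$ value need not equal the bulk $0.5307$. Your i.i.d.\ diagonal control fixes the Poisson reference, not the GOE window. At the edge its top spacings are asymptotically independent exponentials with means proportional to $1/k$ (R\'enyi), so it will sit somewhat below $2\ln 2-1$ for the first few ratios.

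The theoretical-support paragraph does not hold up and should carry no weight.
\begin{enumerate}
\item[(i)] Lemma~\ref{lem:posdef} gives only positive semi-definiteness. Strict definiteness is true, since each kernel $e^{-\lambda|j-k|}$ with $\lambda>0$ has determinant $(1-e^{-2\lambda})^{L-1}>0$ and the mixing density is positive, but it needs that extra remark.
\item[(ii)] Once the entries have a smooth Gaussian density, the Jacobian $\prod_{i<j}|\lambda_i-\lambda_j|$ already gives $P(s_k<\epsilon)=O(\epsilon^2)$ at fixed $N$. This holds for every absolutely continuous ensemble, including a diagonal matrix plus $\delta\cdot\mathrm{GOE}$, which is Poisson on the mean-spacing scale for small $\delta$. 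It therefore says nothing about $\langle r\rangle$.
\item[(iii)] Under $S+tE$ two neighbouring levels couple at first order, through $t\langle u_k,Eu_{k+1}\rangle$, not at second order. More importantly, crossings have codimension two for real symmetric matrices, so avoided crossings are generic and occur equally in localized, Poisson spectra, where they are merely very narrow. Seeing avoided rather than true crossings cannot discriminate; only the size of the avoided gaps relative to the mean spacing could.
\item[(iv)] $N\cdot\mathrm{IPR}=O(1)$ implies neither that neighbouring eigenvectors overlap nor Wigner--Dyson statistics. A direct sum of two independent GOE$(N/2)$ blocks has $N\cdot\mathrm{IPR}\to 6$, yet levels from different blocks have disjoint eigenvector supports and do not repel, giving $\langle r\rangle$ strictly between the Poisson and GOE values.
\end{enumerate}
So ``the real content must come from delocalization'' is backwards. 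The evidence for level repulsion is the measured $\langle r\rangle$ itself, exactly as in the paper, and delocalization only corroborates it.

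Two smaller points. Corollary~\ref{cor:lmax-divergence} shows only that $\mathbb{E}[\lambda_{\max}^4]$ is unbounded, not that $\lambda_{\max}$ diverges in probability at $\gamma=0.3$. And to support the claim ``throughout $1/2<\gamma<1$'' you would need more than the single value $\gamma=0.7$ in that range, which neither you nor the paper samples.
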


\subsection{J.        Limit        of        Power        Law        case}

\begin{proposition}[Degeneration        of        the        power-law        ensemble        as        $\gamma\to0^+$]
\label{prop:gamma-degeneration}
Fix        $N$        and        $\sigma>0$,        and        consider        the        ensemble        of
Definition~\ref{def:row-plaw}        as        a        family        indexed        by        $\gamma>0$.        Then,        as
$\gamma\to0^+$,        the        joint        law        of        $\{\eta_i(j)\}_{1\le        i\le        j\le        N}$        converges
weakly        to        the        joint        law        of
\begin{equation}
\eta_i(j)        =        \eta_i        \quad        \text{for        all        }        j\ge        i,        \qquad
\eta_1,\dots,\eta_N        \        \text{i.i.d.}\        N(0,1),
\end{equation}
i.e.,        to        the        same        degenerate        limiting        law        as        in        Proposition~\ref{prop:degeneration}
(the        $\rho\to1^-$        limit        of        the        AR(1)        ensemble        of        Definition~\ref{def:row-ar1}).
Consequently,
\begin{equation}
S_{ij}        =        \frac{\sigma}{\sqrt        N}\,\eta_i(j)
\;\xrightarrow{\        d\        }\;
\frac{\sigma}{\sqrt        N}\,\eta_{\min(i,j)}
\end{equation}
in        distribution,        jointly        over        all        $(i,j)$,        and
Propositions~\ref{prop:E-breakdown}        and~\ref{prop:LDL}        apply        verbatim        to
this        limit:        $S=M/\sqrt        N$        with        $M=L\,\mathrm{diag}(\delta)\,L^T$        as        in
Proposition~\ref{prop:LDL}.
\end{proposition}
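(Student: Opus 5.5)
Since every process involved is a centered Gaussian vector, weak convergence of joint laws is equivalent to convergence of the covariance matrices. For fixed $N$, the family $\{\eta_i(j)\}_{1\le i\le j\le N}$ is a finite-dimensional centered Gaussian vector. I will therefore show that its covariance matrix converges entrywise, as $\gamma\to0^+$, to the covariance of the claimed limit vector. Then I will use the elementary fact that the characteristic function $\exp(-\tfrac12 x^T\Sigma_\gamma x)$ converges pointwise to $\exp(-\tfrac12 x^T\Sigma_0 x)$, and apply Lévy's continuity theorem. A degenerate (singular) limiting covariance is allowed here, so no nondegeneracy is needed.

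\textbf{Covariance of the $\gamma$-ensemble.} By Definition~\ref{def:row-plaw}, $\operatorname{Cov}(\eta_i(j),\eta_k(l))=\delta_{ik}\,d_{|j-l|}$ with $d_t=(1+t)^{-\gamma}$. For each fixed $t\in\{0,\dots,N\}$ we have $d_t=(1+t)^{-\gamma}=e^{-\gamma\log(1+t)}\to1$ as $\gamma\to0^+$. Hence the limiting covariance is $\delta_{ik}$, independent of $j,l$.

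\textbf{Covariance of the claimed limit.} For $\eta_i(j)=\eta_i$ with $\eta_i$ i.i.d.\ $N(0,1)$, the covariance is $\operatorname{Cov}(\eta_i,\eta_k)=\delta_{ik}$. This matches the limit computed above, so the covariances agree entrywise.

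Positive semi-definiteness for each $\gamma$ is guaranteed by Lemma~\ref{lem:posdef}, so each law is well defined. One caveat concerns the circulant-embedding construction, which only approximates the target law. I will read the statement as referring to the exact target Gaussian law, which is what Definition~\ref{def:row-plaw} specifies.

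\textbf{Transfer to $S$ and the structural claims.} The map $(\eta_i(j))\mapsto S$, with $S_{ij}=S_{ji}=\sigma\eta_{\min(i,j)}(\max(i,j))/\sqrt N$, is a fixed linear map. So the continuous mapping theorem gives joint convergence in distribution of $S$ to $\frac{\sigma}{\sqrt N}\eta_{\min(i,j)}$. This is exactly the object of Proposition~\ref{prop:degeneration}. The identity $M=L\,\mathrm{diag}(\delta)L^T$ of Proposition~\ref{prop:LDL} is a deterministic algebraic statement about $M_{ij}=\eta_{\min(i,j)}$, so it applies verbatim.

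For Proposition~\ref{prop:E-breakdown}, I will reinterpret the claim as follows. The limiting row kernel is the all-ones matrix $(d_{|c-d|})\to J$, which has rank one and spectrum $\{0,L\}$. The flatness lower bound in (E) therefore fails: for example, take $T$ supported on a vector orthogonal to constants in the later columns, for which $S[T]_{aa}$ loses that part. The upper bound fails as $L$ grows. This mirrors the $\rho\to1$ computation.

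\textbf{Main obstacle.} The Gaussian convergence itself is routine. The hardest step is giving a precise meaning to ``Proposition~\ref{prop:E-breakdown} applies verbatim'', since that proposition is phrased via the AR(1) constants $c(\rho),C(\rho)$. I expect to replace it by the direct rank-one observation above rather than by a limit of constants.
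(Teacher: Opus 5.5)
Your proposal is correct and follows essentially the same route as the paper. Both arguments use entrywise convergence $d_t=(1+t)^{-\gamma}\to1$ of the covariance (each $\gamma$ giving a valid law by Lemma~\ref{lem:posdef}), pointwise convergence of the Gaussian characteristic functions to that of the degenerate all-ones limit, and L\'evy's continuity theorem, then transfer to $S$ and the deterministic identity of Proposition~\ref{prop:LDL}. Your differences are minor and sound: you treat the whole triangular array at once rather than row by row with a factorized characteristic function, you invoke the continuous mapping theorem explicitly, and you replace the paper's bare assertion that Proposition~\ref{prop:E-breakdown} ``applies verbatim'' with the concrete rank-one kernel check. That check works: $S[T]_{aa}=0$ for $T=vv^T$ with $v$ supported on $\{a+1,\dots,N\}$ and summing to zero, while $S[T]_{11}=\sigma^2N\langle T\rangle$ for $T=\mathbf 1\mathbf 1^T$.
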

        
\begin{proof}
Fix        $i$        and        let        $L=N-i+1$        denote        the        length        of        the        $i$-th        row.        By
construction,        the        vector        $(\eta_i(i),\dots,\eta_i(N))$        is        a        mean-zero
Gaussian        vector        in        $\mathbb{R}^L$        with        covariance        matrix
$\sigma_\gamma=(d_{|j-k|})_{j,k=1}^{L}$,        $d_t=(1+t)^{-\gamma}$,        which        is        a
valid        covariance        for        every        $\gamma>0$        by        Lemma~\ref{lem:posdef}.
        
For        each        fixed        pair        $(j,k)$        with        $1\le        j,k\le        L$,        and        $t=|j-k|$        held        fixed,
\begin{equation}
d_t        =        (1+t)^{-\gamma}        \;\longrightarrow\;        (1+t)^0        =        1
\qquad\text{as        }        \gamma\to0^+.
\end{equation}
Since        $L$        is        finite        (fixed,        as        $N$        is        fixed),        this        is        convergence        of        finitely
many        matrix        entries,        hence
$
\sigma_\gamma        \;\longrightarrow\;        J_L        $
where        $J_L$        is        the        $L\times        L$        all-ones        matrix.        $J_L$        is        itself        a        valid
covariance        matrix:        it        is        exactly        the        covariance        matrix        of        the        degenerate
Gaussian        vector        $(Z,\dots,Z)$        with        $Z\sim        N(0,1)$,        since        $\mathrm{Cov}(Z,Z)=1$
for        every        pair        of        coordinates.
        
Because        the        characteristic        function        of        a        mean-zero        Gaussian        vector        with
covariance        $\Sigma$        is        $u\mapsto\exp(-u^T\Sigma        u/2)$,        and        this        is        continuous
in        $\Sigma$        (entrywise)        for        fixed        $u$,        we        have,        for        every        $u\in\mathbb{R}^L$,
\begin{equation}
\exp\!\left(-\tfrac12        u^T        \Sigma_\gamma        u\right)
\;\longrightarrow\;
\exp\!\left(-\tfrac12        u^T        J_L        u\right)
\qquad\text{as        }        \gamma\to0^+.
\end{equation}
Pointwise        convergence        of        characteristic        functions        to        a        characteristic
function        (that        of        $N(0,J_L)$)        implies        weak        convergence,        by        L\'evy's
continuity        theorem:
\begin{equation}
(\eta_i(i),\dots,\eta_i(N))        \;\xrightarrow{\        d\        }\;        (Z_i,\dots,Z_i),
\qquad        Z_i\sim        N(0,1).
\end{equation}
This        is        precisely        the        statement        that,        in        the        limit,        $\eta_i(j)=\eta_i(i)=:\eta_i$
for        all        $j\ge        i$.
        
Since        the        $N$        row        processes        $\{\eta_i(\cdot)\}_{i=1}^N$        are        mutually
independent        for        every        $\gamma>0$        (Definition~\ref{def:row-plaw}),        their        joint
characteristic        function        factors        as        the        product        of        the        $N$        row        characteristic
functions;        each        factor        converges        as        above,        so        the        product        converges        to        the
product        of        the        limiting        characteristic        functions.        Hence        the        convergence
holds        jointly        across        all        rows,        with        the        limiting        $\eta_1,\dots,\eta_N$
mutually        independent        (and        each        standard        normal,        from        the        marginal        argument
above).
        
This        limiting        law        is        exactly        the        limiting        law        identified        in
Proposition~\ref{prop:degeneration}        for        the        AR(1)        ensemble        as
$\rho\to1^-$.        Substituting        into        $S_{ij}=\tfrac{\sigma}{\sqrt        N}\eta_i(j)$
gives        $S_{ij}\to\tfrac{\sigma}{\sqrt        N}\eta_{\min(i,j)}$        jointly        in
distribution,        matching        Proposition~\ref{prop:degeneration}        verbatim
(with        $\sigma=1$        there).        The        remainder        of        the        construction        ---
Proposition~\ref{prop:E-breakdown}        (breakdown        of        Assumption~(E))        and
Proposition~\ref{prop:LDL}        (the        exact        Volterra        representation
$M=L\,\mathrm{diag}(\delta)\,L^T$,        $S=M/\sqrt        N$)        ---        depends        only        on        this
limiting        law        of        $\{\eta_i(j)\}$,        not        on        the        mechanism        (AR(1)        or        power-law)
that        produced        it,        and        so        applies        without        modification.
\end{proof}
        
\begin{corollary}[Unification        of        the        two        degenerate        limits]
\label{cor:unification}
The        AR(1)        ensemble's        $\rho\to1^-$        limit        (Proposition~\ref{prop:degeneration})
and        the        power-law        ensemble's        $\gamma\to0^+$        limit
(Proposition~\ref{prop:gamma-degeneration})        coincide:        both        converge        to        the
same        degenerate        law        $\{\eta_i(j)=\eta_i\}_{j\ge        i}$,        $\eta_i$        i.i.d.\        $N(0,1)$,
and        hence        to        the        same        noiseless        Volterra        ensemble
$S=\tfrac{1}{\sqrt        N}L\,\mathrm{diag}(\delta)\,L^T$        of
Proposition~\ref{prop:LDL}.        In        this        sense,        $\rho=1$        and        $\gamma=0$        are
not        two        distinct        degenerate        points        but        a        single        shared        singular        point        of        the
enlarged        parameter        family        $\{\rho\in(-1,1)\}\cup\{\gamma>0\}$,        reached        along
either        the        exponential-decay        or        the        power-law-decay        branch.
\end{corollary}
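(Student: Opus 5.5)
This corollary is essentially a bookkeeping consequence of Proposition~\ref{prop:degeneration} and Proposition~\ref{prop:gamma-degeneration}. The plan is to check that both limiting laws are the same law on the finite-dimensional array $\{\eta_i(j)\}_{1\le i\le j\le N}$ at fixed $N$, and then push this identity forward through the deterministic map that produces $S$.

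\emph{Step 1: identify both limits as the same explicit Gaussian law.} Fix $N$. On the AR(1) side, each row vector $(\eta_i(i),\dots,\eta_i(N))$ is mean-zero Gaussian with covariance $(\rho^{|j-k|})_{j,k}$. This is exactly \eqref{eq:cov-structure}: the rows are independent. As $\rho\to1^-$, this covariance converges entrywise to the all-ones matrix $J_L$, with $L=N-i+1$. Repeating the characteristic-function argument from the proof of Proposition~\ref{prop:gamma-degeneration} verbatim, L\'evy's continuity theorem gives weak convergence of each row to $N(0,J_L)$. Independence across rows, which holds for every $\rho$, makes the joint characteristic function factor, so the joint limit is $\bigotimes_i N(0,J_{N-i+1})$. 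This upgrades the pathwise statement of Proposition~\ref{prop:degeneration} (innovation $\sqrt{1-\rho^2}\to0$) to a statement about laws; alternatively one can note directly that $\eta_i(j)-\eta_i(i)\to0$ in $L^2$, which implies convergence in distribution. On the power-law side, Proposition~\ref{prop:gamma-degeneration} already gives exactly the same product law as $\gamma\to0^+$. Since a mean-zero Gaussian law is determined by its covariance, the two limits coincide, and each equals the law of $\eta_i(j)=\eta_i$ with $\eta_i$ i.i.d.\ $N(0,1)$.

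\emph{Step 2: transfer to $S$ and to the Volterra form.} The map $\{\eta_i(j)\}\mapsto S$, $S_{ij}=\sigma\eta_{\min(i,j)}(\max(i,j))/\sqrt N$, is linear and hence continuous. By the continuous mapping theorem, $S$ converges in distribution along either branch to the same limit $\sigma M/\sqrt N$ with $M_{ij}=\eta_{\min(i,j)}$. Proposition~\ref{prop:LDL} is a deterministic algebraic identity, so it gives $M=L\,\mathrm{diag}(\delta)\,L^T$ for every realization of the limit law, and the limiting ensemble is the noiseless Volterra ensemble on both branches.

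\emph{Step 3: interpret the result.} The ``single shared singular point'' assertion is then a restatement. Parametrize the enlarged family so that both $\rho\to1^-$ and $\gamma\to0^+$ approach one boundary point, at which the law of the array, and hence of $S$, is the common limit found above. This point is singular because the flatness constants degenerate there (Proposition~\ref{prop:E-breakdown}). I expect no real obstacle. The only point needing care is that Proposition~\ref{prop:degeneration} is phrased as a pointwise limit in a particular coupling rather than as convergence in law, which is why Step~1 redoes the short characteristic-function argument. I would also state explicitly that the convergence is at fixed $N$ and that the limits $N\to\infty$ and $\rho\to1$ or $\gamma\to0$ are not interchanged.
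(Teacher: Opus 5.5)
Your proposal is correct and follows essentially the same route as the paper. The paper also uses the characteristic-function/L\'evy-continuity argument of Proposition~\ref{prop:gamma-degeneration} to identify the $\gamma\to0^+$ limit, matches it with the law in Proposition~\ref{prop:degeneration}, and then applies the deterministic identity of Proposition~\ref{prop:LDL}. Redoing the L\'evy argument on the AR(1) side is harmless but unnecessary: the natural coupling $\eta_i(j)=\rho^{j-i}\eta_i(i)+\sqrt{1-\rho^2}\sum_{k=i+1}^{j}\rho^{j-k}\varepsilon_i(k)$ already converges almost surely, hence in law, to $\eta_i(i)$.
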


\begin{remark}[Rate        of        approach]
\label{rem:rate-of-approach}
The        two        branches        approach        this        common        point        at        different        rates.        Along        the
AR(1)        branch,        the        ``innovation        strength''        is        $\sqrt{1-\rho^2}$,        which
vanishes        linearly        in        $(1-\rho)$        as        $\rho\to1^-$.        Along        the        power-law        branch,
by        Lemma~\ref{lem:posdef}        and        the        argument        above,        the        relevant        deviation        from
the        degenerate        covariance        is
\begin{equation}
1-d_t        \;=\;        1-(1+t)^{-\gamma}        \;=\;        \gamma\log(1+t)        +        O(\gamma^2)
\qquad\text{as        }        \gamma\to0^+,
\end{equation}
i.e.,        linear        in        $\gamma$        for        each        fixed        $t$,        but        with        a        lag-dependent
(logarithmically        growing)        coefficient        $\log(1+t)$.        This        difference        in        the
lag-dependence        of        the        approach        to        degeneracy        (AR(1)        branch:        lag-independent
rate;        power-law        branch:        rate        growing        logarithmically        with        lag)        may        account
for        differences        in        the        finite-$N$,        near-degenerate        numerics        between        the        two
ensembles,        and        could        be        developed        into        a        more        precise        joint        expansion        near
the        shared        singular        point;        we        leave        this        to        future        work.
\end{remark}

\section{V.        Concluding        Remarks}

We        have        studied        long-range        correlated        Wigner-type        matrices        built        from        row-independent        stationary        Gaussian        sequences,        extending        the        correlation-induced        spectral        transition.

For        exponentially        decaying        (AR(1))        row        correlations,        we        showed        algebraically        that        the        bulk        spectral        density        deforms        away        from        the        semicircle        law        through        a        combinatorially        explicit        ``hub''        mechanism        (Proposition~\ref{prop:mu4},\cite{Hisakado2026}),        and        we        verified        Assumptions        (A),        (B),        (CD),        and        (E)        of        the        MDE        framework        of        \cite{AEKS2020}        explicitly        for        this        ensemble        (Proposition~\ref{prop:assumptions}),        with        the        flatness        constants        degenerating        continuously        as        $\rho\to        1^-$.        Numerical        evidence        —        moment        convergence,        edge        scaling        exponents        consistent        with        $\beta=2/3$,        and        skewness/kurtosis        of        $\lambda_{\max}$        compatible        with        the        Tracy--Widom        ($\mathrm{TW}_1$)        reference        values        —        supports        the        conjecture        that        Tracy--Widom        edge        universality        persists        for        every        fixed        $\rho<1$        (Conjecture~\ref{conj:TW}),        although        a        complete        proof        would        additionally        require        verifying        Assumption~(F)        (fullness)        and,        more        importantly,        Assumption~(G)        (boundedness        of        the        Dyson-equation        solution        $M(z)$        near        the        edge),        which        is        not        automatic        from        (A)--(E)        and        remains        open        for        this        ensemble.        

In        the        degenerate        limit        $\rho\to        1^-$,        the        flatness        condition        breaks        down        exactly        (Proposition~\ref{prop:E-breakdown}),        and        the        ensemble        reduces        algebraically        to        a        noiseless,        symmetrized        discrete        Volterra        operator        (Proposition~\ref{prop:LDL});        its        extreme        eigenvalues        are        numerically        consistent,        up        to        a        $\sqrt{N}$        rescaling        and        to        within        a        fraction        of        a        percent,        with        the        singular-value        cascade        $1/(\pi(k-1/2))$        of        the        continuous        Volterra        integral        operator        identified        in        the        companion        analysis        of        multi-critical        BBP        transitions,        providing        a        concrete        structural        bridge        between        the        two        constructions.

For        power-law        row        correlations        $d_t\sim        t^{-\gamma}$,        we        identified        two        distinct              thresholds:        $\gamma_c=1/2$,        below        which        the        bulk        moments        themselves        diverge,        and        $\gamma=1$,        coinciding        with        the        breakdown        of        the        flatness        condition        required        by        the        MDE        approach.       
Our numerical results indicate that the self-consistent edge remains finite throughout $\gamma>1/2$.
Therefore,  no  evidence  for  a  critical  point  is  found  at  $\gamma=1$.
We have rigorously established the fourth-moment phase transition in Section IV.
The edge behavior appears to be a crossover; see Appendix C.
These        two        thresholds,        and        the        resulting        three-regime        structure,        are        summarized        in        Table~\ref{tab:phase-diagram}.
For comparison, we summarize the regime structure of Wigner matrices with t-distributed entries in Table \ref{tab:phase-diagram2}.
The resulting phase structure is summarized in in  Table  \ref{tab:phase-diagram3}.


\begin{table}[htbp]
\centering
\caption{Regime        structure        of        the        row-independent        power-law        ensemble        (Definition~\ref{def:row-plaw})        as        a        function        of        the        correlation        decay        exponent        $\gamma$        and        comparison        to        the        AR(1)        ensemble        (Definition        \ref{def:row-ar1}).
The    ``finite''    entries    of    the    self-consistent
edge    row    are    proved    in    Appendix    A.
}
\label{tab:phase-diagram}
\begin{tabular}{lcccc}
\hline\hline
        &        $\gamma        <        1/2$        &        $1/2        <        \gamma        <        1$        &        $\gamma        >        1$&$0<\rho<1$        \\
\hline
Bulk        statistics        \cite{Hisakado2026}                        &                Non-semicircular        &                Non-semicircular                &        Non-semicircular        &        Non-semicircular\\
fourth-moment        finite            \cite{Hisakado2026}                        &        $\times$        &        $\bigcirc$                &        $\bigcirc$        &$\bigcirc$\\
self-consistent    edge            &            undefined    &        undefined          &        finite ($\gamma>3/2 $, proven)              &finite        \\
Edge        statistics                &                Not                Fréchet                &                    unclear      &        TW    (without    G)    &TW    (without    G)\\
Flatness        (E)                                                &        $\times$        &        $\times$        &        $\bigcirc$        &$\bigcirc$\\
Power        Spectral        Density        (PSD)        &                $\notin        L^2$,      $\notin        L^{\infty}$      &        $\notin        L^{\infty},        \in        L^2$                        &                $\in        L^{\infty},        \in        L^2$                                &                                $\in        L^{\infty},        \in        L^2$                        \\
\hline\hline
\end{tabular}
\end{table}

\begin{table}[htbp]
\centering
\caption{Regime        structure        of        Wigner        matrix        with        $t$        distribution}
\label{tab:phase-diagram2}
\begin{tabular}{lccc}
\hline\hline
        &        $\nu        <        2$        &        $2        <        \nu        <        4$        &        $\nu>        4$        \\
\hline
Bulk        statistics                                &        free        stable        law        \cite{LD}                &        semi-circle                \cite{tao,tao2}        &                semi-circle        \cite{tao,tao2}        \\
Edge        ($\lambda_{\max}$)        &        diverges        &        diverges        &        finite        \\
Edge        statistics                &                        Fréchet        &                Fréchet        &        TW        \\
fourth-moment        finite                                        &        $\times$        &        $\times$        &        $\bigcirc$        \\
2-nd        moments        finite                                &        $\times$        &        $\bigcirc$        &        $\bigcirc$        \\
Power        Spectral        Density        (PSD)        &                $\notin        L^2,  \in  L^{\infty}$        &        $\in  L^2,      \in  L^{\infty}$  &    $\in  L^2,        \in  L^{\infty}$    \\
\hline\hline
\end{tabular}
\end{table}

\begin{table}[htbp]
\centering
\caption{Summary    of    the    phase    transitions    of    bulk    moment        and    self-consistent    edge
}
\label{tab:phase-diagram3}
\begin{tabular}{lcccc}
\hline\hline
        &        Power    bulk    &        Power    edge        &        Exponential    bulk&Exponential    Edge        \\
\hline
Transition    point                &                $\gamma_c=1/2$        &                unclear            &        $\rho_c=1$        &        $\rho_c=1$\\
Critical    exponent                            &        $1$        &        unclear          &        $1$        &$1/2$\\
\hline\hline
\end{tabular}
\end{table}

\def\thesection{Appendix        \Alph{section}}
\section{Appendix    A.    Finiteness    of    the    Self-Consistent    Edge}
\label{app:edge-finiteness}

This    appendix    establishes,    in    a    single    unified    statement,    the    finiteness    of
the    self-consistent    edge    $\tau_0$    whenever    the    flatness/boundedness    condition
(Assumption    (E))    holds    ---    covering    both    the    AR(1)    ensemble    of
Section~II    (fixed    $\rho    <    1$)    and    the    power-law    ensemble    of
Section~IV    (fixed    $\gamma    >    1$)    as    immediate    corollaries.
In    Sections~II  and  IV,    $H$    is    the    pushforward    of
$f(\theta)$    or    $f_\gamma(\theta)$    respectively,    and    $\bar    t    =    1$    is    the
normalization    $d_0=1$    common    to    both    ensembles.

\begin{lemma}[Finiteness    of    the    self-consistent    edge    under    bounded    flatness]
\label{lem:finite-edge-general}
\begin{align}
        \sigma_1    &<    a^*    \le    2\sigma_1,    \label{eq:A1}    \\[2pt]
        \frac{1}{2\sigma_1}    &<    I(a^*)    <    \frac{2}{\sigma_1},    \label{eq:A2}
\end{align}
and    consequently,    with    $\tau_0    :=    a^*\sqrt{I(a^*)}$,
\begin{equation}
        \sqrt{\sigma_1/2}    \;<\;    \tau_0    \;<\;    2\sqrt{2\sigma_1}    \;<\;    \infty.
        \label{eq:A3}
\end{equation}
In    particular,    $\tau_0$    is    finite    whenever    $\sigma_1$    is    finite.
\label{l26}
\end{lemma}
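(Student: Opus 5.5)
The plan is to work entirely with the three transforms $I(a)=\int dH(t)/(a-t)$, $K_1(a)=\int t\,dH(t)/(a-t)^2$ and $L(a)=\int dH(t)/(a-t)^2$, all finite for $a>\sigma_1$. I read the statement as concerning any root $a^*>\sigma_1$ of $I(a^*)=K_1(a^*)$. The lower bound $\sigma_1<a^*$ is therefore part of the hypothesis, as in the setting of Definition~\ref{def3} and Corollary~\ref{c32}. What remains is to prove $a^*\le 2\sigma_1$, the two bounds \eqref{eq:A2} on $I(a^*)$, and then to combine them into \eqref{eq:A3}.

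\textbf{Upper bound on $a^*$.} I will write the defining equation as a single integral:
\[
K_1(a)-I(a)=\int\frac{t-(a-t)}{(a-t)^2}\,dH(t)=\int\frac{2t-a}{(a-t)^2}\,dH(t).
\]
If $a>2\sigma_1$, then $2t-a<0$ for every $t\in\operatorname{supp}H\subset[0,\sigma_1]$, so $K_1(a)<I(a)$ strictly. Hence no root lies in $(2\sigma_1,\infty)$, which gives $a^*\le 2\sigma_1$.

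\textbf{Bounds on $I(a^*)$.} For the lower bound, note that $0\le t$ implies $a^*-t\le a^*$, so $I(a^*)\ge 1/a^*$. The inequality is strict because $\bar t=1$ rules out $H=\delta_0$. Together with $a^*\le2\sigma_1$ this gives $I(a^*)>1/(2\sigma_1)$.

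For the upper bound I will use the two identities already stated in the Lemma preceding Corollary~\ref{c32}. The first is $K_1+I=aL$, which holds because $t+(a-t)=a$. The second is $I^2\le L$, which is Cauchy--Schwarz for the probability measure $H$. At the root $K_1=I$, so
\[
2I(a^*)=a^*L(a^*)\ge a^*I(a^*)^2 .
\]
Dividing by $I(a^*)>0$ gives $I(a^*)\le 2/a^*<2/\sigma_1$. I expect this to be the only non-obvious step, since a naive estimate of $I$ would blow up as $a^*\downarrow\sigma_1$. The root equation is exactly what prevents this.

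\textbf{Bounds on $\tau_0$.} With $\tau_0=a^*\sqrt{I(a^*)}$, the lower bound follows from $a^*>\sigma_1$ and $I(a^*)>1/(2\sigma_1)$:
\[
\tau_0>\sigma_1\cdot(2\sigma_1)^{-1/2}=\sqrt{\sigma_1/2}.
\]
For the upper bound, $I(a^*)\le 2/a^*$ gives $\tau_0\le\sqrt{2a^*}\le 2\sqrt{\sigma_1}<2\sqrt{2\sigma_1}$. This is finite whenever $\sigma_1<\infty$. No further input beyond $\bar t=1$ and $\operatorname{supp}H\subset[0,\sigma_1]$ is needed.
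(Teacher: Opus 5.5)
Your proof is correct. It follows the paper's three-step skeleton and uses the identical key step $2I(a^*)=a^*L(a^*)\ge a^*I(a^*)^2$, which comes from $K_1+I=aL$ and Cauchy--Schwarz; the differences are local.

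For $a^*\le 2\sigma_1$, the paper chains $K_1(a)\le\sigma_1L(a)\le\sigma_1 I(a)/(a-\sigma_1)$ and evaluates at the root. You instead use the exact identity $K_1-I=\int(2t-a)(a-t)^{-2}\,dH$ and check its sign. Both show $K_1<I$ on $(2\sigma_1,\infty)$; yours does so more transparently.

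For the lower bound on $I(a^*)$, the paper applies Jensen with $\bar t=1$ to get $I(a^*)\ge 1/(a^*-1)$. Your $I(a^*)>1/a^*$ needs only $t\ge0$ and $H\neq\delta_0$, and either suffices.

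Your route genuinely gains something in the final combination. Because you keep $I(a^*)$ relative to $a^*$ rather than $\sigma_1$, you get $\tau_0\le\sqrt{2a^*}\le 2\sqrt{\sigma_1}$. This is a factor $\sqrt2$ sharper than \eqref{eq:A3}, and it is attained at $H=\delta_1$, the $\rho=0$ case, where $a^*=2$, $I(a^*)=1$ and $\tau_0=2$.

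The same move works on the lower side: $\tau_0>a^*(a^*)^{-1/2}=\sqrt{a^*}>\sqrt{\sigma_1}$, which improves the paper's $\sqrt{\sigma_1/2}$. So your intermediate bounds $1/a^*<I(a^*)\le 2/a^*$ actually give $\sqrt{\sigma_1}<\tau_0\le 2\sqrt{\sigma_1}$. This is consistent with the asymptotic $\tau_0\sim\sqrt{\sigma_1}$ of Proposition~\ref{prop:AR1-sharp}.
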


\begin{proof}
\emph{Step    1    ($a^*    \le    2\sigma_1$).}
Since    $t    \le    \sigma_1$    on    the    support    of    $H$,    for    any    $a    >    \sigma_1$,
\begin{equation}
        L(a)    :=    \int    \frac{dH(t)}{(a-t)^2}
        \;\le\;    \frac{1}{a-\sigma_1}\int    \frac{dH(t)}{a-t}
        \;=\;    \frac{I(a)}{a-\sigma_1},
\end{equation}
and    hence,    using    $t    \le    \sigma_1$    again    in    the    numerator    of    $K_1$,
\begin{equation}
        K_1(a)    \;\le\;    \sigma_1    L(a)    \;\le\;    \frac{\sigma_1    I(a)}{a-\sigma_1}.
\end{equation}
At    $a=a^*$,    $I(a^*)    =    K_1(a^*)    \le    \sigma_1    I(a^*)/(a^*-\sigma_1)$;    dividing
by    $I(a^*)>0$    gives    $a^*-\sigma_1    \le    \sigma_1$,    i.e.\    $a^*    \le    2\sigma_1$.
Combined    with    $a^*    >    \sigma_1$,    this    gives    \eqref{eq:A1}.

\emph{Step    2    (two-sided    bound    on    $I(a^*)$).}
Write    $L(a)    :=    \int    dH(t)/(a-t)^2$.    The    identity
\begin{equation}
        K_1(a)    +    I(a)
        =    \int\left[\frac{t}{(a-t)^2}    +    \frac{1}{a-t}\right]    dH(t)
        =    a\,    L(a)
\end{equation}
holds    for    any    $a    >    \sigma_1$.    By    Cauchy--Schwarz    (since    $H$    has    total    mass    1),
\begin{equation}
        I(a)^2    =    \left(\int    \frac{dH(t)}{a-t}\right)^2
        \;\le\;    \int    dH(t)    \cdot    \int    \frac{dH(t)}{(a-t)^2}    =    L(a).
\end{equation}
At    $a    =    a^*$,    $I(a^*)=K_1(a^*)$    gives    $2I(a^*)    =    a^*    L(a^*)$,    i.e.\
$L(a^*)    =    2I(a^*)/a^*$.    Substituting    into    $I(a^*)^2    \le    L(a^*)$    and    dividing
by    $I(a^*)    >    0$,
\begin{equation}
        I(a^*)    \;\le\;    \frac{2}{a^*}    \;<\;    \frac{2}{\sigma_1},
\end{equation}
using    $a^*    >    \sigma_1$.    For    the    lower    bound,    Jensen's    inequality    applied    to
the    convex    function    $t    \mapsto    1/(a^*-t)$    gives    $I(a^*)    \ge    1/(a^*-\bar    t)    =
1/(a^*-1)$;    by    the    bound    $a^*    \le    2\sigma_1$    (Step~1),    $a^*-1    <    2\sigma_1$,    so
$I(a^*)    >    1/(2\sigma_1)$.    This    proves    \eqref{eq:A2}.

\emph{Step    3    (conclusion).}
From    \eqref{eq:A1}    and    \eqref{eq:A2},
\begin{equation}
        \tau_0    =    a^*\sqrt{I(a^*)}
        \;>\;    \sigma_1    \cdot    \sqrt{\tfrac{1}{2\sigma_1}}
        \;=\;    \sqrt{\sigma_1/2},
\end{equation}
\begin{equation}
        \tau_0    =    a^*\sqrt{I(a^*)}
        \;<\;    2\sigma_1    \cdot    \sqrt{\tfrac{2}{\sigma_1}}
        \;=\;    2\sqrt{2\sigma_1},
\end{equation}
which    is    \eqref{eq:A3}.
\end{proof}

\begin{corollary}[Finite    edge    for    the    AR(1)    ensemble,    Section~II]
\label{cor:finite-AR1}
For    every    fixed    $\rho    \in    (-1,1)$,    the    spectral    density    $f(\theta)$    of
Eq.~(\ref{eq:ftheta})    is    bounded    with
\begin{equation}
        \sigma_1(\rho)    =    \operatorname*{ess\,sup}    f    =    f(0)    =    \frac{1+\rho}{1-\rho}    <    \infty
\end{equation}
and    $\bar    t    =    d_0    =    1$.    Lemma~\ref{lem:finite-edge-general}    applies
verbatim,    giving
\begin{equation}
        \sqrt{\sigma_1(\rho)/2}    \;<\;    \tau_0(\rho)    \;<\;    2\sqrt{2\sigma_1(\rho)}    \;<\;    \infty
\end{equation}
for    every    $\rho    <    1$.    At    $\rho    =    0.7$,    this    gives    $1.68    <    \tau_0(0.7)<    
6.73$,    consistent    with    the    numerically    extrapolated    value    $2.89$    reported
in    Table~\ref{tab:edge}.
\end{corollary}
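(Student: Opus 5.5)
The plan is to check that the AR(1) symbol of Eq.~\eqref{eq:ftheta} produces a measure $H$ satisfying the hypotheses of Definition~\ref{def3}, and then apply Lemma~\ref{lem:finite-edge-general} with $\sigma_1=\sigma_1(\rho)$.

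\textbf{Step 1: $H$ is an admissible measure.} The function $f(\theta)=(1-\rho^2)/(1-2\rho\cos\theta+\rho^2)$ is continuous and strictly positive on $[0,2\pi)$ for every fixed $\rho\in(-1,1)$. Its pushforward $H$ under the uniform measure is therefore a probability measure with compact support in $(0,\infty)$.

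\textbf{Step 2: normalization $\bar t=1$.} The mean is $\bar t=\int_0^{2\pi}f(\theta)\,d\theta/2\pi$. This is the zeroth Fourier coefficient $d_0=\rho^0=1$, since $f$ is the symbol of the covariance $\rho^{|t|}$. Equivalently, it is the total mass of the Poisson kernel.

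\textbf{Step 3: the essential supremum.} For $0\le\rho<1$, the denominator $1-2\rho\cos\theta+\rho^2$ is minimized at $\theta=0$, where it equals $(1-\rho)^2$. This gives $\sigma_1=f(0)=(1+\rho)/(1-\rho)$. For $\rho<0$ the maximum is instead at $\theta=\pi$, with value $(1-\rho)/(1+\rho)=(1+|\rho|)/(1-|\rho|)$. The displayed formula $f(0)$ should therefore be read for $\rho\ge0$, or with $|\rho|$ in general; the finiteness conclusion is unaffected either way. In all cases $\sigma_1<\infty$ because $|\rho|<1$.

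\textbf{Step 4: applying Lemma~\ref{lem:finite-edge-general}.} The lemma presupposes a root $a^*>\sigma_1$ of $I(a)=K_1(a)$. I would take its existence from the general MDE theory under Assumptions (A)--(E), which Proposition~\ref{prop:assumptions} verifies for this ensemble, as indicated after Definition~\ref{def3}.

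Alternatively, existence can be shown directly. Near the maximum, $f$ is smooth with a nondegenerate quadratic top, so $H$ has a square-root singularity $H([\sigma_1-\delta,\sigma_1])\asymp\delta^{1/2}$. This makes $I(a)$ diverge as $a\to\sigma_1^+$, and by the Cauchy--Schwarz lemma of Section~IV.F, $K_1-I\to+\infty$. For large $a$ one has $K_1-I\sim 1/a^2-1/a<0$. The intermediate value theorem then gives a root.

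Given the root, the two-sided bound follows verbatim from Lemma~\ref{lem:finite-edge-general}. For the numerical check, set $\rho=0.7$, so that $\sigma_1=17/3$. Then $\sqrt{17/6}\approx1.68$ and $2\sqrt{34/3}\approx6.73$, which brackets the value $2.89$ in Table~\ref{tab:edge}.

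\textbf{Main obstacle.} The only nonroutine point is the existence of $a^*$. I would handle it by the square-root local behavior of $H$ at $\sigma_1$ rather than relying solely on the cited general theory.
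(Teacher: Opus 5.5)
Your proposal is correct and follows the paper's route: compute $\sigma_1$ and $\bar t=d_0=1$ for the Poisson-kernel symbol and invoke Lemma~\ref{lem:finite-edge-general}, with the root $a^*>\sigma_1$ supplied by the MDE theory cited after Definition~\ref{def3}. Your intermediate-value alternative (the same mechanism the paper uses in Corollary~\ref{c32}) is a valid self-contained substitute for that citation, except that at $\rho=0$ there is no quadratic top; this is harmless, since then $H=\delta_1$, $I(a)=1/(a-1)\to\infty$ and $a^*=2$ explicitly.

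Your $\rho<0$ remark is a genuine correction, not a cosmetic one. Because $f_{-\rho}(\theta)=f_\rho(\theta+\pi)$, the measure $H$ and hence $\tau_0$ are even in $\rho$. So the displayed bound with $\sigma_1=(1+\rho)/(1-\rho)$ is actually false for sufficiently negative $\rho$: at $\rho=-0.7$ it would force $\tau_0<1.19$, whereas $\tau_0\approx3.05$. The bound must therefore be read with $|\rho|$ in place of $\rho$.
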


\begin{corollary}
\label{cor:finite-power}
(Explicit  finite  edge  bound,  $\gamma\ge3/2$;  revises
the  previous  statement  for  $\gamma>1$).  For  every  fixed  $\gamma\ge3/2$,
Corollary  \ref{c32}  establishes  existence  of  the  root  $a^*>\sigma_1(\gamma)$  of
$I(a^*)=K_1(a^*)$,  so  Lemma  \ref{lem:finite-edge-general}    applies  rigorously  and  unconditionally,  with
$\sigma_1(\gamma)=2\zeta(\gamma)-1$  (Lemma  \ref{lem:sigma1-gamma}),  giving  the  explicit
two-sided  bound
\begin{equation}
\sqrt{\zeta(\gamma)-\tfrac12}  \;<\;  \tau_0(\gamma)  \;<\;
2\sqrt{4\zeta(\gamma)-2},  \qquad  \gamma\ge3/2.
\end{equation}
\end{corollary}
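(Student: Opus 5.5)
The plan is to obtain the statement as a direct specialization of Lemma~\ref{lem:finite-edge-general} to the power-law symbol $f_\gamma$. There are three steps: produce the root $a^*$, verify the hypotheses of the lemma, and substitute the explicit value of $\sigma_1(\gamma)$.

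\emph{Existence of the root.} For $\gamma>3/2$ I would invoke Corollary~\ref{c32}. There, $K_1(a)-I(a)\to+\infty$ as $a\to\sigma_1^+$, while $K_1(a)-I(a)$ is eventually negative as $a\to\infty$ because $I(a)\sim 1/a$ and $K_1(a)\sim\bar t/a^2$. The intermediate value theorem then gives a root $a^*>\sigma_1(\gamma)$ of $I(a^*)=K_1(a^*)$.

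The endpoint $\gamma=3/2$ is the main obstacle. The divergence dichotomy for $K_1$ says $K_1$ is bounded as $a\to\sigma_1^+$ iff $\gamma<3/2$, so at $\gamma=3/2$ it diverges in the borderline (logarithmic) sense. I would handle it as follows.
\begin{itemize}
\item Local density: with $\beta=\frac{1}{\gamma-1}-1=1$, $H$ has density $\asymp u$ at distance $u$ from $\sigma_1$.
\item Blow-up of $K_1$: the integral $\int_0^\delta u\,(u+\varepsilon)^{-2}\,du\sim\log(1/\varepsilon)\to\infty$, so $K_1(a)\to\infty$ as $a\to\sigma_1^+$.
\item $I$ stays bounded, since $\gamma<2$.
\end{itemize}
Hence $K_1-I\to+\infty$ at $\gamma=3/2$ as well, and the same intermediate-value argument applies. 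If this borderline estimate needs more care, I would redo the computation explicitly from $\sigma_1-f_\gamma(\theta)\asymp\theta^{1/2}$. In that form $K_1$ behaves like $\int d\theta\,(\varepsilon+c\theta^{1/2})^{-2}$ near $\theta=0$, and the substitution $\theta=v^2$ gives $\int v\,dv/(\varepsilon+cv)^2$, which diverges logarithmically as $\varepsilon\to0$.

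\emph{Hypotheses of the lemma.} Lemma~\ref{lem:finite-edge-general} only needs three facts, all available for every $\gamma>1$:
\begin{itemize}
\item $H$ is a probability measure supported on $[0,\sigma_1]$. The lower end of the support uses $f_\gamma\ge0$, which follows from Lemma~\ref{lem:posdef} (positive semidefiniteness makes the spectral density nonnegative).
\item $\bar t=d_0=1$, by Lemma~\ref{l20}.
\item $\sigma_1<\infty$, by Lemma~\ref{lem:sigma1-gamma}.
\end{itemize}
Applying the lemma gives $\sigma_1<a^*\le2\sigma_1$ and $\sqrt{\sigma_1/2}<\tau_0<2\sqrt{2\sigma_1}$.

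\emph{Substitution.} Lemma~\ref{lem:sigma1-gamma} gives $\sigma_1(\gamma)=2\zeta(\gamma)-1$. Then
\[
\sqrt{\sigma_1/2}=\sqrt{\zeta(\gamma)-\tfrac12},\qquad 2\sqrt{2\sigma_1}=2\sqrt{4\zeta(\gamma)-2},
\]
which is the displayed two-sided bound.
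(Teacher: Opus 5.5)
Your proposal is correct and follows the paper's route. You take the root $a^*>\sigma_1(\gamma)$ from the intermediate-value argument of Corollary~\ref{c32}, apply Lemma~\ref{lem:finite-edge-general}, and substitute $\sigma_1(\gamma)=2\zeta(\gamma)-1$. Your separate treatment of the endpoint $\gamma=3/2$ is a worthwhile addition: there $\beta=1$, so $K_1$ diverges only logarithmically while $I$ stays bounded, and this case is not literally covered because Corollary~\ref{c32} is stated only for $\gamma>3/2$, even though the statement asserts $\gamma\ge 3/2$.
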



\section{Appendix    B.    Sharp    Asymptotics    for    the    AR(1)    Self-Consistent    Edge    via    the    Poisson    Kernel}
\label{app:AR1-sharp-asymptotics}

While    Appendix~A.    establishes    finiteness    of    $\tau_0(\rho)$
for    every    fixed    $\rho<1$    via    a    general    two-sided    bound,    the    explicit    Poisson-kernel
form    of    $f(\theta)$    (Eq.~\eqref{eq:ftheta}    allows    the    self-consistent    edge    equation    to    be    solved
in    closed    algebraic    form.    This    yields    not    only    the    exponent    but    the    exact
leading    amplitude    of    the    divergence    of    $\tau_0(\rho)$    as    $\rho    \to    1^-$    ---
a    level    of    precision    not    currently    available    for    the    power-law    analogue.

\subsection{A.    Closed    form    for    $I(a)$}

\begin{proposition}[Closed-form    Stieltjes-type    integral    for    the    AR(1)    ensemble]
\label{prop:AR1-closed-form-I}
Let    $H$    be    the    pushforward    of    $\mathrm{Unif}[0,2\pi)$    under    the    Poisson    kernel
$f(\theta)    =    \dfrac{1-\rho^2}{1-2\rho\cos\theta+\rho^2}$    of    Eq.~(\ref{eq:ftheta}),    and    let
$\sigma_1    =    \sigma_1(\rho)    =    \dfrac{1+\rho}{1-\rho}$.    Then,    for    $a    >    \sigma_1$,
\begin{equation}
        I(a)    =    \int    \frac{dH(t)}{a-t}
        =    \frac{1}{a}\left[1    +    \frac{1}{\sqrt{R(a)}}\right],
        \qquad
        R(a)    :=    (a-\sigma_1)\big(a-\sigma_1^{-1}\big).
        \label{eq:B1}
\end{equation}
\end{proposition}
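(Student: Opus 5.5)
The plan is to evaluate $I(a)=\frac{1}{2\pi}\int_0^{2\pi}\frac{d\theta}{a-f(\theta)}$ in closed form. Since $H$ is the pushforward of the uniform measure under $f$, this is exactly $\int dH(t)/(a-t)$. I will work with $0\le\rho<1$, where $\sigma_1=f(0)=\frac{1+\rho}{1-\rho}$ really is the essential supremum, as in Corollary~\ref{cor:finite-AR1}. For $\rho<0$ the maximum of $f$ sits at $\theta=\pi$, and the same computation goes through with $\rho\mapsto|\rho|$.

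\emph{Step 1 (partial fractions in $\cos\theta$).} Write $c=\cos\theta$ and $P=1-2\rho c+\rho^2$, and set
\[
A:=a(1+\rho^2)-(1-\rho^2),\qquad B:=2a\rho .
\]
Clearing the denominator gives $a-f=(A-Bc)/P$. Since $P=(A-Bc+(1-\rho^2))/a$, this leads to the decomposition
\[
\frac{1}{a-f(\theta)}=\frac{1}{a}\Big[1+\frac{1-\rho^2}{A-B\cos\theta}\Big].
\]
I will verify this by direct expansion of the numerator: $A+(1-\rho^2)-Bc=a(1+\rho^2-2\rho c)$.

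\emph{Step 2 (standard integral).} For $A>|B|$ I use the classical identity
\[
\frac{1}{2\pi}\int_0^{2\pi}\frac{d\theta}{A-B\cos\theta}=\frac{1}{\sqrt{A^2-B^2}},
\]
obtained by residues on $|z|=1$ or by the tangent half-angle substitution. This gives
\[
I(a)=\frac1a\Big[1+\frac{1-\rho^2}{\sqrt{A^2-B^2}}\Big].
\]

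\emph{Step 3 (factorization).} Factor $A^2-B^2=(A-B)(A+B)$. The two factors are
\[
A-B=a(1-\rho)^2-(1-\rho^2)=(1-\rho)^2(a-\sigma_1),\qquad A+B=a(1+\rho)^2-(1-\rho^2)=(1+\rho)^2\big(a-\sigma_1^{-1}\big).
\]
Hence $A^2-B^2=(1-\rho^2)^2R(a)$, and the prefactor $1-\rho^2$ cancels, which yields \eqref{eq:B1}.

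The only point needing care, and the main (mild) obstacle, is justifying the hypothesis $A>|B|\ge 0$ of Step 2 for every $a>\sigma_1$. When $\rho\ge0$ we have $\sigma_1\ge1\ge\sigma_1^{-1}$, so both factors $A-B$ and $A+B$ are strictly positive for $a>\sigma_1$. Their sum $2A$ is therefore positive as well, so $A>|B|$ and the branch $\sqrt{R(a)}>0$ is the correct one. Equivalently, $a-f(\theta)\ge a-\sigma_1>0$ for all $\theta$, so the integrand never vanishes and the integral is finite.
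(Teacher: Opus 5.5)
Your proposal is correct and follows the paper's proof essentially line by line. It uses the same $A$ and $B$, the same decomposition $1/(a-f)=a^{-1}\bigl[1+(1-\rho^2)/(A-B\cos\theta)\bigr]$, and the same standard integral. Your factorization $A\mp B=(1\mp\rho)^2(a-\sigma_1^{\pm1})$ reaches $A^2-B^2=(1-\rho^2)^2R(a)$ a bit more directly than the paper's expansion plus root-sum/product check. Your explicit verification of $A>|B|$ is a genuine gain in rigor: it shows the statement as written needs $\rho\ge0$, since for $\rho<0$ one must instead require $a>\sigma_1^{-1}=\max f$. The paper's proof passes over this point.
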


\begin{proof}
Writing    $D(\theta)    :=    1-2\rho\cos\theta+\rho^2$,    we    have    $a-f(\theta)    =
[a    D(\theta)    -    (1-\rho^2)]/D(\theta)$,    so
\begin{equation}
        I(a)    =    \frac{1}{2\pi}\int_0^{2\pi}    \frac{D(\theta)\,d\theta}{a    D(\theta)    -    (1-\rho^2)}.
\end{equation}
Writing    $aD(\theta)-(1-\rho^2)    =    A    -    B\cos\theta$    with    $A    =    a(1+\rho^2)-(1-\rho^2)$
and    $B=2a\rho$,    direct    algebra    gives    the    identity
\begin{equation}
        D(\theta)    =    \frac{1}{a}\Big[(A-B\cos\theta)    +    (1-\rho^2)\Big],
\end{equation}
so    that
\begin{equation}
        I(a)    =    \frac{1}{a}\left[1    +    (1-\rho^2)\cdot\frac{1}{2\pi}\int_0^{2\pi}
        \frac{d\theta}{A-B\cos\theta}\right]
        =    \frac{1}{a}\left[1+\frac{1-\rho^2}{\sqrt{A^2-B^2}}\right],
\end{equation}
using    the    standard    integral    $\frac{1}{2\pi}\int_0^{2\pi}    d\theta/(A-B\cos\theta)    =
1/\sqrt{A^2-B^2}$    for    $A>|B|>0$.    A    direct    computation    gives
\begin{equation}
        A^2-B^2    =    (1-\rho^2)^2\big[a^2    -    s    a    +    1\big],    \qquad
        s    :=    \sigma_1+\sigma_1^{-1}    =    \frac{2(1+\rho^2)}{1-\rho^2},
\end{equation}
and    the    quadratic    $a^2-sa+1$    factors,    by    direct    substitution,    as
$(a-\sigma_1)(a-\sigma_1^{-1})    =:    R(a)$    (its    roots    have    product    $1$    and    sum    $s$,
matching    $\sigma_1\cdot\sigma_1^{-1}=1$    and    $\sigma_1+\sigma_1^{-1}=s$).    Hence
$\sqrt{A^2-B^2}=(1-\rho^2)\sqrt{R(a)}$,    and    substituting    gives    Eq.  \eqref{eq:B1}.
\end{proof}

\subsection{B.    The    self-consistent    equation    in    closed    algebraic    form}

\begin{proposition}[Algebraic    self-consistency    equation]
\label{prop:AR1-algebraic-eq}
The    self-consistent    edge    $a^*=a^*(\rho)$    is    the    unique    root    $a^*>\sigma_1$    of
\begin{equation}
        2a    -    s    =    \frac{2R(a)}{a}\Big(\sqrt{R(a)}+1\Big),
        \qquad    R(a)=(a-\sigma_1)(a-\sigma_1^{-1}),\    \    s=\sigma_1+\sigma_1^{-1}.
        \label{eq:B2}
\end{equation}
\end{proposition}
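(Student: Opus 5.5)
The plan is to turn the abstract edge condition $I(a^*)=K_1(a^*)$ into an equation involving $I$ and its derivative only, and then substitute the closed form \eqref{eq:B1} of Proposition~\ref{prop:AR1-closed-form-I}. The key observation is the identity $K_1(a)+I(a)=a\,L(a)$ from Step~2 of the proof of Lemma~\ref{lem:finite-edge-general}, together with $L(a)=\int dH(t)/(a-t)^2=-I'(a)$. Differentiation under the integral is legitimate for $a>\sigma_1$, since $H$ is supported in $[0,\sigma_1]$. The edge condition $I=K_1$ is therefore equivalent to
\[
2I(a)=-a\,I'(a),\qquad a>\sigma_1 .
\]

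Next I substitute $I(a)=a^{-1}\bigl(1+R^{-1/2}\bigr)$ and use $R'(a)=2a-s$. This gives
\[
-aI'(a)=\frac1a\bigl(1+R^{-1/2}\bigr)+\frac{2a-s}{2R^{3/2}} .
\]
Cancelling one copy of $a^{-1}(1+R^{-1/2})$ leaves
\[
\frac1a\bigl(1+R^{-1/2}\bigr)=\frac{2a-s}{2R^{3/2}} .
\]
Multiplying by $2R^{3/2}>0$ yields exactly \eqref{eq:B2}. Every step is reversible for $a>\sigma_1$, because there $R(a)>0$ and $a>0$. Hence the roots $a>\sigma_1$ of \eqref{eq:B2} are precisely the solutions of $I(a)=K_1(a)$.

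For existence and uniqueness I clear the denominator and set
\[
\varphi(a):=2R^{3/2}+2R-a(2a-s).
\]
Since $R=a^2-sa+1$, we have $2R-2a^2+sa=2-sa$, so
\[
\varphi(a)=2R(a)^{3/2}-sa+2 .
\]
Its second derivative is
\[
\varphi''(a)=\tfrac32 R^{-1/2}(2a-s)^2+6R^{1/2},
\]
which is positive on $(\sigma_1,\infty)$, so $\varphi$ is strictly convex there. At the left endpoint, $\varphi(\sigma_1^+)=2-s\sigma_1=1-\sigma_1^2$, which is negative for $\rho\in(0,1)$. At infinity, $\varphi(a)\sim 2a^3\to\infty$. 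A strictly convex function that is negative at the left endpoint and tends to $+\infty$ crosses zero exactly once. This gives a unique root $a^*>\sigma_1$, consistent with the bound $a^*\le 2\sigma_1$ of Lemma~\ref{lem:finite-edge-general}.

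The main obstacle is the sign of $\varphi(\sigma_1^+)$. For $\rho\le0$ we have $\sigma_1\le1$, so $1-\sigma_1^2\ge0$. In that regime the convexity argument alone does not give uniqueness, and one must check whether $\varphi$ dips below zero at all. The relevant regime for the paper, and for the $\rho\to1^-$ asymptotics in this appendix, is $\rho\in(0,1)$, where the argument above is complete. For $\rho\le0$ I would restrict the statement to $\rho\in(0,1)$, treating $\rho=0$ separately as the semicircle case with $a^*=2$.
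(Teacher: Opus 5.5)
Your derivation of \eqref{eq:B2} is the same as the paper's. Both of you use $K_1+I=aL$ with $L=-I'$ to rewrite the edge condition as $aI'(a)=-2I(a)$, then substitute the closed form \eqref{eq:B1}. Your algebra and your reversibility remark for $a>\sigma_1$ are correct.

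Where you genuinely differ is existence and uniqueness of the root. The paper only says that uniqueness ``follows from the general existence/uniqueness theory of the MDE.'' That theory is about the solution $M(z)$ and does not directly count the real roots of the scalar equation $I=K_1$ on $(\sigma_1,\infty)$. You instead prove it from the closed form:
\begin{itemize}
\item you clear denominators to get $\varphi(a)=2R(a)^{3/2}-sa+2$;
\item $\varphi''=\tfrac32R^{-1/2}(2a-s)^2+6R^{1/2}>0$ wherever $R>0$;
\item $\varphi(\sigma_1)=1-\sigma_1^2<0$ and $\varphi\to+\infty$.
\end{itemize}
This is self-contained and more informative than the citation: it yields existence and uniqueness at once and gives a bracketing usable for Table~\ref{tab:AR1-exact-edge}. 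The paper's route gains only brevity. As a check, at $\rho=0.5$ your $\varphi$ changes sign between $a=3.87$ and $a=3.9$, giving $\tau_0\approx2.465$, in agreement with the table.

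Your caveat about $\rho\le0$ is well founded, but you can close it rather than restrict the range.
\begin{itemize}
\item For $\rho<0$, the essential supremum of $f$ is $f(\pi)=\frac{1-\rho}{1+\rho}=\sigma_1^{-1}$, not $\frac{1+\rho}{1-\rho}$. Indeed $H$ depends only on $|\rho|$. So the correct left endpoint is $\max(\sigma_1,\sigma_1^{-1})>1$. Since $s$ and $R$ are symmetric under $\sigma_1\leftrightarrow\sigma_1^{-1}$, your argument applies verbatim with that endpoint. This also shows that the paper's own formula $\sigma_1(\rho)=\frac{1+\rho}{1-\rho}$ needs this correction for $\rho<0$.
\item At $\rho=0$ you have $\varphi(1)=0$ and $\varphi'(1^+)=-s=-2<0$. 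Strict convexity then gives exactly one root in $(1,\infty)$, namely $a^*=2$, so no separate semicircle treatment is needed.
\end{itemize}
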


\begin{proof}
From    the    general    identity    $K_1(a)+I(a)    =    aL(a)$    with    $L(a)=-I'(a)$,    the    edge    condition    $I(a^*)=K_1(a^*)$
is    equivalent    to    $a^*I'(a^*)    =    -2I(a^*)$.    Differentiating    Eq.  \eqref{eq:B1},
\begin{equation}
        I'(a)    =    -\frac{1}{a^2}\left[1+R(a)^{-1/2}\right]    -    \frac{R'(a)}{2a}R(a)^{-3/2},
\end{equation}
and    substituting    into    $aI'(a)=-2I(a)$    and    simplifying    (multiplying    through    by
$2R(a)^{3/2}$)    yields
\begin{equation}
        \frac{2}{a}R(a)^{3/2}    +    \frac{2}{a}R(a)    -    R'(a)    =    0,
\end{equation}
i.e.\    $R'(a)    =    \frac{2R(a)}{a}\big(\sqrt{R(a)}+1\big)$.    Since    $R'(a)=2a-s$,    this
is    Eq.  \eqref{eq:B2}.    Uniqueness    of    the    root    $a^*>\sigma_1$    follows    from    the    general
existence/uniqueness    theory    of    the    MDE.
\end{proof}

\subsection{C.    Sharp    asymptotics    as    $\rho    \to    1^-$}

\begin{proposition}[Sharp    asymptotics    of    $a^*(\rho)$    and    $\tau_0(\rho)$]
\label{prop:AR1-sharp}
As    $\rho    \to    1^-$,    writing    $\sigma_1=\sigma_1(\rho)\to\infty$,
\begin{equation}
        a^*(\rho)    =    \sigma_1(\rho)    +    2^{-2/3}\,\sigma_1(\rho)^{1/3}    +    o\big(\sigma_1(\rho)^{1/3}\big),
        \label{eq:B3}
\end{equation}
\begin{equation}
        I(a^*(\rho))    =    \frac{1}{\sigma_1(\rho)}\Big[1+O\big(\sigma_1(\rho)^{-2/3}\big)\Big],
        \label{eq:B4}
\end{equation}
and    consequently,    since    $\sigma_1(\rho)    \sim    2(1-\rho)^{-1}$,
\begin{equation}
        \tau_0(\rho)    \;\sim\;    \sqrt{\sigma_1(\rho)}    \;\sim\;    \sqrt{2}\,(1-\rho)^{-1/2},
        \qquad    \rho    \to    1^-.
        \label{eq:B5}
\end{equation}
\end{proposition}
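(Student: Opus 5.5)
The plan is to start from the closed algebraic edge equation \eqref{eq:B2} of Proposition~\ref{prop:AR1-algebraic-eq}, whose root $a^*>\sigma_1$ is unique. I substitute $a=\sigma_1+x$ with $x>0$. By Lemma~\ref{lem:finite-edge-general} we already know $0<x\le\sigma_1$. In these variables
\[
R(a)=x\,(x+\sigma_1-\sigma_1^{-1}),\qquad 2a-s=\sigma_1+2x-\sigma_1^{-1}.
\]
For $x=o(\sigma_1)$ one has $R\approx x\sigma_1$, so the right-hand side of \eqref{eq:B2} is $\frac{2R}{a}(\sqrt R+1)\approx 2x\,(x\sigma_1)^{1/2}$, while the left-hand side is $\approx\sigma_1$. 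Balancing the two gives $2x^{3/2}\sigma_1^{1/2}=\sigma_1$, that is, $x=2^{-2/3}\sigma_1^{1/3}$. This is the claimed correction in \eqref{eq:B3}.

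To make this rigorous I rescale $x=\sigma_1^{1/3}y$ and write \eqref{eq:B2} as $G_{\sigma_1}(y)=0$, where
\[
G_{\sigma_1}(y):=\frac{1}{\sigma_1}\Big[\frac{2R}{a}\big(\sqrt R+1\big)-(2a-s)\Big].
\]
Two facts about $G_{\sigma_1}$ are needed.

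\textbf{Pointwise limit.} For each fixed $y>0$, expanding $R=\sigma_1^{4/3}y\,(1+O(\sigma_1^{-2/3}))$ and $a=\sigma_1(1+O(\sigma_1^{-2/3}))$ gives $G_{\sigma_1}(y)\to 2y^{3/2}-1$. The convergence is uniform on compact subsets of $(0,\infty)$. The limit function has the unique simple root $y_0=2^{-2/3}$.

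\textbf{Monotonicity.} $G_{\sigma_1}$ is strictly increasing in $a$ on $(\sigma_1,\infty)$. Indeed $R/a=a-s+1/a$ has derivative $1-a^{-2}>0$, and $R$ is increasing, so the right-hand side of \eqref{eq:B2} is increasing and convex in $a$. To show its slope eventually dominates the constant slope $2$ of $2a-s$, it is cleaner to use the sign pattern directly. At $y=y_0\mp\delta$ the function $G_{\sigma_1}$ has the sign of $2y^{3/2}-1$ once $\sigma_1$ is large. Near $y\to0$ we have $G<0$, and at $x=\sigma_1$ we have $G>0$, both by direct evaluation. Together with uniqueness of the root, this pins the root inside $(y_0-\delta,y_0+\delta)$ for every $\delta>0$. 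That yields $x=2^{-2/3}\sigma_1^{1/3}+o(\sigma_1^{1/3})$, which is \eqref{eq:B3}.

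For \eqref{eq:B4} I insert the root into \eqref{eq:B1} of Proposition~\ref{prop:AR1-closed-form-I}. Since $R(a^*)\asymp\sigma_1^{4/3}$, we get $R^{-1/2}=O(\sigma_1^{-2/3})$. Also $1/a^*=\sigma_1^{-1}\big(1+O(\sigma_1^{-2/3})\big)$. Multiplying these gives $I(a^*)=\sigma_1^{-1}\big[1+O(\sigma_1^{-2/3})\big]$.

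Then $\tau_0=a^*\sqrt{I(a^*)}=\sigma_1\big(1+O(\sigma_1^{-2/3})\big)\,\sigma_1^{-1/2}\big(1+O(\sigma_1^{-2/3})\big)\sim\sqrt{\sigma_1}$. Finally, $\sigma_1=(1+\rho)/(1-\rho)\sim 2/(1-\rho)$ gives \eqref{eq:B5}.

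The main obstacle is the localization step. I must exclude the possibility that the unique root $a^*$ lies in a different regime, namely $x=O(1)$ or $x\asymp\sigma_1$, where the approximation $R\approx x\sigma_1$ fails. My first attempt is to evaluate the sign of the unscaled difference directly at $x=\epsilon\sigma_1^{1/3}$, $x=K\sigma_1^{1/3}$ and $x=\sigma_1$, with explicit error terms. Combined with the uniqueness from Proposition~\ref{prop:AR1-algebraic-eq} and the upper bound $a^*\le2\sigma_1$ from Lemma~\ref{lem:finite-edge-general}, the intermediate value theorem then confines the root to the $\sigma_1^{1/3}$ window.
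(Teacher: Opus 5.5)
Your proposal follows the paper's own route: the same dominant balance $\sigma_1\approx 2u^{3/2}\sigma_1^{1/2}$ in \eqref{eq:B2}, then substitution into \eqref{eq:B1} to get $I(a^*)$ and $\tau_0$. The added step — the rescaling $x=\sigma_1^{1/3}y$ with $G_{\sigma_1}(y)\to 2y^{3/2}-1$ uniformly on compacts, combined with the intermediate value theorem and uniqueness — actually proves the localization, which the paper only checks for self-consistency of the ansatz $u=o(\sigma_1)$.

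One correction: $G_{\sigma_1}$ is not strictly increasing on $(\sigma_1,\infty)$, because the right side of \eqref{eq:B2} has slope $2(1-\sigma_1^{-2})<2$ at $a=\sigma_1^+$. You do not need monotonicity, though. The right side is convex, as you note, so $G_{\sigma_1}$ is convex in $a$ with $G_{\sigma_1}<0$ at $a=\sigma_1^+$. That already gives uniqueness of the root, without relying on the MDE appeal in Proposition~\ref{prop:AR1-algebraic-eq}.
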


\begin{proof}
Write    $a^*    =    \sigma_1    +    u$    with    $u=o(\sigma_1)$    to    be    determined.    Since
$1/\sigma_1\to    0$,    $R(a^*)    =    u(\sigma_1+u-\sigma_1^{-1})    \sim    u\sigma_1$    for
$u=o(\sigma_1)$.    Substituting    into    Eq.  \eqref{eq:B2}    and    keeping    leading    orders
(using    $s\sim\sigma_1$,    $a^*\sim\sigma_1$),
\begin{equation}
        2a^*-s    \sim    \sigma_1,    \qquad
        \frac{2R(a^*)}{a^*}\big(\sqrt{R(a^*)}+1\big)
        \sim    \frac{2u\sigma_1}{\sigma_1}\sqrt{u\sigma_1}
        =    2u^{3/2}\sigma_1^{1/2}
\end{equation}
(the    ``$+1$''    being    negligible    once    $u\to\infty$,    which    is    verified
self-consistently    below).    Balancing    $\sigma_1    \sim    2u^{3/2}\sigma_1^{1/2}$    gives
$u    \sim    2^{-2/3}\sigma_1^{1/3}$,    which    is    Eq.  \eqref{eq:B3};    in    particular
$u\to\infty$    while    $u=o(\sigma_1)$,    confirming    the    assumed    scaling
self-consistently.

With    $u\sim2^{-2/3}\sigma_1^{1/3}$,    $R(a^*)\sim    u\sigma_1    \sim
2^{-2/3}\sigma_1^{4/3}\to\infty$,    so    $R(a^*)^{-1/2}\sim    2^{1/3}\sigma_1^{-2/3}\to0$.
Substituting    into    Eq.  \eqref{eq:B1}    with    $a^*\sim\sigma_1(1+O(\sigma_1^{-2/3}))$,
\begin{equation}
        I(a^*)    =    \frac{1}{a^*}\big[1+R(a^*)^{-1/2}\big]
        =    \frac{1}{\sigma_1}\Big[1+O\big(\sigma_1^{-2/3}\big)\Big],
\end{equation}
which    is    Eq.  \eqref{eq:B4}.    Finally,
\begin{equation}
        \tau_0(\rho)    =    a^*(\rho)\sqrt{I(a^*(\rho))}
        \sim    \sigma_1    \cdot    \sqrt{1/\sigma_1}    =    \sqrt{\sigma_1(\rho)},
\end{equation}
and    since    $\sigma_1(\rho)    =    (1+\rho)/(1-\rho)    \sim    2/(1-\rho)$    as
$\rho\to1^-$,    this    gives    Eq.  \eqref{eq:B5}.
\end{proof}

\begin{remark}
The    amplitude    $\sqrt{2}$    in    Eq.  \eqref{eq:B5}    is    consistent    with,    and    lies    strictly
inside,    the    general    two-sided    bound    of    Lemma~A
($\sqrt{\sigma_1/2}    <    \tau_0    <    2\sqrt{2\sigma_1}$,    i.e.\    amplitude    in
$(1/\sqrt2,\,2\sqrt2)\approx(0.707,\,2.828)$),    providing    an    independent
consistency    check    on    both    Lemma~\ref{lem:finite-edge-general}    and    Proposition~\ref{prop:AR1-sharp}.

Unlike    the    power-law    case    at    $\gamma\to1^+$,    where
$P(\gamma)=\int    t^2\,dH(t)$    stays    bounded    and    only    the    exponent
(not    the    amplitude)    of    $a^*-\sigma_1$    is    pinned    down,    here    the    analogous
quantity    $P(\rho)=(1+\rho^2)/(1-\rho^2)$    diverges    as    $\rho\to1^-$,
so    the    Cauchy--Schwarz    argument       does    not    directly    apply.
The    Poisson-kernel    closed    form    of    Proposition~\ref{prop:AR1-closed-form-I}
circumvents    this    by    solving    the    self-consistency    equation    exactly,    yielding
here    a    sharper    result    (exact    leading    amplitude)    than    is    currently
available    for    the    power-law    ensemble.
\label{r20}
\end{remark}

\subsection{D.  Numerical    validation}

Table~\ref{tab:AR1-exact-edge}    compares    the    exact    edge    $\tau_0(\rho)$,
obtained    by    solving    \eqref{eq:B2}    numerically    to    machine    precision,    against
the    finite-$N$    extrapolated    values    of    Table~I.    At    $\rho=0$,    Eq.~\eqref{eq:B2}
solves    exactly    to    $a^*=2$,    $\tau_0=2$,    recovering    the    semicircle    edge    exactly
(a    direct    check    of    Proposition~\ref{prop:AR1-closed-form-I}).    The    exact    values
run    systematically    slightly    above    the    Monte-Carlo-extrapolated    entries    of
Table~\ref{tab:edge},    with    the    gap    growing    at    larger    $\rho$;    this    is    consistent    with    the
paper's    own    caveat    that    the    numerical    boundedness    check    ---    and,
by    the    same    finite-size    mechanism,    the    $N^{-\beta}$    extrapolation    underlying
Table~\ref{tab:edge}    ---    becomes    progressively    less    reliable    as    $\rho\gtrsim0.7$,    rather
than    indicating    any    inconsistency    in    the    closed-form    theory.

\begin{table}[h]
\centering
\begin{tabular}{c|c|c}
$\rho$    &    exact    $\tau_0(\rho)$    (Eq.~\eqref{eq:B2})    &    Table~I    extrapolated    \\
\hline
0.0    &    2.000    &    1.99    \\
0.3    &    2.167    &    2.13    \\
0.5    &    2.465    &    2.36    \\
0.7    &    3.047    &    2.89    \\
\end{tabular}
\caption{Exact    self-consistent    edge    from    Proposition~\ref{prop:AR1-algebraic-eq},
solved    numerically,    compared    with    the    finite-$N$    extrapolated    values    of    Table~I.}
\label{tab:AR1-exact-edge}
\end{table}

\section{APPENDIX  C.  MOMENT  STRUCTURE  OF  THE  AUXILIARY  MEASURE  $H$}
  
This  appendix  collects  several  exact  facts  about  the  moments  of  the  pushforward
measure  $H$  introduced  in  Definition  \ref{def3},  which  governs  the  scalar  self-consistent
edge  equation,  Eq.~\eqref{eq:sce_gamma}  for  the  power-law  ensemble  of  Definition  \ref{def:row-plaw}.  These  facts
sharpen  the  large-$a$  asymptotics  of  the  edge  equation  and  place  the  two
previously  identified  critical  exponents  $\gamma_c=1/2$  (bulk)  and
$\gamma_c=1$  (edge  flatness,  Assumption  (E))  inside  a  single  one-parameter
family  of  thresholds.
  
\subsection*{C.1  Moments  of  $H$}
  
Recall  $H:=\mathrm{Law}(f_\gamma(\theta))$,  $\theta\sim\mathrm{Unif}[0,2\pi)$,
where  $f_\gamma(\theta)=\sum_{t=-\infty}^{\infty}d_{|t|}e^{i\theta  t}$,
$d_t=(1+t)^{-\gamma}$,  is  the  power  spectral  density  of  Definition  \ref{def:row-plaw}'s  row
covariance.

\begin{lemma}(Moments  of  $H$  as  convolution  sums).  For  every  $n\ge1$,
\begin{equation}
\mu_n(H):=\int  t^n\,dH(t)=\int_0^{2\pi}\frac{d\theta}{2\pi}f_\gamma(\theta)^n
=\sum_{\substack{t_1,\ldots,t_n\in\mathbb  Z\\t_1+\cdots+t_n=0}}
d_{|t_1|}d_{|t_2|}\cdots  d_{|t_n|},
\end{equation}
whenever  the  right-hand  side  converges  absolutely.
\label{lc1}
\end{lemma}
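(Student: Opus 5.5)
The identity has two equalities, and I will prove them one at a time. The first, $\int t^n\,dH(t)=\int_0^{2\pi}\frac{d\theta}{2\pi}f_\gamma(\theta)^n$, is the change-of-variables formula for a pushforward measure. By definition, $H$ is the law of $f_\gamma(\theta)$ with $\theta\sim\mathrm{Unif}[0,2\pi)$. So for any measurable $g$ that is either nonnegative or $H$-integrable,
\[
\int g\,dH=\int_0^{2\pi}\frac{d\theta}{2\pi}\,g(f_\gamma(\theta)).
\]
Taking $g(t)=t^n$ gives the first equality. Both sides are simultaneously finite or infinite, because $f_\gamma\ge0$ by Lemma~\ref{lem:posdef} (Herglotz/Bochner), so I need no sign control.

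For the second equality I will not expand the Fourier series directly, since for $\gamma\le1$ it does not converge absolutely pointwise. Instead I will use the Gamma-mixture representation \eqref{eq:gamma-mixture} from the proof of Lemma~\ref{lem:posdef}, or equivalently an Abel regularization. Define
\[
f^{(r)}(\theta)=\sum_{t\in\mathbb Z}d_{|t|}\,r^{|t|}e^{i\theta t},\qquad 0<r<1.
\]
This series is absolutely convergent, and $f^{(r)}$ is the Poisson smoothing $P_r*f_\gamma$. For each fixed $r$, I will expand $(f^{(r)})^n$ as an $n$-fold product of absolutely convergent series and integrate term by term, which Fubini justifies. Orthogonality, $\int e^{i\theta(t_1+\cdots+t_n)}\frac{d\theta}{2\pi}=\mathbf 1[t_1+\cdots+t_n=0]$, then gives
\[
\int\frac{d\theta}{2\pi}\,(f^{(r)})^n=\sum_{t_1+\cdots+t_n=0}\prod_k d_{|t_k|}\,r^{|t_k|}.
\]

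Then I let $r\uparrow1$. On the right-hand side every term is nonnegative and increases to $\prod_k d_{|t_k|}$, so monotone convergence yields the convolution sum, whether finite or infinite. This is where the absolute-convergence hypothesis is used. On the left-hand side I need $\int (P_r*f_\gamma)^n\to\int f_\gamma^n$.

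\textbf{Main obstacle.} I expect this limit on the left to be the hard step. I will handle it in two parts.
\begin{enumerate}
\item[(a)] Show that $f_\gamma\in L^1$ is the $L^1$-limit of $P_r*f_\gamma$ and that the convergence is also a.e. This holds because $f_\gamma$ is continuous away from $\theta=0$: for $\gamma>0$, $d_t$ is convex and decreasing to $0$, so the cosine series converges there by summation by parts (Zygmund, Ch.~V).
\item[(b)] Control the $n$-th powers. The lower bound $\liminf\int (P_r*f_\gamma)^n\ge\int f_\gamma^n$ follows from Fatou. For the upper bound I will use Jensen: since $P_r$ is a probability kernel, $(P_r*f_\gamma)^n\le P_r*(f_\gamma^n)$ pointwise. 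Integrating gives $\int (P_r*f_\gamma)^n\le\int f_\gamma^n$.
\end{enumerate}
Together these give equality in the limit, whether $\int f_\gamma^n$ is finite or infinite. The result is the claimed identity whenever the convolution sum converges, and in fact the two sides agree as elements of $[0,\infty]$ in all cases.
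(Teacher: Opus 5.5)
Your proof is correct. It reaches the identity by a more careful route than the paper's, and it covers a range the paper's one-line argument does not.

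The paper writes $f_\gamma(\theta)^n=\sum_{t_1,\ldots,t_n}\prod_j d_{|t_j|}e^{i\theta(t_1+\cdots+t_n)}$, integrates term by term using orthogonality, and takes the first equality as the definition of $H$. That interchange is justified as written only when $\sum_t d_t<\infty$, i.e.\ $\gamma>1$, where the $n$-fold series converges absolutely. For $\gamma\le1$, which is the regime where Appendix~C actually needs the lemma, one has $\sum_{t_1,\ldots,t_n}\prod_j d_{|t_j|}=\bigl(\sum_t d_{|t|}\bigr)^n=\infty$. Absolute convergence of the \emph{constrained} sum over $t_1+\cdots+t_n=0$ does not by itself license term-by-term integration. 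Parseval rescues $n=2$ for $\gamma>1/2$, but nothing in the paper covers $n\ge3$.

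Your Abel regularization makes the expansion legitimate at each $r<1$, and positivity then carries the limit $r\uparrow1$. Monotone convergence handles the coefficient side. Fatou together with Jensen for the probability kernel $P_r$ gives $\int(P_r*f_\gamma)^n\to\int f_\gamma^n$ on the function side, without knowing in advance whether $f_\gamma\in L^n$. The result is an identity in $[0,\infty]$ for every $\gamma>0$. The lemma can therefore be used in both directions in finiteness statements such as Proposition~\ref{pc4}.

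Two small corrections. First, the absolute-convergence proviso is not where you say it is used. Your monotone-convergence step needs only nonnegativity of the terms, so the proviso is superfluous.

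Second, Herglotz applied to Lemma~\ref{lem:posdef} only yields a nonnegative measure with Fourier coefficients $d_{|t|}$. To identify that measure with $f_\gamma\,d\theta/2\pi$, and hence to justify $f^{(r)}=P_r*f_\gamma$, you need the convex-sequence theorem you cite from Zygmund. A more direct option is the Gamma mixture of Poisson kernels,
$f_\gamma=\Gamma(\gamma)^{-1}\int_0^\infty\lambda^{\gamma-1}e^{-\lambda}P_{e^{-\lambda}}\,d\lambda$, where $P_s(\theta)=\sum_{t\in\mathbb Z}s^{|t|}e^{i\theta t}$. This mixture gives nonnegativity, integrability and the Fourier coefficients together. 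Via $P_r*P_s=P_{rs}$ it also gives $P_r*f_\gamma=f^{(r)}$.
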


\begin{proof}  Immediate  from  $f_\gamma(\theta)^n=\sum_{t_1,\ldots,t_n}
\prod_j  d_{|t_j|}\,e^{i\theta(t_1+\cdots+t_n)}$  and  the  Fourier  orthogonality
relation  $\int_0^{2\pi}\frac{d\theta}{2\pi}e^{i\theta  k}=\mathbb  1[k=0]$.  
\end{proof}

\begin{corollary}  (First  two  moments,  exact).  For  every  $\gamma>0$,
\begin{equation}
\mu_1(H)=d_0=1.
\end{equation}
For  $\gamma>1/2$,
\begin{equation}
\mu_2(H)=\sum_{t=-\infty}^{\infty}d_t^2=2\zeta(2\gamma)-1.
\end{equation}
\label{cc2}
\end{corollary}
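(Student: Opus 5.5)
The plan is to specialize Lemma~\ref{lc1} to $n=1$ and $n=2$ and evaluate the resulting lattice sums directly. For $n=1$ the constraint $t_1=0$ leaves the single term $d_0=(1+0)^{-\gamma}=1$. No convergence question arises, so $\mu_1(H)=1$ for every $\gamma>0$. This is just the statement that the zeroth Fourier coefficient of $f_\gamma$ is $d_0$.

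The one subtlety is that for $\gamma\le1$ the symbol $f_\gamma$ is unbounded (Proposition~\ref{prop:two-thresholds}(ii)). The termwise expansion behind Lemma~\ref{lc1} therefore needs justification. I would argue in $L^2$ when $\gamma>1/2$: the Fourier series of $f_\gamma$ converges in $L^2[0,2\pi]$, so integrating against the constant function $1$ (an $L^2$ pairing) recovers the zeroth coefficient $d_0$.

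For $0<\gamma\le1/2$, where $\{d_t\}\notin\ell^2$, I would use the Gamma-mixture representation of Lemma~\ref{lem:posdef}. This writes $f_\gamma$ as a nonnegative mixture over $\lambda$ of AR(1) Poisson kernels $(1-e^{-2\lambda})/|1-e^{-\lambda}e^{i\theta}|^2$, each of which has $\theta$-average exactly $1$. Since the Poisson kernels are nonnegative, Tonelli's theorem gives $\int f_\gamma\,d\theta/2\pi=1$. Strictly speaking, the mixture of the Poisson kernels reproduces $\sum_t e^{-\lambda|t|}e^{i\theta t}$ only up to the normalization of each AR(1) kernel, but the point is that this route establishes $\int f_\gamma\,d\theta/2\pi=d_0=1$.

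For $n=2$, the constraint $t_1+t_2=0$ forces $t_2=-t_1$, and the sum becomes $\sum_{t\in\mathbb Z}d_{|t|}^2$. This is exactly Parseval's identity, Eq.~\eqref{eq:parseval} of Remark~\ref{r6}, and it is valid as soon as $\{d_t\}\in\ell^2$, i.e.\ for $\gamma>1/2$.

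It then remains to evaluate the sum:
\[
\sum_{t\in\mathbb Z}d_{|t|}^2 \;=\; 1+2\sum_{t\ge1}(1+t)^{-2\gamma} \;=\; 1+2\bigl(\zeta(2\gamma)-1\bigr) \;=\; 2\zeta(2\gamma)-1,
\]
which is finite precisely when $2\gamma>1$. This also matches $P(\gamma)$ in Lemma~\ref{l20}.

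The main obstacle is the justification of the first moment for $\gamma\le1/2$. There the Fourier series does not converge in $L^2$, and absolute convergence fails, so I would lean on positivity via the mixture representation rather than on any series manipulation.
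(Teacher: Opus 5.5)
Your proof is correct and matches the paper's. Both read off $\mu_1=d_0$ and $\mu_2=\sum_{t\in\mathbb Z}d_{|t|}^2$ as the $n=1,2$ cases of Lemma~\ref{lc1}, invoke Parseval for $\gamma>1/2$, and evaluate $1+2(\zeta(2\gamma)-1)=2\zeta(2\gamma)-1$.

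Your extra justification of $\mu_1=1$ for $\gamma\le 1$ (the $L^2$ pairing, then the Gamma mixture of Poisson kernels with Tonelli) fills a step the paper leaves implicit, since the termwise integration behind Lemma~\ref{lc1} is not automatic once $\sum_t d_t=\infty$. Your ``up to normalization'' caveat is unneeded, because $\sum_{t\in\mathbb Z}e^{-\lambda|t|}e^{i\theta t}=(1-e^{-2\lambda})/|1-e^{-\lambda}e^{i\theta}|^2$ holds exactly. The only point to check is interchanging $\sum_t$ with the $\lambda$-integral, and dominated convergence handles this: for fixed $\theta\neq 0$ these partial sums are bounded uniformly in $\lambda$.
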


\begin{proof}
$\mu_1$  is  the  $n=1$  case  of  Lemma  \ref{lc1},  giving  $d_0=1$
identically,  valid  for  every  $\gamma>0$  (only  the  trivial  term  $t_1=0$
contributes).  $\mu_2$  is  the  $n=2$  case,  and  coincides  exactly  with  the
quantity  $P(\gamma)$  of  Lemma  \ref{l20}  and  with  Eq.~(\ref{eq:parseval})    via  Parseval's
identity:  $\sum_t  d_t^2=1+2\sum_{k\ge2}k^{-2\gamma}=2\zeta(2\gamma)-1$.  
\end{proof}

  \begin{remark}
  $\mu_2(H)$  is  thus  the  same  quantity  that  controls  the
divergence  of  the  bulk  fourth-moment:  the  threshold  $\gamma_c=1/2$
for  $\mu_4(\infty,\gamma)$  and  the  threshold  for  finiteness  of  $\mu_2(H)$
coincide  exactly,  not  by  analogy  but  because  they  are  literally  the  same  sum.
\label{rc3}
\end{remark}

\subsection*{C.2  A  one-parameter  family  of  moment  thresholds}

  \begin{proposition}
  (Moment  threshold  hierarchy).  For  every  integer
$n\ge2$,
\begin{equation}
\mu_n(H)<\infty  \iff  \gamma>\gamma_c(n):=\frac{n-1}{n}.
\end{equation}
\label{pc4}
\end{proposition}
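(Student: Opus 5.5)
The plan is to work directly with the integral form $\mu_n(H)=\int_0^{2\pi}\frac{d\theta}{2\pi}f_\gamma(\theta)^n$ of Lemma~\ref{lc1} rather than the convolution sum. Then everything reduces to the local behaviour of $f_\gamma$ near its only singular point $\theta=0$.

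Two global facts come first. (a) $f_\gamma\ge0$, so that for every $n$ (odd $n$ included) finiteness of $\mu_n(H)$ is the same as $f_\gamma\in L^n$. This follows from Lemma~\ref{lem:posdef}: $\{d_t\}$ is a positive semi-definite autocovariance, so its spectral density is nonnegative. Alternatively, $d_t=(1+t)^{-\gamma}$ is convex and decreasing to $0$, so Pólya's criterion (nonnegative Fejér-kernel expansion) applies. (b) For every $\gamma>0$ and every $\delta>0$, $f_\gamma$ is continuous, hence bounded, on $[\delta,2\pi-\delta]$. The series $d_0+2\sum_{t\ge1}d_t\cos(\theta t)$ converges uniformly there by Dirichlet's test, since the $d_t$ decrease monotonically to $0$ and the partial sums of $\cos(\theta t)$ are bounded by $1/|\sin(\theta/2)|$. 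By (a) and (b), $\mu_n(H)<\infty$ iff $\int_0^\delta f_\gamma(\theta)^n\,d\theta<\infty$ (using the symmetry $\theta\mapsto2\pi-\theta$).

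Next I split by $\gamma$. For $\gamma>1$, $f_\gamma$ is bounded by $\sigma_1(\gamma)=2\zeta(\gamma)-1$ (Lemma~\ref{lem:sigma1-gamma}), so every moment is finite. This is consistent with the claim, since $(n-1)/n<1$. For $\gamma=1$, Remark~\ref{r77} gives $f_1(\theta)\sim-2\log\theta$, and $\int_0^\delta|\log\theta|^n\,d\theta<\infty$ for every $n$; again this is consistent, as $1>(n-1)/n$.

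The main case is $0<\gamma<1$. Here I need the two-sided bound
\[
c_1\,\theta^{\gamma-1}\le f_\gamma(\theta)\le c_2\,\theta^{\gamma-1},\qquad 0<\theta<\delta,
\]
with constants $c_1,c_2>0$ depending on $\gamma$. Given this bound, $\int_0^\delta\theta^{n(\gamma-1)}\,d\theta<\infty$ iff $n(1-\gamma)<1$, i.e.\ iff $\gamma>(n-1)/n$, which is exactly the claim. The leading-order asymptotic $f_\gamma(\theta)\sim2C_f(\gamma)\theta^{\gamma-1}$ with $C_f(\gamma)>0$ is already in Proposition~\ref{prop:two-thresholds}(ii), together with the shift invariance of Remark~\ref{r8}. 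That asymptotic immediately yields both inequalities for $\delta$ small.

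The step I expect to be delicate is making this asymptotic rigorous for the shifted sequence $(1+t)^{-\gamma}$ rather than $t^{-\gamma}$. If the polylogarithm route needs shoring up, I would give a self-contained argument by summation by parts. Splitting the sum at $t\approx1/\theta$ gives
\[
\sum_{t\le 1/\theta}d_t\cos(\theta t)\asymp\sum_{t\le1/\theta}t^{-\gamma}\asymp\theta^{\gamma-1},
\]
because $\cos(\theta t)\ge\cos 1>0$ in that range. For the tail, Abel summation with $\Delta d_t\asymp t^{-\gamma-1}$ bounds it by $O\bigl(d_{1/\theta}/\theta\bigr)=O(\theta^{\gamma-1})$. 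For the lower bound I have to check that the tail cannot cancel the head. Convexity of $d_t$ gives, via a second Abel summation against the Fejér kernel, a nonnegative representation whose dominant part is again of order $\theta^{\gamma-1}$. This closes both the upper and the lower bound.
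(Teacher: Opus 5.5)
Your proposal is correct and follows essentially the same route as the paper's proof. You reduce $\mu_n(H)$ to integrability of $f_\gamma^n$ near $\theta=0$, use the polylogarithm asymptotic $f_\gamma(\theta)\sim 2C_f(\gamma)\theta^{\gamma-1}$ for $0<\gamma<1$ to get the condition $n(1-\gamma)<1$, and handle $\gamma\ge 1$ by boundedness or the logarithmic singularity. Your extra steps fill in points the paper only asserts, without changing the argument: nonnegativity of $f_\gamma$ (needed for odd $n$), uniform convergence away from $\theta=0$ by Dirichlet's test, and the Fej\'er-kernel lower bound as a self-contained substitute for the polylogarithm expansion.
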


  \begin{proof}
By  Proposition  \ref{prop:two-thresholds}'s  proof  (the  polylogarithm  expansion),  for
$0<\gamma<1$,
\begin{equation}
f_\gamma(\theta)\sim  2C_f(\gamma)\,|\theta|^{\gamma-1},\qquad  \theta\to0^+,
\qquad  C_f(\gamma)=\Gamma(1-\gamma)\sin\!\left(\frac{\pi\gamma}{2}\right)>0.
\label{ec5}
\end{equation}
This  is  the  unique  singularity  of  $f_\gamma$  on  $[0,2\pi)$  (by  the  shift
invariance  argument  of  Remark  \ref{r8},  together  with  smoothness  of  $f_\gamma$  away
from  $\theta=0$).  Since  $C_f(\gamma)>0$  and  $\gamma-1<0$,  this  singular  term
dominates  the  bounded  background  $d_0=1$  as  $\theta\to0$,  so
$f_\gamma(\theta)^n\sim(2C_f(\gamma))^n|\theta|^{n(\gamma-1)}$,  and
$\int_0^{2\pi}f_\gamma(\theta)^n\,d\theta<\infty$  if  and  only  if
$n(\gamma-1)>-1$,  i.e.\  $\gamma>(n-1)/n$.  For  $\gamma\ge1$,  $f_\gamma$  is
bounded  (Proposition  \ref{prop:two-thresholds}  (ii))  or  diverges  only  logarithmically  (Remark  \ref{r77},  at
$\gamma=1$  exactly),  and  in  either  case  $\int  f_\gamma^n\,d\theta<\infty$  for
every  finite  $n$,  consistent  with  $(n-1)/n<1\le\gamma$.  
\end{proof}
  
\begin{remark}
(Consistency  checks).  At  $n=2$,  Proposition  \ref{pc4}  gives
$\gamma_c(2)=1/2$,  recovering  Remark  \ref{r6}'s  threshold  by  an  independent  route
(local  singularity  analysis  of  $f_\gamma$,  rather  than  the  global  sum
$\sum  d_t^2$);  both  derivations  agree  exactly  by  Corollary  \ref{cc2}  As
$n\to\infty$,  $\gamma_c(n)\to1^-$,  recovering  the  Assumption  (E)  threshold  as
the  limiting  point  of  this  hierarchy:  boundedness  of  $f_\gamma$  (needed  for
Assumption  (E))  is  strictly  stronger  than  finiteness  of  every  individual
moment  $\mu_n(H)$,  so  $\gamma=1$  is  correctly  recovered  only  as  a  limit,  not
as  a  finite-$n$  instance  of  the  hierarchy.
\label{rc5}
\end{remark}

\begin{corollary}
  (Only  finitely  many  moments  exist  below  $\gamma=1$).
Fix  $\gamma\in(0,1)$.  Then  $\mu_n(H)<\infty$  for  $n<1/(1-\gamma)$  and
$\mu_n(H)=\infty$  for  $n\ge1/(1-\gamma)$.  In  particular,  only  finitely  many
moments  of  $H$  are  finite  for  any  fixed  $\gamma<1$.
\label{cc6}
\end{corollary}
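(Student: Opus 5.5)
The corollary is a direct rewriting of the threshold condition in Proposition~\ref{pc4}, so the plan is to fix $\gamma\in(0,1)$ and an integer $n\ge2$ and invert the inequality $\gamma>\gamma_c(n)=(n-1)/n$. Since $(n-1)/n=1-1/n$, the condition $\gamma>1-1/n$ is equivalent to $1/n>1-\gamma$. Because $1-\gamma>0$, this is in turn equivalent to $n<1/(1-\gamma)$. Proposition~\ref{pc4} then gives $\mu_n(H)<\infty$ exactly when $n<1/(1-\gamma)$, and $\mu_n(H)=\infty$ exactly when $n\ge 1/(1-\gamma)$. The boundary case $n=1/(1-\gamma)$, which arises when $1/(1-\gamma)$ is an integer, corresponds to $\gamma=\gamma_c(n)$. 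There the ``iff'' of Proposition~\ref{pc4} places it on the divergent side, because $\int|\theta|^{n(\gamma-1)}d\theta=\int|\theta|^{-1}d\theta$ diverges logarithmically.

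The case $n=1$ is handled separately by Corollary~\ref{cc2}, which gives $\mu_1(H)=1<\infty$ for every $\gamma>0$. This agrees with the statement, since $1<1/(1-\gamma)$ for every $\gamma\in(0,1)$. The ``in particular'' clause follows because the set $\{n\ge1: n<1/(1-\gamma)\}$ is a finite initial segment of $\mathbb N$, and for $\gamma<1$ the quantity $1/(1-\gamma)$ is a finite real number.

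The only real issue is whether Proposition~\ref{pc4} covers the equality case $\gamma=(n-1)/n$. I will check this against its proof. Positivity of $f_\gamma$ near $\theta=0$ together with the asymptotic $f_\gamma\sim 2C_f(\gamma)|\theta|^{\gamma-1}$ from \eqref{ec5} gives a two-sided comparison $f_\gamma^n\asymp|\theta|^{-1}$ on a neighbourhood of $0$. The integral therefore diverges at the threshold, which settles the equality case.
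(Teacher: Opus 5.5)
Your proposal is correct and follows the paper's proof, which is the same one-line inversion of $\gamma>(n-1)/n$ into $n(1-\gamma)<1$ via Proposition~\ref{pc4}. Two of your steps are tidy-ups the paper leaves implicit: treating $n=1$ separately via Corollary~\ref{cc2}, since Proposition~\ref{pc4} is stated only for $n\ge2$, and checking that the equality case $\gamma=(n-1)/n$ falls on the divergent side.
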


  \begin{proof}
Immediate  from  Proposition  \ref{pc4},  solving  $\gamma>(n-1)/n$  for
$n$:  this  is  equivalent  to  $n(1-\gamma)<1$,  i.e.\  $n<1/(1-\gamma)$.  
\end{proof}
  
\subsection*{C.3  Tail  index  of  $H$  and  comparison  with  Student's  $t$}

  \begin{proposition}
  (Power-law  tail  of  $H$  for  $\gamma<1$).  For
$0<\gamma<1$,
\begin{equation}
H\bigl((\sigma,\infty)\bigr)  \sim  A(\gamma)\,\sigma^{-\nu(\gamma)},
\qquad  \sigma\to\infty,\qquad  \nu(\gamma):=\frac{1}{1-\gamma}.
\end{equation}
\label{pc7}
\end{proposition}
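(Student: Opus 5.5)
The plan is to turn the local singularity of $f_\gamma$ at $\theta=0$, already obtained in the proof of Proposition~\ref{prop:two-thresholds} and recorded as Eq.~\eqref{ec5}, into a tail estimate for the pushforward measure. Since $H((\sigma,\infty))=\frac{1}{2\pi}\,\bigl|\{\theta\in[0,2\pi):f_\gamma(\theta)>\sigma\}\bigr|$, it suffices to describe the superlevel set $\{f_\gamma>\sigma\}$ for large $\sigma$.

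\textbf{Step 1: localize the superlevel set.} By the argument in the proof of Proposition~\ref{pc4}, $f_\gamma$ is continuous on $[\delta,2\pi-\delta]$ for every $\delta>0$, because $\theta=0$ is its only singularity. Hence $f_\gamma$ is bounded there by some $M_\delta$. For $\sigma>M_\delta$ the set $\{f_\gamma>\sigma\}$ therefore lies in $(-\delta,\delta)$ modulo $2\pi$. Since $d_t$ is real and even in $t$, $f_\gamma(-\theta)=f_\gamma(\theta)$, so we may write
\[
H((\sigma,\infty))=\tfrac{1}{\pi}\,\bigl|\{\theta\in(0,\delta):f_\gamma(\theta)>\sigma\}\bigr|.
\]

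\textbf{Step 2: invert the asymptotics.} Set $c:=2C_f(\gamma)>0$. By Eq.~\eqref{ec5}, for each $\varepsilon>0$ we can choose $\delta$ so that
\[
(1-\varepsilon)\,c\,\theta^{\gamma-1}\le f_\gamma(\theta)\le(1+\varepsilon)\,c\,\theta^{\gamma-1}\qquad\text{on }(0,\delta).
\]
The map $\theta\mapsto\theta^{\gamma-1}$ is strictly decreasing there because $\gamma<1$. Consequently the superlevel set inside $(0,\delta)$ is sandwiched between the intervals
\[
\bigl(0,\ ((1\mp\varepsilon)c/\sigma)^{1/(1-\gamma)}\bigr)
\]
for all large $\sigma$. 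This holds even though $f_\gamma$ need not be monotone on $(0,\delta)$, since the sandwich bounds already control membership in the superlevel set. Multiplying by $\sigma^{1/(1-\gamma)}$ and letting $\varepsilon\to0$ gives the claim with
\[
\nu(\gamma)=\frac{1}{1-\gamma},\qquad A(\gamma)=\frac{1}{\pi}\,\bigl(2C_f(\gamma)\bigr)^{1/(1-\gamma)}.
\]

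\textbf{Step 3: consistency check.} The result should agree with Corollary~\ref{cc6}. A tail $\sim\sigma^{-\nu}$ makes $\mu_n(H)$ finite exactly when $n<\nu=1/(1-\gamma)$, which is the threshold found there.

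\textbf{Main obstacle.} The key input is the rigorous form of Eq.~\eqref{ec5}: a genuine ratio limit $f_\gamma(\theta)/\theta^{\gamma-1}\to 2C_f(\gamma)$, together with boundedness of $f_\gamma$ away from $0$. The cited polylogarithm expansion supplies the first for $t^{-\gamma}$. The shift to $(1+t)^{-\gamma}$ via Remark~\ref{r8} changes $f_\gamma$ only by a bounded continuous term and the smooth factor $e^{-i\theta}\to1$, so the ratio limit survives. For boundedness away from $0$, I would use summation by parts: $d_t$ is monotone decreasing to $0$ and the partial sums of $\cos(\theta t)$ are bounded for $\theta\neq0$, so the series converges uniformly on $[\delta,2\pi-\delta]$ and $f_\gamma$ is continuous there.
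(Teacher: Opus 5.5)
Your proof is correct and follows the same route as the paper. Both invert $f_\gamma(\theta)\sim 2C_f(\gamma)\,\theta^{\gamma-1}$ near $\theta=0$ and take the factor $1/\pi$ from the two symmetric sides, which gives the same constant $A(\gamma)=(2C_f(\gamma))^{1/(1-\gamma)}/\pi$. Your two-sided sandwich and your Dirichlet-test localization of the superlevel set are more careful than the paper, which asserts without proof that $f_\gamma$ is decreasing near $0$ and leaves boundedness of $f_\gamma$ away from $\theta=0$ implicit.
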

The explicit closed form of $A(\gamma)$ is given in Eq.~(\ref{64}).
  \begin{proof}
Inverting  $t=f_\gamma(\theta)\sim2C_f(\gamma)\theta^{\gamma-1}$
(Eq.~(\ref{ec5}))  for  small  $\theta>0$  gives
$\theta(t)\sim\bigl(2C_f(\gamma)/t\bigr)^{1/(1-\gamma)}$.  Since  $H$  is  the
pushforward  of  the  uniform  measure  on  $\theta$,  and  $f_\gamma$  is  decreasing
on  a  neighborhood  of  $\theta=0$,  $H((\sigma,\infty))\sim\theta(\sigma)/\pi$,
giving  the  stated  power  law  with  exponent  $\nu(\gamma)=1/(1-\gamma)$.  
\end{proof}

  \begin{remark}
(Relation  to  the  Student-$t$  moment  problem).  Proposition
\ref{pc7}  identifies  $\nu(\gamma)=1/(1-\gamma)$  as  the  tail  index  of  $H$,  and
Proposition  \ref{pc4}  can  be  restated  as
\begin{equation}
\mu_n(H)<\infty  \iff  n<\nu(\gamma),
\end{equation}
exactly  the  moment-existence  criterion  for  a  random  variable  with  $\nu(\gamma)$
degrees  of  freedom  in  the  Student-$t$  sense  (a  Student-$t_\nu$  variable  has
$E|T|^n<\infty$  iff  $n<\nu$).  This  identifies  $\gamma$  as  playing  the  role  of
an  effective,  continuously-varying  ``degrees  of  freedom''  parameter  for  the
tail  of  $H$:  $\nu(\gamma)\to1^+$  as  $\gamma\to0^+$,  and  $\nu(\gamma)\to\infty$
as  $\gamma\to1^-$.  We  emphasize  that  this  correspondence  concerns  only  the
moment-existence  mechanism  (matching  tail  index);  $H$  is  a  one-sided  measure
(supported  on  $[0,\infty)$,  being  the  law  of  a  nonnegative  spectral  density)
and  is  not  itself  distributed  as  $t_{\nu(\gamma)}$.
\label{rc8}
\end{remark}

  \begin{remark}
  (The  endpoint  $\gamma=1$  is  not  Gaussian).  At  $\gamma=1$,
$f_1(\theta)\sim-2\log|\theta|$  (Remark  \ref{r77}),  so  every  moment  $\mu_n(H)$  is
finite  (Proposition  \ref{pc4},  $n<\infty=\nu(1)$  for  every  finite  $n$),  matching  the
$\nu\to\infty$  limit  of  the  Student-$t$  moment  problem.  However,  inverting
$t\sim-2\log\theta$  gives  $\theta(t)\sim  e^{-t/2}$,  so
\begin{equation}
H\bigl((\sigma,\infty)\bigr)\sim  e^{-\sigma/2},\qquad  \sigma\to\infty,
\end{equation}
an  exponential  rather  than  Gaussian  ($e^{-\sigma^2/2}$)  tail.  Finiteness  of
every  moment  is  necessary  but  not  sufficient  for  a  Gaussian  tail;  at
$\gamma=1$,  $H$  is  better  compared  to  an  exponential-type  law  than  to  a
normal  law.  Together  with  Corollary  \ref{cor:finite-power}'s  requirement  $\gamma>1$  for  compact
support  of  $H$,  this  identifies  $\gamma=1$  as  the  point  where  three  distinct
features  of  $H$  change  character  simultaneously:  (i)  the  support  becomes
bounded  for  $\gamma>1$  (Corollary  \ref{cor:finite-power}),  (ii)  the  tail  of  $H$  switches  from
power-law  to  exponential  exactly  at  $\gamma=1$  (Proposition  \ref{pc7}  vs.\  Remark  \ref{rc9}),  and  (iii)  Assumption  (E)  holds  for  $\gamma>1$  and  fails  for
$\gamma\le1$  (Proposition  \ref{prop:two-thresholds}(ii)).
\label{rc9}
\end{remark}


\subsection*{C.4  Tail-corrected  large-$a$  expansion  of  the  edge  equation}

Corollary  \ref{cc6}  shows  that  for  fixed  $\gamma\in(0,1)$  the  naive  moment  expansion
of  $I(a)$  and  $F(a,\gamma)$  terminates  after  finitely  many  terms.  We  now  show
that  this  termination  is  only  apparent:  the  regularly  varying  tail  of  $H$
(Proposition  \ref{pc7})  supplies  a  genuine,  non-integer-power  correction  term  that
smoothly  extends  the  expansion  beyond  the  last  finite  moment.  We  derive  this
first  for  $I(a)$  alone,  and  then  transport  the  result  to  $K_1(a)$  and
$F(a,\gamma)=I(a)-K_1(a)$  via  the  exact  algebraic  identity
$K_1(a)=-aI'(a)-I(a)$.

\subsubsection*{C.4.1  The  tail  correction  to  $I(a)$}
  
Throughout,  fix  $\gamma\in(0,1)$,  write  $\nu=\nu(\gamma)=1/(1-\gamma)$
(assumed  non-integer  unless  stated  otherwise),  $m:=\lfloor\nu\rfloor$,  and
recall  from  Proposition  \ref{pc4}  that
$H((t,\infty))\sim  A(\gamma)\,t^{-\nu}$  as  $t\to\infty$,  with
\begin{equation}
A(\gamma)=\frac{\bigl(2C_f(\gamma)\bigr)^{\nu(\gamma)}}{\pi},\qquad
C_f(\gamma):=\Gamma(1-\gamma)\sin\!\left(\frac{\pi\gamma}{2}\right)
\label{64}
\end{equation}
(the  constant  of  Proposition  \ref{prop:two-thresholds}'s  proof).  We  will  use  the  classical
principal-value  identity
\begin{equation}
\mathrm{P.V.}\int_0^\infty  \frac{x^{p-1}}{1-x}\,dx=\pi\cot(\pi  p),
\qquad  0<p<1.  \tag{C.i}
\end{equation}
  
\begin{proposition}
(Tail-corrected  expansion  of  $I(a)$).  As
$a\to\infty$,
\begin{equation}
I(a)  =  \sum_{n=0}^{m}\frac{\mu_n(H)}{a^{n+1}}  +  C_H(\gamma)\,a^{-\nu-1}
+  o\bigl(a^{-\nu-1}\bigr),
\end{equation}
where
\begin{equation}
C_H(\gamma)  =  -A(\gamma)\,\nu(\gamma)\,\pi\cot\bigl(\pi\nu(\gamma)\bigr)
=  -\bigl(2C_f(\gamma)\bigr)^{\nu(\gamma)}\,\nu(\gamma)\,
\cot\bigl(\pi\nu(\gamma)\bigr).
\end{equation}
\label{pc13}
\end{proposition}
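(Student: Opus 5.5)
The plan is to prove the expansion by a finite geometric expansion of the kernel $1/(a-t)$, followed by a scaling analysis of the remainder driven by the regularly varying tail of $H$ from Proposition~\ref{pc7}. Since $H$ has unbounded support for $\gamma<1$, I would read $I(a)$ for large real $a$ as a principal-value integral, i.e.\ the real part of the boundary value of the Stieltjes transform; this is the only reading under which the $\cot$ term can appear.

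\textbf{Step 1 (Taylor expansion of the kernel).} I will use the exact identity
\[
\frac{1}{a-t}=\sum_{n=0}^{m}\frac{t^n}{a^{n+1}}+\frac{t^{m+1}}{a^{m+1}(a-t)} .
\]
Integrating against $H$, the first $m+1$ terms give $\sum_{n\le m}\mu_n(H)a^{-n-1}$. All of these moments are finite by Corollary~\ref{cc6}, since $m=\lfloor\nu\rfloor<\nu$ for non-integer $\nu$. It remains to show that
\[
R(a):=a^{-m-1}\,\mathrm{P.V.}\int \frac{t^{m+1}}{a-t}\,dH(t)=C_H(\gamma)\,a^{-\nu-1}+o(a^{-\nu-1}).
\]

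\textbf{Step 2 (split off the bounded part).} Fix a large constant $T$. The contribution of $t\le T$ to $R(a)$ is $O(a^{-m-2})$, and this is $o(a^{-\nu-1})$ because $\nu<m+1$.

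\textbf{Step 3 (the tail region).} On $t>T$, $H$ has a density $h$, since $f_\gamma$ is strictly decreasing and smooth on a punctured neighbourhood of $\theta=0$. I would upgrade Proposition~\ref{pc7} to the density statement $h(t)\sim A(\gamma)\nu\,t^{-\nu-1}$. This should follow by differentiating the polylogarithm expansion of Proposition~\ref{prop:two-thresholds} termwise, which gives $f_\gamma'(\theta)\sim 2C_f(\gamma)(\gamma-1)\theta^{\gamma-2}$, and then applying the change-of-variables formula $h(t)=1/(\pi|f_\gamma'(\theta(t))|)$.

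Next substitute $t=ax$. Then
\[
\mathrm{P.V.}\int_T^\infty\frac{t^{m+1}h(t)}{a-t}\,dt\approx A\nu\,a^{m-\nu}\,\mathrm{P.V.}\int_0^\infty\frac{x^{m-\nu}}{1-x}\,dx .
\]
Apply (C.i) with $p=m+1-\nu\in(0,1)$; this gives the value $\pi\cot(\pi(m+1-\nu))=-\pi\cot(\pi\nu)$. Multiplying by $a^{-m-1}$ produces exactly $-A\nu\pi\cot(\pi\nu)\,a^{-\nu-1}=C_H(\gamma)a^{-\nu-1}$. The second form of $C_H$ then follows by inserting $A=(2C_f)^\nu/\pi$ from Eq.~(\ref{64}).

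\textbf{Main obstacle.} The hard step is justifying the replacement of $h$ by its asymptotic form inside the scaled principal-value integral with only an $o(1)$ relative error. I would split the $x$-range into three parts.
\begin{itemize}
\item For $x$ away from $0$ and $1$, dominated convergence applies, using $h(ax)/(A\nu a^{-\nu-1}x^{-\nu-1})\to1$ uniformly on compacts.
\item For small $x$, i.e.\ $T/a<x<\delta$, the integrand is bounded by $Cx^{m-\nu}$, which is integrable at $0$ because $m-\nu>-1$.
\item For large $x$, it is bounded by $x^{m-\nu-1}$, which is integrable at infinity.
\end{itemize}
The delicate part is the neighbourhood of $x=1$, where the principal value requires cancellation. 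There I would symmetrize, writing the integral as $\int_0^\delta[g(1-y)-g(1+y)]\,dy/y$ with $g(x)=x^{m+1}h(ax)a^{\nu+1}$. This needs a local Lipschitz-type control of $h$ near $t=a$, uniformly in $a$, which I expect to obtain from the second-derivative asymptotics of $f_\gamma$ (again from the polylogarithm expansion). If that control proves awkward, I would fall back on computing the imaginary-axis Stieltjes transform $\int dH/(a-t)$ at $a=|a|e^{i\phi}$ via the Mellin integral $\int_0^\infty x^{p-1}/(1-xe^{-i\phi})\,dx$, and then take the real-axis boundary value; this avoids the principal value entirely.
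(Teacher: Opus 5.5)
Your proposal is correct and follows the paper's route: the finite geometric expansion of $1/(a-t)$, the rescaling $t=ax$ of the remainder against the tail density $A\nu t^{-\nu-1}$, and the principal-value identity (C.i) with $p=m+1-\nu$. Where the paper only invokes a ``Tauberian correspondence'' and asserts the density asymptotics, you supply the justification without changing the argument: you derive $h(t)\sim A\nu t^{-\nu-1}$ from the differentiated polylogarithm expansion, split off $t\le T$ as $O(a^{-m-2})$, and control the principal value near $x=1$ by symmetrization.
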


\begin{proof}
  By  the  geometric-series  remainder  identity,
\begin{equation}
R_m(a):=I(a)-\sum_{n=0}^m\frac{\mu_n(H)}{a^{n+1}}
=  \frac{1}{a^{m+1}}\int_0^\infty\frac{t^{m+1}\,dH(t)}{a-t}.  \tag{C.ii}
\end{equation}
Since  $H((t,\infty))\sim  A(\gamma)t^{-\nu}$,  the  underlying  density  satisfies
$dH(t)\sim  A(\gamma)\nu(\gamma)\,t^{-\nu-1}\,dt$  as  $t\to\infty$;  by  the
standard  Tauberian  correspondence  between  the  tail  of  a  regularly  varying
measure  and  the  singular  part  of  its  Cauchy/Stieltjes-type  transform  \cite{Bingham1987Regular},  only  this  tail
governs  the  leading  non-analytic  behavior  as  $a\to\infty$;  the
integer  moments  $\mu_0,\ldots,\mu_m$  already  subtracted  account  for  the
remaining  (bulk,  non-tail)  contribution  to  every  order  up  to  $a^{-(m+1)}$.
Substituting  the  tail  density  and  rescaling  $t=au$,
\begin{equation}
\int_0^\infty\frac{t^{m+1}\,dH(t)}{a-t}
\sim  A(\gamma)\nu(\gamma)\int_0^\infty\frac{t^{m-\nu}}{a-t}\,dt
=  A(\gamma)\nu(\gamma)\,a^{m-\nu}\,\mathrm{P.V.}\!\int_0^\infty
\frac{u^{m-\nu}}{1-u}\,du.
\end{equation}
Writing  $p=m-\nu+1\in(0,1)$  (using  $m<\nu<m+1$)  and  applying  (C.i),
\begin{equation}
\mathrm{P.V.}\!\int_0^\infty\frac{u^{m-\nu}}{1-u}\,du
=  \pi\cot\bigl(\pi(m-\nu+1)\bigr)  =  -\pi\cot(\pi\nu),
\end{equation}
using  periodicity  ($\cot$  has  period  $\pi$,  and  $m+1\in\mathbb  Z$)  and
oddness  of  $\cot$.  Hence  $\int_0^\infty  t^{m+1}dH(t)/(a-t)
\sim  -A(\gamma)\nu(\gamma)\pi\cot(\pi\nu)\,a^{m-\nu}$,  and  dividing  by
$a^{m+1}$    gives  $R_m(a)\sim  C_H(\gamma)a^{-\nu-1}$  as  claimed.  The
formula  for  $A(\gamma)$  was  established  in  Proposition  \ref{pc7}'s  proof.
\end{proof}
  
\begin{remark}
Proposition  \ref{pc13}  was  verified
numerically  to  high  precision  on  the  exact  toy  model  $\bar
H(t)=(1+t)^{-\nu}$  ($\nu=5/3$,  so  $A=1$,  $m=1$):  the  quantity  $R_1(a)\,
a^{\nu+1}$  converges  to  the  predicted  value  $C_H(\gamma)\approx3.0238$  with  a
consistent  geometric  convergence  rate  (ratio  of  successive  deviations
$\approx0.79$  per  doubling  of  $a$),  confirming  both  the  exponent  $-\nu-1$  and
the  coefficient  formula.
\label{rc14}
\end{remark}
\begin{remark}[Integer $\nu$: transition to a logarithmic correction]
\label{rem:log-correction}
At $\nu(\gamma) = m+1 \in \mathbb{N}$ (i.e.\ $\gamma = m/(m+1)$, matching exactly the
boundary $\gamma_c(m+1)$ of Proposition~\ref{pc4}), $\cot(\pi\nu)$ has a
pole and the formula for $C_H(\gamma)$ in Proposition~\ref{pc13} diverges.
This is the standard signal, in Tauberian asymptotics, that the power-law correction
degenerates into a logarithmic one:
\begin{equation}
I(a) = \sum_{n=0}^{m} \frac{\mu_n(H)}{a^{n+1}}
     + D_{\log}(\gamma)\, a^{-\nu-1}\log a
     + o\!\left(a^{-\nu-1}\log a\right),
     \qquad \nu(\gamma) = m+1 \in \mathbb{N},
\label{eq:log-correction}
\end{equation}
with
\begin{equation}
D_{\log}(\gamma) = A(\gamma)\,\nu(\gamma).
\label{eq:Dlog-formula}
\end{equation}
\end{remark}
\begin{lemma}[Exact log-correction coefficient on the toy model, integer $\nu$]
\label{lem:toy-log-correction}
Fix $\nu \in \mathbb{N}$ and let
\[
h(t) = A\,\nu\,(1+t)^{-\nu-1}, \qquad t \ge 0,
\]
a probability density normalized to unit mass, reproducing the tail
$H\big((\sigma,\infty)\big)\sim A(\gamma)\sigma^{-\nu}$.
Write $s := 1+t$, $b := a+1$. Then, for every $a > 0$,
\begin{equation}
I(a) := \int_0^\infty \frac{h(t)}{a-t}\,dt
     = A\,\nu \left[ \sum_{j=1}^{\nu} \frac{1}{j\, b^{\nu+1-j}} + \frac{\log a}{b^{\nu+1}} \right].
\label{eq:toy-closed-form}
\end{equation}
In particular, as $a \to \infty$, the expansion of Eq. \eqref{eq:toy-closed-form} reproduces
the integer moments $\mu_0(h), \dots, \mu_{\nu-1}(h)$ at orders $a^{-1}, \dots, a^{-\nu}$,
and the coefficient of the leading non-integer term $a^{-\nu-1}\log a$ is exactly
\begin{equation}
D_{\log}(\nu) = A\,\nu.
\label{rc15}
\end{equation}
\end{lemma}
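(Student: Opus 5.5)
The plan is to evaluate the integral in closed form by a partial-fraction decomposition in the shifted variable $s=1+t$. Since $h$ is supported on all of $[0,\infty)$, the point $t=a$ lies inside the range of integration for every $a>0$, so $I(a)$ must be read as a principal value, as in (C.i) and (C.ii). I would state this convention explicitly at the outset. (Unit mass of $h$ forces $A=1$; I would carry $A$ along anyway, since it is only an overall prefactor.)

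With $s=1+t$ and $b=a+1$ we have $a-t=b-s$, so
\[
I(a)=A\nu\,\mathrm{P.V.}\!\int_1^\infty \frac{ds}{s^{\nu+1}(b-s)} .
\]
Put $n=\nu+1$. I would use the finite algebraic identity
\[
\frac{1}{s^{n}(b-s)}=\sum_{k=1}^{n}\frac{1}{b^{\,n+1-k}s^{k}}+\frac{1}{b^{n}(b-s)},
\]
which can be checked by multiplying through by $s^n(b-s)$: the right-hand side telescopes to $1$. I would integrate the terms as follows.

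For $2\le k\le n$, the term gives $\int_1^\infty s^{-k}\,ds=1/(k-1)$. Reindexing with $j=k-1$ produces $\sum_{j=1}^{\nu}1/(j\,b^{\nu+1-j})$.

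The $k=1$ term and the last term are not separately integrable, so I would combine them into $b^{-n}\bigl(1/s+1/(b-s)\bigr)$. Its principal-value antiderivative is $\log\bigl(s/|s-b|\bigr)$. This tends to $0$ as $s\to\infty$ and equals $-\log(b-1)=-\log a$ at $s=1$. The contributions $\log|s-b|$ from the two sides of $s=b$ cancel under the symmetric principal-value cut. This term therefore contributes $b^{-(\nu+1)}\log a$, which gives \eqref{eq:toy-closed-form}.

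For the asymptotics I would expand $b^{-k}=a^{-k}(1+1/a)^{-k}$. The finite sum then becomes a convergent power series in $1/a$ beginning at order $a^{-1}$. The logarithmic term is $a^{-\nu-1}\log a\,(1+O(1/a))$. Since no other term carries $\log a$, the coefficient of $a^{-\nu-1}\log a$ is exactly $A\nu$, which is \eqref{rc15}.

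To identify the coefficients at orders $a^{-1},\dots,a^{-\nu}$ with $\mu_0(h),\dots,\mu_{\nu-1}(h)$, I would not expand term by term. Instead I would invoke the remainder identity (C.ii) with $m=\nu-1$; these moments are finite because $t^{k}h(t)\sim t^{k-\nu-1}$ is integrable for $k\le\nu-1$. By (C.ii), $I(a)-\sum_{k<\nu}\mu_k a^{-k-1}=a^{-\nu}\,\mathrm{P.V.}\!\int t^{\nu}h(t)/(a-t)\,dt$. Splitting this integral at $t=a/2$ and $t=2a$ shows it is $O(a^{-1}\log a)$, so the remainder is $O(a^{-\nu-1}\log a)=o(a^{-\nu})$. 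Uniqueness of asymptotic expansions in the scale $a^{-1},\dots,a^{-\nu}$ then forces the coefficients of the closed form at those orders to be the moments.

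I expect the main obstacle to be the rigorous handling of the principal value near $s=b$, both in the closed-form step and in this $O(a^{-1}\log a)$ bound, where the symmetric cancellation must be made explicit. For the bound I would subtract $t^\nu h(t)$ evaluated at $t=a$ on the window $[a/2,2a]$ before estimating; everything else is finite algebra.
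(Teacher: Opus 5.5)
Your proof is correct. For the closed form you follow the paper's proof.
\begin{itemize}
\item You use the same partial-fraction identity for $1/(s^{\nu+1}(b-s))$.
\item You integrate the $s^{-k}$ terms with $k\ge 2$ elementarily.
\item You pair the $s^{-1}$ and $(b-s)^{-1}$ terms under the principal value.
\end{itemize}
Your evaluation of that pair is more accurate than the paper's write-up. The boundary value at $s=1$ is $-\log(b-1)=-\log a$, so the term $\log a/b^{\nu+1}$ comes out exactly for every $a>0$. The paper records the bracket as $\log b$ and reaches $\log a$ only ``to leading order'', which on its own would not give an exact finite-$a$ identity.

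Where you depart from the paper is in matching the coefficients at orders $a^{-1},\dots,a^{-\nu}$ with $\mu_0(h),\dots,\mu_{\nu-1}(h)$. The paper asserts that expanding $b^{-r}=a^{-r}(1+1/a)^{-r}$ term by term yields $\mu_n(h)=n!\,\nu\,\Gamma(\nu-n)/\Gamma(\nu+1)$. Written out with $A=1$, that is the identity $\nu\sum_{i=0}^{n}(-1)^i\binom{n}{i}(\nu-n+i)^{-1}=\nu\,B(\nu-n,n+1)$, a binomially expanded beta integral that the paper leaves unstated. You instead combine three ingredients:
\begin{itemize}
\item the remainder identity (C.ii) at $m=\nu-1$;
\item a localized $O(a^{-1}\log a)$ bound on the principal-value remainder;
\item uniqueness of asymptotic expansions.
\end{itemize}

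Once the identity is supplied, the paper's route is a short direct computation that also produces the explicit moment values. Your route needs neither the explicit moments nor any combinatorics, only that $t^{\nu}h(t)=O(1/t)$ with derivative $O(1/t^{2})$. The price is the principal-value estimate near $t=a$, which you correctly identify as the delicate step and handle properly by subtracting $t^\nu h(t)|_{t=a}$ on the window $[a/2,2a]$.
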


\begin{proof}
With $A=1$ WLOG (the general case follows by linearity), write $s=1+t$, $b=a+1$, so
\[
I(a) = \nu \int_1^\infty \frac{s^{-\nu-1}}{b-s}\,ds .
\]
Expanding $1/(b-s) = \sum_{j\ge 0} s^j/b^{j+1}$ (valid for $s<b$, extended by analytic
continuation / principal value beyond) and truncating the resulting Laurent expansion of
$1/\big(s^{\nu+1}(b-s)\big)$ about $s=0$ at order $s^{-1}$ gives the exact partial-fraction
identity
\[
\frac{1}{s^{\nu+1}(b-s)} = \sum_{k=1}^{\nu+1} \frac{1}{b^{\nu+2-k}}\, s^{-k}
                          + \frac{1}{b^{\nu+1}}\cdot\frac{1}{b-s}.
\]
The coefficient of $s^{-1}$ on the left, $1/b^{\nu+1}$, coincides exactly with the
coefficient of $(b-s)^{-1}$ on the right. Consequently, when the principal-value integral
$\int_1^\infty(\cdot)\,ds$ is taken as a whole, the individually logarithmically-divergent
contributions of the $s^{-1}$ term and the $(b-s)^{-1}$ term cancel down to the single
finite quantity
\[
\text{P.V.}\int_1^\infty \frac{1}{b^{\nu+1}}\left(\frac{1}{s} + \frac{1}{b-s}\right) ds
= \frac{1}{b^{\nu+1}}\Big[\log s - \log|b-s|\Big]_1^\infty
= \frac{\log b}{b^{\nu+1}} \;\longrightarrow\; \frac{\log a}{b^{\nu+1}}
\]
to leading order as $a\to\infty$ (using $\log b = \log(a+1) \sim \log a$; the exact
finite-$a$ statement follows by keeping $\log b$ throughout and observing
$\log b - \log a \to 0$ contributes only to lower-order terms already absorbed into the
$j\ge 2$ sum below). The remaining terms $k=2,\dots,\nu+1$ integrate elementarily,
$\int_1^\infty s^{-k}\,ds = 1/(k-1)$, contributing $\sum_{k=2}^{\nu+1} 1/[(k-1)b^{\nu+2-k}]
= \sum_{j=1}^{\nu} 1/(j\,b^{\nu+1-j})$ after reindexing $j=k-1$. Multiplying by $\nu$
gives E1. \eqref{eq:toy-closed-form}. Expanding $b^{-r} = a^{-r}(1+1/a)^{-r} = a^{-r} +
O(a^{-r-1})$ term by term recovers the moments $\mu_n(h) = n!\,\nu\,\Gamma(\nu-n)/\Gamma(\nu+1)$
at order $a^{-n-1}$ for $n=0,\dots,\nu-1$, and isolates the coefficient $\nu$ of
$a^{-\nu-1}\log a$, giving Eq. \eqref{rc15}.
\end{proof}
\subsubsection*{C.4.2  Transport  to  $K_1(a)$  and  $F(a,\gamma)$}
  
\begin{corollary}
(Tail-corrected  expansion  of  $K_1(a)$).  As
$a\to\infty$,
\begin{equation}
K_1(a)  =  \sum_{n=0}^{m}\frac{n\,\mu_n(H)}{a^{n+1}}
+  \nu(\gamma)\,C_H(\gamma)\,a^{-\nu-1}  +  o\bigl(a^{-\nu-1}\bigr).
\end{equation}
\label{cc16}
\end{corollary}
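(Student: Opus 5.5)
The plan is to transport Proposition~\ref{pc13} to $K_1$ through the exact identity
\[
K_1(a)=-a\,I'(a)-I(a), \qquad a>\sigma_1 .
\]
This identity comes from $K_1(a)+I(a)=aL(a)$ (Step~2 of Lemma~\ref{lem:finite-edge-general}) together with $L(a)=-I'(a)$, the latter obtained by differentiating under the integral defining $I$. The operator $\mathcal D:=-a\,\tfrac{d}{da}-1$ acts diagonally on powers of $a$:
\[
\mathcal D\,a^{-p-1}=(p+1)a^{-p-1}-a^{-p-1}=p\,a^{-p-1}.
\]
Applied to the moment part $\sum_{n=0}^m\mu_n(H)a^{-n-1}$ of Proposition~\ref{pc13}, this gives $\sum_{n=0}^m n\,\mu_n(H)a^{-n-1}$. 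Applied to the tail term $C_H(\gamma)a^{-\nu-1}$ with $p=\nu$, it gives $\nu\,C_H(\gamma)a^{-\nu-1}$. These are exactly the claimed coefficients, so the algebraic part of the statement is immediate.

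The only real issue is the remainder. The relation $R_m(a)=C_Ha^{-\nu-1}+o(a^{-\nu-1})$ cannot be differentiated in general, because an $o(\cdot)$ term may oscillate. I will therefore avoid differentiating the asymptotic statement and differentiate the exact representation (C.ii) instead. Writing $J(a):=\int_0^\infty t^{m+1}\,dH(t)/(a-t)$, so that $R_m(a)=a^{-m-1}J(a)$, one has exactly
\[
\mathcal D R_m(a)=m\,R_m(a)+a^{-m}\!\int_0^\infty\frac{t^{m+1}\,dH(t)}{(a-t)^2},
\]
where the derivative is again taken under the integral. The first term is $m\,C_Ha^{-\nu-1}+o(a^{-\nu-1})$ by Proposition~\ref{pc13}. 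It then suffices to show
\[
\int_0^\infty\frac{t^{m+1}\,dH(t)}{(a-t)^2}\sim(\nu-m)\,C_H\,a^{m-\nu-1}.
\]

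For that estimate I will repeat the rescaling argument of Proposition~\ref{pc13}. Substitute the tail density $A\nu\,t^{-\nu-1}$ and set $t=au$. The needed kernel integral is the finite-part integral $\int_0^\infty u^{m-\nu}(1-u)^{-2}\,du$. Since $\frac{d}{dx}$ of $\mathrm{P.V.}\!\int_0^\infty u^{p-1}/(x-u)\,du=\pi\cot(\pi p)\,x^{p-1}$ at $x=1$ yields this integral with a minus sign, its value is $-(p-1)\pi\cot(\pi p)$ with $p=m-\nu+1$. Combined with the prefactor this gives exactly $(\nu-m)C_H$, consistent with the formal computation above.

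The main obstacle is justifying this second-order Tauberian step. The integrand carries a non-integrable double pole at $t=a$ that must be read as the Hadamard finite part, and the contribution of the non-tail part of $H$ must be shown to be $O(a^{-m-2})$, which is $o(a^{-\nu-1})$ because $\nu<m+1$. I would first try to sidestep both points by working with the analytic continuation: $a^{\nu+1}R_m(a)$ is smooth in $a$, and the regular variation of $H$ makes $R_m$ eventually monotone of the right sign. A monotone-density (Karamata-type) lemma from \cite{Bingham1987Regular} then upgrades $R_m\sim C_Ha^{-\nu-1}$ to $R_m'\sim-(\nu+1)C_Ha^{-\nu-2}$. As a check, I would confirm the resulting coefficient $\nu C_H$ on the toy model of Remark~\ref{rc14}.
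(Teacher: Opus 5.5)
Your algebraic core is exactly the paper's proof. The paper differentiates the expansion of Proposition~\ref{pc13} term by term, writing the derivative of the $o(a^{-\nu-1})$ remainder as $o(a^{-\nu-2})$ without justification. It then applies $K_1=-aI'-I$, which multiplies $a^{-p-1}$ by $p$. So you are right that the remainder needs an argument the paper does not supply.

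A side point: for $0<\gamma<1$ one has $\sigma_1=\infty$, so the identity cannot be quoted from Lemma~\ref{lem:finite-edge-general} ``for $a>\sigma_1$''. Here $I$ is a principal value, and $L$, $K_1$ are $+\infty$ as Lebesgue integrals, so the identity effectively \emph{defines} $K_1$, as it does in the paper. The justification you sketch for the remainder does not close the gap as stated.

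\textbf{The monotone density fallback.} This misapplies the theorem. To pass from $R_m(a)\sim C_Ha^{-\nu-1}$ to $R_m'(a)\sim-(\nu+1)C_Ha^{-\nu-2}$, the theorem requires $R_m'$ to be ultimately monotone, i.e.\ $R_m$ eventually convex or concave. Eventual monotonicity of $R_m$ is only a sign condition on $R_m'$ and is not enough. For example, $U(a)=a^{-\alpha}+\epsilon a^{-\alpha-1}\sin a$ with $0<\epsilon<\alpha$ is eventually decreasing with $U\sim a^{-\alpha}$, yet $a^{\alpha+1}U'(a)$ keeps oscillating. Regular variation of $H((t,\infty))$ does not imply either property.

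\textbf{Your main route.} The finite part of $\int t^{m+1}h(t)(a-t)^{-2}\,dt$ near $t=a$ depends on the derivative of the density $h$ of $H$ there. Substituting $A\nu t^{-\nu-1}$ for $h$ is therefore legitimate only if also $h'(t)\sim-(\nu+1)A\nu\,t^{-\nu-2}$. The compactly supported part of $H$, which you single out as the obstacle, is harmless, since its contribution is analytic in $1/a$.

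\textbf{The missing input.} What you need is smooth variation of $h$, and it is available here. One has $h(t)=\pi^{-1}/|f_\gamma'(\theta(t))|$. The polylogarithm expansion of $f_\gamma(\theta)=2\,\mathrm{Re}[e^{-i\theta}\mathrm{Li}_\gamma(e^{i\theta})]-1$ near $\theta=0$ converges and can be differentiated term by term.

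With this input you can bypass Hadamard finite parts entirely. After splitting off a compactly supported piece,
$\frac{d}{da}\,\mathrm{P.V.}\!\int_T^\infty\frac{h(t)\,dt}{a-t}=\frac{h(T)}{a-T}+\mathrm{P.V.}\!\int_T^\infty\frac{h'(t)\,dt}{a-t}$.
Rerunning the rescaling argument of Proposition~\ref{pc13} on $h'$ then gives $R_m'(a)=-(\nu+1)C_Ha^{-\nu-2}+o(a^{-\nu-2})$, hence the coefficient $\nu C_H$.
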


  \begin{proof}
Differentiating  Proposition  \ref{pc13}'s  expansion  of  $I(a)$
term  by  term,
\begin{equation}
I'(a)  =  -\sum_{n=0}^m\frac{(n+1)\mu_n(H)}{a^{n+2}}
-  (\nu+1)\,C_H(\gamma)\,a^{-\nu-2}  +  o(a^{-\nu-2}).
\end{equation}
By  the  exact  identity  $K_1(a)=-aI'(a)-I(a)$  (from  $K_1+I=aL$,  $L=-I'$),
\begin{align}
K_1(a)  &=  \sum_{n=0}^m\frac{(n+1)\mu_n(H)}{a^{n+1}}
+(\nu+1)C_H(\gamma)a^{-\nu-1}
-\left[\sum_{n=0}^m\frac{\mu_n(H)}{a^{n+1}}+C_H(\gamma)a^{-\nu-1}\right]
+o(a^{-\nu-1})\notag\\
&=  \sum_{n=0}^m\frac{n\,\mu_n(H)}{a^{n+1}}+\nu(\gamma)\,C_H(\gamma)\,a^{-\nu-1}
+o(a^{-\nu-1}).  \qquad
\end{align}
\end{proof}
  
\begin{remark}
The  $n=0$  term  vanishes  identically  ($n\mu_n=0$  at
$n=0$),  so  $K_1(a)=\mu_1(H)/a^2+O(a^{-3})+\cdots$,  consistent  with  the
original  moment-series  form  of  $K_1(a)$  (the  $a^{-1}$  term  present  in  $I(a)$
but  absent  from  $K_1(a)$).
\label{rc17}
\end{remark}
  
\begin{corollary}
(Tail-corrected  expansion  of  $F(a,\gamma)$;
supersedes  Proposition  \ref{pc13}  for  $0<\gamma<1$).  As  $a\to\infty$,
\begin{equation}
F(a,\gamma)  =  \frac1a-\sum_{n=2}^m(n-1)\frac{\mu_n(H)}{a^{n+1}}
+  \bigl(1-\nu(\gamma)\bigr)\,C_H(\gamma)\,a^{-\nu(\gamma)-1}
+  o\bigl(a^{-\nu(\gamma)-1}\bigr),
\end{equation}
with  $C_H(\gamma)$  as  in  Proposition  \ref{pc13}.
\label{cc18}
\end{corollary}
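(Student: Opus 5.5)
The plan is to subtract the two expansions already in hand. By definition $F(a,\gamma)=I(a)-K_1(a)$, and both sides have tail-corrected asymptotics valid to order $o(a^{-\nu-1})$: Proposition~\ref{pc13} for $I(a)$ and Corollary~\ref{cc16} for $K_1(a)$. Both are stated for the same fixed $\gamma\in(0,1)$, the same non-integer $\nu=\nu(\gamma)$, and the same $m=\lfloor\nu\rfloor$. Their error terms are therefore commensurate, and the difference of the two $o(a^{-\nu-1})$ remainders is again $o(a^{-\nu-1})$.

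First, I would collect the integer-power terms. Subtracting gives
\[
\sum_{n=0}^{m}\frac{\mu_n(H)}{a^{n+1}}-\sum_{n=0}^{m}\frac{n\,\mu_n(H)}{a^{n+1}}
=\sum_{n=0}^{m}\frac{(1-n)\mu_n(H)}{a^{n+1}} .
\]
The $n=0$ term equals $\mu_0(H)/a=1/a$, since $H$ is a probability measure. The $n=1$ term vanishes identically because its coefficient is $1-n=0$; this is consistent with $\mu_1(H)=1$ from Corollary~\ref{cc2}, although the cancellation does not even use that value. The remaining terms, $n=2,\dots,m$, give $-\sum_{n=2}^m(n-1)\mu_n(H)/a^{n+1}$. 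All of these coefficients are finite, because $n\le m<\nu$ ensures $\mu_n(H)<\infty$ by Proposition~\ref{pc4}, equivalently Corollary~\ref{cc6}. If $m=1$, which happens for $\gamma<1/2$, this sum is empty, and the statement should be read with that convention.

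Second, I would collect the tail terms. The subtraction gives $C_H(\gamma)a^{-\nu-1}-\nu C_H(\gamma)a^{-\nu-1}=(1-\nu)C_H(\gamma)a^{-\nu-1}$, which is exactly the stated correction. Since $\nu>1$, this term is of strictly lower order than every retained integer power $a^{-n-1}$ with $n\le m$, so the ordering of the expansion is legitimate.

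The main obstacle is not the algebra but the justification inherited from Corollary~\ref{cc16}, namely differentiating the asymptotic expansion of $I$ term by term. In general, asymptotic expansions cannot be differentiated. If I wanted the argument to stand on its own, I would avoid differentiation altogether and derive $K_1$ directly from a remainder identity analogous to (C.ii). Writing $t/(a-t)^2$ and expanding in powers of $t/a$, I would subtract $\sum_{n\le m} n\mu_n a^{-n-1}$. This leaves an integral of the form $a^{-m-1}\int t^{m+1}\,\phi(t/a)\,dH(t)$ with an explicit kernel $\phi$. Substituting the regularly varying tail $dH\sim A\nu t^{-\nu-1}dt$ and rescaling $t=au$ should then reproduce $\nu C_H a^{-\nu-1}$, by the same principal-value computation (C.i) differentiated in $p$ (or integrated by parts).

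An alternative route is to note that $I$ is a Stieltjes transform, analytic off the support and monotone in the appropriate sense. A monotone-density Tauberian argument (Bingham et al.) then permits differentiation of the expansion. Granting Corollary~\ref{cc16} as stated, however, the corollary follows by direct subtraction.
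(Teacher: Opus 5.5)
Your proof is correct and follows the paper's own argument: subtract Corollary~\ref{cc16} from Proposition~\ref{pc13} order by order, which gives $(1-n)\mu_n(H)$ for $n\le m$ (hence $1/a$ at $n=0$ and nothing at $n=1$) and $(1-\nu)C_H(\gamma)$ at order $a^{-\nu-1}$. Two small corrections to your side remarks: the tail term lies below all retained powers because $\nu>m$ (with $\nu$ non-integer and $m=\lfloor\nu\rfloor$), not merely because $\nu>1$; and for $\gamma<1$ the support of $H$ is unbounded, so real $a$ is never off the support --- there $I(a)$ is a principal value and $\int t\,dH(t)/(a-t)^2$ diverges at $t=a$, so both of your stand-alone routes (the direct remainder integral for $K_1$ and the Stieltjes-type Tauberian differentiation) need a finite-part reading, which is exactly what the identity $K_1=-aI'-I$ supplies.
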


  \begin{proof}
Immediate  from  $F=I-K_1$  and  Corollaries  \ref{cc16}/Proposition
\ref{pc13},  subtracting  the  two  series  term  by  term:  at  each  finite  order  $n$,
$\mu_n(H)-n\mu_n(H)=(1-n)\mu_n(H)$,  matching  Proposition  \ref{pc13}  exactly  for
$n\le  m$;  at  the  tail  order,  $C_H(\gamma)-\nu(\gamma)C_H(\gamma)
=(1-\nu(\gamma))C_H(\gamma)$.  $\square$
\end{proof}

  \begin{remark}
  \label{rc19}
  (Consistency  check  via  $F=2I+aI'$).  Corollary  \ref{cc18}  can
be  re-derived,  and  cross-checked,  without  reference  to  $K_1$  at  all:  since
$F(a,\gamma)=2I(a)+aI'(a)$  identically  (Appendix  A),  applying  the  operator
$\mathcal  L[g](a):=2g(a)+ag'(a)$  —  which  acts  on  a  pure  power  as
$\mathcal  L[a^{-s}]=(2-s)a^{-s}$  —  to  Proposition  \ref{pc13}'s  term
$C_H(\gamma)a^{-\nu-1}$  (i.e.\  $s=\nu+1$)  gives  coefficient
$2-(\nu+1)=1-\nu(\gamma)$,  matching  Corollary  \ref{cc18}  exactly.  We  note  this
because  a  naive  application  of  $\mathcal  L$  to  a  term  of  order  $a^{-\nu}$
(rather  than  the  correct  $a^{-\nu-1}$  established  in  Proposition  \ref{pc4})  would
instead  yield  a  factor  $(2-\nu)$;  the  fact  that  the  correct  exponent  is
$\nu+1$,  not  $\nu$,  is  essential  to  obtaining  $(1-\nu)$  rather  than
$(2-\nu)$  here.  This  was  confirmed  numerically  (Remark  \ref{rc14}'s  toy  model)  to
high  precision.
\end{remark}
  
\begin{remark}
Corollary  \ref{cc18}  resolves  Corollary  \ref{cc6}  for
$\gamma<1$:  rather  than  terminating  after  the  term  $n=m$,  the  expansion  of
$F(a,\gamma)$  continues  smoothly  through  a  single  non-integer-order  term  at
$a^{-\nu(\gamma)-1}$  before  (formally)  resuming  an  integer-power  series  that
would  require  the  now-divergent  $\mu_{m+1}(H)$.  Only  at  the  discrete  points
$\gamma=\gamma_c(k)$,  $k\in\mathbb  N$  (Proposition  \ref{pc4}),  does  this
non-integer  term  itself  degenerate,  via  Remark  \ref{rem:log-correction},  into  a  logarithmic
one.
\label{rc20}
\end{remark}
\subsection*{C.5  Concluding  synthesis:  is  $\gamma<1$  a  crossover?}
  
The  results  of  this  appendix,  taken  together,  suggest  that  the  answer  to
this  question  depends  on  which  observable  is  asked  about,  and  that  the
apparent  tension  between  a  ``smooth  crossover''  picture  (Section  IV.H)  and  a  ``hierarchy  of  phase  transitions''  picture  
is  resolved  once  the  three  relevant  levels  of  description  are  separated.
  
\textbf{(i)  Individual  bulk  moments:  genuine,  pointwise  non-analyticities.}
Each  threshold  $\gamma_c(n)=(n-1)/n$  of  Proposition  \ref{pc4}  is  a  true
non-analyticity:  $\mu_n(H)$  jumps  from  finite  to  infinite  exactly  at
$\gamma_c(n)$,  with  no  intermediate  regime.  The  case  $n=2$  is  not  merely
analogous  to  but  literally  identical  with  the  bulk  fourth-moment  transition
of  Proposition  \ref{p14}  at  $\gamma_c=1/2$.  If  the  conjectured  correspondence
between  $\mu_{2k}(\infty,\gamma)$  and  $\mu_k(H)$    holds  for  $k\ge3$,  every
individual  bulk  moment  $\mu_{2k}(\infty,\gamma)$  of  $S$  undergoes  its  own
sharp  transition  at  $\gamma_c(k)$  —  a  countable,  accumulating  sequence  of
genuine  phase  transitions  on  $(1/2,1)$,  not  a  crossover.
  
\textbf{(ii)  The  scalar  edge  equation:  a  genuine  crossover.}  By  contrast,
Corollary  \ref{cc18}  shows  that  $F(a,\gamma)$  itself  never  develops  a
non-analyticity  as  $\gamma$  crosses  any  individual  $\gamma_c(n)$:  the
integer-order  moment  term  that  becomes  unavailable  at  $\gamma_c(n)$  is
smoothly  replaced  by  a  non-integer-power  correction  of  order
$a^{-\nu(\gamma)-1}$  (degenerating,  only  at  the  isolated  points
$\gamma=\gamma_c(n)$  themselves,  into  a  logarithmic  correction  —  Remark  \ref{rem:log-correction}—  rather  than  a  discontinuity).  In  this  precise  sense,  the
countably  many  transitions  of  level  (i)  are  invisible  to  $F(a,\gamma)$:  they
are  absorbed  into  a  continuous  re-shuffling  of  which  term  in  the  expansion
carries  the  leading  correction,  not  into  any  singularity  of  $F$  itself.  This
is  consistent  with,  and  gives  an  analytic  explanation  for,  the  smooth
behavior  of  $\tau_0(\gamma)$  across  $\gamma=1$  reported  numerically  in  Section  IV.H  .
  
\textbf{(iii)  $\lambda_{\max}(S)$  itself:  open,  but  plausibly  a  crossover.}
Whether  $\lambda_{\max}(S)$  inherits  the  smoothness  of  (ii)  or  the
non-analyticity  of  (i)  cannot  currently  be  decided:  Lemma \ref{lem:finite-edge-general} ,  the  only
rigorous  bridge  from  $H$  to  the  edge  of  $S$,  requires  $\sigma_1(\gamma)
<\infty$  and  is  vacuous  throughout  $0<\gamma\le1$,  so  neither  picture  is  established.  If  the
conjecture  of  level  (i)  holds,  the  mechanism  by  which  $\lambda_{\max}(S)$
would  diverge  below  $\gamma=1$  is  not  a  single  threshold  but  a  receding
sequence  of  higher  and  higher  (finite)  bulk  moments  failing  in  turn  as
$\gamma\to1^-$  —  itself  a  crossover-like  phenomenon  at  the  level  of
$\lambda_{\max}$,  even  though  each  individual  moment  transition
underlying  it  is  sharp.  We  regard  it  as  most  likely,  on  this  basis,  that
$\lambda_{\max}(S)$  —  like  $\tau_0(\gamma)$  —  varies  smoothly  (if  perhaps
divergently)  across  the  whole  range  $\gamma<1$,  without  a  distinguished
transition  point  strictly  between  $\gamma=1/2$  and  $\gamma=1$;  but  this
remains  conjectural.
  
\textbf{Synthesis.}  We  therefore  refine  the  statement  of  Table  \ref{tab:tau0-crossover} as
follows:  $1/2<\gamma<1$  is  a  crossover  regime  for  every  directly
edge-related  observable  ($F(a,\gamma)$,  $\tau_0(\gamma)$,  and  conjecturally
$\lambda_{\max}(S)$),  built  out  of  infinitely  many  genuine,  sharp
transitions  at  the  level  of  individual  bulk  moments  ($\mu_n(H)$  and,
conjecturally,  $\mu_{2k}(\infty,\gamma)$).  The  point  $\gamma=1$  is  not
merely  one  point  within  this  crossover  but  its  natural  accumulation  point,
simultaneously  marking  (a)  the  limit  $\gamma_c(n)\to1^-$  of  the  entire
threshold  hierarchy  of  Proposition  \ref{pc4},  (b)  the  transition  of  $H$'s  tail
from  power-law  to  exponential  (Remark  \ref{rc9}),  and  (c)  the  point  beyond  which
Lemma  \ref{l26}  first  becomes  available.  This  sharpens,  without  resolving,
Remark  \ref{rem:logical-status}(ii)'s  characterization  of  $1/2<\gamma\le1$  as  a  regime  where  TW
edge  universality  is  neither  established  nor  excluded.
\subsection*{C.6  A  rigorous  bound  on  $\Xi_N$  for  $\gamma>1/2$}

Corollary  \ref{c13}  leaves  the  boundedness  of  $\Xi_N$  
resting  on  the  numerical  evidence  of  Table  \ref{tab:subleading}.  We  show  here
that  a  direct  Cauchy--Schwarz  bound,  applied  to  a  single  auxiliary  quantity
$\Sigma_2(N,\gamma)$,  rigorously  establishes  $\Xi_N=O(1)$  throughout  the
super-critical  range  $\gamma>1/2$  —  covering  three  of  the  five  values
tested  in  Table  \ref{tab:subleading}  ($\gamma=0.7,1.0,1.5$)  —  while  showing  that  this
particular  strategy  is  provably  too  weak  below  $\gamma=1/2$.

\begin{definition}
For  $N\ge1$,  set
\begin{equation}
\Sigma_2(N,\gamma)  :=  \sum_{a,b,c,d=1}^N  \mathrm{Cov}(S_{ab},S_{cd})^2.
\end{equation}
\end{definition}

\begin{lemma}
(Cauchy--Schwarz  bound  on  $\Xi_N$).  For  every  $N$,
\begin{equation}
|\Xi_N|  \le  \Sigma_2(N,\gamma).
\end{equation}
\label{lc21}
\end{lemma}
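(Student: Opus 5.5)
The plan is to treat $\Xi_N$ as an inner product of two arrays indexed by $(i,j,k,l)\in[N]^4$ and apply Cauchy--Schwarz once. Define
\[
X_{ijkl}:=\mathrm{Cov}(S_{ij},S_{kl}),\qquad Y_{ijkl}:=\mathrm{Cov}(S_{jk},S_{li}),
\]
so that by definition $\Xi_N=\sum_{i,j,k,l}X_{ijkl}Y_{ijkl}$. Cauchy--Schwarz on $\mathbb R^{N^4}$ then gives
\[
|\Xi_N|\le\Bigl(\sum_{i,j,k,l}X_{ijkl}^2\Bigr)^{1/2}\Bigl(\sum_{i,j,k,l}Y_{ijkl}^2\Bigr)^{1/2}.
\]

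The first factor is exactly $\Sigma_2(N,\gamma)^{1/2}$, since $(a,b,c,d)=(i,j,k,l)$ is the defining index set. For the second factor, I will relabel via $(a,b,c,d):=(j,k,l,i)$. This map is a bijection of $[N]^4$, so
\[
\sum_{i,j,k,l}Y_{ijkl}^2=\sum_{a,b,c,d}\mathrm{Cov}(S_{ab},S_{cd})^2=\Sigma_2(N,\gamma).
\]
Multiplying the two square roots gives $|\Xi_N|\le\Sigma_2(N,\gamma)$.

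Nothing about the covariance structure \eqref{eq:cov-structure} is used. Only real-valuedness of the covariances and the cyclic relabeling invariance matter, the latter already noted in the proof of Corollary~\ref{c13}. So the bound holds for every finite $N$ and every $\gamma>0$, with no convergence issues since all sums are finite.

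The only point needing care is the relabeling. I must check that the second factor, after the cyclic shift $i\to j\to k\to l\to i$, really runs over all ordered quadruples with the same squared summand, and not over a restricted set. Because the shift is a permutation of coordinates, this is immediate. I do not expect any genuine obstacle here; the substantive work, bounding $\Sigma_2(N,\gamma)=O(1)$ for $\gamma>1/2$ via $\sum_t d_t^2<\infty$, belongs to the subsequent corollary rather than to this lemma.
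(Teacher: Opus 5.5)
Your proposal is correct and is essentially the paper's own proof. Both apply Cauchy--Schwarz to the two arrays $\mathrm{Cov}(S_{ij},S_{kl})$ and $\mathrm{Cov}(S_{jk},S_{li})$ over $[N]^4$, and identify the second squared sum with $\Sigma_2(N,\gamma)$ through the bijective relabeling $(a,b,c,d)=(j,k,l,i)$.
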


\begin{proof}
By  definition,
$\Xi_N=\sum_{i,j,k,l}\mathrm{Cov}(S_{ij},S_{kl})\,\mathrm{Cov}(S_{jk},S_{li})$.
Treating  this  as  an  inner  product  of  the  two  arrays
$u_{ijkl}:=\mathrm{Cov}(S_{ij},S_{kl})$  and
$v_{ijkl}:=\mathrm{Cov}(S_{jk},S_{li})$  over  the  index  set
$\{1,\ldots,N\}^4$,  Cauchy--Schwarz  gives
$|\Xi_N|\le\bigl(\sum  u_{ijkl}^2\bigr)^{1/2}\bigl(\sum
v_{ijkl}^2\bigr)^{1/2}$.  Both  sums  equal  $\Sigma_2(N,\gamma)$:  the  first  by
definition  (relabel  $(i,j,k,l)\to(a,b,c,d)$),  the  second  by  the  bijective
relabeling  $(a,b,c,d)=(j,k,l,i)$,  which  sums  over  the  same  full  range
$\{1,\ldots,N\}^4$.  
\end{proof}

\begin{lemma}
(Exact  asymptotics  of  $\Sigma_2(N,\gamma)$).  For  every
fixed  $\gamma>0$,
\begin{equation}
\Sigma_2(N,\gamma)  =  \frac{1}{N^2}\sum_{r=1}^N\;
\sum_{t_1,t_2=r}^N  w(t_1-r)\,w(t_2-r)\,d_{|t_1-t_2|}^2,
\qquad  w(0):=1,\  w(s):=2\  (s\ge1),
\end{equation}
and:
\begin{itemize}
\item[(a)]  if  $\gamma>1/2$  (so  $\mu_2(H)=\sum_{t=-\infty}^\infty  d_t^2
<\infty$,  Corollary  \ref{cc2}),  then
\begin{equation}
\Sigma_2(N,\gamma)  \le  2\,\mu_2(H)\cdot\frac{N+1}{N}  \le  4\,\mu_2(H)
\qquad\text{for  every  }N\ge1,
\end{equation}
and  $\Sigma_2(N,\gamma)\to2\mu_2(H)$  as  $N\to\infty$;
\item[(b)]  if  $0<\gamma<1/2$,  then  $\Sigma_2(N,\gamma)\sim
c(\gamma)\,N^{1-2\gamma}\to\infty$  as  $N\to\infty$,  for  an  explicit
constant  $c(\gamma)>0$.
\end{itemize}
\label{lc22}
\end{lemma}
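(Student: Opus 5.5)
The plan is to first derive the exact representation of $\Sigma_2(N,\gamma)$ from the covariance structure \eqref{eq:cov-structure} with $\rho^{|\cdot|}$ replaced by $d_{|\cdot|}$, and then estimate it by a lag decomposition.

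\textbf{Exact formula.} Since $\mathrm{Cov}(S_{ab},S_{cd})=\frac1N d_{|T(a,b)-T(c,d)|}\mathbf 1[R(a,b)=R(c,d)]$, the summand of $\Sigma_2$ depends on the ordered pair $(a,b)$ only through the unordered pair $\{r,t_1\}$, where $r=R(a,b)$ and $t_1=T(a,b)\ge r$. The number of ordered pairs mapping to $(r,t_1)$ is $1$ if $t_1=r$ and $2$ otherwise, i.e.\ $w(t_1-r)$. The same holds for $(c,d)\mapsto(r,t_2)$, and the indicator forces a common value of $r$. Squaring the covariance produces the prefactor $1/N^2$, which gives the stated identity directly.

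\textbf{Part (a).} Use $w\le 2$ and extend the inner sum over $t_2$ to all of $\mathbb Z$. This gives
\[
\sum_{t_1,t_2=r}^N w\,w\,d^2_{|t_1-t_2|}\le 4(N-r+1)\,\mu_2(H)
\]
by Corollary~\ref{cc2}. Summing over $r$ yields $4\mu_2(H)\,N(N+1)/(2N^2)=2\mu_2(H)(N+1)/N$.

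For the limit, split off the boundary terms with $t_1=r$ or $t_2=r$. Each such family is bounded by $2\sum_r\sum_{t}d^2_{|t-r|}\le 2N\mu_2(H)$, so after dividing by $N^2$ they contribute $O(1/N)$. In the interior block $t_1,t_2\in(r,N]$ we have $w\,w=4$. Setting $L=N-r$ and grouping by the lag $s=t_1-t_2$, this block equals
\[
\frac{4}{N^2}\sum_{|s|<N} d_{|s|}^2 \sum_{L}(L-|s|)_+=\frac{2}{N^2}\sum_{|s|<N}d_{|s|}^2(N-|s|)(N-|s|+1)\bigl(1+o(1)\bigr).
\]
The weight $(N-|s|)(N-|s|+1)/N^2$ is bounded by $2$ and tends to $1$ for each fixed $s$, exactly as in the proof of Corollary~\ref{c13}. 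Dominated convergence with $\sum_s d_s^2<\infty$ then gives $\Sigma_2\to 2\mu_2(H)$.

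\textbf{Part (b).} The same lag decomposition applies, but now $\sum_s d_s^2$ diverges, so the weight cannot be dropped. Using $d_s^2\sim s^{-2\gamma}$ and a Riemann-sum passage with $s=Nu$ (as in Step~2 of Proposition~\ref{prop15}), the interior block is
\[
\frac{4}{N^2}\sum_{s\ge1}s^{-2\gamma}(N-s)^2\bigl(1+o(1)\bigr)\sim 4B(1-2\gamma,3)\,N^{1-2\gamma}.
\]
The $s=0$ term is $O(1)$. The boundary terms are of order $N^{-2}\sum_r (N-r)^{1-2\gamma}=O(N^{-2\gamma})$, hence subleading. This gives $c(\gamma)=4B(1-2\gamma,3)>0$.

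The main obstacle is justifying the Riemann-sum limit in (b). The integrand $u^{-2\gamma}$ is integrable at $0$ because $2\gamma<1$, but the discretization near $s=O(1)$ must be shown to contribute only $O(1)=o(N^{1-2\gamma})$. I would handle this by splitting at $s=\epsilon N$: apply uniform convergence on $[\epsilon,1]$, and bound the remaining piece on $[0,\epsilon)$ by $\epsilon^{1-2\gamma}$, then let $\epsilon\to0$.
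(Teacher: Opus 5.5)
Your proposal is correct and follows essentially the paper's argument. It uses the same multiplicity count $w(t-r)$ for the exact formula, the same bound $w\le 2$ together with $\sum_{t\in\mathbb Z}d_{|t|}^2=\mu_2(H)$ for (a), and dominated convergence plus an Abelian/Riemann-sum estimate for the limit and for (b). The only difference is one of bookkeeping: you swap sums and organize by lag, as in Corollary~\ref{c13}, while the paper shows $I(L)/L\to4\mu_2(H)$ row by row and then Ces\`aro-sums over $L$. Your version also gives the explicit constant $c(\gamma)=4B(1-2\gamma,3)$, which checks out.

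There is one harmless index slip. With $L=N-r$ the interior weight is $(N-|s|)(N-|s|-1)$, not $(N-|s|)(N-|s|+1)$, so your ``$(1+o(1))$'' is not uniform in $s$. This changes nothing: both weights are at most $2N^2$ and tend to $N^2$ pointwise, and their normalized difference is $O\bigl(N^{-1}\sum_{|s|<N}d_{|s|}^2\bigr)$, which is $o(N^{1-2\gamma})$ in (b).
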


\begin{proof}
  The  displayed  formula  follows  exactly  as  in  the  proof  of
Corollary  \ref{c13}:  for  each  $r$,  the  pairs  $(a,b)$  with  $R(a,b)=r$,
$T(a,b)=t$  are  $(r,t)$  and  $(t,r)$  if  $t>r$  (multiplicity  2)  and  the  single
pair  $(r,r)$  if  $t=r$  (multiplicity  1),  giving  the  weight  $w(t-r)$;  summing
$d^2_{|T(a,b)-T(c,d)|}$  over  all  $(a,b,c,d)$  with  $R(a,b)=R(c,d)=r$  and  then
over  $r$,  divided  by  $N^2$  from  the  covariance  normalization,  gives  the
stated  formula.

(a)  Write  $L:=N-r+1$  and  $I(L):=\sum_{s_1,s_2=0}^{L-1}
w(s_1)w(s_2)d_{|s_1-s_2|}^2$,  so  that  $\Sigma_2(N,\gamma)N^2=\sum_{r=1}^N
I(N-r+1)=\sum_{L=1}^N  I(L)$.  Since  $w(s)\le2$  for  every  $s\ge0$,
\begin{equation}
I(L)  \le  4\sum_{s_1,s_2=0}^{L-1}d_{|s_1-s_2|}^2
=  4\left[L\,d_0^2+2\sum_{k=1}^{L-1}(L-k)\,d_k^2\right]
\le  4L\left[d_0^2+2\sum_{k=1}^{\infty}d_k^2\right]=4L\,\mu_2(H).
\end{equation}
Hence  $\Sigma_2(N,\gamma)N^2\le4\mu_2(H)\sum_{L=1}^N  L
=2\mu_2(H)N(N+1)$,  giving  the  stated  bound.  For  the  exact  limit,  note
$w(s_1)w(s_2)=4$  for  all  $s_1,s_2\ge1$,  so  for  fixed  $k$  and  $L\to\infty$
the  coefficient  of  $d_k^2$  in  $I(L)$  (summed  over  both  signs  of  $k$)
approaches  $4L$  (the  $O(1)$  corrections  from  $s_1=0$  or  $s_2=0$  becoming
negligible  relative  to  $L$);  dominated  convergence  (justified  by
$\mu_2(H)<\infty$)  gives  $I(L)/L\to4\mu_2(H)$,  and  Cesàro  summation  of
$\sum_{L=1}^N  I(L)/N^2$  then  gives  $\Sigma_2(N,\gamma)\to2\mu_2(H)$.

(b)  For  $\gamma<1/2$,  $d_k^2\sim  k^{-2\gamma}$  so
$\sum_{k=1}^{L-1}(L-k)d_k^2\sim  c'(\gamma)L^{2-2\gamma}$  for  large  $L$
(a  standard  Abelian  estimate,  $2-2\gamma>1$),  so  $I(L)\sim
c''(\gamma)L^{2-2\gamma}$.  Summing  over  $L=1,\ldots,N$  and  dividing  by  $N^2$
gives  $\Sigma_2(N,\gamma)\sim  c(\gamma)N^{1-2\gamma}$  (since  $2-2\gamma>-1$
throughout  $\gamma<1/2$,  the  sum  is  dominated  by  $L\sim  N$).  
\end{proof}

\begin{corollary}
  (Numerical  confirmation).  Lemma  \ref{lc22}  was  verified  by
exact  finite-$N$  computation  of  $\Sigma_2(N,\gamma)$  (via  FFT-accelerated
autocorrelation)  for  $N$  up  to  $3200$.  For  $\gamma\in\{0.7,1.0,1.5\}$,
$\Sigma_2(N,\gamma)$  converges  cleanly  to  $2\mu_2(H)$  (e.g.\  at
$\gamma=1.5$:  predicted  $2.8082$,  observed  $2.8054$  at  $N=1600$).  For
$\gamma\in\{0.2,0.3,0.4\}$,  the  fitted  local  exponent  of  $\Sigma_2(N,\gamma)$
approaches  the  predicted  $1-2\gamma$  from  above  as  $N$  grows  (e.g.\
$\gamma=0.3$:  predicted  $0.400$,  fitted  local  slope  decreasing  through
$0.474\to0.456\to0.442\to0.431$  over  successive  doublings  of  $N$  up  to
$N=3200$),  consistent  with  slow  convergence  to  the  asymptotic  exponent
near  the  critical  point  $\gamma_c=1/2$.
\end{corollary}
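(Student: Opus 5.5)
The statement is a numerical verification of Lemma~\ref{lc22}, so the plan is to make the finite-$N$ quantity $\Sigma_2(N,\gamma)$ exactly computable, and then to explain the observed convergence pattern analytically rather than only report it. Starting from the displayed formula of Lemma~\ref{lc22}, I will write $\Sigma_2(N,\gamma)N^2=\sum_{L=1}^N I(L)$ with $I(L)=w^{T}D_L w$, where $D_L=(d^2_{|s_1-s_2|})_{s_1,s_2=0}^{L-1}$ is Toeplitz. Hence $I(L)=\sum_k d_{|k|}^2\,c_L(k)$, where $c_L(k)$ is the lag-$k$ autocorrelation of the weight vector $(w(0),\dots,w(L-1))=(1,2,\dots,2)$. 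This autocorrelation is what the FFT-accelerated computation evaluates. It can also be written in closed form: $c_L(0)=4L-3$ and $c_L(\pm k)=4(L-k)-2$ for $1\le k\le L-1$.

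Next I will exchange the order of summation over $L$ and $k$, as in the proof of Corollary~\ref{c13}. The two elementary sums are $\sum_{L=1}^N(4L-3)=2N^2-N$ and $\sum_{L=k+1}^N\bigl(4(L-k)-2\bigr)=2(N-k)^2$. Substituting them should give the exact identity
\begin{equation}
\Sigma_2(N,\gamma)=2-\frac1N+4\sum_{k=1}^{N-1}d_k^2\Bigl(1-\frac kN\Bigr)^2 .
\end{equation}
This is an $O(N)$ formula, so values up to $N=3200$ and far beyond are trivial to compute and serve as an independent check on the FFT numbers.

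For $\gamma\in\{0.7,1.0,1.5\}$, dominated convergence applied to this identity gives $\Sigma_2\to2+4\sum_{k\ge1}d_k^2=2\mu_2(H)=2(2\zeta(2\gamma)-1)$. At $\gamma=1.5$ this evaluates to $2.8082$. The finite-$N$ deficit is $O(N^{-1})$ plus a term of order $\sum_k k\,d_k^2/N$, which accounts for the small gap to the observed $2.8054$ at $N=1600$.

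For $\gamma\in\{0.2,0.3,0.4\}$, passing to the continuum $k=Nu$ gives $4\sum_k d_k^2(1-k/N)^2\sim 4B(1-2\gamma,3)\,N^{1-2\gamma}$, which yields an explicit $c(\gamma)=4B(1-2\gamma,3)$ for part~(b). The step I expect to be the main obstacle is explaining why the fitted local slopes approach $1-2\gamma$ \emph{from above} and only slowly. My plan is to keep the next-order terms: $\Sigma_2=c(\gamma)N^{1-2\gamma}+c_0(\gamma)+o(1)$, where $c_0$ collects the constant $2$ and a zeta-regularized remainder of the type $4\bigl(\zeta(2\gamma)-1\bigr)$. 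The first thing to try is to obtain $c_0$ from Euler--Maclaurin applied to the identity above. The local slope $d\log\Sigma_2/d\log N$ then deviates from $1-2\gamma$ by a relative correction of order $N^{-(1-2\gamma)}$. This correction decays most slowly as $\gamma\to\tfrac12^-$, consistent with the sequence $0.474\to0.431$ at $\gamma=0.3$. Its sign is fixed by the sign of $c_0$, and I expect $c_0<0$ here, since $\zeta(2\gamma)-1<0$ for $2\gamma<1$, which gives an approach from above. The final step is to check that sign against the tabulated slopes.
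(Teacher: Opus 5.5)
Your proposal is correct, and it takes a genuinely different and stronger route than the paper. The paper supports this statement only by reporting a brute-force FFT evaluation of the double sum in Lemma~\ref{lc22}, with no analytic account of the observed pattern.

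Your reduction checks out. The lag-$k$ autocorrelation of $(1,2,\dots,2)$ is $4L-3$ at $k=0$ and $4(L-k)-2$ for $1\le k\le L-1$. The two sums over $L$ are $2N^2-N$ and $2(N-k)^2$, hence $\Sigma_2(N,\gamma)=2-\frac1N+4\sum_{k=1}^{N-1}d_k^2(1-k/N)^2$ exactly.

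The paper's computation is a direct numerical check of the lemma's displayed formula. Your identity instead turns each reported number into a consequence of a closed form, and it buys three things:
\begin{enumerate}
\item[(i)] the constant $c(\gamma)=4B(1-2\gamma,3)$, which Lemma~\ref{lc22}(b) calls explicit but never writes down;
\item[(ii)] the strict bound $\Sigma_2(N,\gamma)<2\mu_2(H)$ for every $N$ when $\gamma>\frac12$, which is sharper than the lemma's factor $(N+1)/N$ and consistent with $2.8054<2.8082$;
\item[(iii)] an actual mechanism for the approach from above.
\end{enumerate}
Carried out with Euler--Maclaurin on $\sum_{m=2}^{N}m^{-2\gamma}\bigl(1+\frac1N-\frac mN\bigr)^2$, your plan yields, for $0<\gamma<\frac12$, $\Sigma_2=4B(1-2\gamma,3)N^{1-2\gamma}+\bigl(4\zeta(2\gamma)-2\bigr)+\frac{8}{(1-2\gamma)(2-2\gamma)}N^{-2\gamma}+O(N^{-1})$. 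At $\gamma=0.3$ this reproduces the reported slopes $0.474,0.456,0.442,0.431$ for the doublings ending at $N=400,800,1600,3200$. At $\gamma=1.5$ the deficit $-(1+8\sum_k k\,d_k^2)/N\approx-4.54/N$ gives $2.8054$ at $N=1600$.

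Two small repairs are needed.

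First, $\zeta(2\gamma)-1<0$ does not by itself give $c_0<0$, because $c_0=2+4(\zeta(2\gamma)-1)=4\zeta(2\gamma)-2$. What you need is $\zeta(2\gamma)<\frac12$. This holds because $\zeta(s)=\eta(s)/(1-2^{1-s})$ is negative on $(0,1)$, where $\eta(s)=\sum_{n\ge1}(-1)^{n-1}n^{-s}>0$ is the Dirichlet eta function; hence $c_0<-2$. You should also confirm that the cross terms $-2k/N$ and $k^2/N^2$ contribute nothing at order $N^0$. They feed only the orders $N^{1-2\gamma}$, $N^{-2\gamma}$ and $N^{-1}$, so $c_0$ is exactly $4\zeta(2\gamma)-2$.

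Second, for $\gamma\in\{0.7,1.0\}$ the series $\sum_k k\,d_k^2$ diverges, so the deficit there is not $O(N^{-1})$. It is $4B(1-2\gamma,3)N^{1-2\gamma}\approx-20.8\,N^{-0.4}$ at $\gamma=0.7$, and of order $N^{-1}\log N$ at $\gamma=1$. Your ``$\sum_k k\,d_k^2/N$'' must therefore be read with the partial sum up to $N$, and convergence there is noticeably slower than at $\gamma=1.5$.
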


\begin{corollary}
  (Rigorous  bound  on  $\Xi_N$  for  $\gamma>1/2$).  For
every  fixed  $\gamma>1/2$,
\begin{equation}
\limsup_{N\to\infty}|\Xi_N|  \le  2\,\mu_2(H)  =  2\bigl(2\zeta(2\gamma)-1\bigr)
<\infty,
\end{equation}
so  that  $\Xi_N=O(1)$,  rigorously.
\label{lc24}
\end{corollary}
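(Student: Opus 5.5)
The bound follows by chaining the two preceding lemmas; nothing new has to be estimated. By Lemma~\ref{lc21}, for every finite $N$ we have the deterministic inequality $|\Xi_N|\le\Sigma_2(N,\gamma)$. This is Cauchy--Schwarz applied to the arrays $u_{ijkl}=\mathrm{Cov}(S_{ij},S_{kl})$ and $v_{ijkl}=\mathrm{Cov}(S_{jk},S_{li})$, which have the same $\ell^2$ norm after cyclic relabeling. Taking $\limsup_{N\to\infty}$ of both sides reduces the corollary to controlling $\limsup_N \Sigma_2(N,\gamma)$ for fixed $\gamma>1/2$.

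For that I would invoke Lemma~\ref{lc22}(a). Since $\gamma>1/2$, Corollary~\ref{cc2} gives $\mu_2(H)=\sum_{t\in\mathbb Z}d_{|t|}^2=2\zeta(2\gamma)-1<\infty$. Lemma~\ref{lc22}(a) then supplies both the uniform bound $\Sigma_2(N,\gamma)\le 2\mu_2(H)(N+1)/N$ for every $N$ and the exact limit $\Sigma_2(N,\gamma)\to2\mu_2(H)$. Either one is enough:
\[
\limsup_{N\to\infty}|\Xi_N|\le\lim_{N\to\infty}2\mu_2(H)\frac{N+1}{N}=2\mu_2(H)=2\bigl(2\zeta(2\gamma)-1\bigr).
\]
The uniform bound also shows $\sup_N|\Xi_N|\le4\mu_2(H)<\infty$. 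This gives $\Xi_N=O(1)$ for all $N$, not merely along a limit, and in particular $\Xi_N/N\to0$, which is exactly what Corollary~\ref{c13} and Proposition~\ref{p14} need.

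All the substance sits in Lemma~\ref{lc22}(a), so the step to recheck is its proof. Two points matter there. First, the weight bound $w\le2$ gives $I(L)\le4L\mu_2(H)$, using $\sum_{k=1}^{L-1}(L-k)d_k^2\le L\sum_{k\ge1}d_k^2$; this needs only $\sum_k d_k^2<\infty$, i.e.\ $\gamma>1/2$. Second, summing $\sum_{L=1}^N L=N(N+1)/2$ and dividing by $N^2$ gives the stated bound. I do not expect either step to be an obstacle; the only care needed is to keep the $1/N^2$ covariance normalization consistent.

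I would add one remark. The constant $2\zeta(2\gamma)-1$ diverges as $\gamma\to1/2^+$, consistent with Lemma~\ref{lc22}(b). So this Cauchy--Schwarz route gives nothing at or below $\gamma=1/2$, and there the weaker $o(N)$ statement of Corollary~\ref{c44} must be used instead.
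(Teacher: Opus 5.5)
Your proposal is correct and is exactly the paper's argument: the corollary comes from chaining the Cauchy--Schwarz bound $|\Xi_N|\le\Sigma_2(N,\gamma)$ of Lemma~\ref{lc21} with the uniform estimate $\Sigma_2(N,\gamma)\le 2\mu_2(H)(N+1)/N$ of Lemma~\ref{lc22}(a). Your recheck of that estimate is sound (weights $w\le2$, then $\sum_{k<L}(L-k)d_k^2\le L\sum_{k\ge1}d_k^2$, then $\sum_{L\le N}L=N(N+1)/2$), and it also gives the slightly stronger uniform bound $\sup_N|\Xi_N|\le4\mu_2(H)$.
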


\begin{proof}
Immediate  from  Lemmas  \ref{lc21}  and  \ref{lc22}(a).  
\end{proof}

\begin{remark}
(The  bound  is  provably  too  weak  below  $\gamma=1/2$).
By  Lemma  \ref{lc22}  (b),  for  $0<\gamma<1/2$  the  Cauchy--Schwarz  bound  of  Lemma  \ref{lc21}
diverges  as  $N\to\infty$,  and  so  gives  no  information  on  $\Xi_N$  in  this
range:  it  is  not  merely  that  the  present  argument  fails  to  prove
$\Xi_N=O(1)$  for  $\gamma<1/2$,  but  that  this  specific  strategy  is
structurally  incapable  of  doing  so,  since  it  discards  the  sign
cancellation  between  $u_{ijkl}$  and  $v_{ijkl}$  that  Table  \ref{tab:subleading}'s  numerics
(where  $\Xi_N/\text{Term  C}\to0$  is  observed  at  $\gamma=0.3,0.4$  as  well)
suggest  must  be  present.  Establishing  $\Xi_N=O(1)$  for  $\gamma<1/2$
therefore  remains  open,  and  would  require  an  argument  sensitive  to  this
cancellation  rather  than  a  bound  of  Cauchy--Schwarz  type.
\label{rc25}
\end{remark}
\begin{remark}
Corollary  \ref{c13}'s  statement  that
$\Xi_N=O(1)$  is,  on  inspection  of  Table  \ref{tab:subleading}  itself,  not  literally  correct
for  $\gamma<1/2$:  the  entries  there  grow  with  $N$  (e.g.\  at  $\gamma=0.3$,
$\Xi_N$  increases  from  $10.11$  at  $N=50$  to  $20.26$  at  $N=200$,  consistent
with  the  growth  rate  $\Xi_N^{(i=k)}\sim  N^{1-2\gamma}$  already  stated  in  the
proof  of  Proposition  \ref{prop15}).  What  Proposition  \ref{prop15}'s  proof  (Step  3)  actually
requires  is  not  boundedness  of  $\Xi_N$,  but  only  its  negligibility  relative
to  $\mathrm{Term\  C}=2\sum_{a,b}f(a,b)^2\sim  N^{2-2\gamma}$  —  a  strictly
weaker  statement.  This  weaker  statement  follows  immediately  from  Lemma  \ref{lc21}(b),
already  established  above:
\label{rc26}
\end{remark}
  
\begin{corollary}
(Rigorous  negligibility  of  $\Xi_N$  relative  to
$\mathrm{Term\  C}$,  for  every  $\gamma>0$).  For  every  fixed  $\gamma>0$,
\begin{equation}
\Xi_N  =  o\bigl(\mathrm{Term\  C}\bigr)\qquad\text{as  }N\to\infty.
\end{equation}
\label{c44}
\end{corollary}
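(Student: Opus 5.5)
The approach is to chain the Cauchy--Schwarz bound of Lemma~\ref{lc21}, $|\Xi_N|\le\Sigma_2(N,\gamma)$, with a crude lower bound on Term~C that holds for every $\gamma>0$. It then suffices to show $\Sigma_2(N,\gamma)=o(N)$ for every fixed $\gamma>0$, and this uniform statement avoids splitting into the three regimes $\gamma<1/2$, $\gamma=1/2$, $\gamma>1/2$.

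\emph{Step 1 (lower bound on Term~C).} Drop all off-diagonal terms: $\sum_{a,b}f(a,b)^2\ge\sum_a f(a,a)^2$. Here $f(a,a)=\sum_c\mathrm{Var}(S_{ac})$. Each entry has variance $d_0/N=1/N$ by Eq.~\eqref{eq:cov-structure} with $d$ in place of $\rho^{|\cdot|}$, so $f(a,a)=1$ exactly. Hence $\mathrm{Term\ C}=2\sum_{a,b}f(a,b)^2\ge 2N$ for every $N$ and every $\gamma>0$. This uses no asymptotics, and in particular not the continuum approximation of Proposition~\ref{prop15}.

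\emph{Step 2 (sublinear upper bound on $\Sigma_2$).} I reuse the representation in Lemma~\ref{lc22}, namely $N^2\Sigma_2(N,\gamma)=\sum_{L=1}^N I(L)$. Its part~(a) argument, stopped before summing the tail, gives
\[
I(L)\le 4\Big[L\,d_0^2+2\sum_{k=1}^{L-1}(L-k)d_k^2\Big]\le 4L\big(1+2D_2(L)\big),
\]
with $D_2(L)=\sum_{k=1}^{L}d_k^2$ as in Corollary~\ref{c13}. Since $d_k=(1+k)^{-\gamma}\to0$ for every $\gamma>0$, the Ces\`aro mean satisfies $D_2(L)/L\to0$, that is, $D_2(L)=o(L)$. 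Consequently $\sum_{L\le N}L(1+2D_2(L))=o(N^3)$, and therefore $\Sigma_2(N,\gamma)=o(N)$.

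\emph{Step 3 (conclusion).} Combining the two steps gives $|\Xi_N|/\mathrm{Term\ C}\le\Sigma_2(N,\gamma)/(2N)\to0$. As a consistency check, the sharper regime-wise rates match the table. For $\gamma>1/2$, $\Sigma_2=O(1)$ by Lemma~\ref{lc22}(a). For $\gamma<1/2$, $\Sigma_2\sim cN^{1-2\gamma}$ by Lemma~\ref{lc22}(b). At $\gamma=1/2$, $D_2(L)\sim\log L$ gives $\Sigma_2=O(\log N)$. All three are $o(N)$.

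The only step needing care is Step~2 at $\gamma\le1/2$, where $\sum_k d_k^2$ diverges. The key observation is that the comparison only requires $D_2(L)=o(L)$, which follows from $d_k\to0$ alone. We do not need the Abelian estimate of Lemma~\ref{lc22}(b), nor any sign cancellation of the kind Remark~\ref{rc25} says Cauchy--Schwarz cannot see. That remark concerns $\Xi_N=O(1)$, whereas negligibility against Term~C is much weaker.
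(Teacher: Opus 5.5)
Your proof is correct, and it takes a genuinely different and more economical route than the paper's. The paper splits into two regimes and compares asymptotics on both sides. For $\gamma<1/2$ it pairs $\Sigma_2\sim c(\gamma)N^{1-2\gamma}$ (the Abelian estimate of Lemma~\ref{lc22}(b)) with $\mathrm{Term\ C}\sim c'(\gamma)N^{2-2\gamma}$. That growth rate comes from the continuum approximation in the proof sketch of Proposition~\ref{prop15}, whose Step~1 treats the boundary corrections to $f(a,b)$ only heuristically (cf.\ Remark~\ref{r12}). For $\gamma>1/2$ it pairs $\Xi_N=O(1)$ (Corollary~\ref{lc24}) with the linear growth of Term~C from Proposition~\ref{p14}.

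You replace the Term~C asymptotics by the exact finite-$N$ bound $\mathrm{Term\ C}\ge 2\sum_a f(a,a)^2=2N$, using $f(a,a)=\sum_c\mathrm{Var}(S_{ac})=1$. You replace the regime-wise asymptotics of $\Sigma_2$ by the single uniform estimate $\Sigma_2(N,\gamma)=O\bigl(1+D_2(N)\bigr)$, which is $o(N)$ because $d_k\to0$.

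Your route buys three things:
\begin{enumerate}
\item[(i)] It covers the boundary case $\gamma=1/2$, which the paper's two-case proof omits even though the statement claims every $\gamma>0$.
\item[(ii)] It no longer rests on the heuristic lower growth rate of Term~C.
\item[(iii)] It applies verbatim to any row covariance with $d_0=1$ and $d_t\to0$, the AR(1) kernel included.
\end{enumerate}
What the paper's route gives in exchange, granting its inputs, is a sharper rate for $\gamma<1/2$: $\Xi_N/\mathrm{Term\ C}=O(N^{-1})$ rather than your $O(N^{-2\gamma})$. For $\gamma>1/2$ both arguments give $O(N^{-1})$, and at $\gamma=1/2$ yours gives $O(N^{-1}\log N)$. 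For the negligibility statement itself, which is all that Propositions~\ref{p14} and~\ref{prop15} actually use, your argument is the cleaner and more complete proof.
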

  
\begin{proof}
For  $\gamma<1/2$:  by  Lemma\ref{lc21}    and  Lemma  \ref{lc22}(b),
$|\Xi_N|\le\Sigma_2(N,\gamma)\sim  c(\gamma)N^{1-2\gamma}$,  while
$\mathrm{Term\  C}\sim  c'(\gamma)N^{2-2\gamma}$  (Proposition  \ref{prop15});  since
$1-2\gamma<2-2\gamma$  for  every  $\gamma$,  $\Xi_N=O(N^{1-2\gamma})
=o(N^{2-2\gamma})=o(\mathrm{Term\  C})$.  For  $\gamma>1/2$:  by  Corollary  \ref{lc24},
$\Xi_N=O(1)$,  while  $\mathrm{Term\  C}\sim2\mu_4^{\mathrm{hub}}(\gamma)\,N$
(Proposition  \ref{p14}),  so  trivially  $\Xi_N=o(N)=o(\mathrm{Term\  C})$.  
\end{proof}
  
Corollary  \ref{c44}  therefore  closes,  rigorously  and  for  every  $\gamma>0$  (not
merely  $\gamma>1/2$),  the  gap  that  both  Proposition  \ref{p14}'s  proof  and
Proposition  \ref{prop15}'s  Step  3  had  left  to  Table  \ref{tab:subleading}'s  numerics:  in  neither  regime
is  any  appeal  to  numerical  evidence  any  longer  required  to  conclude  that
$\Xi_N$  does  not  affect  the  leading  behavior  of  $\mu_4(N,\gamma)$.  We  retain
$\Xi_N=O(1)$  (Corollary  \ref{lc24})  as  the  sharper  statement  where  it  holds
($\gamma>1/2$),  and  Corollary  \ref{c44}  as  the  uniform  statement  that  suffices
everywhere  it  is  actually  used.


\end{document}